\renewcommand{\paragraph}{%
  \@startsection{paragraph}{4}{\z@}%
  {0em}% No space before
  {-1em}% Negative space after (pulls following text up)
  {\normalfont\normalsize\bfseries}% Formatting
}
\let\oldparagraph\paragraph
\renewcommand{\paragraph}[1]{\oldparagraph{#1\@addpunct{.}}}
\algrenewcommand\algorithmicrequire{\textbf{Input:}}  % Replace "Require" with "Input"
\algrenewcommand\algorithmicensure{\textbf{Output:}}  % Replace "Ensure" with "Output"
\newcommand{\x}[0]{\mathbf x}
\newcommand{\xrho}{\mathcal{X}(\rho)}
\newcommand{\xtot}[0]{\norm{\x}}
\newcommand{\xtotmi}[0]{\norm{\x_{-i}}}
\newcommand{\xr}[1]{\mathcal{X}(#1)}
\newcommand{\s}[0]{\mathbf s}
\newcommand{\stot}[0]{\mathbf{x}^{\top} \mathbf{s}}
\newcommand{\qai}[0]{Q_{\text{AI}}}
\newcommand{\Tx}[0]{T(\mathbf x)}
\newcommand{\proofof}[1]{\textbf{\textup{Proof of \Cref{#1}}}}
\DeclareMathOperator*{\argmax}{arg\,max}
\declaretheorem[name=Theorem]{theorem}
\declaretheorem[name=Lemma, sibling=theorem]{lemma} % Use sibling for shared numbering
\declaretheorem[name=Proposition, sibling=theorem]{proposition} % Corrected name and internal command
\declaretheorem[name=Remark, sibling=theorem]{remark} % Corrected name and internal command
\declaretheorem[name=Definition, style=definition]{definition}
\Crefname{theorem}{Theorem}{Theorems}
\Crefname{proposition}{Proposition}{Propositions}
\Crefname{lemma}{Lemma}{Lemmas}
\crefname{theorem}{theorem}{theorems}
\crefname{proposition}{proposition}{propositions}
\crefname{lemma}{lemma}{lemmas}
\crefname{inequality}{inequality}{inequalities}
\Crefname{inequality}{Inequality}{Inequalities}
\newcommand{\abs}[1]{\left| #1 \right|}
\newcommand{\norm}[1]{\left\| #1 \right\|}
\newcommand{\floor}[1]{\left\lfloor #1 \right\rfloor}
\title{Strategic Content Creation with GenAI: To Share or Not to Share?}
\author{
Gur Keinan%
\thanks{%
    {Technion---Israel Institute of Technology (\url{gur.keinan@campus.technion.ac.il})}}
\and Omer Ben{-}Porat%
\thanks{%
    {Technion---Israel Institute of Technology (\url{omerbp@technion.ac.il})}}
}
\date{}
\begin{document}

\maketitle

\begin{abstract}
We introduce a game-theoretic framework that examines strategic interactions between a platform and its content creators in the presence of AI-generated content. Our model's main novelty is in capturing creators' dual strategic decisions: Their investment in content quality and their (possible) consent to share their content with the platform's GenAI, both of which significantly impact their utility. To incentivize creators, the platform strategically allocates a portion of its GenAI-driven revenue to creators who share their content. We focus on the class of \emph{full-sharing} equilibrium profiles, in which all creators willingly share their content with the platform's GenAI system. Such equilibria are highly desirable both theoretically and practically. Our main technical contribution is formulating and efficiently solving a novel optimization problem that approximates the platform's optimal revenue subject to inducing a full-sharing equilibrium. A key aspect of our approach is identifying conditions under which full-sharing equilibria exist and a surprising connection to the Prisoner's Dilemma. Finally, our simulations demonstrate how revenue-allocation mechanisms affect creator utility and the platform's revenue.
\end{abstract}

\section{Introduction}

Online Content platforms such as YouTube, Medium, and SoundCloud have become ubiquitous, hosting vast ecosystems of creators who aim to maximize their reach, engagement, and revenue. The decision-making challenges faced by these creators have been the focus of a growing body of work, examining aspects such as revenue-sharing mechanisms~\cite{bhargava2022creator}, fairness in content visibility and diversification~\cite{10.1145/3292500.3330745,10.5555/3524938.3525586}, and the economic incentives shaping content creation~\cite{10.5555/3326943.3327046,ben2020content,Dean2024,acharya2025producers,hron2022modeling}. As these platforms evolve, especially with advances in Generative AI (GenAI), new considerations arise.

GenAI enables platforms to synthesize content by distilling and recombining existing creator contributions. For instance, a news platform could use GenAI tools to generate personalized summaries of the day's news tailored to individual users. This enhances user satisfaction and opens new monetization channels. However, it also introduces two significant challenges: Sharing consent and content quality.\footnote{For instance, YouTube allows its creators to opt out of having their content used for GenAI training~\cite{youtube_ai_training_policy}.} First, using creator content for GenAI raises legal and ethical questions around intellectual property and copyright infringement~\cite{samuelson2023generative}. Second, AI-generated content may cannibalize attention from human-created work, reducing creators' incentives to invest in content quality.

To address these challenges, platforms may offer a remedy: Sharing GenAI-driven revenue with creators who allow the platform's GenAI to use their content. This mirrors existing revenue-sharing programs on platforms like YouTube and X (formerly Twitter)~\cite{x_creator_revenue_sharing,youtube_partner_earnings}, and could incentivize both sharing consent and content quality. The main question thus becomes: \emph{How should the platform allocate its GenAI-driven revenue to maximize its (retained) revenue while incentivizing the creation and sharing of high-quality content?}

This paper pioneers the study of revenue sharing in the presence of strategic content creators and AI-generated content. We present a game-theoretic model consisting of a platform and strategic creators, in which the platform can generate content using GenAI tools and allocate its GenAI-driven revenue based on creator actions. Our framework incorporates creators' dual strategic decisions: (i) the extent to which they share their content with the platform's GenAI, ranging from full withholding to full sharing, and (ii) the body of content they create, each incurring its own cost. The platform selects a revenue-sharing mechanism rewarding creators based on these two axes of contribution, with the goal of optimizing for content quality, overall revenue, or a tradeoff between them.

\subsection{Our Contribution}

Our contribution is twofold. First and foremost, we introduce the first game-theoretic model that captures both content production and data sharing in Generative AI ecosystems, where creators \emph{and} the platform are strategic. User traffic is allocated through a Tullock contest structure~\cite{tullock2001efficient}, where creators and the platform's GenAI system compete based on their relative content quality. Within this environment, each creator decides not only the quality of their content but also how much of it to share with the GenAI. The GenAI's content quality is determined by the pool of shared creator content; thus, the platform should incentivize creators to produce and share content. To that end, we assume the platform can select a proportional allocation rule (see \Cref{eq:f-rho-definition}) parametrized by $\rho \in [0,1]$, and redistribute a $\rho$-portion of its GenAI-driven revenue to the creators. The platform is also strategic, and can use $\rho$ in a way that steers creators towards revenue-maximizing outcomes.

Our second contribution is technical. We study equilibria in the induced game, and focus on a subclass of equilibrium profiles we term \emph{full-sharing equilibria} (FSE for abbreviation; see Definition~\ref{def:fse}). In an FSE, creators are better off by sharing their content. FSEs are desirable due to several properties that we discuss in \Cref{subsec:solution-concepts}. We present a comprehensive characterization of FSE in \Cref{sec:fse}, showing that any $\rho$ induces at most one FSE. In \Cref{sec:platform-optimization}, we study the platform's revenue optimization problem (see~\eqref{eq:platform-optimization}). The platform can choose any value $\rho$ to maximize its revenue under the induced FSE, if it exists. Such bilevel optimization is notoriously NP-hard~\cite{hansen1992new}, and in our case, the difficulty is further exacerbated. Among other factors, finding equilibrium is non-trivial as best-response computation is non-concave, and the domain of feasible solutions does not have a closed-form. Nevertheless, we show the following:

\begin{theorem}[Informal version of \Cref{thm:alg_main}]
There exists an efficient algorithm that, for any $\varepsilon > 0$, computes a revenue allocation inducing an $\varepsilon$-FSE and achieving platform revenue within $\varepsilon$ of the maximum attainable under any FSE.
\end{theorem}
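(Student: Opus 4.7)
The plan is to attack the bilevel problem by reducing it to a one-dimensional search over the platform's parameter $\rho \in [0,1]$, coupled with an inner subroutine that, for each fixed $\rho$, either computes an (approximate) full-sharing equilibrium or certifies that none exists. Concretely, I would discretize $[0,1]$ into a grid of $\Theta(1/\varepsilon)$ points, invoke the subroutine on each grid point, and return the $\rho_k$ that achieves the largest platform revenue among those whose profile passes an $\varepsilon$-FSE check.

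The correctness of the outer loop rests on a continuity argument. By the uniqueness result of \Cref{sec:fse}, each admissible $\rho$ pins down a unique FSE profile; together with smooth dependence of the creators' first-order conditions on $\rho$, this yields a Lipschitz map from the subset of $\rho$-values admitting an FSE to the induced platform revenue. Hence the optimal $\rho^\ast$ lies within $O(\varepsilon)$ of some grid point $\rho_k$, and relaxing the equilibrium requirement from exact to $\varepsilon$-FSE absorbs the perturbation incurred when moving from $\rho^\ast$ to $\rho_k$. A mild boundary case must be handled separately: if $\rho^\ast$ lies on the frontier where FSE just barely exists, the nearest grid point might fail the exact FSE test, which is exactly the reason we aim for an $\varepsilon$-FSE guarantee rather than an exact one.

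For the inner subroutine, the key leverage comes from the characterization developed in \Cref{sec:fse}: in an FSE every creator is fully sharing, so the remaining strategic variable per creator is her scalar quality investment, and the equilibrium conditions reduce to a coupled system of first-order stationarity equations in these scalars. Because the GenAI quality is a fixed functional of the aggregate shared pool and creators enter the Tullock contest symmetrically (modulo their individual cost parameters), this system collapses to a low-dimensional fixed-point equation in the aggregate quality statistic, which can be solved to accuracy $\varepsilon$ by bisection on a bounded interval in $O(\log(1/\varepsilon))$ iterations. Having produced a candidate profile, we verify the FSE condition by checking, for each creator, whether any unilateral reduction in her sharing level is profitable.

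The main obstacle is precisely this verification step: as noted in the introduction, a creator's utility is \emph{non-concave} in her sharing level, so the deviation problem cannot be handled by gradient ascent alone. I would address it by exploiting the structural form of the utility --- in our setting it is a rational function of the sharing level with bounded numerator and denominator degrees --- to reduce the search for the best deviation to the maximum over $O(1/\varepsilon)$ grid points together with the (constantly many) critical points obtained by setting the derivative to zero analytically. Assembling these ingredients, the overall algorithm runs in $\mathrm{poly}(n, 1/\varepsilon)$ time, and the $\varepsilon$-slack in the FSE guarantee simultaneously absorbs the discretization of $\rho$, the accuracy of the fixed-point solve, and the grid used in the deviation check.
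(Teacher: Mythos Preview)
Your high-level plan --- grid over $\rho$, compute an approximate equilibrium at each grid point, verify approximate FSE, and rely on Lipschitz continuity of the ESE map $\rho \mapsto \xr{\rho}$ and of the platform revenue to absorb the discretization --- is exactly the paper's strategy, and your handling of the boundary case (where the exact FSE just fails) via the $\varepsilon$-relaxation is also the paper's reasoning. Where you diverge from the paper is in the two subroutines, and one of those divergences is a genuine gap.

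For the inner equilibrium computation you propose collapsing the first-order system to a scalar fixed-point in the aggregate $X=\|\x\|$ and bisecting. This is a legitimate alternative to what the paper does (multi-agent mirror descent on the full $n$-dimensional system, justified by showing the enforced-sharing game is strongly monotone under $\beta$-strongly convex costs). Your reduction is sound in principle --- each FOC reads $K X^{\gamma-2}[(\gamma-1)x_i + X] = c_i'(x_i)$, and since the left side is affine in $x_i$ while $c_i'$ is strictly increasing, one can indeed solve for $x_i(X)$ and impose $\sum_i x_i(X)=X$ --- but you have not argued monotonicity of this scalar map, so bisection is not immediately justified; a grid or a more careful argument would be needed.

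The real problem is your FSE verification. You propose to check, for each creator, whether ``any unilateral reduction in her sharing level is profitable,'' treating the quality $x_i$ as fixed at its ESE value and searching only over $s_i$. This is insufficient: an FSE must be robust to \emph{joint} deviations in $(x_i, s_i)$, and the best quality choice when $s_i=0$ is generally different from the ESE quality (which was optimal only under $s_i=1$). You also misidentify the source of non-concavity: by \Cref{lem:s_choice_threshold} the utility is \emph{monotone} in $s_i$ for any fixed $x_i$ --- it is a rational function whose sign of derivative is constant --- so the best sharing response is always $s_i\in\{0,1\}$ and no grid over $s_i$ is needed at all. The non-concavity the paper worries about is in $x_i$ when $s_i=0$: the deviation utility $\mu x_i(x_i+\|\x_{-i}\|)^\gamma/(x_i+(1+\alpha)\|\x_{-i}\|) - c_i(x_i)$ is not concave in $x_i$, and the paper handles it by establishing a Lipschitz bound and grid-searching $x_i\in[0,x_i^{\max}]$. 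Without this step your verification can falsely certify an $\varepsilon$-FSE when a creator has a profitable deviation to some $(x_i',0)$ with $x_i'\neq \mathcal{X}_i(\rho)$.
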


Our approach hinges on smoothness analysis of fixed points in an auxiliary game we call \emph{enforced sharing game}, where creators are forced to share their content. Finally, we present synthetic experiments to examine how revenue-allocation mechanisms affect content production, creator utility, and the platform's revenue.

\subsection{Related Work}
Strategic content provider dynamics have received substantial attention in recent years, with numerous models analyzing how creators compete for attention and adapt their strategies in algorithmically mediated ecosystems \cite{10.1145/3292500.3330745, ben2020content, Dean2024, acharya2025producers, hron2022modeling, 10.5555/3666122.3666764}. These works address key challenges such as incentivizing content quality \cite{hu2023incentivizing, 10.5555/3666122.3669383, 10.5555/3618408.3620064, 10.1145/3589334.3645353}, ensuring fairness and diversity \cite{10.5555/3600270.3602160, 10.5555/3524938.3525586, 10.1145/3490486.3538346, Yao2022, Yao2022Learning}, and aligning creator incentives with platform-level objectives \cite{10.1609/aaai.v38i20.30266, DBLP:journals/corr/abs-2305-11381}. The recent proliferation of Generative AI (GenAI) has introduced new complexity into these ecosystems, prompting a line of research exploring how GenAI alters competitive dynamics. For instance, some works consider environments in which either GenAI or human agents act strategically, but not both \cite{10.1609/aaai.v39i13.33548, esmaeili2025strategizehumancontentcreation}, while others examine GenAI systems that act strategically to preserve human participation \cite{taitler2025selectiveresponsestrategiesgenai}.

The work most closely related to ours is \citet{10.5555/3692070.3694417}, who introduce a model of competition between human creators and a GenAI agent within a Tullock contest~\cite{ewerhart2015mixed,EWERHART2017168,Ghosh2023,chowdhury2011generalized}, a celebrated and well-established economic model of competition. Our model, like that of \citet{10.5555/3692070.3694417}, captures how user traffic is distributed based on the relative quality of content produced by humans and a GenAI system. The effectiveness of the GenAI system itself is derived from human-generated content. However, our paper significantly departs from \citet{10.5555/3692070.3694417}. First, we allow creators to decide whether and to what extent their content is shared with the GenAI system, making data sharing itself a strategic choice. Second, we model the platform as a strategic actor that designs a revenue-allocation mechanism to influence creator behavior. These modifications result in a doubly strategic setting, where both creators and the platform engage in interdependent optimization. From a technical perspective, the creators' strategy spaces become more complex, and the game is no longer monotone~\cite{rosen1965existence}. Furthermore, modeling the platform as a strategic entity gives rise to a novel bilevel optimization problem of revenue maximization, which has not been addressed in previous works. Consequently, our research questions are novel, and our results significantly extend the existing literature by addressing strategic interactions that \citet{10.5555/3692070.3694417} have not explored.

More broadly, since our work addresses (strategic) content sharing with GenAI, it relates to works on data valuation~\cite{pmlr-v89-jia19a, baghcheband2025shapley, pmlr-v206-wang23e} and data markets~\cite{acemoglu2022too}. Furthermore, we consider the allocation of GenAI-driven revenue to content creators, thereby relating to copyright challenges with GenAI~\cite{gans2024copyright, pasquale2024consent,yang2024generativeaicopyrightdynamic}, attribution to specific data sources~\cite {bohacek2025aimodeltrainedimages}, and royalties~\cite{deng2024computational}.

\section{Model}
We consider a platform that hosts content produced by a set of human creators, alongside content generated by a GenAI system controlled by the platform. We formally define the components of the interaction below.

\paragraph{Creators}
We consider $n\geq 2$ human content creators competing for user traffic and denote the set of creators by $[n]=\{1,2,\dots,n\}$, with a typical creator indexed by $i\in[n]$.  Each creator produces a \emph{body of content} that encompasses multiple attributes: Quality, quantity, trend alignment, domain expertise or specialty, and so on.  As is standard in prior work (e.g.,~\cite{10.5555/3737916.3740675,10.5555/3692070.3694417}), we aggregate these attributes into a single scalar, and without loss of generality, call it \emph{content quality}. Formally, creator $i$ selects a quality $x_i\in[0,\infty)$.

For each creator $i$, creating a content $x_i$ incurs a production cost $c_i(x_i)$. While we do not assume a particular form of the cost function, we have several structural assumptions that are standard in prior work. We assume that for every $i \in [n]$, the cost function $c_i$ is non-decreasing, twice differentiable, and strictly convex. Moreover, we assume that $c_i(0) = 0$,  $\lim_{x \to 0^+} \frac{d c_i(x)}{dx}=0$, and $\lim_{x \to\infty} \frac{d c_i(x)}{dx}=\infty$. These assumptions ensure that not creating content is costless, and that the marginal cost of improving the quality by one unit is initially zero but becomes unbounded.

Beyond choosing the content quality, each creator also chooses their \emph{sharing level} $s_i \in [0, 1]$. The sharing level represents the extent to which creator $i$ allows their content $x_i$ to be used by the GenAI system, ranging from complete withholding ($s_i=0$) to full sharing ($s_i=1$). Therefore, the strategy of each creator $i\in[n]$ is a pair $(x_i,s_i)$. Let $\mathbf{x} = (x_1, \ldots, x_n)$ and $\mathbf{s} = (s_1, \ldots, s_n)$ denote the quality and sharing profiles, respectively. Additionally, for a creator $i \in [n]$, we let $\mathbf{x}_{-i} = (x_1, \ldots, x_{i-1}, x_{i+1}, \ldots, x_n)$ and $\mathbf{s}_{-i} = (s_1, \ldots, s_{i-1}, s_{i+1}, \ldots, s_n)$ denote the quality and sharing profiles after omitting the $i$'th entry.

\paragraph{Generative AI}
The platform operates a GenAI system, modeled as a non-strategic agent whose output quality depends on the aggregate quality of the content shared from creators, $\stot = \sum_{j=1}^{n} x_j s_j$. Its quality is given by $\qai(\mathbf{x}, \mathbf{s}) = \alpha \cdot \stot$, where $\alpha > 0$ represents the efficiency of data usage.\footnote{We discuss extensions of some of our results to a more general functional form of $\qai$ in \Cref{sec:extensions}.} Notably, if no data is shared (i.e., $\s = 0$), then $\qai(\mathbf{x}, \mathbf{s}) = 0$. Whenever $\mathbf{x}$ and $\mathbf{s}$ are clear from the context, we simplify the notation and use  $\qai$.

\paragraph{User traffic and engagement}
The platform attracts user traffic denoted by $\Tx$, which increases monotonically with the total quality of human content, $\xtot = \sum_{j=1}^{n} x_j$, where we use $\|\cdot\|$ without a subscript to denote the $\ell_1$ norm throughout the paper. We model traffic as $\Tx = \mu \cdot \xtot^{\gamma}$,
where $\mu > 0$ is a baseline traffic parameter and $\gamma \in [0, 1]$ captures the elasticity of traffic with respect to the total human content quality~\cite{TAFESSE2023103357, 10.25300/MISQ/2014/38.3.04}.\footnote{We assume that users are attracted to the platform due to its human-generated content, not the AI-generated content. The reason is that users tend to value originality and exhibit bias against AI-generated content when it lacks it~\cite{bellaiche2023humans}, which is the case in emerging domains where historical data is scarce~\cite{Li_Flanigan_2024}. We discuss extensions of our results to other form of traffic in \Cref{sec:extensions}.}

Given a strategy profile $(\mathbf x,\mathbf s)$ and the induced user traffic $\Tx$, users allocate their attention to human and AI-generated content according to a Tullock-like decision based on relative quality~\cite{tullock2001efficient}. Specifically, the fraction of attention allocated to creator $i$'s content is $\frac{x_i}{\xtot + \qai(\mathbf{x}, \mathbf{s})}$, while the fraction allocated to AI-generated content is $\frac{\qai(\mathbf{x}, \mathbf{s})}{\xtot + \qai(\mathbf{x}, \mathbf{s})}$. In both cases, the denominator represents the total quality of created and generated content available on the platform.

\paragraph{Platform strategy}
The platform may have multiple sources of revenue. However, to ease presentation, we focus on the user traffic it receives from its AI-generated content. Note that it depends on $\qai$, which in turn depends on human content. To obtain high-quality AI-generated content, the platform can allocate some of this revenue back to creators. Formally, the platform commits to an \emph{allocation rule} $f : [0,\infty)^n \times [0,1]^n \to \Delta^{n+1}$. It maps every strategy profile $(\x, \s)$ to a vector $(f_1(\mathbf{x},\mathbf{s}), \ldots, f_n(\mathbf{x},\mathbf{s}), f_P(\mathbf{x},\mathbf{s}))$, where $f_i(\mathbf{x},\mathbf{s})$ represents the fraction of GenAI-driven revenue allocated to creator $i \in [n]$, and $f_P(\mathbf{x},\mathbf{s})$ is the fraction retained by the platform. Here, $\Delta^{n+1}$ denotes the $(n+1)$-dimensional simplex.

While various allocation mechanisms are possible, we focus on the class of parameterized \emph{proportional allocation rules $f_\rho$}, formally defined below, and address other designs in \Cref{sec:other_allocation_rules}. The revenue-allocation parameter $\rho \in [0,1]$, or simply the \emph{allocation parameter}, specifies the fraction of GenAI-driven revenue allocated to creators, with the remainder $1 - \rho$ retained by the platform. Under this rule, the platform distributes the $\rho$-fraction among creators in proportion to their effective shared contributions $x_i s_i$, yielding

\begin{equation}\label{eq:f-rho-definition}
f_{i, \rho}(\mathbf{x}, \mathbf{s}) =
\begin{cases}
\rho \cdot \frac{x_i s_i}{\stot} &  \stot > 0  \\
0 & \stot = 0
\end{cases},
\end{equation}
and $f_{P, \rho}(\mathbf{x}, \mathbf{s}) = 1 - \sum_{i = 1}^{n} f_{i, \rho}(\mathbf{x}, \mathbf{s})$. We note that whenever $\stot > 0$, it holds that  $\sum_{i = 1}^{n} f_{i, \rho}(\mathbf{x}, \mathbf{s}) =\rho$ and $f_{P, \rho}(\mathbf{x}, \mathbf{s}) = 1 -\rho$.

\paragraph{Revenue and utilities}
Each human creator $i \in [n]$ receives revenue from two sources and incurs a creation cost. The first revenue source is direct user traffic to their content, yielding a revenue of $\Tx \frac{x_i}{\xtot + \qai}$. The second revenue source is their allocated share of the GenAI-driven revenue, given by $\Tx \frac{\qai}{\xtot + \qai} \cdot f_{i, \rho}(\mathbf{x}, \mathbf{s})$. Overall, creator $i$'s utility is
\[
U_{i}(\mathbf{x},\mathbf{s}; f_{\rho}) = \Tx \frac{x_i}{\xtot + \qai} + \Tx \frac{\qai}{\xtot + \qai} \cdot f_{i, \rho}(\mathbf{x}, \mathbf{s}) - c_i(x_i).
\]
The platform's revenue corresponds to the portion of GenAI revenue it retains under the allocation rule~$f_{\rho}$, which is given by
\[
U_{P}(\mathbf{x},\mathbf{s}; f_{\rho}) = \Tx \frac{\qai}{\xtot + \qai} \cdot f_{P, \rho}(\mathbf{x}, \mathbf{s}).
\]

\paragraph{Stackelberg game}
We model the interaction as a Stackelberg game, with the platform as the leader and the creators as followers. The platform first commits to a revenue-allocation rule $f_{\rho}$. This choice induces a subgame among the creators, denoted by \emph{$G(\rho)$}. In this game, each creator $i \in [n]$ simultaneously selects a strategy $(x_i, s_i)$ to maximize their utility given the fixed rule $f_{\rho}$. The resulting profile $(\mathbf{x}, \mathbf{s})$ determines the user traffic $\Tx$, the quality of AI-generated content $\qai(\mathbf{x}, \mathbf{s})$, and the resulting payoffs $U_i(\mathbf{x}, \mathbf{s}; f_{\rho})$ and $U_P(\mathbf{x}, \mathbf{s}; f_{\rho})$. This structure enables the platform to steer creator behavior through the choice of the allocation parameter $\rho$.

\subsection{Solution Concepts}\label{subsec:solution-concepts}

We are interested in stable outcomes in the subgame $G(\cdot)$, where stability refers to pure Nash equilibrium (PNE). We highlight a class of PNEs that we call \emph{\textbf{F}ull \textbf{S}haring PN\textbf{E}}, or FSE for abbreviation. In FSE, full content sharing arises endogenously: Each creator chooses to share their content because the platform's revenue-allocation mechanism renders sharing a best response.

\begin{definition}[FSE]\label{def:fse}
    A strategy profile $(\mathbf{x}, \mathbf{1})$ is an \emph{FSE} in the subgame $G(\rho)$ if for every creator $i \in [n]$,
    \[
        U_{i}(\mathbf{x}, \mathbf{1}; f_{\rho}) \ge \sup_{x_i' \in [0, \infty),\, s_i' \in [0,1]} U_{i}((x_i', \mathbf{x}_{-i}), (s_i', \mathbf{1}_{-i}); f_{\rho}).
    \]
\end{definition}

FSEs are a special case of PNEs in which all creators fully share their content. While other equilibria involving partial or no sharing may exist, FSEs are appealing from the platform's perspective: They maximize the quality of AI-generated content given the created content. Focusing on FSE is also practically motivated, as it prevents unauthorized use of content and disputes or resentment from creators who believe their content was used without consent. FSEs naturally extend to approximate equilibria: A strategy profile $(\mathbf{x}, \mathbf{1})$ is an $\varepsilon$-FSE if no creator $i$ has an $\varepsilon$-beneficial deviation.

To facilitate the analysis in the forthcoming sections, we also present the following auxiliary equilibrium concept called \emph{\textbf{E}nforced \textbf{S}haring PN\textbf{E}}, or simply ESE.

\begin{definition}[ESE]\label{def:ese}
    A quality profile $\mathbf{x}$ is an \emph{ESE} in the subgame $G(\rho)$ if for every creator $i \in [n]$,
    \[
        U_{i}(\mathbf{x}, \mathbf{1}; f_{\rho}) \ge \sup_{x_i' \in [0, \infty)} U_{i}((x_i', \mathbf{x}_{-i}), \mathbf{1}; f_{\rho}).
    \]
\end{definition}

To motivate \Cref{def:ese}, imagine the auxiliary \emph{enforced sharing game}, where $f_{\rho}$ is fixed and creators \emph{must} share their content, i.e., $\mathbf{s} = \mathbf{1}$. ESEs allow us to analyze game dynamics by focusing solely on creators' quality profiles $\mathbf x$. We remark that any quality profile $\mathbf x$ such that $(\mathbf{x}, \mathbf{1})$ is an FSE also constitutes an ESE.

\section{FSE Analysis}\label{sec:fse}

In this section, we analyze how creators respond to an allocation rule. Subsection~\ref{subsec:best-sharing-response} assumes a fixed allocation rule $f_\rho$ and studies the structure of ESEs in $G(\rho)$. We show that every $\rho$ induces a unique ESE, although different values of $\rho$ may induce different ESEs. In Subsection~\ref{subsec:fse-existence}, we move to study FSEs. Namely, \Cref{thm:fse-existence} characterizes a sufficient condition under which the unique ESE $\x^\star$ also constitutes the unique FSE $(\x^\star, \mathbf{1})$.

\subsection{Best Sharing Response and Uniqueness of ESE}\label{subsec:best-sharing-response}

We begin by analyzing the creators' best responses under a fixed allocation rule $f_\rho$. First, we provide a full characterization of the optimal sharing level $s_i$ for a given profile $(\x, \mathbf s_{-i})$, showing that it takes a threshold form.

\begin{restatable}{lemma}{LemSChoiceThreshold}
    \label{lem:s_choice_threshold}
    Fix an arbitrary creator $i$ with content $x_i >0$, and a profile $(\mathbf x_{-i}, \mathbf{s}_{-i})$. Further, let $\tau_i$  such that $\tau_i = \frac{x_i}{x_i + \xtotmi + \alpha \mathbf{x}^{\top}_{-i} \mathbf{s}_{-i}}$. The optimal sharing level $s$ of creator $i$ w.r.t. $(x_i, \mathbf x_{-i}, \mathbf s_{-i})$, i.e., $s\in \argmax_{s\in [0,1]} U_i ((x_i,\mathbf x_{-i}),(s, \mathbf{s}_{-i}); f_{\rho}) $ is given by
    \begin{equation}\label{eq:sharing-threshold-form}
        s^\star =
        \begin{cases}
            1 &    \rho > \tau_i \\
            \textnormal{any number from } [0,1] &    \rho  = \tau_i \\
            0 &    \rho < \tau_i \\
        \end{cases}.
    \end{equation}
\end{restatable}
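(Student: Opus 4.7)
The plan is to substitute the proportional allocation rule $f_{\rho}$ from \eqref{eq:f-rho-definition} directly into $U_i$ and exploit the cancellation between the GenAI quality factor $\qai$ and the denominator $\stot$ appearing in $f_{i,\rho}$. After this simplification, the creator's utility becomes a rational function of $s_i$ whose numerator and denominator are both affine in $s_i$, which immediately yields the threshold structure.

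First I would introduce the shorthand $A := \xmi^{\top} \mathbf{s}_{-i}$, so that $\qai = \alpha(x_i s_i + A)$ and $\stot = x_i s_i + A$. The direct-traffic term then reads $\Tx \cdot \frac{x_i}{\xtot + \alpha(x_i s_i + A)}$, while for the GenAI term, whenever $x_i s_i + A > 0$,
\[
\Tx \cdot \frac{\alpha(x_i s_i + A)}{\xtot + \alpha(x_i s_i + A)} \cdot \rho \cdot \frac{x_i s_i}{x_i s_i + A} \;=\; \Tx \cdot \frac{\alpha \rho\, x_i s_i}{\xtot + \alpha(x_i s_i + A)},
\]
so the factor $(x_i s_i + A)$ cancels. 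The edge case $x_i s_i + A = 0$ can only occur when $s_i = 0$ and $A=0$, in which case $f_{i,\rho} = 0$ by definition, and this formula holds by continuous extension. Combining the two revenue sources and suppressing the $s_i$-independent term $c_i(x_i)$, the utility reduces to
\[
U_i(\cdot) \;=\; \Tx \cdot \frac{x_i\bigl(1 + \alpha \rho s_i\bigr)}{\xtot + \alpha(x_i s_i + A)} - c_i(x_i).
\]

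Next I would differentiate this rational expression in $s_i$ using the quotient rule. The key calculation is that the numerator of $\partial U_i/\partial s_i$ simplifies, after all $s_i$-dependent cross-terms cancel, to
\[
\alpha\, x_i\, \Tx \cdot \bigl[\rho(\xtot + \alpha A) - x_i\bigr],
\]
divided by a strictly positive squared denominator. Since $\xtot = x_i + \xtotmi$, we have $\xtot + \alpha A = x_i + \xtotmi + \alpha\, \xmi^{\top}\mathbf{s}_{-i}$, which is precisely the denominator defining $\tau_i$. Hence the sign of $\partial U_i/\partial s_i$ is constant in $s_i \in [0,1]$ and coincides with the sign of $\rho - \tau_i$.

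The conclusion then follows by monotonicity: if $\rho > \tau_i$ the utility is strictly increasing in $s_i$, so $s^\star = 1$; if $\rho < \tau_i$ it is strictly decreasing, so $s^\star = 0$; and if $\rho = \tau_i$ the utility is constant in $s_i$, so every value in $[0,1]$ is optimal. I do not anticipate a substantive obstacle: the only step requiring care is the algebraic cancellation of $\stot$ in the denominator of $f_{i,\rho}$ with the $\qai$ factor in the traffic share, since this is precisely the mechanism that makes the slope in $s_i$ independent of $s_i$ itself and produces the clean threshold in $\rho$.
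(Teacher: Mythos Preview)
Your proposal is correct and follows essentially the same approach as the paper: simplify the utility by cancelling $\stot$ against the $\qai$ factor to obtain $\Tx\cdot\frac{x_i(1+\alpha\rho s_i)}{\xtot+\alpha(x_is_i+A)}-c_i(x_i)$, then differentiate in $s_i$ and observe that the sign of the derivative is constant and determined by $\rho(\xtot+\alpha A)-x_i$. The paper's proof is structurally identical, differing only in notation (it writes $D(s_i)$ for your denominator and $T$ for $\Tx$).
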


Due to space constraints, all proofs in this and subsequent sections are deferred to the appendix. The proof of \Cref{lem:s_choice_threshold} involves showing that creator $i$'s utility is strictly increasing in $s_i$ when $\rho > \tau_i$, strictly decreasing when $\rho < \tau_i$, and constant when $\rho = \tau_i$, yielding \Cref{eq:sharing-threshold-form}.

\Cref{lem:s_choice_threshold} reveals an interesting asymmetry: Creators producing a large share of $\xtot$ have higher thresholds; thus, they are more inclined to withhold their data, as doing so weakens a powerful GenAI competitor. In contrast, creators with relatively low quality have stronger incentives to share. Notably, the sharing decision becomes effectively binary---each creator either fully shares or fully withholds, with no incentive to choose an intermediate level.

Another insight from \Cref{lem:s_choice_threshold} concerns the case $\mathbf{s} = \mathbf{0}$, where all creators initially withhold their data. Fix a quality profile $\x$, and notice that if the number of creators $n$ is moderately large, the threshold $\tau_i = \frac{x_i}{\xtot}$ becomes small for most creators. Consequently, even a low revenue-allocation parameter $\rho  \approx \nicefrac{1}{n}$ can make it beneficial for the least productive creator, denoted by $i$, to deviate and set $s_i = 1$. This deviation increases $\stot$, thereby lowering the threshold $\tau_j$ for other creators $j \neq i$ to $\frac{x_j}{\xtot + \alpha x_i} <\frac{x_j}{\xtot}=\tau_j$. As a result,  creator $i$'s deviation can potentially trigger a cascade of full-sharing responses from other players, an effect similar to the classic prisoner's dilemma.

Next, we study creators' quality choices under enforced sharing.

\begin{restatable}{lemma}{LemUniqueESE}
    \label{lem:unique-ese}
    For any fixed $\rho$, the subgame $G(\rho)$ admits a unique ESE.
\end{restatable}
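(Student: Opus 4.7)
The plan is to reduce the ESE question to a scalar fixed-point problem by exploiting the aggregative structure of the enforced-sharing game. First I would substitute $\s = \mathbf 1$ into the utility: using $\qai = \alpha \xtot$ and $f_{i,\rho}(\x,\mathbf 1) = \rho x_i/\xtot$, the payoff collapses to
\[
U_i(\x, \mathbf 1; f_\rho) = \kappa \cdot \xtot^{\gamma - 1} x_i - c_i(x_i), \qquad \kappa := \mu \cdot \tfrac{1 + \alpha \rho}{1+\alpha} > 0,
\]
so creator $i$'s payoff depends on $\xmi$ only through the aggregate $X := \xtot$. A direct second-derivative computation shows that the prize term $\kappa x_i(x_i + \xtotmi)^{\gamma - 1}$ is concave in $x_i$ whenever $\gamma \leq 1$; subtracting the strictly convex cost $c_i$ gives strict concavity of $U_i(\cdot,\xmi)$, so each creator has a unique interior best response characterized by the first-order condition
\[
\kappa X^{\gamma - 2}\bigl[X - (1-\gamma) x_i\bigr] = c_i'(x_i),
\]
where the Inada-type conditions $c_i'(0^+) = 0$ and $c_i'(\infty) = \infty$ ensure the maximizer is strictly positive and finite.

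To establish both existence and uniqueness of the ESE aggregate, I would pass to the share variable $\sigma_i := x_i/X$. The FOC then rewrites as
\[
c_i'(\sigma_i X) = \kappa X^{\gamma - 1}\bigl[1 - (1-\gamma)\sigma_i\bigr],
\]
and for $\gamma < 1$ the left-hand side is strictly increasing in both $\sigma_i$ and $X$ while the right-hand side is strictly decreasing in both (one checks $\sigma_i \leq 1/(1-\gamma)$ from nonnegativity of $c_i'$), so the implicit function theorem produces a well-defined, strictly decreasing function $\sigma_i(X)$. The equilibrium condition $\sum_i \sigma_i(X^\star) = 1$ therefore has at most one solution. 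Boundary analysis via the Inada conditions yields $\sigma_i(X) \to 1/(1-\gamma)$ as $X \to 0^+$ (so $\sum_i \sigma_i \to n/(1-\gamma) > 1$, using $n \geq 2$) and $\sigma_i(X) \to 0$ as $X \to \infty$, so the intermediate value theorem produces a unique $X^\star$. The degenerate case $\gamma = 1$ is immediate: the FOC decouples into $c_i'(x_i) = \kappa$, determining each $x_i$ independently. Reconstructing $x_i^\star = \sigma_i(X^\star) X^\star$ recovers the unique ESE.

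The main obstacle will be the monotonicity step. Working directly with the level function $\psi_i(X) = \sigma_i(X) X$ that solves the FOC yields a derivative whose sign is indefinite: the shrinking prize $\kappa X^{\gamma-1}$ and the shrinking effective marginal-cost coefficient $\kappa(1-\gamma)X^{\gamma-2}$ both depend on $X$ and pull $\psi_i$ in opposite directions, so a naive ``decreasing sum'' argument fails. The salvage is to change variables to the share $\sigma_i$: the $X$-dependence of the FOC then factors cleanly through $\kappa X^{\gamma-1}$, making both sides of the defining equation monotone in the ``same direction'' in the pair $(\sigma_i, X)$, after which the implicit function theorem yields unambiguous strict monotonicity of $\sigma_i(X)$.
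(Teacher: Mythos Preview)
Your argument is correct and genuinely different from the paper's. The paper verifies Rosen's diagonal strict concavity condition: after computing the pseudo-gradient, it shows via a somewhat technical quadratic-form lemma that the game operator is monotone, then invokes Rosen's theorem on the compactified strategy space $\prod_i [0,x_i^{\max}]$. You instead exploit the aggregative structure directly: each FOC depends on $(x_i, X)$ only, so you parameterize by $X$, extract the share function $\sigma_i(X)$, and reduce equilibrium to the scalar equation $\sum_i \sigma_i(X)=1$, which you solve by monotonicity and the intermediate value theorem. Your route is more elementary and self-contained---no appeal to Rosen, no auxiliary quadratic-form lemma---and the share reparameterization cleanly resolves the sign ambiguity you correctly anticipate in the level function $\psi_i(X)$. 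The paper's route, however, buys more downstream: the monotonicity it establishes is later upgraded to \emph{strong} monotonicity under $\beta$-strongly convex costs, which is what drives the $O(1/T)$ convergence of multi-agent mirror descent in the algorithmic section. Your approach would need a separate argument there.
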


The proof relies on Rosen's diagonal strict concavity (DSC) condition \cite{rosen1965existence}, a standard approach for establishing existence and uniqueness of PNEs in concave games.

Given the existence and uniqueness established in \Cref{lem:unique-ese}, we can define the ESE correspondence: Let $\mathcal X: [0,1] \to [0,\infty)^n$ denote the mapping from $\rho$ to its corresponding unique ESE profile, $\xr{\rho}$, in the subgame $G(\rho)$.

\subsection{Existence of FSE}\label{subsec:fse-existence}

We now examine conditions under which full sharing can arise in a PNE. For a profile to be an FSE, it must induce a quality profile that constitutes an ESE under enforced sharing. The uniqueness of the ESE implies that, if an FSE exists, both the creators' content qualities and their sharing decisions are uniquely determined. Therefore, the remaining question is whether the provided revenue allocation is enough to induce such a profile in equilibrium.

First, we consider the two extreme cases, $\rho=0$ and $\rho = 1$. The former is trivial---reators can only be worse off from sharing; hence, withholding content is a dominant strategy, and no FSE exists in $G(0)$. The latter case is more interesting. If $\rho=1$, the platform allocates its entire GenAI-driven revenue to the creators. While this might not be practical, it provides a useful reference point for creator-centered ecosystems.

\begin{restatable}{proposition}{PropUniquePNERhoOne}
    \label{prop:unique_pne_rho1}
    The subgame $G(1)$ admits a unique PNE, which is an FSE.
\end{restatable}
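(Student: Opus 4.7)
The plan is to exploit the unique ESE guaranteed by \Cref{lem:unique-ese} together with the threshold characterization of \Cref{lem:s_choice_threshold} to handle existence and uniqueness in one shot. Let $\x^\star = \xr{1}$ denote the unique ESE of $G(1)$. I would first argue that every $x_i^\star > 0$: the cost assumption $\lim_{x \to 0^+} c_i'(x) = 0$, combined with the fact that for $\norm{\x_{-i}} > 0$ the direct-traffic term has strictly positive marginal value at $x_i = 0$, implies that $x_i = 0$ cannot be a best response to any profile with other-creator quality present. A brief separate check rules out the degenerate profile $\x^\star = \mathbf 0$, since from there any single creator can profitably deviate to a small positive quality. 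Hence $x_i^\star > 0$ for all $i$, and in particular $\norm{\x^\star_{-i}} > 0$ whenever $n \ge 2$.

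Next I would verify that $(\x^\star, \mathbf 1)$ is a PNE, hence an FSE. For any candidate joint deviation $(x_i', s_i')$ with $x_i' > 0$, the threshold of \Cref{lem:s_choice_threshold} evaluated at $(\x^\star_{-i}, \mathbf 1_{-i})$ becomes $\tau_i = \frac{x_i'}{x_i' + (1+\alpha)\norm{\x^\star_{-i}}} < 1 = \rho$, so full sharing is the strictly best sharing response against $(x_i', \x^\star_{-i})$; if instead $x_i' = 0$, the utility is independent of $s_i'$. Either way, the deviation's value is at most $U_i((x_i', \x^\star_{-i}), \mathbf 1; f_1)$, which by the ESE property is at most $U_i(\x^\star, \mathbf 1; f_1)$, establishing the FSE claim.

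For uniqueness, I would take an arbitrary PNE $(\x, \s)$ of $G(1)$ and reduce it to $(\x^\star, \mathbf 1)$ in three moves: (i) repeat the positivity argument to conclude $x_i > 0$ for all $i$, since a creator with $x_i = 0$ earns zero utility while a small deviation to $(x_i', 0)$ yields strictly positive direct-traffic revenue against vanishing marginal cost; (ii) with all qualities positive, $\norm{\x_{-i}} > 0$ and \Cref{lem:s_choice_threshold} at $\rho = 1$ forces $\tau_i < 1$ and therefore $s_i = 1$, so $\s = \mathbf 1$; (iii) with $\s = \mathbf 1$ fixed, each $x_i$ is a best quality-only response, so $\x$ is an ESE, and uniqueness from \Cref{lem:unique-ese} forces $\x = \x^\star$.

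The main obstacle I anticipate is a careful treatment of the boundary configurations where $\norm{\x_{-i}}$ vanishes, in which case $\tau_i = 1$ and \Cref{lem:s_choice_threshold} becomes non-strict, and of the $\gamma = 0$ corner of the traffic model, where $T(\x)$ is constant so the marginal-traffic calculation at $x_i = 0$ must be redone. Both are resolved by short marginal-utility computations using $c_i(0) = 0$ and $\lim_{x \to 0^+} c_i'(x) = 0$, but they must be spelled out cleanly so that the positivity of all qualities in every PNE is unconditional, which is the crux on which everything else rests.
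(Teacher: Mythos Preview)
Your proposal is correct and follows essentially the same route as the paper: establish strict positivity of all qualities (the paper packages this as \Cref{lem:dominance-positive-quality}), use the $\rho=1>\tau_i$ threshold to force $s_i=1$ both when verifying the FSE and when ruling out non-full-sharing PNE, and then invoke uniqueness of the ESE. The only substantive difference is that the paper proves the statement for the generalized quality function $\qai=\alpha(\stot)^\beta$ with $\beta\in(0,1]$, which requires a direct derivative computation in $s_i$ rather than \Cref{lem:s_choice_threshold}; for the $\beta=1$ case to which the proposition is stated, your use of \Cref{lem:s_choice_threshold} is the cleaner move.
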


The result builds on a key structural property: When $\rho = 1$, full sharing becomes a dominant strategy for every creator. \Cref{lem:s_choice_threshold} offers intuition: As long as $\x$ is strictly positive, $\rho=1$ will surpass every possible threshold $\tau_i$, no matter the choice of $x_i$. This guarantees that any PNE must involve full sharing, and existence and uniqueness of the resulting quality profile then follow from \Cref{lem:unique-ese}. Extensions of \Cref{prop:unique_pne_rho1} to other forms of $\qai$ are discussed in \Cref{sec:extensions}.

Generalizing \Cref{prop:unique_pne_rho1} beyond $\rho=1$ requires careful attention. Although \Cref{lem:s_choice_threshold,lem:unique-ese} provide a tight characterization of the (decoupled) best responses in $s$ and $x$, the main challenge is ensuring that no combined deviation in both is beneficial. The following~\Cref{thm:fse-existence} offers a sufficient condition for a value of $\rho$ to induce an FSE.

\begin{restatable}{theorem}{ThmFSEExistence}
    \label{thm:fse-existence}
    Fix any $\rho \in [0,1]$. For every $i \in [n]$, define $x_i^{\max}$ to be the unique positive value satisfying the equation $c_i(z) = \mu z^\gamma$. If $\rho > \max_{i\in[n]} \left\{ \frac{x_i^{\max}}{x_i^{\max} + (1+\alpha) \norm{\mathcal{X}_{-i}(\rho)}} \right\}$, then $(\xr{\rho}, \mathbf{1})$ is the unique FSE of the subgame $G(\rho)$.
\end{restatable}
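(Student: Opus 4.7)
The plan is to reduce the joint-deviation question to a pure quality deviation by showing that under the stated bound on $\rho$, the best sharing response to any potentially profitable quality deviation of creator $i$ must be full sharing $s_i' = 1$. Once this is established, any joint deviation $(x_i', s_i')$ is weakly dominated by the pure quality deviation $(x_i', 1)$, and the latter is non-profitable because $\xrho$ is an ESE of $G(\rho)$ by \Cref{lem:unique-ese}. Uniqueness of the FSE is then automatic: the quality coordinates of any FSE must form an ESE, and \Cref{lem:unique-ese} asserts its uniqueness.

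The crux of the reduction is a uniform bound on creator $i$'s utility under full sharing by the others. Setting $X_{-i} = \norm{\mathcal{X}_{-i}(\rho)}$ and expanding the revenue fraction using $\mathbf{x}^\top \mathbf{s} = x_i' s_i' + X_{-i}$, I plan to show that for every $x_i' \ge 0$ and every $s_i' \in [0,1]$,
\[
    U_i\bigl((x_i', \mathcal{X}_{-i}(\rho)), (s_i', \mathbf{1}_{-i}); f_\rho\bigr) \le \mu (x_i')^\gamma - c_i(x_i').
\]
This is essentially a ``monopolist'' bound, and its derivation reduces to two elementary inequalities: first,
\[
    \frac{x_i'(1+\alpha \rho s_i')}{x_i'(1+\alpha s_i') + (1+\alpha) X_{-i}} \le \frac{x_i'}{x_i' + X_{-i}},
\]
which follows from $\rho \le 1$ together with the identity $x_i'(1+\alpha s_i') + (1+\alpha) X_{-i} - (1+\alpha s_i')(x_i' + X_{-i}) = \alpha (1-s_i') X_{-i} \ge 0$; and second, $(x_i' + X_{-i})^{\gamma-1} \le (x_i')^{\gamma-1}$ for $x_i' > 0$, which follows from $\gamma \le 1$. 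Because $\xrho$ is an ESE, creator $i$'s utility at the candidate FSE is at least the utility of playing $x_i = 0$, hence non-negative. Any strictly profitable deviation must therefore yield positive utility, and the defining equation of $x_i^{\max}$ then forces $x_i' < x_i^{\max}$.

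With the profitable deviation range restricted to $x_i' \in [0, x_i^{\max})$, I invoke \Cref{lem:s_choice_threshold}. Under full sharing by the others, the sharing threshold simplifies to $\tau_i(x_i') = x_i' / (x_i' + (1+\alpha) X_{-i})$, which is strictly increasing in $x_i'$; hence $\tau_i(x_i') < x_i^{\max} / (x_i^{\max} + (1+\alpha) X_{-i})$ for every $x_i' \in (0, x_i^{\max})$, and the hypothesis of the theorem yields $\rho > \tau_i(x_i')$. \Cref{lem:s_choice_threshold} then gives $s_i^\star = 1$ as the unique optimal sharing response, so $(x_i', s_i')$ is weakly dominated by $(x_i', 1)$, which is non-profitable by the ESE property. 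The boundary case $x_i' = 0$ gives utility $0$ and is non-profitable as well. This establishes that $(\xrho, \mathbf{1})$ is an FSE, with uniqueness inherited from \Cref{lem:unique-ese}. I expect the main obstacle to be the uniform monopolist bound above: the Tullock-style denominator couples $x_i'$ and $s_i'$ both through the direct-attention share and through $\qai$, so the inequalities must exploit $\rho \le 1$ and $\gamma \le 1$ in tandem without introducing any slack that would spoil the comparison with $x_i^{\max}$.
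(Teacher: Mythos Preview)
Your proposal is correct and follows essentially the same route as the paper: bound any potentially profitable deviation quality by $x_i^{\max}$, then use the monotonicity of the sharing threshold $\tau_i(x_i')$ together with the hypothesis on $\rho$ to force $s_i'=1$, reducing to the ESE property. The only cosmetic difference is that the paper factors the ``monopolist bound'' out as a separate lemma (\Cref{lem:rational_quality_bound}) and proves it by a case split on $s_i'\in\{0,1\}$, whereas your inline inequality $\frac{x_i'(1+\alpha\rho s_i')}{x_i'(1+\alpha s_i')+(1+\alpha)X_{-i}}\le \frac{x_i'}{x_i'+X_{-i}}$ handles all $s_i'\in[0,1]$ uniformly---a slightly slicker but equivalent derivation.
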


\begin{proof}[\textbf{\textup{Proof Sketch of \Cref{thm:fse-existence}}}]
    Fix some creator $i$. The proof relies on $x_i^{\max}$, a quality such that any choice of $x_i > x_i^{\max}$ will result in a strictly negative utility for $i$, regardless of its sharing level. When $\mathbf s_{-i} = \mathbf{1}_{-i}$, the threshold from \Cref{lem:s_choice_threshold} becomes $\frac{x_i}{x_i + (1+ \alpha)\xtotmi}$. This threshold is strictly increasing in $x_i$; thus, if $\rho > \frac{x_i^{\max}}{x_i^{\max} + (1+\alpha) \norm{\mathcal{X}_{-i}(\rho)}}$, any profitable deviation of $i$ must involve $s_i=1$. If this holds for every $i$, \Cref{lem:unique-ese} guarantees no profitable deviations in $\x$ exist as well.
\end{proof}

\subsection{Analysis under Power Cost Functions}

Although the condition in \Cref{thm:fse-existence} is theoretically meaningful, it is not constructive. The key challenge is that the quantity $\norm{\mathcal{X}_{-i}(\rho)}$, determining the right-hand side of the inequality, depends on the ESE $\xr{\rho}$, which in turn is a function of $\rho$. Generally, it is not possible to obtain a closed-form expression for $\xr{\rho}$ in terms of $\rho$, making the inequality difficult to evaluate. To enable more precise analysis, we focus on a tractable yet flexible class of \emph{power cost functions}, a standard structure in economic and game-theoretic models of effort and production~\cite{yildiran2015game, roughgarden2005selfish, 10.5555/3692070.3694417}.

\begin{definition}\label{def:power-costs-instance}
    An instance is called \emph{$(n, a_{\min}, a_{\max}, \theta)$-power cost} if it consists of $n$ creators, where each creator $i \in [n]$ has a cost function $c_i(x_i) = a_i x_i^{\theta}$ with common exponent $\theta \in (1,2]$ and cost coefficients $a_i \in [a_{\min}, a_{\max}]$.
\end{definition}

Power cost functions enable more precise analysis. The next proposition reveals that as creator ecosystems grow large, inducing full sharing becomes increasingly tractable, requiring only modest revenue-sharing commitments from the platform.

\begin{restatable}{proposition}{PropAsymptoticStabilityInterval}
    \label{prop:asymptotic_stability_interval}
    For any $a_{\min}, a_{\max}, \theta$ and $w \in (0,1]$, there exists $N(w) \in \mathbb{N}$ such that for all $n > N(w)$, in any $(n, a_{\min}, a_{\max}, \theta)$-power cost instance, the subgame $G(\rho)$ admits an FSE for every $\rho \in (w,1]$.
\end{restatable}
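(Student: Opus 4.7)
The plan is to invoke \Cref{thm:fse-existence} and verify its sufficient condition uniformly over $\rho \in (w,1]$ once $n$ is large. Rearranging, that condition is equivalent to $\norm{\mathcal{X}_{-i}(\rho)} > \frac{x_i^{\max}(1-\rho)}{\rho(1+\alpha)}$ for every $i \in [n]$. Because the right-hand side is decreasing in $\rho$, a $\rho$-uniform version of this bound is obtained by substituting $\rho = w$. The proof thus reduces to two ingredients: a $\rho$-independent upper bound on each $x_i^{\max}$, and a lower bound on $\norm{\mathcal{X}_{-i}(\rho)}$ that diverges in $n$ uniformly over $\rho \in (w,1]$ and $i \in [n]$.

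The first ingredient is immediate from the power structure. The defining equation $c_i(x_i^{\max}) = \mu (x_i^{\max})^\gamma$ together with $c_i(z) = a_i z^\theta$ yields $x_i^{\max} = (\mu/a_i)^{1/(\theta-\gamma)} \leq X^{\max} := (\mu/a_{\min})^{1/(\theta-\gamma)}$, a constant independent of $n$ and $\rho$.

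The main step is to lower-bound $X := \norm{\mathcal{X}(\rho)}$. Under enforced sharing, creator $i$'s utility simplifies to $\phi \norm{\mathbf{x}}^{\gamma-1} x_i - a_i x_i^\theta$ with $\phi := \mu(1+\alpha\rho)/(1+\alpha)$. Standard arguments (positivity of the marginal utility at $x_i \to 0^+$ whenever some other creator is active, combined with instability of the all-zero profile) show that every ESE entry is strictly positive, so the interior FOC
\[
\phi X^{\gamma-2}\bigl(X + (\gamma-1)x_i\bigr) = \theta a_i x_i^{\theta-1}
\]
holds for every $i$. Summing over $i$ collapses to
\[
\phi X^{\gamma-1}(n - 1 + \gamma) = \theta \sum_{i=1}^n a_i x_i^{\theta-1}.
\]
Since $\theta - 1 \in (0,1]$, the power-mean inequality gives $\sum_i x_i^{\theta-1} \leq n^{2-\theta} X^{\theta-1}$. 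Substituting this and using $a_i \leq a_{\max}$ together with $\phi \geq \mu/(1+\alpha)$ rearranges to
\[
X^{\theta-\gamma} \geq \frac{\mu (n - 1 + \gamma)\, n^{\theta - 2}}{\theta (1+\alpha)\, a_{\max}},
\]
whose right-hand side grows like $n^{\theta-1}$ and is independent of $\rho$. Hence $X \to \infty$ as $n \to \infty$, uniformly over $\rho \in (w,1]$.

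Combined with the ESE-level bound $x_i \leq X^{\max}$ (which follows from $U_i \geq 0$, $X \geq x_i$, and $\gamma \leq 1$), this gives $\norm{\mathcal{X}_{-i}(\rho)} \geq X - X^{\max}$, which likewise diverges uniformly. Choosing $N(w)$ large enough that $X - X^{\max} > X^{\max}(1-w)/[w(1+\alpha)]$ for every $n > N(w)$ then completes the argument. The main obstacle I anticipate is the FOC-aggregation step: since the ESE is only implicitly characterized via Rosen's DSC condition, the power cost structure together with the power-mean inequality is essential for converting $\sum_i a_i x_i^{\theta-1}$ into an expression in $\norm{\mathbf{x}}^{\theta-1}$ and thereby extracting an explicit polynomial-in-$n$ lower bound. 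The $\rho$-uniformity, crucial for handling the entire open interval $(w,1]$ at once, ultimately rests on the fact that $\phi$ admits a positive lower bound independent of $\rho$.
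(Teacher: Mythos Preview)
Your proof is correct and follows the same high-level strategy as the paper (invoke \Cref{thm:fse-existence}, bound $x_i^{\max}$ above, bound $\norm{\mathcal{X}_{-i}(\rho)}$ below by something growing in $n$), but the lower-bound step is handled differently. The paper defers to an auxiliary lemma (\Cref{lem:bound_on_total_quality_in_ese_power_costs}) which first controls the ratios $q_i = \mathcal{X}_i(\rho)/\norm{\xr{\rho}}$ by dividing the FOC of an arbitrary creator by that of the minimizer $j^\star = \argmin_j q_j$, thereby obtaining a uniform bound $q_i \le C/(C+1)$, and only then substitutes back into the FOC to get $\norm{\xr{\rho}} = \Theta(n^{(\theta-1)/(\theta-\gamma)})$. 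You instead sum the $n$ FOCs directly, apply the power-mean inequality to $\sum_i x_i^{\theta-1}$, and read off the same polynomial growth rate in one stroke.

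Your route is more elementary and self-contained for this particular proposition: it avoids the two-creator comparison argument and needs only the lower bound (not the matching upper bound) on $\norm{\xr{\rho}}$. The paper's approach buys a reusable two-sided $\Theta$ estimate on both $\norm{\xr{\rho}}$ and $\norm{\mathcal{X}_{-i}(\rho)}$, which it invokes again later (e.g., in \Cref{prop:power_cost_scaling}); for the present statement, however, your argument is tighter and arguably cleaner.
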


The proof of \Cref{prop:asymptotic_stability_interval} involves the construction of a lower bound on $\norm{\mathcal{X}_{-i}(\rho)}$ that grows with the number of creators $n$. We then obtain the desired result by substituting this lower bound in \Cref{thm:fse-existence} and extracting the necessary $n$.

While \Cref{prop:asymptotic_stability_interval} suggests that platforms can achieve FSE with low revenue sharing, this raises a natural question about content production incentives. Lower allocation parameters may reduce creators' motivation to invest in quality, potentially decreasing overall platform traffic. The next proposition examines this relationship.

\begin{restatable}{proposition}{PropMonotonicityHomogeneous}
    \label{prop:monotonicity_homogeneous}
    Fix any homogeneous power cost instance, i.e., $(n, a, a, \theta)$ for some $a > 0$. Then, $\norm{\xr{\rho}}$ is strictly increasing in $\rho$.
\end{restatable}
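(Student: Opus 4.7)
The plan is to exploit symmetry to reduce the ESE to a scalar, then solve a first-order condition explicitly. First, I would argue that the unique ESE $\xr{\rho}$ is symmetric: since the instance is homogeneous, the enforced-sharing game is invariant under any permutation of the creators, so any permutation of an ESE is itself an ESE. Combined with the uniqueness from \Cref{lem:unique-ese}, this forces all coordinates of $\xr{\rho}$ to agree. Call the common value $x^\star(\rho)$, so that $\norm{\xr{\rho}} = n\, x^\star(\rho)$, and it suffices to show $x^\star(\rho)$ is strictly increasing in $\rho$.

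Next, I would simplify creator $i$'s utility on the enforced-sharing hyperplane. With $\mathbf{s}=\mathbf{1}$, we have $\qai = \alpha \norm{\mathbf{x}}$ and $\stot = \norm{\mathbf{x}}$, so the direct-traffic and GenAI-share contributions collapse nicely, giving
\[
U_i(\mathbf{x}, \mathbf{1}; f_\rho) \;=\; \mu\, \norm{\mathbf{x}}^{\gamma-1}\, \frac{(1+\alpha\rho)\, x_i}{1+\alpha} \;-\; a\, x_i^{\theta}.
\]
A short check shows that this expression is strictly concave in $x_i$ (the revenue part has second derivative proportional to $(\gamma-1)(x_i + \norm{\mathbf{x}_{-i}})^{\gamma-3}[\gamma x_i + 2\norm{\mathbf{x}_{-i}}] \le 0$, and the cost is strictly convex), so the ESE is pinned down by the interior first-order condition. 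Evaluating the FOC at the symmetric profile $x_i = x^\star$, $\norm{\mathbf{x}} = n x^\star$ and rearranging yields an identity of the form
\[
(x^\star)^{\theta-\gamma} \;=\; \frac{\mu\,(1+\alpha\rho)\,(n+\gamma-1)\, n^{\gamma-2}}{\theta\, a\,(1+\alpha)}.
\]
Under the assumed parameter ranges $\theta \in (1,2]$ and $\gamma \in [0,1]$, the exponent $\theta-\gamma$ is strictly positive, so this equation defines $x^\star(\rho)$ uniquely. The only $\rho$-dependent term on the right-hand side is $1+\alpha\rho$, which is strictly increasing, and therefore $x^\star(\rho)$ — and hence $\norm{\xr{\rho}}$ — is strictly increasing in $\rho$.

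The main conceptual point that requires care is the symmetry reduction; the rest is a direct computation. I would present it as a short invariance argument from uniqueness, avoiding any need to re-solve the game from scratch. A secondary, but routine, issue is verifying that the interior FOC genuinely yields the best response (rather than a boundary optimum at $x_i = 0$): this follows from strict concavity together with the Inada-type conditions $c_i'(0^+) = 0$ already assumed in the model, which ensure a strictly positive interior maximizer whenever $\norm{\mathbf{x}_{-i}} > 0$.
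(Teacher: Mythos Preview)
Your proposal is correct and follows essentially the same approach as the paper's proof: both reduce to a symmetric scalar via uniqueness, solve the interior FOC at the symmetric profile to obtain $(x^\star)^{\theta-\gamma} = C\,(1+\alpha\rho)$, and conclude strict monotonicity from $\theta-\gamma>0$. Your write-up is slightly more explicit about justifying symmetry (permutation invariance plus uniqueness) and the validity of the interior FOC (concavity and the Inada condition $c_i'(0^+)=0$), which the paper handles more tersely by citing \Cref{lem:unique-ese} and \Cref{lem:dominance-positive-quality}.
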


\Cref{prop:asymptotic_stability_interval} and \Cref{prop:monotonicity_homogeneous} reveal competing incentives in the choice of $\rho$: Retaining a larger share of existing GenAI revenue versus increasing total revenue through larger traffic. The next section navigates this tradeoff algorithmically.

\section{The Platform's Revenue Maximization Problem}
\label{sec:platform-optimization}

We now turn to the platform's strategic problem of selecting the revenue-allocation parameter $\rho \in [0,1]$ that maximizes its retained revenue, subject to the constraint that creators respond by fully sharing their data. This induces a bilevel optimization problem: The platform acts as a Stackelberg leader, anticipating the equilibrium behavior of creators in response to its choice of $\rho$.

\subsection{Formulation and Relaxation}\label{subsec:formulation_and-relaxation}

The platform's optimization problem, which we term \textbf{P}latform \textbf{O}ptimization, is given by:

\begin{equation}\tag{PO}\label{eq:platform-optimization}
    \sup_{\rho \in [0,1]} \ U_P(\xr{\rho}, \mathbf{1}; f_\rho)  \quad
\text{subject to } (\xr{\rho}, \mathbf{1}) \text{ is an FSE}.
\end{equation}

Further, let $\text{OPT}$ denote the supremum, which is not necessarily attainable by any feasible solution. Solving \ref{eq:platform-optimization} is challenging for two main reasons. First, checking whether a specific $\rho$ induces an FSE requires solving a complex fixed-point problem involving a generally non-convex system of equations, making exact computation of FSE intractable in most cases. Second, even with the ability to verify the above efficiently, optimizing $U_P(\xr{\rho}, \mathbf{1}; f_\rho)$ over $\rho$ analytically is nontrivial due to its implicit dependence on $\xr{\rho}$. Practical approaches might involve a grid search over $\rho$, but this approach does not guarantee the approximate solution will be feasible with respect to the FSE constraint. To that end, we relax the constraint and focus on approximate FSEs. Finally, to allow efficient approximation, we impose additional regularity assumptions on the cost functions by assuming they are $\beta$-strongly convex---a standard condition in strategic optimization \cite{NEURIPS2024_2d52879e, pmlr-v89-goel19a}.

\subsection{Algorithm}\label{subsec:general-algorithm}

\begin{algorithm}[t]
\caption{Approximation Algorithm for \ref{eq:platform-optimization}}
\label{alg:main}
\begin{algorithmic}[1] % [1] enables line numbering
\Require Game instance $\mathcal{I}$, $\delta$, $\eta$
\Ensure $\hat\rho$

\State $\hat \rho \gets 0, \hat U \gets 0$ \label{line:init}
\For{$t \gets 1$ \textbf{to} $\lceil 1/\delta \rceil$} \label{line:loop-start}
    \State Let $\rho \gets \min\left\{\dfrac{t}{\lceil 1/\delta \rceil}, 1 \right\}$ \label{line:rho}
    \State Find $\x$ such that $\norm{\x - \xr{\rho}}_2 \leq \delta$ \label{line:compute-ese}
    \If{$(\x,\mathbf{1})$ is an $\eta$-FSE in $G(\rho)$ and $U_P(\x,\mathbf{1}; f_\rho) > \hat U$} \label{line:check-fse}
        \State $\hat U \gets U_P(\x,\mathbf{1}; f_\rho), \hat \rho \gets \rho$ \label{line:update-best}
    \EndIf
\EndFor \label{line:loop-end}

\State \Return $\hat\rho$ \label{line:return}
\end{algorithmic}
\end{algorithm}

We propose a simple search-based procedure for computing an approximate solution of \ref{eq:platform-optimization}, illustrated in \Cref{alg:main}. It receives as input an instance $\mathcal I$ and two parameters, $\delta$ and $\eta$. We carefully determine the values of $\delta$ and $\eta$ to reach a certain approximation level; we elaborate later in \Cref{thm:alg_main}. \Cref{alg:main} performs a grid search over the interval $[0,1]$, evaluating a discrete set of allocations parameters at resolution $\delta$. Starting with initialized variables (Line~\ref{line:init}), the algorithm iterates through candidate values $\rho = \nicefrac{t}{\floor{1/\delta}}$ (Lines~\ref{line:loop-start}--\ref{line:loop-end}). For each candidate, it approximates the corresponding ESE (Line~\ref{line:compute-ese}) and verifies whether it constitutes an $\eta$-FSE (Line~\ref{line:check-fse}). If this stability criterion is met, it updates the current best solution whenever the resulting platform revenue exceeds the previous maximum (Line~\ref{line:update-best}). The algorithm ultimately returns the best-performing $\rho$ across all valid candidates (Line~\ref{line:return}). The next result establishes that the algorithm returns a near-optimal solution with provable guarantees on both approximation quality and runtime.

\begin{restatable}{theorem}{ThmAlgMain}
    \label{thm:alg_main}
    Fix any instance $\mathcal{I}$ with strongly convex cost functions. There exist constants $A,B$ such that for any $\varepsilon > 0$, the execution of \Cref{alg:main} with $\delta = \frac{\varepsilon}{A}$ and $\eta = \frac{\varepsilon}{4}$ can be implemented in time $O\left(\frac{A^2 B}{\varepsilon^2}\right)$. Moreover, its output $\hat{\rho}$ induces an $\varepsilon$-FSE and satisfies $U_P(\xr{\hat{\rho}}, \mathbf{1}; f_{\hat{\rho}}) \geq \mathrm{OPT} - \varepsilon$.
\end{restatable}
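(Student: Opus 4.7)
The plan is to reduce the theorem to three Lipschitz-type smoothness facts on the ESE/utility structure, and then derive both guarantees via a clean grid-search argument. Using the bounds $x_i^{\max}$ from \Cref{thm:fse-existence}, I first confine the analysis to a compact set $\mathcal{K} \subset [0,\infty)^n$ that contains $\xr{\rho}$ for every $\rho \in [0,1]$ together with a $\delta$-neighborhood that covers the approximate ESEs produced by line \ref{line:compute-ese}; on $\mathcal{K}$ the denominator $\xtot + \qai$ stays bounded away from zero and all utility expressions are smooth. On this set I establish: (L1) the ESE map $\rho \mapsto \xr{\rho}$ is $L_{\x}$-Lipschitz on $[0,1]$; (L2) the scalar map $(\x,\rho) \mapsto U_P(\x, \mathbf{1}; f_\rho)$ is $L_U$-Lipschitz on $\mathcal{K} \times [0,1]$; and (L3) the per-creator deviation gap $\Delta_i(\x,\rho) := \sup_{x_i' \in [0,x_i^{\max}],\, s_i'\in[0,1]} U_i((x_i', \x_{-i}), (s_i', \mathbf{1}_{-i}); f_\rho) - U_i(\x, \mathbf{1}; f_\rho)$ is $L_B$-Lipschitz in $(\x, \rho)$. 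The constants $L_{\x}, L_U, L_B$ are instance-dependent and will be absorbed into $A$ and $B$.

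The key technical step is (L1). I revisit the DSC-based proof of \Cref{lem:unique-ese}: the $\beta$-strong convexity of the costs together with the strict diagonal concavity of the Tullock term on $\mathcal{K}$ makes the enforced-sharing pseudo-gradient $F(\x;\rho)$ strongly monotone in $\x$ with a modulus bounded below uniformly in $\rho$. Since $\xr{\rho}$ is the unique variational-inequality solution of $\langle F(\x;\rho), \x'-\x\rangle \geq 0$, a standard sensitivity argument for strongly monotone VIs then yields $\norm{\xr{\rho_1}-\xr{\rho_2}} \leq L_{\x}\abs{\rho_1-\rho_2}$, with $L_{\x}$ being the ratio of the Lipschitz constant of $F$ in $\rho$ (a bounded quantity on $\mathcal{K}$) to its monotonicity modulus in $\x$. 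Facts (L2) and (L3) then follow from direct differentiation of $\Tx$, $\qai$, and $f_\rho$ on $\mathcal{K}$, together with a $\sup$-of-Lipschitz argument for (L3).

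With (L1)-(L3) in hand, both guarantees reduce to bookkeeping. For the $\varepsilon$-FSE guarantee: whenever line \ref{line:check-fse} accepts some $\hat\rho$, the approximate profile $(\x,\mathbf{1})$ is $\eta$-FSE, and combining $\norm{\x-\xr{\hat\rho}}_2 \leq \delta$ with (L3) upgrades this to $(\xr{\hat\rho},\mathbf{1})$ being $(\eta + L_B\delta)$-FSE, which is at most $\varepsilon$ once $\eta = \varepsilon/4$ and $A \geq 4L_B$. For the revenue guarantee: pick a feasible $\rho^\star$ with $U_P(\xr{\rho^\star},\mathbf{1}; f_{\rho^\star}) \geq \mathrm{OPT} - \varepsilon/2$ (exists by definition of supremum), and let $\rho_t$ be the grid point closest to $\rho^\star$, so $\abs{\rho_t-\rho^\star}\leq \delta$. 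Chaining (L1) with (L3) and the exact FSE status at $\rho^\star$ shows that the $\x$ returned by line \ref{line:compute-ese} at $\rho_t$ is an $L_B(L_{\x}+2)\delta$-FSE, which is below $\eta$ once $A$ absorbs $L_B(L_{\x}+2)$; hence line \ref{line:check-fse} accepts at $\rho_t$, and (L2) then gives platform revenue within $L_U(L_{\x}+2)\delta$ of $\mathrm{OPT}-\varepsilon/2$, i.e., at least $\mathrm{OPT}-\varepsilon$ for $A$ chosen large enough. The complexity claim follows because the loop has $\ceil{1/\delta}=O(A/\varepsilon)$ iterations, and each iteration solves the strongly monotone enforced-sharing VI to accuracy $\delta$ in $O(B/\delta)$ steps via projected pseudo-gradient, then verifies $\eta$-FSE via the threshold form in \Cref{lem:s_choice_threshold} plus $n$ one-dimensional best-response searches. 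The hard part is the uniform-in-$\rho$ strong-monotonicity claim inside (L1): although $\beta$-strong convexity handles the cost term directly, one must show the Jacobian of the Tullock term has smallest negative eigenvalue bounded away from zero uniformly over $\mathcal{K}\times[0,1]$, which is exactly what confinement to the compact set $\mathcal{K}$ (via the $x_i^{\max}$ bounds of \Cref{thm:fse-existence}) enables.
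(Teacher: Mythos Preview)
Your outline matches the paper's proof almost exactly: Lipschitz continuity of the ESE map, of the platform revenue, and of the per-creator deviation gap, followed by grid-search bookkeeping and a final transfer from the approximate profile $\x$ back to the true ESE $\xr{\hat\rho}$. The paper obtains (L1) via the Implicit Function Theorem rather than VI sensitivity, but both routes rest on the same strong-monotonicity fact and are interchangeable here.

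Two technical points need adjustment. First, you misidentify the ``hard part'' of (L1): the Tullock revenue term is only monotone, not strongly monotone---its Jacobian's symmetric part is negative \emph{semi}-definite (this is exactly the quadratic-form inequality $\mathcal{Q}_\gamma \geq 0$ established inside the proof of \Cref{lem:unique-ese}), and its eigenvalues are not bounded away from zero. The uniform-in-$\rho$ strong monotonicity with modulus $\beta$ comes entirely from the $\beta$-strongly-convex costs; no additional control on the Tullock Jacobian is needed or available. Second, the $x_i^{\max}$ bounds from \Cref{thm:fse-existence} give only upper confinement of $\mathcal{K}$; to make ``the denominator $\xtot+\qai$ stays bounded away from zero'' rigorous you need a uniform lower bound on each $\mathcal{X}_i(\rho)$, which the paper derives separately from the ESE first-order conditions (inverting $c_i'$ at a point determined by $X_{UB}$). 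With these two fixes your argument goes through.
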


The exact specification of $A=A(\mathcal I)$ and $B=B(\mathcal I)$ involves somewhat technical details; however, both constants depend on upper and lower bounds of the equilibrium quality $\norm{\xr{\rho}}$, which under mild regularity assumptions are themselves polynomial in the number of creators $n$. 

\subsection{Proof Sketch of \Cref{thm:alg_main}}

\begin{figure}
    \centering
    \includegraphics[scale=1.]%[width=0.9\linewidth]
    {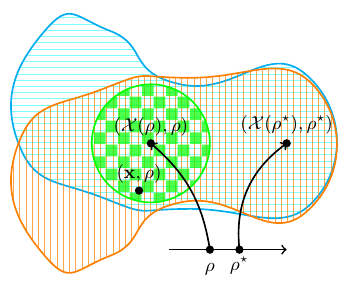}
    \caption{Illustrating the proof sketch.}
    \label{fig:approx}
\end{figure}

Fix any instance $\mathcal{I}$ and $\varepsilon > 0$, and let $\delta,\eta$ denote the parameters in \Cref{thm:alg_main} that are passed to the algorithm. Next, focus on any feasible solution $\rho^\star$ to \ref{eq:platform-optimization}, and let $\rho$ denote the closest value to $\rho^\star$ the algorithm considers in Line~\ref{line:rho}. Notably, $\abs{\rho - \rho^\star} \leq \delta$. \Cref{fig:approx} illustrates how $\rho^\star$ is mapped to the unique ESE of $G(\rho^\star)$, $\xr{\rho^\star}$.  Similarly, $\rho$ is mapped to $\xr{\rho}$, the unique ESE of $G(\rho)$.

\paragraph{Step 1: Beneficial deviations} We define the function $\phi$ such that $\phi(\mathbf{\tilde x}, \tilde \rho)$ is the maximal gain from any creator's deviation from $(\mathbf{\tilde x},\mathbf{1})$ in $G(\tilde \rho)$. Because $\rho^\star$ is feasible, $(\xr{\rho^\star}, \mathbf{1})$ is the FSE of  $G(\rho^\star)$; therefore, $\phi(\xr{\rho^\star},\rho^\star)=0$. Using technical lemmas, we show that $\phi$ is Lipschitz continuous and bound its Lipschitz constant. Additionally, we use the Implicit Function Theorem to bound the Lipschitz constant of $\xr \cdot$, circumventing its lack of a closed-form expression. The blue region in \Cref{fig:approx} (filled with horizontal lines) depicts the set of pairs $(\tilde {\mathbf x},\tilde \rho)$ such that $\phi(\tilde {\mathbf x},\tilde \rho) \leq \eta < \varepsilon$. A careful selection of $\delta$ implies that $\rho$ and $\rho^\star$ are close, and so are $\xr{\rho}$ and $\xr{\rho^\star}$; therefore, $(\xr{\rho}, \rho)$ lies within the blue region, implying that the condition in Line~\ref{line:check-fse} holds and $\rho$ is an $\varepsilon$-FSE of $G(\rho)$.

\paragraph{Step 2: Revenue approximation}
Similarly, we show that the function $U_P(\tilde{\mathbf x},\mathbf{1} ; f_{\tilde \rho})$ is Lipschitz continuous and upper bound its Lipschitz constant. This ensures that there exists a neighborhood around $(\xr{\rho^\star},\rho^\star)$ with pairs $(\tilde {\mathbf x},\tilde \rho)$ that obtain $\varepsilon$-close revenue. We let the orange region in \Cref{fig:approx}, filled with vertical lines,  depict all such pairs. Picking $\delta$ appropriately ensures that $(\xr{\rho},\rho)$ belongs to the orange region. Thus, in case $\rho^\star$ achieves a revenue arbitrarily close to $\text{OPT}$, at least one nearby point $\rho$ on the discrete grid yields a revenue within $\varepsilon$ of optimal.

\paragraph{Step 3: Resolving output mismatch}
Although we have demonstrated in Steps 1 and 2 that $\xr{\rho}$ satisfies both the $\varepsilon$-FSE and revenue guarantees when $\rho$ lies sufficiently close to $\rho^\star$, the algorithm does not evaluate $\xr{\rho}$ directly. Instead, it computes an approximate profile $\mathbf{x}$ such that $\norm{\mathbf{x} - \xr{\rho}}_2 \leq \delta$ (recall Line~\ref{line:compute-ese}). This creates a mismatch: Even if $\xr{\rho}$ is a valid and near-optimal solution, the algorithm might overlook it because the nearby profile $\mathbf{x}$ fails one of the checks in Line~\ref{line:check-fse}. To resolve this, let the green region (with a checkerboard pattern) in \Cref{fig:approx} denote all pairs $(\tilde{\mathbf{x}}, \rho)$ such that $\norm{\tilde{\mathbf{x}} - \xr{\rho}}_2 \leq \delta$. Unlike the orange and blue regions, the green region is a Euclidean ball. Note that $(\mathbf{x}, \rho)$ is in this region for any arbitrary $\mathbf x$ that may have been found in Line~\ref{line:compute-ese}. By choosing $\delta$ small enough so that the green ball is fully contained within both the blue and orange regions, we ensure that all such $\mathbf{x}$ exhibit the same stability and revenue guarantees as $\xr{\rho}$, and vice versa. Thus, the algorithm safely evaluates $\mathbf{x}$ in place of $\xr{\rho}$, and its output $\rho$ satisfies the desired approximation guarantees.

\paragraph{Efficient computation} First, we prove that the enforced sharing game belongs to the class of strongly monotone games~\cite{rosen1965existence}. This property allows us to apply multi-agent mirror descent to find the approximate quality profile required in Line~\ref{line:compute-ese} in $O(1/\delta)$ iterations~\cite{10.5555/3327345.3327469}, each requiring $O(n)$ time to evaluate creator gradients. The $\delta$-FSE condition can also be efficiently verified. For each creator, it suffices to solve a single one-dimensional optimization: Maximize utility over $x_i \ge 0$ with $s_i=0$, holding the strategies of all other creators fixed. This follows from \Cref{lem:s_choice_threshold}, which ensures that any profitable deviation is dominated by a deviation that has $s_i=1$ or $s_i=0$. The former is implicitly covered by the approximation guarantee of the computed profile $\x$, and the latter can be efficiently approximated, as we demonstrate that the utility function is Lipschitz continuous under each fixed value of $s$. We choose $\delta'$ small enough so that approximation errors from the ESE computation and the deviation check are jointly bounded by $\delta$. After finding the maximal utility under $s_i=0$, it remains only to test whether the resulting gain exceeds $\delta$, thereby completing the verification.

\subsection{Refinements for Power Cost Functions}\label{subsec:refinements-power-cost}

Having established the correctness and efficiency of \Cref{alg:main}, we now examine its behavior under specific cost structures. For the class of power cost functions (recall \Cref{def:power-costs-instance}), we obtain explicit bounds on the algorithm's runtime dependence on the number of creators.

\begin{restatable}{proposition}{PropPowerCostScaling}
    \label{prop:power_cost_scaling}
    Consider any $(n, a_{\min}, a_{\max}, \theta)$-power cost instance $\mathcal{I}$ with $a_{\min}, a_{\max} = O(1)$. Then the constants $A(\mathcal{I})$ and $B(\mathcal{I})$ from \Cref{thm:alg_main} satisfy $A(\mathcal{I}) = O(n^{2})$ and $B(\mathcal{I}) = O(n)$.
\end{restatable}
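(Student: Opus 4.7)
The plan is to trace the constants $A(\mathcal{I})$ and $B(\mathcal{I})$ back to the three ingredients surfaced by the proof sketch of \Cref{thm:alg_main}: Lipschitz constants of $\phi(\cdot,\cdot)$, $U_P(\cdot,\mathbf{1};f_{\cdot})$, and the ESE map $\mathcal{X}(\cdot)$; uniform bounds on the equilibrium qualities $\xtot$ and the scalars $x_i^{\max}$; and the per-iteration work of mirror descent together with the $\eta$-FSE verification. Under power costs $c_i(x)=a_ix^\theta$ with $a_i=\Theta(1)$, each of these admits an explicit polynomial bound in $n$, and combining them will yield $A(\mathcal I)=O(n^2)$ and $B(\mathcal I)=O(n)$.

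\textbf{Step 1 (equilibrium magnitudes).} I would first derive scaling bounds on $\xr{\rho}$ directly from the ESE first-order conditions. Under enforced sharing, $\qai=\alpha\xtot$, and each creator's utility simplifies to $U_i=\mu\kappa(\rho)x_i\xtot^{\gamma-1}-a_i x_i^\theta$ with $\kappa(\rho)=(1+\alpha\rho)/(1+\alpha)$. Setting $\partial_{x_i}U_i=0$ gives
\[
\theta a_i x_i^{\theta-1}=\mu\kappa(\rho)\xtot^{\gamma-1}\bigl(1+(\gamma-1)x_i/\xtot\bigr).
\]
Summing over $i$ and using $a_i,\kappa(\rho)=\Theta(1)$, this pins down $\xtot=\Theta(n^{(\theta-1)/(\theta-\gamma)})$ and $x_i=O(1)$ uniformly in $\rho$ and $i$; the scalars $x_i^{\max}$ solving $a_iz^\theta=\mu z^\gamma$ are likewise $O(1)$.

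\textbf{Step 2 (Lipschitz constants of order $n^2$).} Next, I would substitute these magnitudes into the Lipschitz estimates invoked in the proof sketch of \Cref{thm:alg_main}. For $\mathcal X(\cdot)$, I would apply the Implicit Function Theorem to the ESE fixed-point system and bound the inverse Jacobian by $1/\lambda_{\min}$, where $\lambda_{\min}$ is the strong monotonicity modulus of the enforced sharing game. For power costs, the Hessian contribution is $c_i''(x_i)=\theta(\theta-1)a_ix_i^{\theta-2}$, which under $x_i=O(1)$ stays bounded away from zero uniformly in $n$, so $\lambda_{\min}=\Omega(1)$. The Tullock-share terms appearing in $\phi$ and $U_P$ are rational functions of $(\x,\rho)$ whose partial derivatives scale at most polynomially in $\xtot$; combining these with Step~1 yields Lipschitz constants of order $O(n^2)$ for all three maps, so one may take $A(\mathcal I)=O(n^2)$.

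\textbf{Step 3 (per-iteration cost).} Finally, I would bound the work inside a single grid point. One mirror-descent step requires $n$ gradient evaluations plus the aggregates $\xtot$ and $\stot$, for a total of $O(n)$ time, and the $\delta$-accurate ESE subroutine needs $O(1/\delta)$ such steps by standard convergence rates for strongly monotone games. Verifying the $\eta$-FSE condition amounts to $n$ one-dimensional optimizations over $x_i$ with $s_i=0$, each solvable in time $O(n)$ via the Lipschitz continuity of $U_i$ established in the proof of \Cref{thm:alg_main}. This yields $B(\mathcal I)=O(n)$, and substituting $A=O(n^2)$, $B=O(n)$ into the $O(A^2B/\varepsilon^2)$ runtime gives the claimed scaling. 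The main obstacle I foresee is Step~2: both $\phi$ and $U_P$ contain cross terms like $\xtot^{\gamma-1}$ and $x_i/\xtot^2$ whose derivatives must be bounded without accidentally introducing $\mathrm{poly}(1/\xtot)$ factors, and pinning down $\lambda_{\min}=\Omega(1)$ requires a careful argument that the monotonicity modulus does not degrade as the Tullock interaction aggregates over $n$ creators---the polynomial lower bound on $\xtot$ from Step~1 is precisely what prevents such blowup.
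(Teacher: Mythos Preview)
Your proposal is correct and follows essentially the same approach as the paper: derive the scaling $\xtot=\Theta(n^{(\theta-1)/(\theta-\gamma)})$ and $x_i^{\max}=O(1)$ from the ESE first-order conditions, feed these into the Lipschitz constants of $\phi$, $U_P$, and $\mathcal X(\cdot)$, and then account for the per-iteration cost. The paper handles the Step~2 calculation you flag as the main obstacle by packaging every Lipschitz bound into a parametrized expression $B(q,u,v,w,t)$ tracking powers of $\sqrt{n}$, $Y_{UB}$, and $1/Y_{LB}$; substituting $Y_{LB},Y_{UB}=\Theta(n^{(\theta-1)/(\theta-\gamma)})$ then collapses each factor of $A$ to a single exponent $\max\{0,\,v+\tfrac{\theta-1}{\theta-\gamma}(w-t)\}$, whose worst case over $\gamma\in[0,1]$, $\theta\in(1,2]$ occurs at $\gamma=1$ and equals~$2$.
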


Although \Cref{alg:main} offers an efficient method for achieving an anytime approximation of $\mathrm{OPT}$, its runtime inevitably depends on $n$. Thus, for large platforms, it may be preferable to use a fixed allocation rule rather than find one algorithmically, as the latter option might be impractical. It turns out that for many power cost instances, there is a closed-form choice of $\rho$ that induces an FSE and provides a multiplicative approximation of the platform's optimal revenue. Interestingly, this choice of $\rho$ does not depend on $n$ or the coefficients $\left(a_i\right)_{i=1}^{n}$.

\begin{restatable}{proposition}{PropApproxFixedRho}
    \label{prop:approx_fixed_rho}
    Consider large enough $n, \mu$ and any power cost instance such that $(\alpha+1)\gamma > \theta$. Then, choosing $\rho = \frac{\alpha \gamma + \gamma - \theta}{\alpha \theta}$ ensures that $\rho$ induces an FSE, and yields platform revenue satisfying $U_P(\xr{\rho}, \mathbf{1} ; f_{\rho}) \ge (2\alpha + 2)^{-\frac{\gamma}{\theta - \gamma}} \cdot \mathrm{OPT}$.
\end{restatable}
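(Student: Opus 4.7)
I will exploit the closed-form structure that power costs impose on the enforced-sharing game. With $\s=\mathbf{1}$, creator $i$'s utility simplifies to $\mu\,\xtot^{\gamma-1}x_i\cdot\tfrac{1+\alpha\rho}{1+\alpha}-a_i x_i^\theta$, and the interior first-order condition becomes
\[
\mu\,\xtot^{\gamma-2}\bigl(\xtot+(\gamma-1)x_i\bigr)\cdot\tfrac{1+\alpha\rho}{1+\alpha}\ =\ \theta a_i x_i^{\theta-1}.
\]
Because $\gamma\in[0,1]$, the bracket satisfies $\gamma\,\xtot\le \xtot+(\gamma-1)x_i\le \xtot$. Inverting the FOC, summing over $i$, and writing $K:=\sum_{i}a_i^{-1/(\theta-1)}$ will pin $\norm{\xr{\rho}}$ into a two-sided bound
\[
\gamma^{1/(\theta-\gamma)}\cdot C_0\cdot (1+\alpha\rho)^{1/(\theta-\gamma)}\ \le\ \norm{\xr{\rho}}\ \le\ C_0\cdot (1+\alpha\rho)^{1/(\theta-\gamma)},
\]
with $C_0=C_0(\mu,\theta,\alpha,K)$ independent of $\rho$ and a clean ratio of $\gamma^{1/(\theta-\gamma)}$ between the two ends.

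\textbf{Reduction to a one-dimensional problem.} Platform revenue under enforced sharing is $U_P(\rho)=\tfrac{\alpha(1-\rho)}{1+\alpha}\mu\,\norm{\xr{\rho}}^{\gamma}$. Plugging in the sandwich rewrites this as $U_P(\rho)=g(\rho)\cdot D\cdot h(\rho)$, where $h(\rho):=(1-\rho)(1+\alpha\rho)^{\gamma/(\theta-\gamma)}$, $g(\rho)\in[\gamma^{\gamma/(\theta-\gamma)},1]$, and $D$ is a constant independent of $\rho$. Elementary calculus on $h$ yields its unique interior maximizer $\rho^\star=\tfrac{\alpha\gamma+\gamma-\theta}{\alpha\theta}$, which lies in $(0,1)$ precisely because $(\alpha+1)\gamma>\theta$ (positivity) and $\gamma\le 1<\theta$ (below one). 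Hence for any competitor $\tilde\rho\in[0,1]$,
\[
\frac{U_P(\rho^\star)}{U_P(\tilde\rho)}\ \ge\ \gamma^{\gamma/(\theta-\gamma)}\cdot\frac{h(\rho^\star)}{h(\tilde\rho)}\ \ge\ \gamma^{\gamma/(\theta-\gamma)}.
\]
Since $(\alpha+1)\gamma>\theta\ge 1$ forces $\gamma>\tfrac{1}{\alpha+1}>\tfrac{1}{2(\alpha+1)}$, the factor $\gamma^{\gamma/(\theta-\gamma)}$ is at least $[2(\alpha+1)]^{-\gamma/(\theta-\gamma)}$. Taking the supremum over $\tilde\rho$ then produces the claimed multiplicative guarantee against $\mathrm{OPT}$.

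\textbf{Feasibility of $\rho^\star$.} It remains to certify that $(\xr{\rho^\star},\mathbf{1})$ is an FSE, for which I will invoke \Cref{thm:fse-existence}: I need $\rho^\star>\max_i\tfrac{x_i^{\max}}{x_i^{\max}+(1+\alpha)\norm{\mathcal{X}_{-i}(\rho^\star)}}$. Solving $a_iz^\theta=\mu z^\gamma$ gives $x_i^{\max}=(\mu/a_i)^{1/(\theta-\gamma)}$, while the sandwich above produces $\norm{\xr{\rho^\star}}=\Theta\bigl(K^{(\theta-1)/(\theta-\gamma)}\mu^{1/(\theta-\gamma)}\bigr)$. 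Since $a_{\min},a_{\max}=O(1)$ implies $K=\Theta(n)$, the $\mu$-factors cancel in the ratio $x_i^{\max}/\norm{\mathcal{X}_{-i}(\rho^\star)}$, which is of order $n^{-(\theta-1)/(\theta-\gamma)}$ and tends to $0$. Hence for $n$ large enough (with $\mu$ large enough to absorb lower-order slack in the sandwich constants), the condition of \Cref{thm:fse-existence} holds and $\rho^\star$ induces the desired FSE.

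\textbf{Main obstacle.} I expect the sandwich bound on $\norm{\xr{\rho}}$ to be the delicate step: the coefficients $\{a_i\}$ enter the FOC nonlinearly and no closed form for $\xr{\rho}$ exists, yet the tight factor $\gamma^{\gamma/(\theta-\gamma)}$ between the two ends is precisely what produces the approximation ratio in the statement. Any looser grip on $\norm{\xr{\rho}}^\gamma$ would either weaken the guarantee or introduce an unwanted dependence on $a_{\max}/a_{\min}$. Once the sandwich is in place, identifying $\rho^\star$ as $\argmax h$ and relaxing the constant to $[2(\alpha+1)]^{-\gamma/(\theta-\gamma)}$ via $(\alpha+1)\gamma>\theta$ is routine, and the feasibility check reduces to the asymptotic $n^{-(\theta-1)/(\theta-\gamma)}$ scaling.
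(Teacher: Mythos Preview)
Your proof is correct and shares the paper's high-level scaffold (sandwich $\norm{\xr{\rho}}$, maximize the surrogate $h(\rho)=(1-\rho)(1+\alpha\rho)^{\gamma/(\theta-\gamma)}$, then verify FSE feasibility via \Cref{thm:fse-existence}), but the crucial sandwich step is obtained differently. The paper imports an asymptotic bound from \citet{10.5555/3692070.3694417}, which produces a ratio of $(2\alpha+2)^{1/(\theta-\gamma)}$ between the two ends and is the origin of both the constant $(2\alpha+2)^{-\gamma/(\theta-\gamma)}$ in the statement and the ``large $\mu$'' hypothesis. You instead extract the sandwich directly from the first-order conditions via the elementary bracket $\gamma\,\xtot\le\xtot+(\gamma-1)x_i\le\xtot$, summing $x_i=\bigl[\tfrac{\mu(1+\alpha\rho)}{\theta a_i(1+\alpha)}\xtot^{\gamma-1}\cdot(\text{bracket})\bigr]^{1/(\theta-1)}$ over $i$; this yields the tighter ratio $\gamma^{-1/(\theta-\gamma)}$ and needs no assumption on $\mu$. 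You then relax your stronger factor $\gamma^{\gamma/(\theta-\gamma)}$ down to the stated one using $(\alpha+1)\gamma>\theta>1\Rightarrow\gamma>\tfrac{1}{2(\alpha+1)}$. Your route is therefore more self-contained and in fact proves a sharper approximation ratio; the line about ``$\mu$ large enough to absorb lower-order slack'' is unnecessary, since your sandwich is exact. The only cosmetic gap is that the feasibility check requires $\norm{\mathcal{X}_{-i}(\rho^\star)}$ rather than $\norm{\xr{\rho^\star}}$, but since $x_i\le x_i^{\max}=(\mu/a_i)^{1/(\theta-\gamma)}$ carries the same $\mu$-power and is $O(1)$ in $n$, subtracting one coordinate does not disturb the $n^{-(\theta-1)/(\theta-\gamma)}$ scaling.
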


Notably, larger values of $\theta$ decrease the expression of $\rho$ in \Cref{prop:approx_fixed_rho}, and, perhaps counterintuitively, larger values of $\alpha$ increase it. We detail more about these phenomena in \Cref{sec:experiments}, where they also arise from simulations on the optimal allocation parameter. The proof of \Cref{prop:approx_fixed_rho} builds on a lower bound of $\norm{\mathcal{X}(\rho)}$ from~\cite{10.5555/3692070.3694417}, which we adapt to our setting. We then derive the closed-form expression for $\rho$ by substituting this lower bound into the platform's revenue function and optimizing it analytically. \Cref{prop:asymptotic_stability_interval} ensures that the resulting profile is a valid FSE, while the approximation factor reflects the gap between the bound and the true value of $\norm{\mathcal{X}(\rho)}$.

\section{Experiments}\label{sec:experiments}

\begin{figure*}[t]
    \centering
    % Top row: Legend only
    \includegraphics[width=0.95\textwidth]{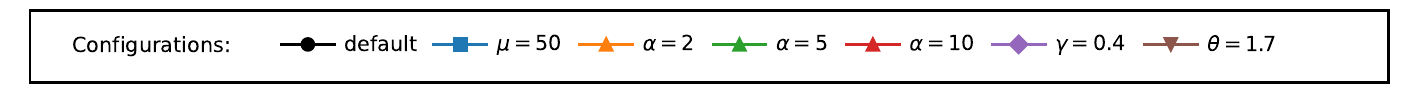}

    % First row of 3 figures
    \begin{minipage}{0.9\textwidth}
        \centering
        \begin{subfigure}[t]{0.32\textwidth}
            \includegraphics[width=\textwidth]{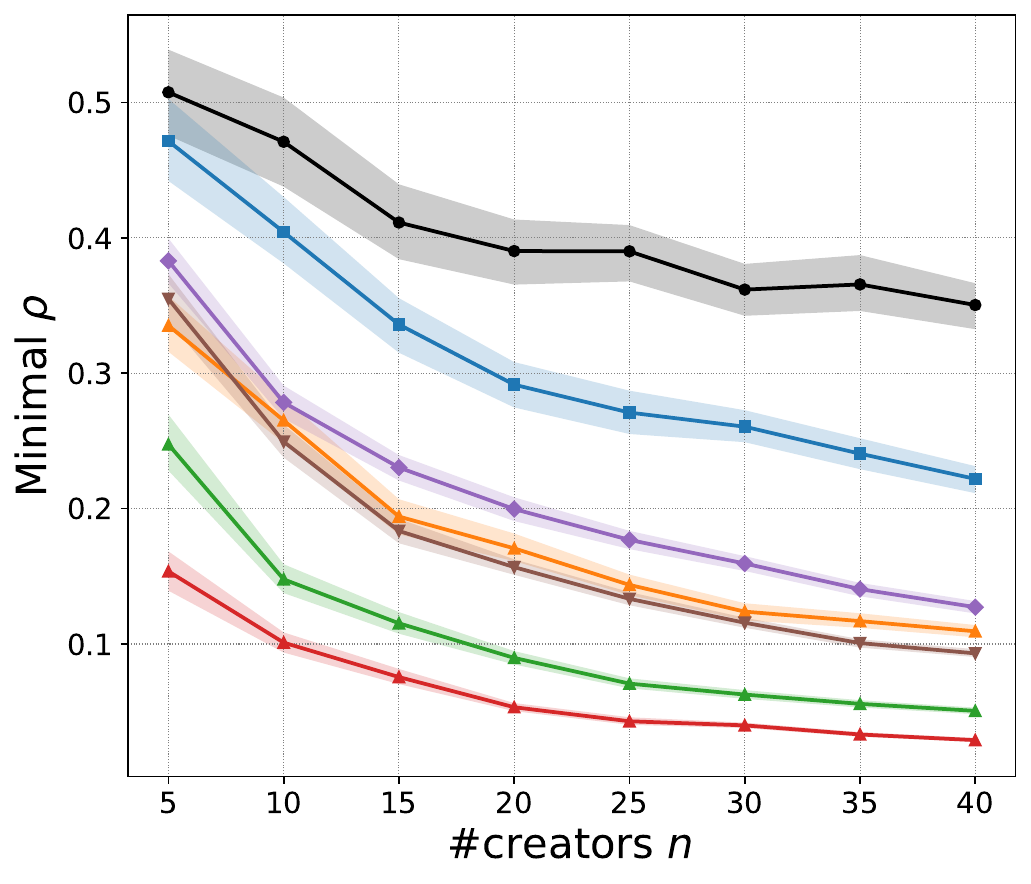}
            \caption{Min. $\rho$ inducing $\varepsilon$-FSE}
            \label{fig:subfig1}
    \end{subfigure}
        \hfill
        \begin{subfigure}[t]{0.32\textwidth}
            \includegraphics[width=\textwidth]{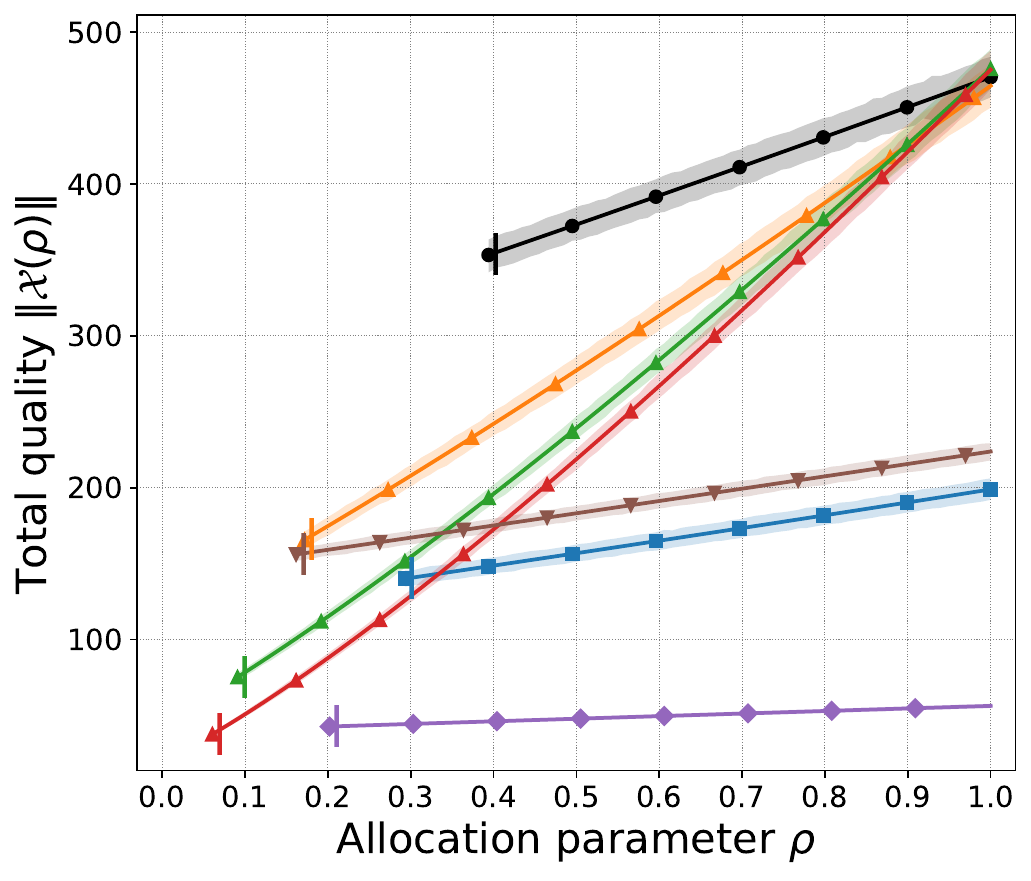}
            \caption{$\rho$ vs. $\norm{\xr{\rho}}$ in FSE}
            \label{fig:subfig2}
        \end{subfigure}
        \hfill
        \begin{subfigure}[t]{0.32\textwidth}
            \includegraphics[width=\textwidth]{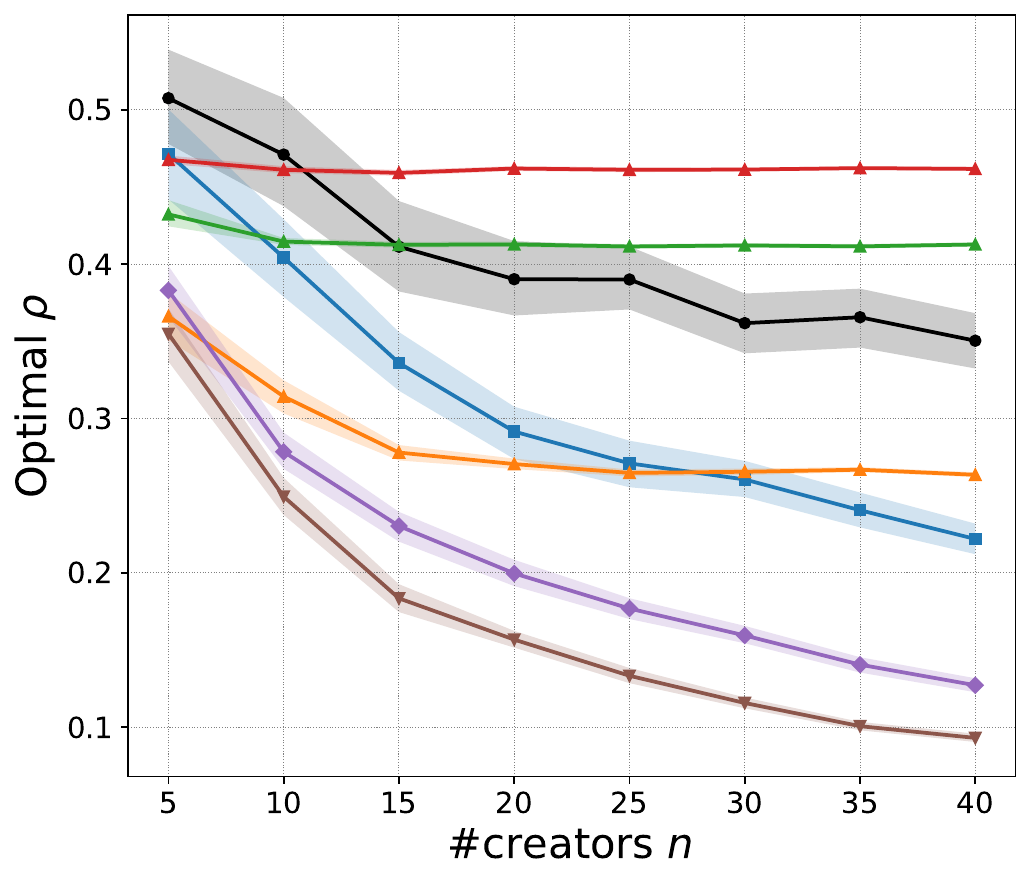}
            \caption{Optimal allocation parameter $\rho$}
            \label{fig:subfig3}
        \end{subfigure}
    \end{minipage}

    % Second row of 3 figures
    \begin{minipage}{0.95\textwidth}
        \centering
        \begin{subfigure}[t]{0.32\textwidth}
            \includegraphics[width=\textwidth]{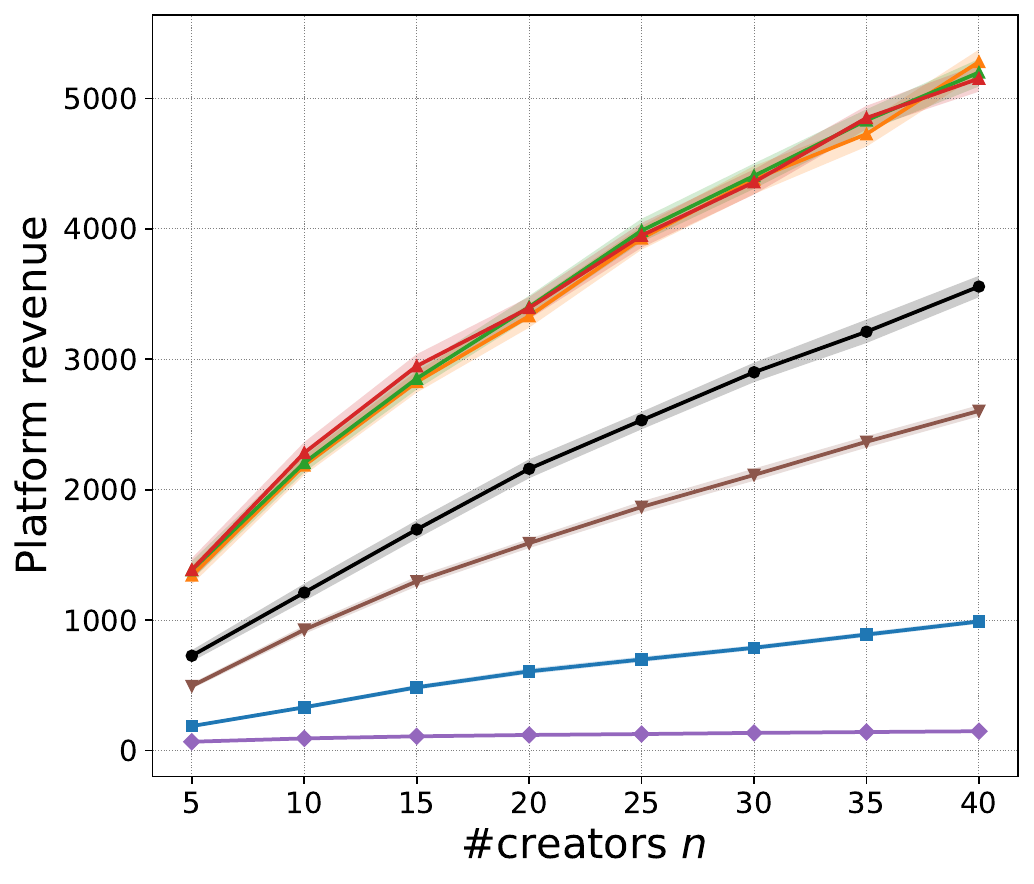}
            \caption{$U_P$ at optimal allocation}
            \label{fig:subfig4}
        \end{subfigure}
        \hfill
        \begin{subfigure}[t]{0.32\textwidth}
            \includegraphics[width=\textwidth]{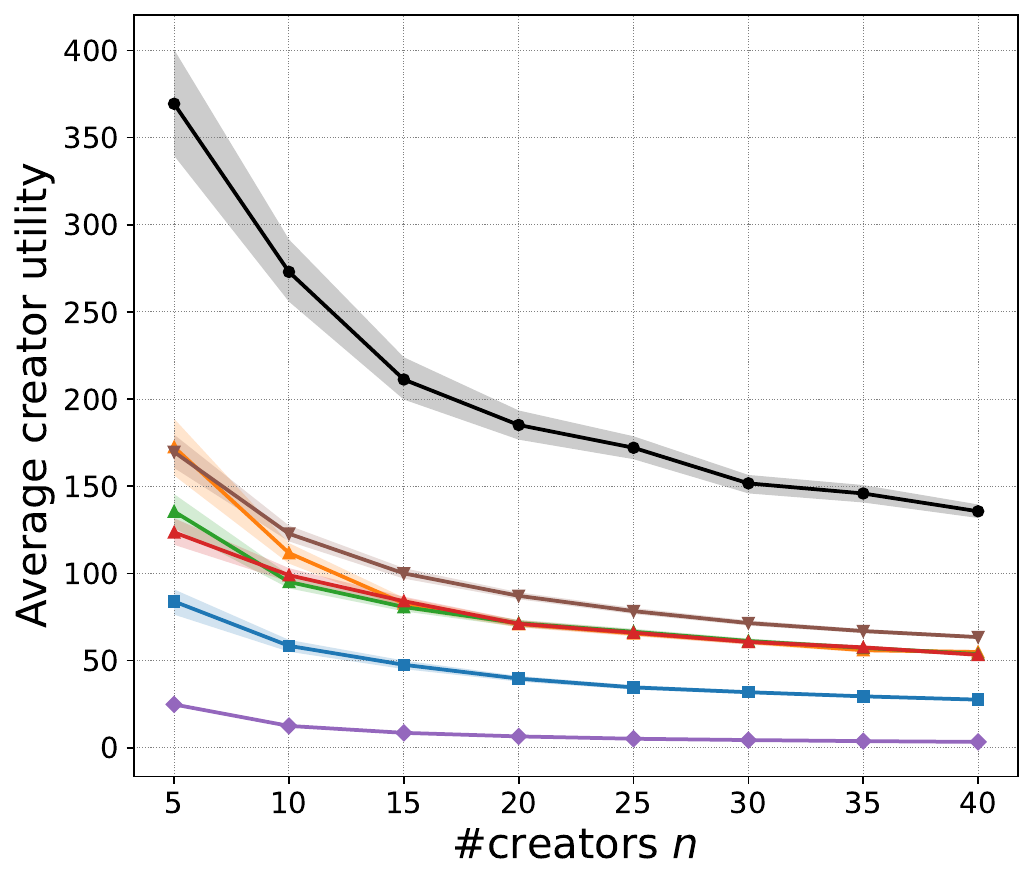}
            \caption{Average $U_i$ at optimal allocation}
            \label{fig:subfig5}
        \end{subfigure}
        \hfill
        \begin{subfigure}[t]{0.32\textwidth}
            \includegraphics[width=\textwidth]{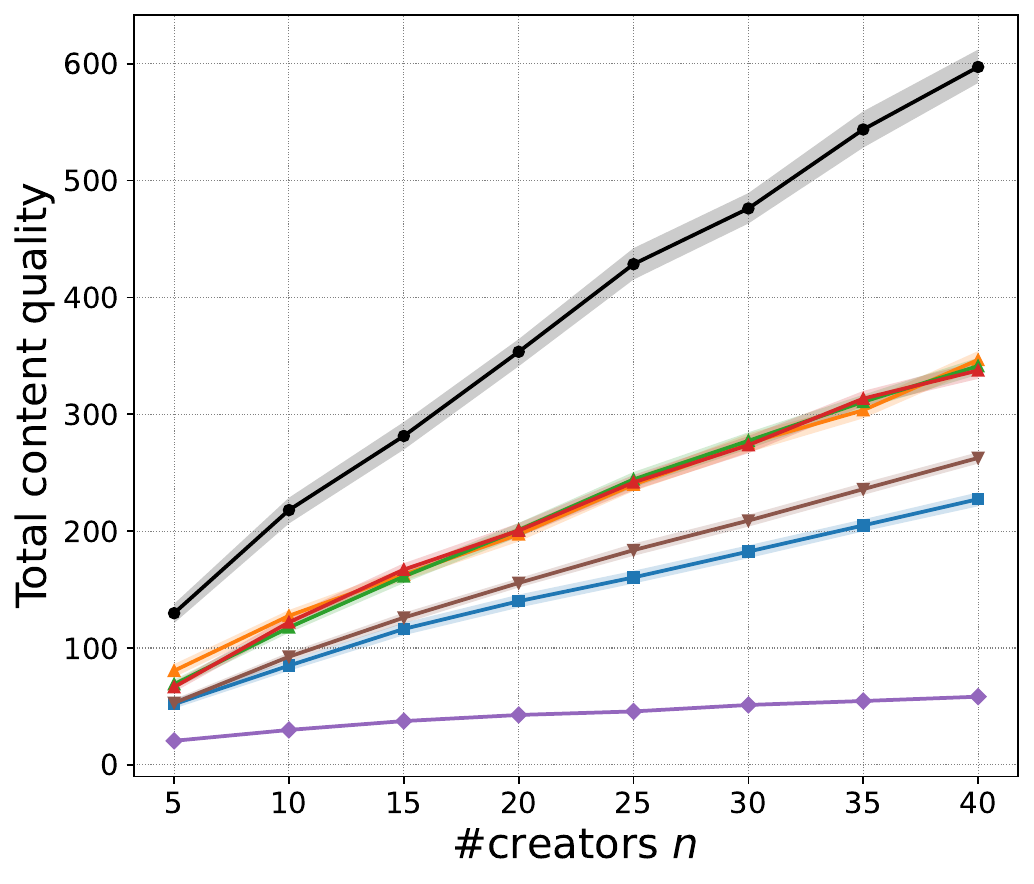}
            \caption{$\norm{\xr{\rho}}$ at optimal allocation}
            \label{fig:subfig6}
        \end{subfigure}
    \end{minipage}

    \caption{Revenue-allocation effects on equilibria, quality, revenue, and utilities.}
    \label{fig:combined}
\end{figure*}

In this section, we conduct experiments to understand equilibrium outcomes and, more generally, how the allocation parameter $\rho$ impacts creator behavior. We consider cost functions of the form $c_i(x) = a_i x^\theta$. In the default configuration, the parameters are $n=20$, $\gamma=0.8$, $\mu=100$, $\theta=1.5$, $\alpha=0.5$, and the coefficients $\{a_i\}_{i=1}^{n}$ are independently drawn from the uniform distribution $Uni[1,10]$. In addition to the default configuration, we varied one parameter at a time while keeping all others fixed at their default values to assess its individual impact. Each point in \Cref{fig:combined} represents the average outcome across 150 independent game instances, with error bars indicating 95\% confidence intervals obtained via bootstrapping. We discretized the possible values of $\rho$ into 100 equally-spaced points in the range $[0,1]$. For each $\rho$, we check if it induces an $\varepsilon$-FSE in $G(\rho)$ for $\varepsilon=10^{-4}$. We ran all the executions on a standard laptop, completing the entire process in a few hours. We include all implementation details in \Cref{sec:sim_details}.

\paragraph{Minimal revenue-allocation threshold}
\Cref{fig:subfig1} illustrates how the minimal allocation parameter $\rho$ required for inducing an FSE varies with ecosystem parameters. All curves decrease with the number of creators, as \Cref{thm:fse-existence} implies: The content of each creator becomes insignificant w.r.t. all other created content, so withholding has a high alternative cost. Furthermore, more powerful GenAI (larger $\alpha$) enables lower $\rho$ choices for stable FSE, as the curves associated with a change in $\alpha$ (red, green, orange, and black curves) are ordered. The black curve $(\mu = 100)$ lies above the blue curve $(\mu = 50)$, suggesting that creators demand more compensation as traffic increases.

\paragraph{Content quality vs. allocation parameter}
\Cref{fig:subfig2} shows how different values of $\rho$ affect $\norm{\xr{\rho}}$, the total content quality in FSE. Since not all values of $\rho$ induce an FSE (see \Cref{fig:subfig1}), each line begins at the minimal value of $\rho$ for which an FSE exists under the corresponding parameter setting. As expected, larger values of $\rho$ encourage creators to produce more content regardless of the setting, as they get better compensation from the platform. Notably, more powerful GenAI (larger $\alpha$) reduces creator content production. Furthermore, different values of $\alpha$ align as $\rho$ approaches $1$. To see why, recall that if the platform allocates most GenAI-driven revenue to creators, creators do not suffer from the competition with GenAI.

\paragraph{Optimal allocation}
\Cref{fig:subfig3} presents the platform's optimal allocation parameter. We see that all the curves are non-increasing, implying that gradually smaller allocation parameters are enough. Focusing on the impact of high GenAI quality parameter $\alpha$ (red, green, and orange curves), we see an interesting pattern: The allocation parameter plateaus as we increase the number of creators. That is, the platform's optimal choice tends toward higher $\rho$ regardless of ecosystem size, with larger $\alpha$ corresponding to larger optimal $\rho$. We find this surprising as \Cref{fig:subfig1} indicates that more powerful GenAI permits lower values of the allocation parameter to induce an FSE. This counterintuitive finding indicates that platforms with more powerful GenAI must compensate creators more generously in terms of the allocation. For weaker GenAI, the optimal allocation parameter is similar to the minimal allocation threshold required to induce an FSE: For $\alpha=0.5$, the black curves in \Cref{fig:subfig1} (minimal parameter) and \Cref{fig:subfig3} (optimal parameter) are almost identical.

\paragraph{Platform-optimal equilibrium outcomes}
Figures~\ref{fig:subfig4}, \ref{fig:subfig5}, and~\ref{fig:subfig6} examine the platform's revenue, creator utilities, and total quality under the platform-optimal allocation. As expected, larger creator populations increase platform revenue and total quality (increasing curves in \Cref{fig:subfig4} and \Cref{fig:subfig6}) while decreasing the average creator returns (decreasing curves in \Cref{fig:subfig5}). Interestingly, powerful GenAI (red, green, and orange curves) yield nearly identical patterns in all graphs, suggesting that beyond a certain threshold, more powerful GenAI does not significantly impact economic outcomes. Moreover, weak GenAI (black curve) is bad for the platform compared to powerful GenAI, and good for creator utility and total content quality.

\section{Extensions}\label{sec:extensions}

In this section, we provide several natural extensions of our framework and discuss their implications on our main results.

\subsection{Multiple Topics}

The following extension enables creators to allocate different efforts across various content topics, while the platform maintains a single revenue-allocation rule. Consider $K$ distinct content topics. Each creator $i$ now chooses a quality vector $\mathbf{x}_i = (x_{i,1}, \ldots, x_{i,K}) \in [0,\infty)^K$ and a sharing vector $\mathbf{s}_i = (s_{i,1}, \ldots, s_{i,K}) \in [0,1]^K$, where $x_{i,k}$ and $s_{i,k}$ represents their quality and sharing level in topic $k$, respectively. The cost function, which now gets $\mathbf{x}_i$ as input, is convex but need not be separable across topics, i.e., $c_i(\mathbf{x}_i) = \sum_k c_{i,k}(x_{i,k})$, for most results to hold. 

Each topic $k$ generates independent user traffic $T_k$, and the GenAI system's quality is $Q_{\text{AI},k}$; both depend only on the restriction of $\mathbf{x}$ and $\mathbf{s}$ to topic $k$ and have the same functional forms as in the single-topic case. The platform commits to a single allocation parameter $\rho \in [0,1]$ across all topics. Under proportional allocation, creator $i$'s revenue from topic $k$ is given by
\[
    R_{i,k}(\mathbf{x}, \mathbf{s}; f_\rho) =  \frac{T_k(\mathbf{x}) x_{i,k}}{\sum_{j=1}^n x_{j,k} + Q_{\text{AI},k}} + \frac{T_k(\mathbf{x}) Q_{\text{AI},k}}{\sum_{j=1}^n x_{j,k} + Q_{\text{AI},k}} \cdot \frac{\rho x_{i,k} s_{i,k}}{\sum_{j=1}^n x_{j,k} s_{j,k}},
\]
and its overall utility is $U_i(\mathbf{x}, \mathbf{s}; f_\rho) = \sum_{k=1}^K R_{i,k}(\mathbf{x}, \mathbf{s}; f_\rho) - c_i(\mathbf{x}_i)$.

\paragraph{Results} \Cref{lem:s_choice_threshold} extends to each topic independently: For fixed quality choices, the optimal sharing decision $s_{i,k}$ for topic $k$ follows the same threshold structure in \Cref{{eq:sharing-threshold-form}}, with $\tau_{i,k} = \frac{x_{i,k}}{x_{i,k} + \sum_{j \neq i} x_{j,k} + \alpha_k \sum_{j \neq i} x_{j,k} s_{j,k}}$. \Cref{lem:unique-ese} holds as well---since the only interplay between different topics is through the cost function, the same analysis still applies. \Cref{thm:fse-existence} requires some modifications, but the general idea remains the same: A large enough $\rho$ admits FSE. The main technical difference is establishing per-topic rational quality bounds $\mathbf{x}_{i,k}^{\max}$ when the cost functions couple topics together. The approach of \Cref{alg:main} generalizes to this setting too, but testing the condition in Line~\ref{line:check-fse} becomes computationally intensive, as we need to check each of the $2^K-1$ possible sharing deviations. This can be avoided when the cost functions are separable across topics.

\subsection{Prior GenAI Data}

Next, we allow the GenAI system to possess a fixed body of prior data, independent of the creators' sharing contributions. Formally, we introduce a non-strategic creator indexed by $0$ with fixed quality $x_0 > 0$ and full sharing level $s_0 = 1$. The quality of the GenAI system then becomes $Q_{\text{AI}}(\mathbf{x}, \mathbf{s}) = \alpha \left( x_0 + \sum_{j=1}^n x_j s_j \right)$, so that even if no strategic creator shares content, the GenAI still attains positive quality $\alpha x_0$. This modification also affects user traffic. Essentially, we assume that there is another source of $x_0$ body of content in the platform that is always available to the GenAI system.

\paragraph{Results} From creator $i$'s perspective, the new creator is just another content contributor with fixed quality and full sharing; thus, the threshold from \Cref{lem:s_choice_threshold} updates accordingly to 
\[
    \tau_i = \frac{x_i}{x_i + \xtotmi + \alpha \mathbf{x}^{\top}_{-i} \mathbf{s}_{-i}+(1+\alpha)x_0}.
\]
Notably, prior data $x_0 > 0$ strictly decreases $\tau_i$, making full sharing more attractive for creators. The proof of \Cref{lem:unique-ese} requires a minor modification, as the dummy creator's strategy is fixed. Nonetheless, the resulting subgame among the creators maintains the DSC property. All other results extend directly to this setting.

\subsection{Alternative Optimization Objectives}

While our analysis in \Cref{sec:platform-optimization} focused on maximizing platform revenue (recall \eqref{eq:platform-optimization}), there are other desirable objectives to consider. For instance, a platform may wish to balance its retained value with total content production and creator welfare. Our framework naturally extends to such cases, where the target function is Lipschitz continuous in the quality profile $\mathbf{x}$ and the allocation parameter $\rho$. The platform’s problem becomes  
\[
\sup_{\rho \in [0,1]} \ f(\xr{\rho}, \rho) 
\quad \text{subject to } (\xr{\rho}, \mathbf{1}) \text{ is an FSE}.
\]  
Natural choices for $f$ include total content production $f(\mathbf{x}, \rho) = \|\mathbf{x}\|_1$, creator welfare $f(\mathbf{x}, \rho) = \sum_{i=1}^n U_i(\mathbf{x}, \mathbf{1}; f_\rho)$, or regularized combinations such as 
\[
f(\mathbf{x}, \rho) = (1-\lambda) U_P(\mathbf{x}, \mathbf{1}; f_\rho) + \lambda \|\mathbf{x}\|_1.
\]

\paragraph{Results} The same algorithmic approach from \Cref{alg:main} applies directly to these generalized objectives, with the approximation guarantees of \Cref{thm:alg_main} carrying over after adjusting the constants $A$ and $B$ to account for the Lipschitz properties of $f$.

\subsection{GenAI-Dependent Traffic}

In our main model, user traffic $\Tx$ depends solely on the quality of human-generated content. However, high-quality AI-generated content may also attract users to the platform. To capture this effect, we extend our traffic model to incorporate both human and AI contributions:
\[
    T(\x, \s) = \mu \left( \xtot + \qai(\x, \s) \right)^\gamma.
\]
All other components of the model remain unchanged.

\paragraph{Results} Under this extended traffic model, the creator's optimal sharing decision retains its threshold structure, with an adjusted threshold given by $
\tau_i = \frac{1}{\alpha} \left[ \left( \frac{\xtot + \alpha \left(x_i + \sum_{j\neq i}^n x_j s_j\right)}{\xtot + \alpha \sum_{j \neq i} x_j s_j} \right)^{1-\gamma} - 1 \right]$.
Notably, larger values of $\gamma$ decrease $\tau_i$, making sharing more attractive. This occurs because higher traffic elasticity amplifies the positive externality that sharing exerts on the platform's traffic.

The analysis of ESE remains tractable under this extension. In the enforced sharing game where $\s = \mathbf{1}$, the traffic function simplifies to $T(\x, \mathbf{1}) = \mu (1 + \alpha)^\gamma \xtot^\gamma$. This expression is structurally identical to our original traffic model, differing only by a multiplicative scaling of the baseline parameter $\mu$ by $(1 + \alpha)^\gamma$. Consequently, \Cref{lem:unique-ese} applies directly: The subgame $G(\rho)$ under enforced sharing admits a unique ESE. Since the utility function at $\s = \mathbf{1}$ retains its structural properties, our main results carry over to this setting with adjusted constants. Specifically, the FSE existence condition (\Cref{thm:fse-existence}) holds by redefining $x_i^{\max}$ as the unique positive solution to $c_i(x) = \mu (1 + \alpha)^\gamma x^\gamma$. The approximation guarantees of \Cref{thm:alg_main} follow similarly.

\subsection{GenAI Quality Beyond Linearity}
\label{subsec:genai-quality-beta}

In our baseline model, the quality of the GenAI system is linear in the aggregate amount of shared content, given by $\qai(\mathbf{x}, \mathbf{s}) = \alpha \sum_{i=1}^n x_i s_i$. This specification is both analytically convenient and captures the essence of the strategic dynamics in the system. However, in practice, GenAI systems often exhibit diminishing marginal returns as the amount of training data grows~\cite{kaplan2020scalinglawsneurallanguage}. To capture this effect, we consider the following generalization of $\qai$:
\[
    \qai(\mathbf{x}, \mathbf{s}) = \alpha \left( \sum_{i=1}^n x_i s_i \right)^{\beta},
    \qquad \beta \in [0,1].
\]
This form, introduced by~\citet{10.5555/3692070.3694417}, captures diminishing returns in GenAI quality. Notice that when $\beta = 1$, this formulation reduces to our original model.

\paragraph{Theoretical results.}
When $\beta < 1$, the enforced sharing game is no longer concave and therefore does not satisfy Rosen's DSC condition. As a consequence, we lose general guarantees on the existence and uniqueness of an ESE. Indeed, in numerical experiments, we observe instances that admit multiple approximate ESEs, which can differ in the induced quality profiles. From a more positive perspective, several qualitative insights from our baseline model remain valid. First, \Cref{prop:unique_pne_rho1} transfers directly: When $\rho = 1$, full sharing is a dominant strategy for every creator, and the subgame admits a unique PNE that is an FSE. Second, a weaker analogue of \Cref{lem:s_choice_threshold} continues to hold. In particular, when no creator shares content, creator $i$ has an incentive to deviate and share whenever $\rho > \frac{x_i}{\sum_{j=1}^n x_j}$. This mirrors the baseline threshold and gives rise to a prisoner's dilemma-like structure in large systems.

\paragraph{Simulation methodology.}
To study equilibria under the generalized GenAI quality function with $\beta < 1$, we rely on numerical methods. We focus on the enforced sharing game, where all creators are required to share ($\mathbf{s}=\mathbf{1}$), and compute an approximate ESE via best-response dynamics. Starting from an arbitrary initial quality profile, creators update their quality choices sequentially. Each best response in $x_i$ is computed via exact one-dimensional optimization over the interval $[0, x_i^{\max}]$, where $x_i^{\max}$ is the analytical upper bound derived in the paper. To ensure robust convergence, we employ a damping factor that interpolates between the current quality and the computed best response. We declare convergence once the maximum change in any creator's quality falls below a tolerance $\varepsilon$. Across all instances we tested, these dynamics converged quickly to a stable quality profile, which we treat as an $\varepsilon$-ESE.

To test whether a converged profile constitutes an $\varepsilon$-FSE, we explicitly check unilateral deviations in the sharing decision. For each creator, we consider both $s_i = 0$ and $s_i = 1$, and for each case, compute the optimal quality choice via one-dimensional optimization. If no such deviation yields a utility gain exceeding $\varepsilon$, we classify the profile as an $\varepsilon$-FSE. Notably, although multiple approximate equilibria may exist in this setting, our simulations reveal that they exhibit similar qualitative properties across different initializations.

\begin{figure*}[t]
    \centering
    % Top row: Legend only
    \includegraphics[width=0.95\textwidth]{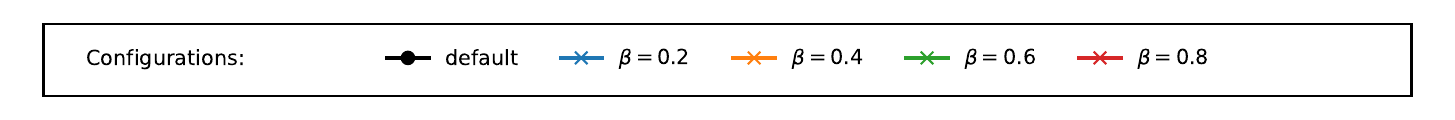}
    \vspace{0.3cm}

    % First row of 3 figures
    \begin{minipage}{0.9\textwidth}
        \centering
        \textbf{$\alpha = 0.5$}
        \vspace{0.2cm}

        \centering
        \begin{subfigure}[t]{0.32\textwidth}
            \includegraphics[width=\textwidth]{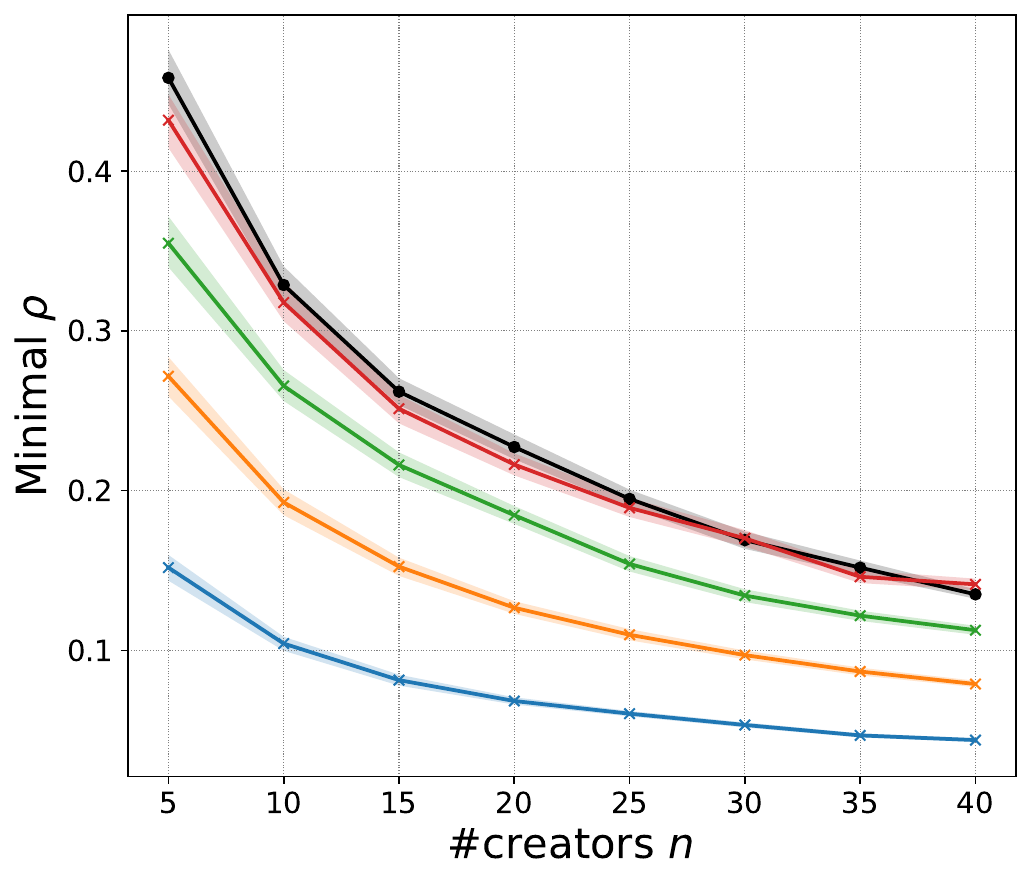}
            \caption{Min. $\rho$ inducing $\varepsilon$-FSE}
            \label{fig:beta_subfig1}
        \end{subfigure}
        \hfill
        \begin{subfigure}[t]{0.32\textwidth}
            \includegraphics[width=\textwidth]{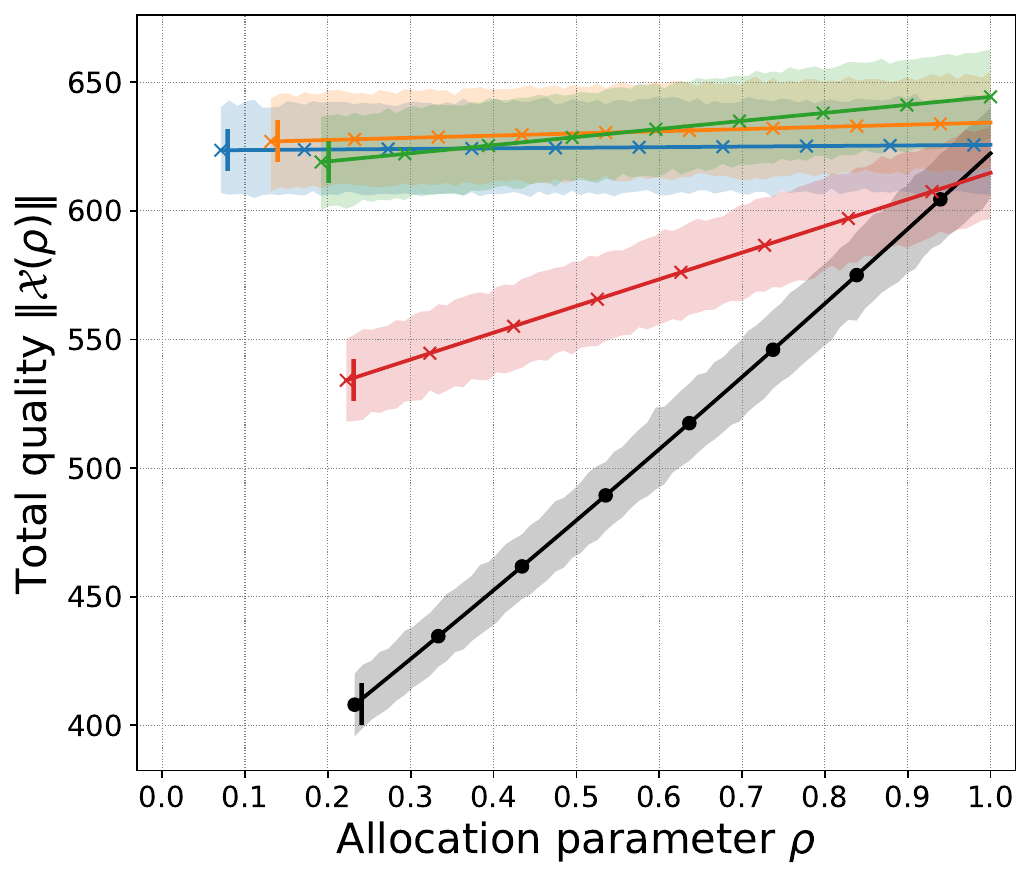}
            \caption{$\rho$ vs. $\norm{\xr{\rho}}$ in FSE}
            \label{fig:beta_subfig2}
        \end{subfigure}
        \hfill
        \begin{subfigure}[t]{0.32\textwidth}
            \includegraphics[width=\textwidth]{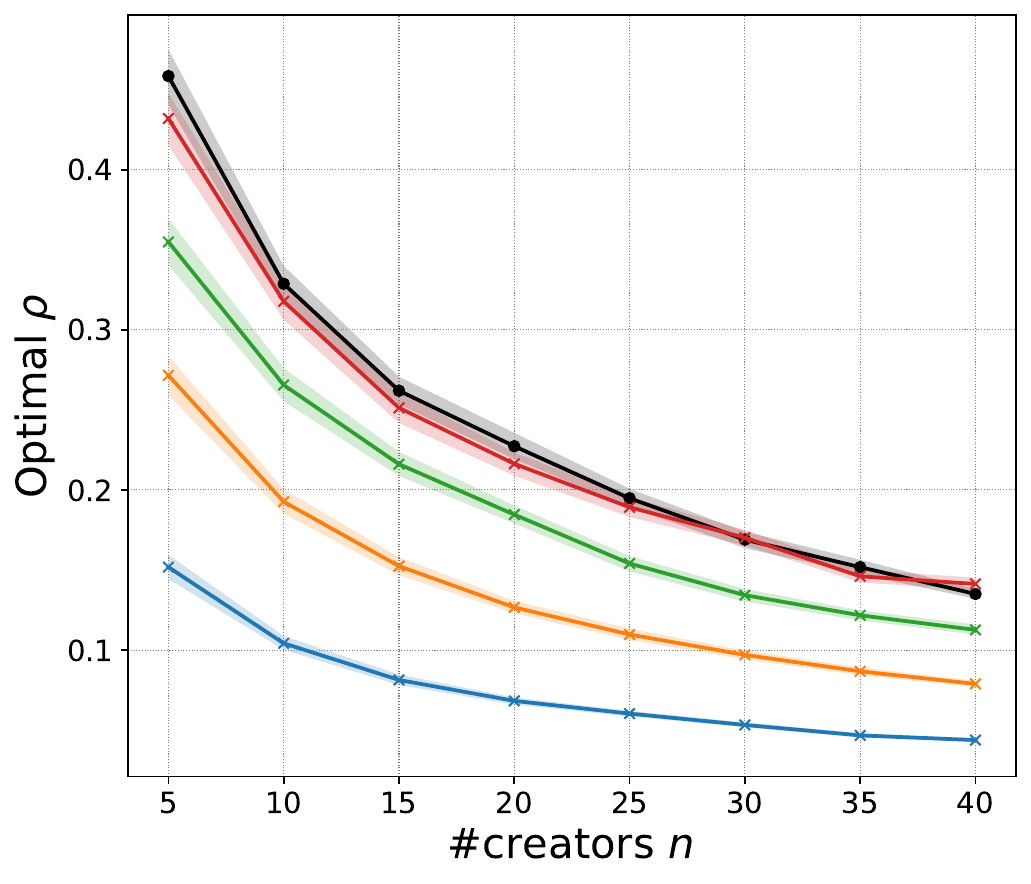}
            \caption{Optimal allocation parameter $\rho$}
            \label{fig:beta_subfig3}
        \end{subfigure}
    \end{minipage}

    % Second row of 3 figures
    \begin{minipage}{0.95\textwidth}
        \centering
        \textbf{$\alpha = 5$}
        \vspace{0.2cm}
        
        \centering
        \begin{subfigure}[t]{0.32\textwidth}
            \includegraphics[width=\textwidth]{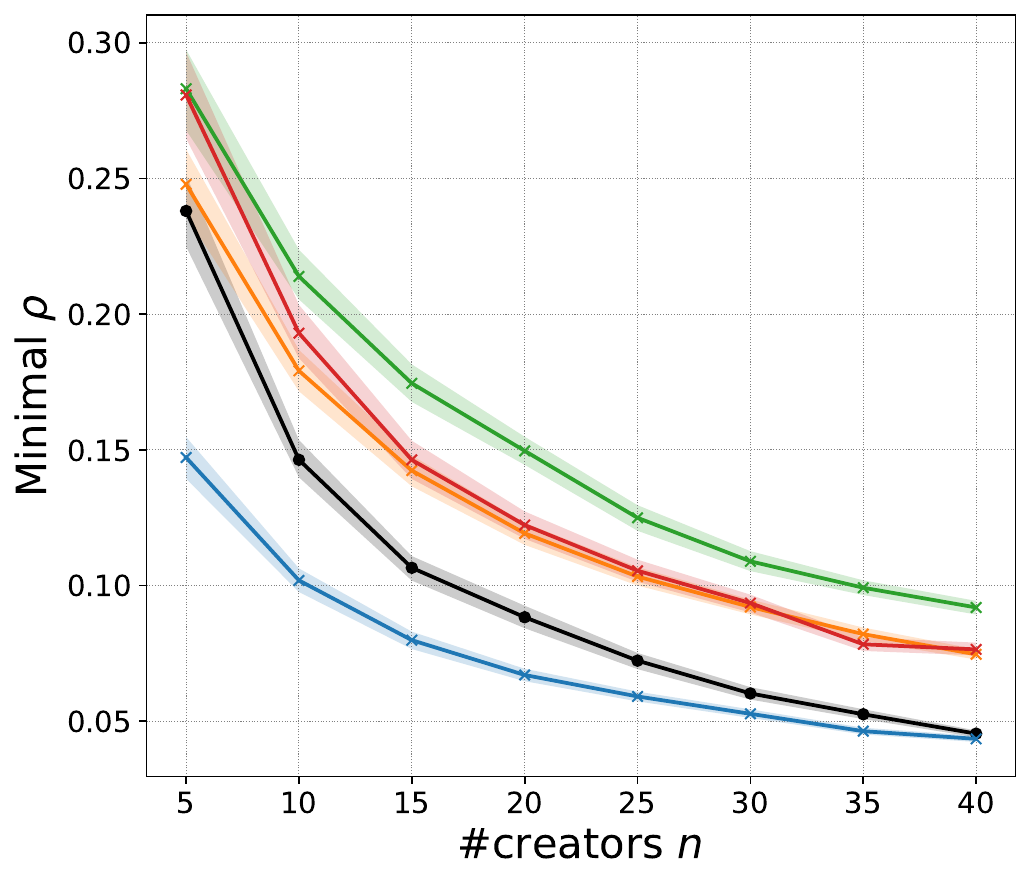}
            \caption{Min. $\rho$ inducing $\varepsilon$-FSE}
            \label{fig:beta_subfig4}
        \end{subfigure}
        \hfill
        \begin{subfigure}[t]{0.32\textwidth}
            \includegraphics[width=\textwidth]{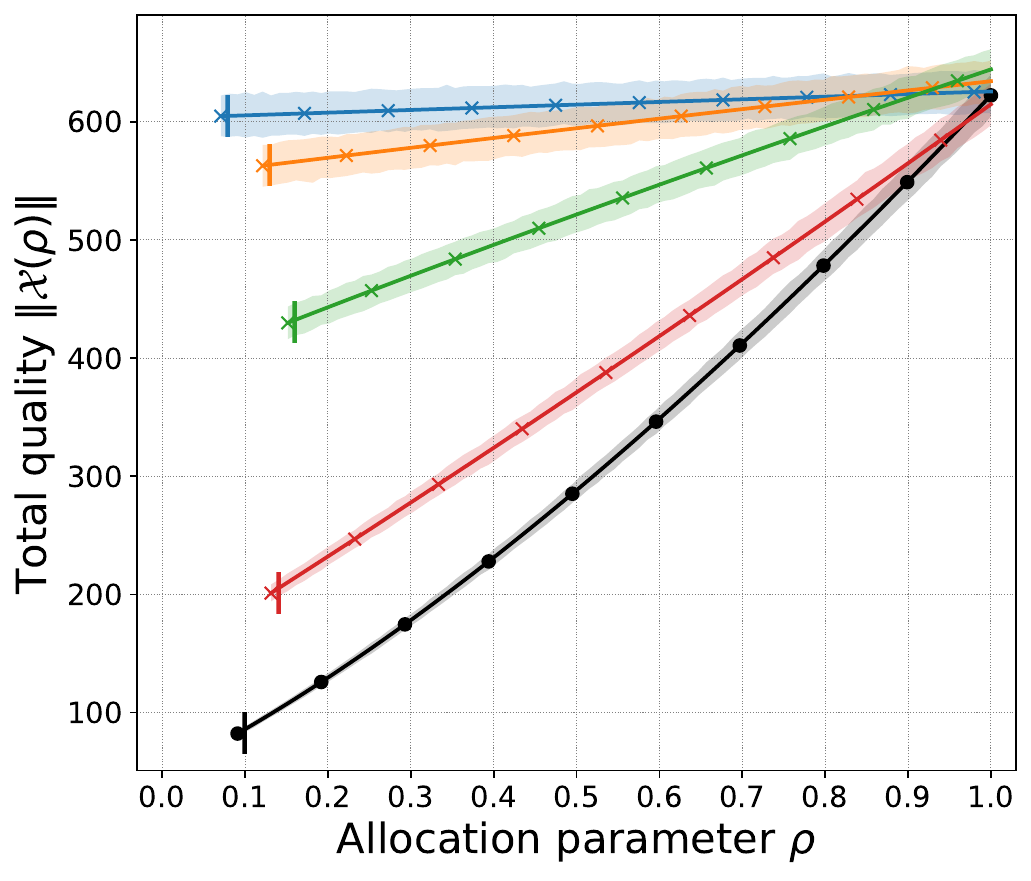}
            \caption{$\rho$ vs. $\norm{\xr{\rho}}$ in FSE}
            \label{fig:beta_subfig5}
        \end{subfigure}
        \hfill
        \begin{subfigure}[t]{0.32\textwidth}
            \includegraphics[width=\textwidth]{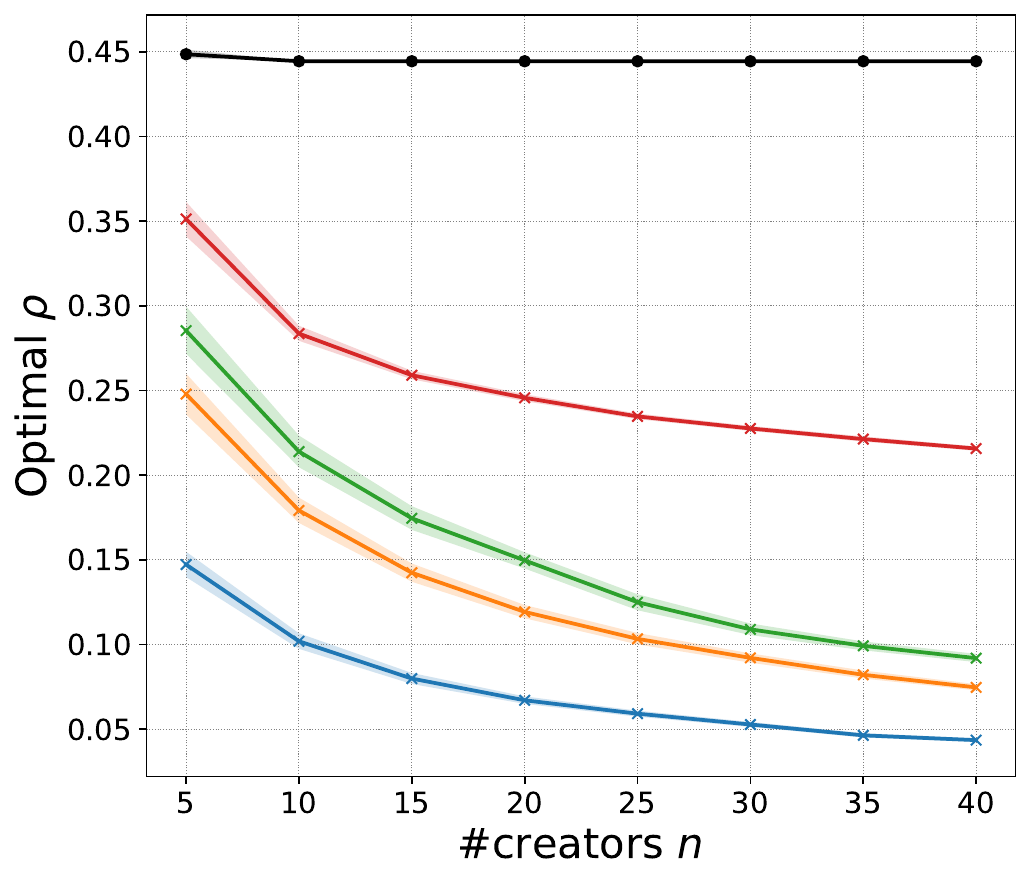}
            \caption{Optimal allocation parameter $\rho$}
            \label{fig:beta_subfig6}
        \end{subfigure}
    \end{minipage}

    \caption{Effect of the diminishing returns parameter $\beta$ on equilibrium outcomes for $\alpha = 0.5$ (top row) and $\alpha = 5$ (bottom row).}
    \label{fig:beta_combined}
\end{figure*}

\paragraph{Simulation results.}
We examine the effect of $\beta \in \{0.2, 0.4, 0.6, 0.8, 1.0\}$ on equilibrium outcomes, using two values of the GenAI efficiency parameter: $\alpha = 0.5$ (moderate GenAI) and $\alpha = 5$ (powerful GenAI). The results are presented in \Cref{fig:beta_combined}.

For moderate GenAI ($\alpha = 0.5$), several baseline trends persist. The minimal allocation parameter required to induce an FSE decreases with the number of creators, consistent with the intuition from \Cref{thm:fse-existence}. Content quality continues to increase with $\rho$. However, we observe that the minimal $\rho$ \emph{increases} with $\beta$, which is somewhat counterintuitive: In our baseline analysis, stronger GenAI (larger $\alpha$) decreased the minimal $\rho$, yet here a different notion of ``stronger'' GenAI (larger $\beta$, corresponding to better data utilization) has the opposite effect. The optimal allocation parameter in this regime closely tracks the minimal threshold, suggesting that the platform retains as much revenue as possible while maintaining an FSE.

For powerful GenAI ($\alpha = 5$), the relationship between $\beta$ and the minimal $\rho$ becomes non-monotonic, though the decrease with the number of creators persists. A clearer pattern emerges in the optimal allocation: Larger $\beta$ leads to higher optimal $\rho$, and the gap between optimal and minimal allocation parameters widens substantially for larger $\beta$. This indicates that with powerful GenAI exhibiting near-linear returns, the platform benefits from more generous revenue sharing to incentivize content production. The content-quality curves confirm that larger $\beta$ reduces total content at any given $\rho$, though quality remains increasing in the allocation parameter.

Overall, the core tradeoffs identified in our baseline model--balancing revenue retention against content production incentives--persist across the range of $\beta$ values. However, the interaction between $\alpha$ and $\beta$ introduces additional complexity, and some of the clean monotonicity properties from the linear case are lost when both parameters vary.

\section{Discussion and Future Work}

We explored strategic data sharing and revenue allocation in content platforms with GenAI. We modeled the interaction between creators and the platform, where each creator chooses both the quality of their content and the extent to which they share it with the platform. The platform, which commits to an allocation rule, allocates them a portion of its GenAI-driven revenue based on their content qualities and sharing decisions. We characterized conditions for inducing an FSE, and proposed an efficient approximation algorithm for solving the platform's revenue maximization problem~\eqref{eq:platform-optimization}. Our simulations examined sensitivity and equilibrium outcomes.

We see several promising directions for future research. First, while our analysis focused on proportional allocation rules, other functional forms may lead to better outcomes. For instance, allocation rules that exhibit a staircase structure, providing high revenue for certain quality levels but diminishing afterward, could potentially yield more nuanced and efficient outcomes. Second, we restricted our attention to FSEs, but other PNEs (non-FSEs) may exist. Indeed, \Cref{lem:s_choice_threshold} hints that asymmetric sharing, where only some creators choose to share content, can occur in equilibrium. Understanding which allocation rules can induce such equilibria and whether they lead to improved platform revenue remains an important open question. Third, extending our theoretical guarantees to more general GenAI quality functions with diminishing returns ($\beta < 1$ in \Cref{subsec:genai-quality-beta}) remains open. The loss of concavity invalidates our uniqueness results, and while simulations suggest equilibria persist with similar qualitative behavior, formal existence and characterization guarantees are lacking. Finally, our model considers a static interaction; incorporating dynamic elements such as creators entering or exiting the platform, evolving GenAI capabilities, or repeated interactions with learning could yield richer insights into long-term platform-creator relationships.

\section*{Acknowledgments}

This research was supported by the Israel Science Foundation (ISF; Grant No. 3079/24). We thank the anonymous reviewers for their valuable feedback.

% Bibliography
\bibliographystyle{plainnat}
\bibliography{main}

% Appendix
\appendix

\newpage
\section{Omitted Simulation Details}\label{sec:sim_details}

This section provides technical details about the algorithms and computational resources used in our experimental simulations.\footnote{Our code is available at \url{https://github.com/GurKeinan/Code_for_Strategic_Content_Creation_with_GenAI_To_Share_or_Not_to_Share_Paper}.}

\subsection{ESE Computation}

\paragraph{Theoretical algorithm.} \Cref{alg:mamd_ese} provides a theoretically justified method for approximating the unique ESE for a given revenue-allocation parameter $\rho$. It employs multi-agent mirror descent with gradient-based updates to iteratively refine each creator's quality choice $x_i$ while maintaining non-negativity constraints. The strong monotonicity of the enforced sharing game (established in \Cref{lem:unique-ese}) guarantees convergence at a rate of $O(1/T)$~\cite{10.5555/3327345.3327469}, where $T$ is the number of iterations.

\begin{algorithm}[t]
\caption{Multi-Agent Mirror Descent for ESE Quality Profile (adapted from \citet{10.5555/3327345.3327469})}
\label{alg:mamd_ese}
\begin{algorithmic}[1]
\Require Number of creators $n$, cost functions $\{c_i\}_{i=1}^n$, model parameters $\mu, \alpha, \gamma, \rho$, number of iterations $T$, learning rate schedule $\{\eta_t\}_{t=0}^{T-1}$, initial quality profile $x^{(0)} \in \mathbb{R}^n_{\ge 0}$.
\Ensure Approximate ESE quality profile $x^{(T)}$.

\State Initialize quality profile $x \gets x^{(0)}$
\For{$t \gets 0$ \textbf{to} $T-1$}
    \State Calculate gradient vector $v^{(t)} \in \mathbb{R}^n$. For each creator $i$:
    \For{$i \gets 1$ \textbf{to} $n$}
        \State Let $X^{(t)} = \sum_{j=1}^n x_j^{(t)}$ and $X_{-i}^{(t)} = X^{(t)} - x_i^{(t)}$
        \State $v_i^{(t)} \gets \frac{\mu(1+\alpha\rho)}{1+\alpha} (X^{(t)})^{\gamma-2}\,[\gamma x_i^{(t)} + X_{-i}^{(t)}] - c_i'(x_i^{(t)})$
    \EndFor
    \State Update candidate profile: $x_{cand} \gets x^{(t)} + \eta_t v^{(t)}$
    \State Project onto feasible set: $x^{(t+1)} \gets \max(0, x_{cand})$
\EndFor
\State \Return $x^{(T)}$
\end{algorithmic}
\end{algorithm}

\paragraph{Practical implementation.} While \Cref{alg:mamd_ese} provides provable convergence guarantees, we adopted a more practical approach in our simulations. Since each creator's utility is strictly concave in their own quality $x_i$ (as established in \Cref{lem:unique-ese}), the ESE is uniquely characterized by the first-order conditions: For each creator $i \in [n]$, the equilibrium quality satisfies
\[
\frac{\partial U_i}{\partial x_i}\bigg|_{(\mathbf{x}^\star, \mathbf{1})} = 0 \quad \text{or} \quad x_i^\star = 0 \text{ with } \frac{\partial U_i}{\partial x_i}\bigg|_{(\mathbf{x}^\star, \mathbf{1})} \leq 0.
\]
We solve this system numerically using \texttt{scipy.optimize}'s root-finding routines. Under our cost assumptions (in particular, $\lim_{x \to 0^+} c_i'(x) = 0$), interior solutions are guaranteed, and the strict concavity ensures that any profile with near-zero gradients closely approximates the true ESE. This approach leverages highly optimized numerical solvers while maintaining theoretical soundness. We use a tolerance of $10^{-8}$ for gradient norms in all experiments.

\subsection{FSE Stability Check}

\Cref{alg:fse_stability_check_simulations_appendix} verifies whether a given profile is an FSE by checking if any creator can profitably deviate to not sharing ($s_i = 0$). This is necessary and sufficient due to \Cref{lem:s_choice_threshold}: any profitable deviation is dominated by one with $s_i \in \{0,1\}$, and deviations with $s_i = 1$ are already ruled out by the ESE property. For each creator, we solve a one-dimensional optimization problem to find their best response when withholding content. The profile is deemed stable only if no creator benefits from such deviation beyond the tolerance threshold $\varepsilon_{tol}$.

\begin{algorithm}[t]
\caption{FSE Stability Check via $s_i=0$ Deviation}
\label{alg:fse_stability_check_simulations_appendix}
\begin{algorithmic}[1]
\Require Candidate ESE quality profile $x^\star \in \mathbb{R}^n_{>0}$, cost functions $\{c_i(x_i) = a_i x_i^{\theta_i}\}_{i=1}^n$, model parameters $\mu, \alpha, \gamma, \rho$, stability tolerance $\varepsilon_{tol} > 0$.
\Ensure Boolean indicating if $(x^\star, \mathbf{1})$ is stable (\textbf{True}) or not (\textbf{False}).

\For{$i \gets 1$ \textbf{to} $n$}
    \State Calculate utility at candidate profile: $U_i^\star \gets U_i(x_i^\star, 1; x_{-i}^\star, \mathbf{1}_{-i})$
    \State Calculate rational quality upper bound for deviation: $x_i^{max} \gets (\mu / a_i)^{1/(\theta_i-\gamma)}$
    \State $U_{dev}(s_i=0) \gets \max_{x_i' \in [0, x_i^{max}]} \{ U_i(x_i', 0; x_{-i}^\star, \mathbf{1}_{-i}) \}$ \Comment{Requires 1D solver}
    \State Check stability condition:
    \If{$U_{dev}(s_i=0) - U_i^\star > \varepsilon_{tol}$}
        \State \Return \textbf{False} \Comment{Player $i$ has a profitable deviation}
    \EndIf
\EndFor
\State \Return \textbf{True} \Comment{No player has a profitable deviation to $s_i=0$}
\end{algorithmic}
\end{algorithm}

\subsection{Simulation Pipeline and Computational Resources}

For each parameter combination, we initialized 150 independent game instances with randomly sampled cost coefficients. For each instance, we evaluated 100 equally-spaced values of $\rho \in [0,1]$. The computational workflow proceeds as follows: For each $\rho$ value, we compute the unique ESE by solving the first-order conditions using \texttt{scipy.optimize}, leveraging the strict concavity established in our theoretical analysis. Subsequently, \Cref{alg:fse_stability_check_simulations_appendix} verifies whether each computed ESE also constitutes an FSE by checking for profitable deviations to non-sharing strategies. Results are averaged across instances with 95\% confidence intervals computed via bootstrapping.

The complete simulation suite was executed on a standard MacBook Pro with 18 GB of RAM and 8 CPU cores. We used parallel processing across 6 cores to accelerate computation. The full set of experiments required approximately 6 hours of computation time for ESE computation and FSE verification across all parameter combinations. Statistical analysis, bootstrapping for confidence intervals, and visualization required less than 1 minute. The implementation uses standard Python scientific computing libraries (NumPy, SciPy) without GPU acceleration.

\newpage

\section{Omitted Proofs from \Cref{sec:fse}}\label{sec:sec3_proofs}

This section is divided into two parts. In the first part, we state and prove all the auxiliary lemmas necessary for the proofs of the main results presented in \Cref{sec:fse}. The second part provides the complete proofs of these main results.

\begin{remark}
    For the rest of the paper, we denote $x_i^{\max}$ as the unique positive solution to $c_i(x) = \mu x^\gamma$. This value is well-defined because the function $g(x) = c_i(x) - \mu x^\gamma$ is strictly convex (as $c_i$ is strictly convex and $x^\gamma$ is concave) and satisfies $g(0^+)<0$ and $\lim_{x\to\infty} g(x) = \infty$. The boundary conditions imply that $g(x)$ must eventually become positive, ensuring it crosses zero at least once. Meanwhile, strict convexity ensures that the slope is strictly increasing, meaning $g(x)$ can cross the x-axis exactly once.
\end{remark}

\subsection{Auxiliary Lemmas for \Cref{sec:fse}'s Main Results}

\begin{lemma}
\label{lem:utility_full_sharing}
Consider a profile of the form $(\x, \mathbf{1})$. Then, the utility of creator $i \in [n]$ under the proportional allocation rule $f_\rho$ is given by:
\[
U_i(\x, \mathbf{1}; f_\rho) = \frac{\mu (1 + \alpha \rho)}{1 + \alpha} \cdot (\xtot)^{\gamma - 1} \cdot x_i - c_i(x_i).
\]
\end{lemma}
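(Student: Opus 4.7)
The proof is a direct computation by substituting $\mathbf{s} = \mathbf{1}$ into the definitions. The plan is first to observe the two simplifications that occur under full sharing, then to plug these into the utility formula and collect like terms.

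Under $\mathbf{s} = \mathbf{1}$, I would first note that $\mathbf{x}^\top \mathbf{s} = \xtot$, so $\qai(\x, \mathbf{1}) = \alpha \xtot$. Consequently, the denominator in the Tullock contest simplifies to $\xtot + \qai = (1+\alpha)\xtot$. Second, the proportional allocation rule becomes $f_{i,\rho}(\x, \mathbf{1}) = \rho \cdot x_i / \xtot$, assuming $\xtot > 0$ (the degenerate case $\xtot = 0$ can be handled separately or excluded since the formula then reduces to $-c_i(0) = 0$, consistent with the limit).

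Substituting into the utility definition, the direct-traffic term becomes
\[
T(\x) \cdot \frac{x_i}{(1+\alpha)\xtot} = \mu \xtot^{\gamma} \cdot \frac{x_i}{(1+\alpha)\xtot} = \frac{\mu}{1+\alpha} \xtot^{\gamma-1} x_i,
\]
while the GenAI-driven share becomes
\[
T(\x) \cdot \frac{\alpha \xtot}{(1+\alpha)\xtot} \cdot \rho \cdot \frac{x_i}{\xtot} = \frac{\mu \alpha \rho}{1+\alpha} \xtot^{\gamma-1} x_i.
\]
Summing the two revenue terms and subtracting $c_i(x_i)$ yields the claimed expression.

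There is no substantive obstacle here; the lemma is just a bookkeeping step that packages the utility in a form convenient for later sections (and already signals the effective traffic scaling factor $(1+\alpha\rho)/(1+\alpha)$ that will recur in the ESE analysis). The only point that warrants a brief comment is the well-definedness at $\xtot = 0$, which can be dispatched by noting that all creators produce $x_i = 0$ there, so both sides of the identity equal zero.
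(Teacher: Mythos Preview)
Your proof is correct and follows essentially the same direct-substitution approach as the paper: observe that under $\mathbf{s}=\mathbf{1}$ one has $\qai=\alpha\xtot$ and $f_{i,\rho}=\rho x_i/\xtot$, plug into the two revenue terms, and combine. Your explicit remark on the degenerate case $\xtot=0$ is a small extra that the paper omits but does no harm.
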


\begin{proof}[\proofof{lem:utility_full_sharing}]
    Under full sharing, the total shared content equals the total human content, $\stot = \xtot$. Consequently, Creator $i$ receives a direct traffic share of $\frac{x_i}{(1+\alpha)\xtot}$, and their share of AI-driven revenue under the rule $f_\rho$ is $\rho \cdot \frac{x_i}{\xtot}$.

    Combining both revenue sources, the utility of creator $i$ is:
    \begin{align*}
        U_i(\x, \mathbf{1}; f_\rho)
        &= \mu (\xtot)^\gamma \cdot \frac{x_i}{(1+\alpha)\xtot}
        + \mu (\xtot)^\gamma \cdot \frac{\alpha \xtot}{(1+\alpha)\xtot} \cdot \rho \cdot \frac{x_i}{\xtot}
        - c_i(x_i) \\
        &= \mu (\xtot)^{\gamma - 1} \cdot \left( \frac{1 + \alpha \rho}{1 + \alpha} \right) \cdot x_i - c_i(x_i) \\
        &= \frac{\mu (1 + \alpha \rho)}{1 + \alpha} \cdot (\xtot)^{\gamma - 1} \cdot x_i - c_i(x_i),
    \end{align*}
    as claimed.
\end{proof}

\begin{lemma}
    \label{lem:rational_quality_bound}
    Consider a profile $(\x, \mathbf{1})$ and some arbitrary creator $i \in [n]$. Then, any deviation $(x_i', s_i')$ by creator $i$ where $x_i' > x_i^{\max}$ yields negative utility.
\end{lemma}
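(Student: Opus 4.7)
The plan is to show that any deviation $(x_i', s_i')$ with $x_i' > x_i^{\max}$ gives creator $i$ a total revenue strictly less than $c_i(x_i')$, regardless of the value of $s_i'$ or the quality vector $\x_{-i}$ of the other creators. The definition of $x_i^{\max}$ as the unique positive root of $c_i(x) = \mu x^\gamma$, combined with the remark establishing that $c_i(x) - \mu x^\gamma > 0$ for every $x > x_i^{\max}$, will then yield the claim immediately, provided I can upper bound the revenue by $\mu (x_i')^\gamma$.

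To establish the revenue bound, I would first write the utility of creator $i$ under the deviation explicitly: letting $\x' = (x_i', \x_{-i})$ and $\s' = (s_i', \mathbf{1}_{-i})$, both the direct-traffic and the AI-driven revenue components collapse into
\[
\mu \norm{\x'}^\gamma \cdot \frac{x_i'(1 + \alpha \rho s_i')}{\norm{\x'} + \alpha (\x')^\top \s'} - c_i(x_i').
\]
The key algebraic step is to rewrite the denominator as $(1+\alpha s_i')\norm{\x'} + \alpha(1-s_i')\norm{\x_{-i}}$, which uses that $\s_{-i} = \mathbf{1}_{-i}$. Dropping the nonnegative second term and then invoking $\gamma \leq 1$ together with $\norm{\x'} \geq x_i'$ gives $\norm{\x'}^{\gamma-1} \leq (x_i')^{\gamma-1}$. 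The residual factor $(1+\alpha\rho s_i')/(1+\alpha s_i')$ is bounded above by one since $\rho \leq 1$, yielding the desired bound of $\mu (x_i')^\gamma$ on the revenue.

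The main obstacle is choosing the right relaxation of the denominator. The naive estimate $\norm{\x'} + \alpha (\x')^\top \s' \geq \norm{\x'}$ introduces an extra factor of $(1+\alpha)$ in the final bound, which is too loose to dominate the cost when $\alpha$ is large. The trick is to preserve the $(1+\alpha s_i')\norm{\x'}$ portion of the denominator, since the $(1+\alpha s_i')$ factor cancels against the matching factor in the numerator. Once that decomposition is in place, the remaining steps are routine, and the argument applies uniformly over all $s_i' \in [0,1]$, so no separate case analysis on the sharing level is needed.
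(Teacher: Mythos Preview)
Your proof is correct and takes a genuinely different route from the paper's. The paper first invokes \Cref{lem:s_choice_threshold} to reduce to the two extreme sharing levels $s_i'\in\{0,1\}$, and then treats each case separately: for $s_i'=1$ it applies the full-sharing utility formula and the bound $\tfrac{1+\alpha\rho}{1+\alpha}\le 1$; for $s_i'=0$ it introduces the auxiliary function $h(Y)=(x_i'+Y)^{\gamma}\frac{x_i'}{x_i'+(1+\alpha)Y}$ and computes its derivative to show it is non-increasing in $Y=\norm{\x_{-i}}$, concluding $h(Y)\le h(0)=(x_i')^{\gamma}$. Your argument instead handles every $s_i'\in[0,1]$ at once via the identity
\[
\norm{\x'}+\alpha(\x')^{\top}\s' \;=\; (1+\alpha s_i')\norm{\x'} + \alpha(1-s_i')\norm{\x_{-i}},
\]
dropping the second summand, cancelling the $(1+\alpha s_i')$ against the numerator factor $(1+\alpha\rho s_i')$ via $\rho\le 1$, and then using $\norm{\x'}^{\gamma-1}\le (x_i')^{\gamma-1}$. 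This is shorter, self-contained (no appeal to the threshold lemma), and sidesteps the derivative computation in the $s_i'=0$ case. The paper's decomposition has the minor expository advantage of isolating the two behaviorally distinct regimes, but analytically your unified bound is cleaner.
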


\begin{proof}[\proofof{lem:rational_quality_bound}]
    Consider a profile $(\x, \mathbf{1})$ with arbitrary $\x \in [0,\infty)^n$, and a unilateral deviation $(x_i', s_i')$ by creator $i$. For the deviation to yield non-negative utility, the revenue must exceed the cost:
    \[
        R_i(x_i', s_i') \coloneqq \Tx \frac{x_i'}{\xtot' + \qai'} + \Tx \frac{\qai'}{\xtot' + \qai'} \cdot f_i(\x', \mathbf{s}') \geq c_i(x_i').
    \]

    By \Cref{lem:s_choice_threshold}, the optimal sharing level for any quality $x_i'$ must be either $s_i'=0$ or $s_i'=1$, with other values being weakly dominated. We analyze both cases and show that in each of them, the revenue is bounded above by $\mu (x_i')^{\gamma}$. Thus, any deviation $(x_i', s_i')$ with $x_i' > x_i^{\max}$ leads to negative utility, as for those values the cost $c_i(x_i')$ exceeds the revenue.

    \paragraph{Case 1: Deviation with full sharing ($s_i'=1$)}
    With $s_i'=1$ and $s_j=1$ for $j \neq i$, we obtain via \Cref{lem:utility_full_sharing}:
    \[
        R_i(x_i', 1) = \frac{\mu (1 + \alpha \rho)}{1+\alpha} (x_i' + \xtotmi)^{\gamma-1} x_i'.
    \]

    Since $\rho \in [0,1]$, we have $\frac{1 + \alpha \rho}{1+\alpha} \leq 1$. Additionally, as $\gamma \in [0,1]$ and $x_i' + \xtotmi \geq x_i'$, it follows that $(x_i' + \xtotmi)^{\gamma-1} \leq (x_i')^{\gamma-1}$. Consequently:
    \[
        R_i(x_i', 1) \leq \mu (x_i')^{\gamma-1} x_i' = \mu (x_i')^{\gamma}.
    \]

    \paragraph{Case 2: Deviation with no sharing ($s_i'=0$)}
    When $s_i'=0$, creator $i$ receives revenue solely from direct traffic:
    \[
        R_i(x_i', 0) = \mu (x_i' + \xtotmi)^{\gamma} \frac{x_i'}{x_i' + (1+\alpha)\xtotmi}.
    \]

    Letting $Y = \xtotmi$, we define $h(Y) = (x_i' + Y)^{\gamma} \frac{x_i'}{x_i' + (1+\alpha)Y}$ and prove that $h(Y) \leq (x_i')^{\gamma}$ for all $Y \geq 0$. Taking the derivative:
    \begin{align*}
        h'(Y) &= \gamma (x_i' + Y)^{\gamma-1} \frac{x_i'}{x_i' + (1+\alpha)Y} - (x_i' + Y)^{\gamma} \frac{x_i'(1+\alpha)}{(x_i' + (1+\alpha)Y)^2} \\
        &= \frac{x_i'(x_i' + Y)^{\gamma-1}}{(x_i' + (1+\alpha)Y)^2}\left[x_i'(\gamma - (1+\alpha)) + Y(1+\alpha)(\gamma-1)\right].
    \end{align*}

    Since $\gamma \in [0,1]$, $\alpha > 0$, and $Y \geq 0$, both terms in the brackets are non-positive, yielding $h'(Y) \leq 0$ for all $Y \geq 0$. Thus, $h$ is non-increasing with maximum at $Y=0$, giving $h(Y) \leq h(0) = (x_i')^{\gamma}$. Therefore:
    \[
        R_i(x_i', 0) = \mu h(\xtotmi) \leq \mu (x_i')^{\gamma}.
    \]

    Combining both cases, we establish that for any deviation, $R_i(x_i', s_i') \leq \mu (x_i')^{\gamma}$. The condition $R_i(x_i', s_i') \geq c_i(x_i')$ thus necessitates:
    \[
        c_i(x_i') \leq \mu (x_i')^{\gamma}.
    \]

    Consequently, any deviation $(x_i', s_i')$ with $x_i' > x_i^{\max}$ yields negative utility for creator $i$. This completes the proof of \Cref{lem:rational_quality_bound}.
\end{proof}

\begin{lemma}\label{lem:quadratic_form_helper}
    Let $\x = (x_1, \dots, x_n)$ be a non-negative vector in $\mathbb{R}^n$ and $\mathbf{v} = (v_1, \dots, v_n)$ be an arbitrary real vector. For any $\gamma \in [0, 1]$, the following quadratic form is non-negative:
    \[
    \mathcal{Q}_{\gamma}(\x, \mathbf{v}) \coloneqq \sum_{i=1}^n v_i^2 \left(2\gamma x_i + 4 \sum_{\substack{j=1 \\ j \neq i}}^n x_j \right) + \sum_{\substack{i,j=1 \\ i \neq j}}^n v_i v_j \left( \gamma (x_i + x_j) + 2 \sum_{\substack{k=1 \\ k \neq i,j}}^n x_k \right) \geq 0.
    \]
\end{lemma}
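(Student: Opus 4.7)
The plan is to reorganize $\mathcal{Q}_\gamma$ by collecting, for each index $\ell \in [n]$, the full coefficient of $x_\ell$, and then to argue that each such coefficient is a non-negative quadratic form in $\mathbf{v}$. Since $x_\ell \geq 0$ by assumption, this immediately yields $\mathcal{Q}_\gamma(\mathbf{x},\mathbf{v}) \geq 0$.

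Concretely, I would fix an arbitrary $\ell$ and extract the coefficient of $x_\ell$ from both sums. From $\sum_i v_i^2(2\gamma x_i + 4\sum_{j \neq i} x_j)$, the contribution is $2\gamma v_\ell^2$ (the $i=\ell$ term) plus $4\sum_{i \neq \ell} v_i^2$ (from the $j=\ell$ entries in each inner sum). From $\sum_{i\neq j} v_i v_j(\gamma(x_i+x_j) + 2\sum_{k \neq i,j} x_k)$, the contribution splits into $2\gamma v_\ell \sum_{j \neq \ell} v_j$ (when either $i=\ell$ or $j=\ell$ in the $\gamma$-term) and $2\sum_{i \neq j;\; i,j \neq \ell} v_i v_j$ (when $k=\ell$). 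Writing $S = \sum_{i \neq \ell} v_i$ and $T = \sum_{i \neq \ell} v_i^2$, and using $\sum_{i \neq j;\; i,j \neq \ell} v_i v_j = S^2 - T$, the coefficient of $x_\ell$ simplifies to
\[
A_\ell(\mathbf{v},\gamma) \;=\; 2\gamma v_\ell^2 + 2\gamma v_\ell S + 2 S^2 + 2T.
\]

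The remaining step is to show $A_\ell \geq 0$ for every $\gamma \in [0,1]$. I would complete the square in $v_\ell$:
\[
A_\ell = 2\gamma\bigl(v_\ell + S/2\bigr)^2 + \bigl(2 - \gamma/2\bigr) S^2 + 2T.
\]
For $\gamma \in [0,1]$ we have $2\gamma \geq 0$ and $2-\gamma/2 \geq 3/2 > 0$, while $T \geq 0$ trivially. Hence $A_\ell(\mathbf{v},\gamma) \geq 0$ for all real $\mathbf{v}$. Since $\mathcal{Q}_\gamma(\mathbf{x},\mathbf{v}) = \sum_{\ell} x_\ell \, A_\ell(\mathbf{v},\gamma)$ and each $x_\ell \geq 0$, this proves the claim.

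The main obstacle is purely combinatorial bookkeeping: correctly tallying how many times each $x_\ell$ is produced by the nested sums over ordered pairs $(i,j)$ and triples $(i,j,k)$, and in particular handling the double-counting in $\sum_{i \neq j}$ correctly. Once that is carried out cleanly, the completion of the square is routine, and no deeper structure (e.g., Schur complement or positive-semidefinite matrix manipulation) is needed. A useful sanity check along the way would be to verify the formula for $A_\ell$ in the boundary cases $\gamma = 0$ (where $A_\ell = 2S^2 + 2T \geq 0$ immediately) and $\gamma = 1$.
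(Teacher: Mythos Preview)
Your proposal is correct and follows essentially the same approach as the paper's proof: both decompose $\mathcal{Q}_\gamma$ as $\sum_\ell x_\ell A_\ell$, compute the identical coefficient $A_\ell = 2\gamma v_\ell^2 + 2\gamma v_\ell S + 2S^2 + 2T$, and finish by completing the square. The only cosmetic difference is that you complete the square in $v_\ell$ while the paper completes it in $S$, but both yield non-negativity immediately.
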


\begin{proof}[\proofof{lem:quadratic_form_helper}]
    Our strategy is to decompose $\mathcal{Q}_{\gamma}(\x, \mathbf{v})$ as a sum $\sum_{r=1}^n x_r C_r$ and demonstrate that each coefficient $C_r \geq 0$. Since $x_r \geq 0$ for all $r$, this will establish that $\mathcal{Q}_{\gamma}(\x, \mathbf{v}) \geq 0$.

    \paragraph{Step 1: Collecting terms associated with each $x_r$}
    Fix an arbitrary index $r \in \{1,\ldots,n\}$ and identify all terms in $\mathcal{Q}_{\gamma}(\x, \mathbf{v})$ containing $x_r$.

    From the first summation $\sum_{i=1}^n v_i^2 (2\gamma x_i + 4 \sum_{j \neq i} x_j)$, we get:
    \begin{itemize}
        \item When $i = r$: The term $2\gamma v_r^2 x_r$.
        \item When $i \neq r$ and $j = r$: The term $4v_i^2 x_r$ for each $i \neq r$.
    \end{itemize}
    The total contribution from the first summation is therefore $2\gamma v_r^2 x_r + \sum_{i \neq r} 4v_i^2 x_r = x_r(2\gamma v_r^2 + 4\sum_{i \neq r} v_i^2)$.

    From the second summation $\sum_{i \neq j} v_i v_j (\gamma(x_i + x_j) + 2\sum_{k \neq i,j} x_k)$, we get:
    \begin{itemize}
        \item When $i = r, j \neq r$: The term $\gamma v_r v_j x_r$ for each $j \neq r$.
        \item When $i \neq r, j = r$: The term $\gamma v_i v_r x_r$ for each $i \neq r$.
        \item When $i,j \neq r$ and $k = r$: The term $2v_i v_j x_r$ for each pair $(i,j)$ where $i \neq j$ and both $i,j \neq r$.
    \end{itemize}

    Let $A_r = \sum_{i \neq r} v_i$ denote the sum of all $v_i$ except $v_r$. The contribution from the first two bullet points is $\gamma v_r \sum_{j \neq r} v_j + \gamma v_r \sum_{i \neq r} v_i = 2\gamma v_r A_r$.

    \paragraph{Step 2: Simplifying the cross-terms}
    For the third bullet point, we need to express $\sum_{i,j \neq r, i \neq j} v_i v_j$ in terms of $A_r$ and the sum of squares. Note that:
    \[
        A_r^2 = \left(\sum_{i \neq r} v_i\right)^2 = \sum_{i \neq r} v_i^2 + \sum_{i,j \neq r, i \neq j} v_i v_j.
    \]
    Rearranging, we get $\sum_{i,j \neq r, i \neq j} v_i v_j = A_r^2 - \sum_{i \neq r} v_i^2$. Thus, the contribution from the third bullet point is $2x_r(A_r^2 - \sum_{i \neq r} v_i^2)$.

    \paragraph{Step 3: Computing the total coefficient}
    Summing all contributions, the coefficient of $x_r$ is:
    \begin{align*}
        C_r &= 2\gamma v_r^2 + 4\sum_{i \neq r} v_i^2 + 2\gamma v_r A_r + 2A_r^2 - 2\sum_{i \neq r} v_i^2 \\
        &= 2\gamma v_r^2 + 2\sum_{i \neq r} v_i^2 + 2\gamma v_r A_r + 2A_r^2 \\
        &= 2\left(\gamma v_r^2 + \sum_{i \neq r} v_i^2 + \gamma v_r A_r + A_r^2\right).
    \end{align*}

    \paragraph{Step 4: Proving non-negativity through completing the square}
    To prove $C_r \geq 0$, we complete the square for terms involving $A_r$ and $v_r$. Observe that:
    \[
        A_r^2 + \gamma v_r A_r + \frac{\gamma^2}{4}v_r^2 = \left(A_r + \frac{\gamma}{2}v_r\right)^2.
    \]

    We can rewrite $\gamma v_r^2$ as $\frac{\gamma^2}{4}v_r^2 + (\gamma - \frac{\gamma^2}{4})v_r^2$. Substituting into our expression for $C_r$:
    \begin{align*}
        C_r &= 2\left(\frac{\gamma^2}{4}v_r^2 + \left(\gamma - \frac{\gamma^2}{4}\right)v_r^2 + \sum_{i \neq r} v_i^2 + \gamma v_r A_r + A_r^2\right) \\
        &= 2\left[\left(A_r + \frac{\gamma}{2}v_r\right)^2 + \left(\gamma - \frac{\gamma^2}{4}\right)v_r^2 + \sum_{i \neq r} v_i^2\right].
    \end{align*}

    Notice that $\gamma - \frac{\gamma^2}{4} \geq 0$ for all $\gamma \in [0,1]$. Therefore, as each term inside the brackets is non-negative, we have $C_r \geq 0$ for each $r$. Remembering that we started with $x_r \geq 0$, we conclude that:
    \[
        \mathcal{Q}_{\gamma}(\x, \mathbf{v}) = \sum_{r=1}^n x_r C_r \geq 0.
    \]
    This completes the proof of \Cref{lem:quadratic_form_helper}.
\end{proof}

\begin{lemma}\label{lem:dominance-positive-quality}
    For any creator $i$, any rational strategy must involve choosing $x_i > 0$.
\end{lemma}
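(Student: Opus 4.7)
The plan is to show that $x_i=0$ yields exactly zero utility for creator $i$, whereas some strategy with $x_i>0$ yields strictly positive utility; hence no best response can have $x_i=0$.

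First, I would verify that $U_i=0$ at $x_i=0$ for every $s_i\in[0,1]$ and every profile $(\mathbf{x}_{-i},\mathbf{s}_{-i})$. Both of creator $i$'s revenue streams carry a factor of $x_i$: the direct-traffic term $T(\mathbf{x})\cdot x_i/(\xtot+\qai)$ explicitly, and the GenAI-share term inherits one through $f_{i,\rho}(\mathbf{x},\mathbf{s})=\rho x_i s_i/\stot$ (and equals zero by definition when $\stot=0$). Combined with $c_i(0)=0$, the utility vanishes.

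Second, I would exhibit a profitable deviation $(x_i,s_i)=(\varepsilon,0)$ for small $\varepsilon>0$, and lower-bound its revenue. The resulting revenue is
\[
R_i(\varepsilon)=\mu(\xtotmi+\varepsilon)^{\gamma}\cdot\frac{\varepsilon}{\xtotmi+\varepsilon+\alpha\,\mathbf{x}_{-i}^{\top}\mathbf{s}_{-i}}.
\]
When $\xtotmi>0$, the prefactor $(\xtotmi+\varepsilon)^{\gamma}$ and the denominator are bounded away from zero, so $R_i(\varepsilon)=\Theta(\varepsilon)$ as $\varepsilon\to 0^{+}$. When $\xtotmi=0$, necessarily $\mathbf{x}_{-i}^{\top}\mathbf{s}_{-i}=0$ as well (each $x_j s_j\leq x_j$), so $R_i(\varepsilon)=\mu\varepsilon^{\gamma}$, which for $\gamma\in[0,1]$ and $\varepsilon\leq 1$ satisfies $R_i(\varepsilon)\geq\mu\varepsilon$. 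Either way, $R_i(\varepsilon)$ grows at least linearly in $\varepsilon$.

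Third, I would compare this against cost. The assumptions $c_i(0)=0$ and $\lim_{x\to 0^{+}}c_i'(x)=0$ imply $c_i(\varepsilon)=o(\varepsilon)$ as $\varepsilon\to 0^{+}$. Combined with the revenue bound above, this gives $U_i((\varepsilon,\mathbf{x}_{-i}),(0,\mathbf{s}_{-i});f_\rho)=R_i(\varepsilon)-c_i(\varepsilon)>0$ for all sufficiently small $\varepsilon$, strictly beating the utility at $x_i=0$. The only subtle point is the degenerate case $\xtotmi=0$, where the Tullock ratio $x_i/(\xtot+\qai)$ is indeterminate at $x_i=0$ but cleanly defined at any $\varepsilon>0$ (taking $s_i=0$ simultaneously eliminates $\qai$), so no limit argument is required; I do not expect any further obstacle.
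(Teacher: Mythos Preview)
Your proposal is correct and follows essentially the same approach as the paper: show $U_i=0$ at $x_i=0$, deviate to $(\varepsilon,0)$, split into the cases $\xtotmi>0$ and $\xtotmi=0$, and use $c_i(\varepsilon)=o(\varepsilon)$ to conclude. The only cosmetic difference is that in the $\xtotmi=0$ case the paper argues $c_i(\varepsilon)=o(\varepsilon^\gamma)$ directly, whereas you instead lower-bound $\mu\varepsilon^{\gamma}\geq\mu\varepsilon$ for $\varepsilon\leq 1$; both are valid.
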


\begin{proof}[\proofof{lem:dominance-positive-quality}]
Fix creator $i$ and strategies $(\x_{-i},\mathbf{s}_{-i})$. If $x_i=0$, then creator $i$ gets no direct-traffic share and no AI-revenue share,
and pays cost $c_i(0)=0$, hence $U_i(0,s_i)=0$ for every $s_i\in[0,1]$.

Next, let $Y=\norm{\x_{-i}}$ and $Z = \alpha \x_{-i}^\top \mathbf{s}_{-i}$. Consider the deviation $(x_i,s_i) = (\varepsilon,0)$ with $\varepsilon>0$. Then
\[
U_i(\varepsilon,0) = \mu(\varepsilon + Y)^\gamma \cdot \frac{\varepsilon}{\varepsilon + Y + Z} - c_i(\varepsilon) = \varepsilon \cdot g(\varepsilon) - c_i(\varepsilon),
\]
where $g(\varepsilon) \coloneqq \frac{\mu(\varepsilon+Y)^\gamma}{(\varepsilon+Y+Z)} > 0$ for $\varepsilon > 0$.

By the mean value theorem, for every $\varepsilon>0$ there exists $\xi\in(0,\varepsilon)$ such that
$c_i(\varepsilon)-c_i(0) = c_i'(\xi)\varepsilon$. Since $\lim_{x\to0^+}c_i'(x) = 0$, we get $\frac{c_i(\varepsilon)}{\varepsilon} \to 0 \quad\text{as }\varepsilon\to 0^+$, i.e., $c_i(\varepsilon)=o(\varepsilon)$.

If $Y>0$, then $g$ is continuous at $0$, and, for sufficiently small $\varepsilon$, $\abs{g(\varepsilon) - g(0)} \leq g(0)/2 \implies g(\varepsilon) \geq g(0)/2$. Using $c_i(\varepsilon) = o(\varepsilon)$, for all sufficiently small $\varepsilon$ also $c_i(\varepsilon)\le (g(0)/4)\varepsilon$, so
\[
U_i(\varepsilon,0)\;\ge\;\varepsilon\cdot \frac{g(0)}{2} - \frac{g(0)}{4}\varepsilon
\;=\;\frac{g(0)}{4}\varepsilon \;>\;0.
\]

If $Y=0$, then $U_i(\varepsilon,0)=\mu\varepsilon^\gamma-c_i(\varepsilon)$.
Since $\gamma\in[0,1]$, we have $\varepsilon = o(\varepsilon^\gamma)$ when $\gamma<1$
(and when $\gamma=1$ we already know $c_i(\varepsilon)=o(\varepsilon)$), so in all cases
$c_i(\varepsilon)=o(\varepsilon^\gamma)$. Therefore for all sufficiently small $\varepsilon$,
$c_i(\varepsilon)\le (\mu/2)\varepsilon^\gamma$, giving $U_i(\varepsilon,0)\ge (\mu/2)\varepsilon^\gamma>0$.

Thus for every $(\x_{-i},\mathbf{s}_{-i})$ and every $s_i$, $(x_i=0,s_i)$ is strictly dominated by
$(x_i=\varepsilon,s_i=0)$ for small enough $\varepsilon>0$. Hence, no rational strategy can play $x_i=0$.
\end{proof}

\begin{lemma}\label{lem:bound_on_total_quality_in_ese_power_costs}
    For any power cost instance where $a_{\min}, a_{\max} = O(1)$ and every $\rho \in [0,1]$, it holds that $\norm{\xr{\rho}}, \norm{\mathcal{X}_{-i}(\rho)}  = \Theta(n^{\frac{\theta-1}{\theta-\gamma}})$.
\end{lemma}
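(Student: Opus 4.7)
The plan is to exploit the first-order conditions characterizing the unique ESE and extract matching polynomial bounds on $X \coloneqq \norm{\xr{\rho}}$ in $n$. By \Cref{lem:unique-ese} the ESE $\x^\star = \xr{\rho}$ is unique, and by \Cref{lem:dominance-positive-quality} every component is strictly positive, so $\x^\star$ satisfies the interior FOC. Differentiating the closed-form expression from \Cref{lem:utility_full_sharing} with $c_i(x) = a_i x^\theta$ yields, for every $i \in [n]$,
\[
    C\, X^{\gamma-2}\bigl(\xtotmi + \gamma x_i\bigr) \;=\; \theta a_i x_i^{\theta-1}, \qquad C \coloneqq \frac{\mu(1+\alpha\rho)}{1+\alpha}.
\]
Since $C \in [\mu/(1+\alpha),\,\mu] = \Theta(1)$ uniformly in $\rho \in [0,1]$, every constant in the bounds I derive will automatically be uniform in $\rho$, as the lemma requires.

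For the upper bound on $X$, I would bound $\xtotmi + \gamma x_i \leq X$ and $a_i \geq a_{\min}$ in the FOC to isolate $x_i \leq (C X^{\gamma-1}/(\theta a_{\min}))^{1/(\theta-1)}$. Summing this inequality over $i$ gives $X \leq n \cdot (C/(\theta a_{\min}))^{1/(\theta-1)} X^{(\gamma-1)/(\theta-1)}$; collecting powers of $X$ on the left produces the exponent $(\theta-\gamma)/(\theta-1)$, and solving yields $X \leq n^{(\theta-1)/(\theta-\gamma)} (C/(\theta a_{\min}))^{1/(\theta-\gamma)} = O(n^{(\theta-1)/(\theta-\gamma)})$.

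For the matching lower bound, I would take a dual approach by summing the $n$ FOCs: using $\sum_i \xtotmi = (n-1)X$ produces $C(n-1+\gamma) X^{\gamma-1} = \theta\sum_i a_i x_i^{\theta-1}$. Since $\theta - 1 \in (0,1]$, the function $x \mapsto x^{\theta-1}$ is concave on $[0,\infty)$, so Jensen's inequality gives $\sum_i x_i^{\theta-1} \leq n^{2-\theta} X^{\theta-1}$. Combining this with $a_i \leq a_{\max}$ yields $C(n-1+\gamma)X^{\gamma-1} \leq \theta a_{\max} n^{2-\theta} X^{\theta-1}$, which rearranges to $X^{\theta-\gamma} \geq \Theta(n^{\theta-1})$, hence $X \geq \Omega(n^{(\theta-1)/(\theta-\gamma)})$. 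For the bound on $\norm{\mathcal{X}_{-i}(\rho)} = X - x_i$, I would plug the upper bound on $X$ into the per-creator bound on $x_i$ derived above; a short calculation shows $x_i/X = O(n^{-1})$, so $\xtotmi = X\bigl(1 - O(n^{-1})\bigr) = \Theta(X)$ for sufficiently large $n$.

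The main obstacle I anticipate is tightness: a naive alternative that sums the per-creator inequalities on $x_i^{\theta-1}$ and applies the subadditivity bound $\sum_i x_i^{\theta-1} \geq X^{\theta-1}$ yields only $X = O(n^{1/(\theta-\gamma)})$, which is strictly looser than the claimed $n^{(\theta-1)/(\theta-\gamma)}$ whenever $\theta < 2$. The critical observation is that the two matching exponents arise from an asymmetric use of Jensen-type inequalities on the two sides: the pointwise bound $x_i \leq (\cdot)$ summed linearly for the upper bound, and the aggregate concavity bound $\sum_i x_i^{\theta-1} \leq n^{2-\theta} X^{\theta-1}$ applied to the summed FOC for the lower bound. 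This asymmetry is what cancels the extra factor and recovers the advertised exponent $(\theta-1)/(\theta-\gamma)$.
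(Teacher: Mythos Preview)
Your proof is correct and takes a genuinely different route from the paper. The paper argues by first bounding the shares $q_i = x_i/X$ uniformly away from $1$: it compares the FOC of an arbitrary creator $i$ to that of the creator $j^\star$ with the smallest share (so $q_{j^\star}\le 1/n$), divides the two FOCs, and deduces $x_i \le C\,x_{j^\star}$ for a constant $C$ depending only on $(\gamma,a_{\min},a_{\max},\theta)$; this immediately gives $q_i \le C/(C+n-1)$ and hence $1-(1-\gamma)q_i = \Theta(1)$. Plugging that back into the FOC yields the two-sided estimate $x_i = \Theta(X^{(\gamma-1)/(\theta-1)})$ simultaneously, and summing gives $X = \Theta(n^{(\theta-1)/(\theta-\gamma)})$ in one shot. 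Your approach instead treats the two directions asymmetrically: the per-creator bound $x_i^{\theta-1}\le CX^{\gamma-1}/(\theta a_{\min})$ summed linearly for the upper bound, and the aggregated FOC combined with Jensen on the concave map $t\mapsto t^{\theta-1}$ for the lower bound. What your route buys is that it is more elementary---no ratio trick, just a direct application of Jensen---and it makes transparent why the naive subadditivity bound loses a factor. What the paper's route buys is a stronger intermediate conclusion: it shows every creator's share $q_i$ is bounded by an explicit constant $<1$ for \emph{all} $n\ge 2$, which then makes $\norm{\mathcal{X}_{-i}(\rho)}=\Theta(X)$ immediate without an asymptotic ``for $n$ large enough'' caveat. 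Both approaches are clean; yours is arguably the faster path to the stated $\Theta$-claim, while the paper's extracts slightly more structural information along the way.
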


\begin{proof}[\proofof{lem:bound_on_total_quality_in_ese_power_costs}]
    Consider any creator $i$. Its FOC in the ESE is:
    \[
        \frac{\mu (1 + \alpha \rho)}{1 + \alpha} \norm{\xr{\rho}}^{\gamma - 2} \left[ (\gamma - 1) \mathcal{X}_i(\rho) + \norm{\xr{\rho}} \right] = a_i \theta (\mathcal{X}_i(\rho))^{\theta - 1}.
    \]
    Denoting $q_i = \frac{\mathcal{X}_i(\rho)}{\norm{\xr{\rho}}}$, we can rearrange the FOC to get:
    \begin{equation}\label{eq:proof-lem-bound-on-total-quality-in-ese-power-costs-foc-with-q}
        \frac{\mu (1 + \alpha \rho)}{1 + \alpha} \norm{\xr{\rho}}^{\gamma - 1} \left[ 1 - (1 - \gamma) q_i \right] = a_i \theta (\mathcal{X}_i(\rho))^{\theta - 1}.
    \end{equation}

    \paragraph{Step 1: Bound $q_i$ uniformly} Let $j^\star = \arg\min_{j \in [n]} q_j$. Since $q_{j^\star} \leq \frac{1}{n} \leq \frac{1}{2}$, we have:
    \[
        1 - (1-\gamma)q_{j^\star} \geq 1 - \frac{1-\gamma}{2} = \frac{1+\gamma}{2}.
    \]
    Now, we divide the FOC of any creator $i$, given by \Cref{eq:proof-lem-bound-on-total-quality-in-ese-power-costs-foc-with-q}, by the FOC of creator $j^\star$:
    \[
        \frac{1 - (1-\gamma) q_i}{1 - (1-\gamma) q_{j^\star}} = \frac{a_i \theta (\mathcal{X}_i(\rho))^{\theta - 1}}{a_{j^\star} \theta (\mathcal{X}_{j^\star}(\rho))^{\theta - 1}}.
    \]
    Noticing that the LHS is at most $\frac{1}{\frac{1+\gamma}{2}} = \frac{2}{1+\gamma}$, we have:
    \begin{equation}\label{eq:proof-lem-bound-on-total-quality-in-ese-power-costs-xi-ratio-bound}
        \mathcal{X}_i(\rho) \leq \left( \frac{2 a_{j^\star}}{(1+\gamma) a_{i}} \right)^{\frac{1}{\theta - 1}} \mathcal{X}_{j^\star}(\rho) \leq \left( \frac{2 a_{\max}}{(1+\gamma) a_{\min}} \right)^{\frac{1}{\theta - 1}} \mathcal{X}_{j^\star}(\rho) \eqqcolon C \mathcal{X}_{j^\star}(\rho).
    \end{equation}
    Notice that this is true for every creator $j$, not just $j^\star$. Substituting \Cref{eq:proof-lem-bound-on-total-quality-in-ese-power-costs-xi-ratio-bound} back into the definition of $q_i$, we get:
    \begin{equation}\label{eq:proof-lem-bound-on-total-quality-in-ese-power-costs-qi-bound}
        q_i = \frac{\mathcal{X}_i(\rho)}{\norm{\xr{\rho}}} \leq \frac{\max_{j \in [n]} \mathcal{X}_j(\rho)}{\max_{j \in [n]} \mathcal{X}_j(\rho) + (n-1)\min_{j \in [n]} \mathcal{X}_j(\rho)} \leq \frac{C}{C + n - 1} \leq \frac{C}{C+1}.
    \end{equation}
    Furthermore, we get that for every creator $i$:
    \begin{equation}\label{eq:proof-lem-bound-on-total-quality-in-ese-power-costs-1-minus-1-gamma-qi-bound}
        1 \geq 1 - (1-\gamma) q_i \geq 1 - (1-\gamma) \frac{C}{C+1} = \frac{1 + \gamma C}{C + 1} \coloneqq D > 0.
    \end{equation}

    \paragraph{Step 2: Bound $\norm{\xr{\rho}}$}
    Substituting \Cref{eq:proof-lem-bound-on-total-quality-in-ese-power-costs-1-minus-1-gamma-qi-bound} back into the FOC of any creator $i$ given by \Cref{eq:proof-lem-bound-on-total-quality-in-ese-power-costs-foc-with-q}, we have:
    \begin{equation}\label{eq:proof-lem-bound-on-total-quality-in-ese-power-costs-foc-theta-of-1}
        a_i \theta (\mathcal{X}_i(\rho))^{\theta - 1} = \Theta\left(\norm{\xr{\rho}}^{\gamma - 1}\right) \implies \mathcal{X}_i(\rho) = \Theta\left(\norm{\xr{\rho}}^{\frac{\gamma - 1}{\theta - 1}}\right).
    \end{equation}
    Summing \Cref{eq:proof-lem-bound-on-total-quality-in-ese-power-costs-foc-theta-of-1} over all creators $i \in [n]$, we get:
    \[
        \sum_{i=1}^{n} \mathcal{X}_i(\rho) = \norm{\xr{\rho}} = \Theta\left(n \cdot \norm{\xr{\rho}}^{\frac{\gamma - 1}{\theta - 1}}\right) \implies \norm{\xr{\rho}} = \Theta\left(n^{\frac{\theta - 1}{\theta - \gamma}}\right).
    \]
    Additionally, since $q_i \leq \frac{C}{C+1} < 1$ for every creator $i$ by \Cref{eq:proof-lem-bound-on-total-quality-in-ese-power-costs-qi-bound}, it means that $\norm{\mathcal{X}_{-i}(\rho)} = \Theta\left(\norm{\xr{\rho}}\right) = \Theta\left(n^{\frac{\theta - 1}{\theta - \gamma}}\right)$ as well. This completes the proof of \Cref{lem:bound_on_total_quality_in_ese_power_costs}.
\end{proof}

\subsection{Proofs of \Cref{sec:fse}'s Main Results}

\LemSChoiceThreshold*

\begin{proof}[\proofof{lem:s_choice_threshold}]
    We analyze how creator $i$'s utility varies with respect to their sharing level $s_i$ to determine the optimal choice. For notational simplicity, we denote the platform traffic by $T = \mu (x_i + \xtotmi)^\gamma$, which remains constant with respect to $s_i$. For the rest of the proof, as $(\x, \mathbf{s}_{-i})$ is fixed, we denote $U_i(s_i) \coloneqq U_i(\x, (s_i, \mathbf{s}_{-i}); f_\rho)$.

    Consider the special case where $x_i s_i + \x^{\top}_{-i} \mathbf{s}_{-i} = 0$ (implying $s_i = \x^{\top}_{-i} \mathbf{s}_{-i} = 0$). In this case, the utility of creator $i$ consists of its direct traffic minus its production cost:
    \begin{equation}
        \label{eq:utility_universal_no_sharing_wrt_si}
        U_i(0) = T \cdot \frac{x_i}{x_i + \xtotmi} - c_i(x_i).
    \end{equation}

    Now, assume that $x_i s_i + \x^{\top}_{-i} \mathbf{s}_{-i} > 0$. Denote $D(s_i) = (x_i + \xtotmi) + \alpha(x_i s_i + \x^{\top}_{-i} \mathbf{s}_{-i})$. The utility of creator $i$ can be expressed as:

    \begin{align*}
    U_i(s_i) &= T \cdot \frac{x_i}{D(s_i)} + T \cdot \frac{\alpha(x_i s_i + \x^{\top}_{-i} \mathbf{s}_{-i})}{D(s_i)} \cdot \rho \frac{x_i s_i}{x_i s_i + \x^{\top}_{-i} \mathbf{s}_{-i}} - c_i(x_i) \\
    &= T \cdot \frac{x_i}{D(s_i)} + T \cdot \frac{\alpha \rho x_i s_i}{D(s_i)} - c_i(x_i) \\
    &= T \left[ \frac{x_i(1 + \alpha\rho s_i)}{D(s_i)} \right] - c_i(x_i).
    \end{align*}

    Notice that this simplified expression is consistent with \Cref{eq:utility_universal_no_sharing_wrt_si} when $s_i = 0$. Next, we compute the derivative of $U_i(s_i)$ with respect to $s_i$:
    \begin{align*}
        \frac{\partial U_i(s_i)}{\partial s_i} &= T \frac{(x_i\alpha\rho) \cdot D(s_i) - x_i(1 + \alpha\rho s_i) \cdot (\alpha x_i)}{[D(s_i)]^2} \\
        &= T \frac{\alpha x_i}{[D(s_i)]^2} \left[ \rho D(s_i) - x_i(1 + \alpha\rho s_i) \right] \\
        &= T \frac{\alpha x_i}{[D(s_i)]^2} \left[ \rho \left((x_i + \xtotmi) + \alpha(x_i s_i + \x^{\top}_{-i} \mathbf{s}_{-i})\right) - x_i(1 + \alpha\rho s_i) \right] \\
        &= T \frac{\alpha x_i}{[D(s_i)]^2} \left[ \rho(x_i + \xtotmi + \alpha \x^{\top}_{-i} \mathbf{s}_{-i}) - x_i \right]. \\
    \end{align*}

    % Therefore, the derivative of the utility function is:
    % \begin{align*}
    % \frac{\partial U_i(s_i)}{\partial s_i} = T \cdot \frac{\alpha x_i}{[D(s_i)]^2} \left[ \rho(x_i + \xtotmi + \alpha \x^{\top}_{-i} \mathbf{s}_{-i}) - x_i \right].
    % \end{align*}

    Since all factors outside the brackets ($T$, $\alpha$, $x_i$, and $[D(s_i)]^2$) are strictly positive, the sign of this derivative depends solely on the term in brackets. Specifically, $\frac{\partial U_i(s_i)}{\partial s_i} > 0$ if and only if:
    \[
    \rho(x_i + \xtotmi + \alpha \x^{\top}_{-i} \mathbf{s}_{-i}) - x_i > 0 \implies
    \rho > \frac{x_i}{x_i + \xtotmi + \alpha \x^{\top}_{-i} \mathbf{s}_{-i}} = \tau_i.
    \]

    This analysis yields the following conclusions:
    \begin{itemize}
    \item When $\rho > \tau_i$, the derivative is strictly positive, making $U_i(s_i)$ strictly increasing in $s_i$. Consequently, the maximum utility is attained at $s_i = 1$.
    \item When $\rho < \tau_i$, the derivative is strictly negative, making $U_i(s_i)$ strictly decreasing in $s_i$. The maximum utility is therefore attained at $s_i = 0$.
    \item When $\rho = \tau_i$, the derivative vanishes, making $U_i(s_i)$ constant with respect to $s_i$. In this case, any value $s_i \in [0,1]$ maximizes the utility.
    \end{itemize}

    This completes the proof of \Cref{lem:s_choice_threshold}
\end{proof}

\LemUniqueESE*

\begin{proof}[\proofof{lem:unique-ese}]
    For enforced sharing ($s_i=1$ for all $i$), our model bears structural similarities to \citet{10.5555/3692070.3694417}. An alternative route to proving this proposition would be to use a reduction to their model, which holds only for enforced sharing. Nevertheless, we provide a direct, self-contained proof tailored to our setting. We establish existence and uniqueness by demonstrating that the enforced sharing game is monotone and the utility functions are strictly concave in each player's action.

    \paragraph{Step 1: Utility structure and gradient computation}
    \Cref{lem:utility_full_sharing} asserts that, under enforced sharing, creator $i$'s utility is:
    \[
        U_i(\x, \mathbf{1}; f_\rho) = \frac{\mu (1 + \alpha \rho)}{1 + \alpha} \cdot (\xtot)^{\gamma - 1} \cdot x_i - c_i(x_i).
    \]
    Let us denote the first term by $R_i(\x)$, and $K = \frac{\mu (1 + \alpha \rho)}{1 + \alpha} > 0$. Taking the partial derivative of $R_i(\x)$ w.r.t. $x_i$:
    \begin{align*}
        \frac{\partial R_i}{\partial x_i} &= K \cdot \frac{\partial}{\partial x_i} \left[ (\xtot)^{\gamma - 1} \cdot x_i \right] \\
        &= K \left[ (\gamma - 1) \xtot^{\gamma - 2} x_i + \xtot^{\gamma - 1} \right]  \\
        &= K \xtot^{\gamma - 2} \left[ (\gamma - 1) x_i + \xtot \right]  \\
        &= K \xtot^{\gamma - 2} \left[ \gamma x_i + \xtotmi \right].
    \end{align*}

    We denote this gradient by $A_i$, and note that $v_i \coloneqq \frac{\partial U_i}{\partial x_i} = A_i - c_i'(x_i)$.

    \paragraph{Step 2: Strategy space consideration}
    \Cref{lem:rational_quality_bound} suggests that we can restrict attention to the effective strategy space $[0, x_i^{\max}]$ for each creator $i \in [n]$, where $x_i^{\max}$ is the unique positive root of $\mu x^\gamma = c_i(x)$. This strategy space is compact and convex. Additionally, \Cref{lem:dominance-positive-quality} ensures that all rational strategies involve strictly positive quality choices, allowing us to focus on the interior of this space.

    \paragraph{Step 3: Strict concavity of utility functions}
    We establish that $U_i$ is strictly concave with respect to $x_i$ for any fixed $\x_{-i}$ by demonstrating that $R_i$ is concave in $x_i$, and then using the strict concavity of $-c_i$ and the fact that the sum of a concave and a strictly concave function is strictly concave. Taking the derivative of $A_i$ with respect to $x_i$:
    \begin{align*}
        \frac{\partial A_i}{\partial x_i} &= K \left[ (\gamma - 2) \xtot^{\gamma - 3} \left( \gamma x_i + \xtotmi \right) + \xtot^{\gamma - 2} \gamma \right] \\
        &= K \xtot^{\gamma - 3} \left[ (\gamma - 2) (\gamma x_i + \xtotmi) + \gamma \xtot \right] \\
        &= K \xtot^{\gamma - 3} \left[ (\gamma^2 - 2\gamma) x_i + (\gamma - 2) \xtotmi + \gamma (x_i + \xtotmi) \right] \\
        &= K \xtot^{\gamma - 3} \left[ (\gamma^2 - \gamma) x_i + (2\gamma - 2) \xtotmi \right] \\
        &= K (\gamma - 1) \xtot^{\gamma - 3} \left[ \gamma x_i + 2 \xtotmi \right].
    \end{align*}
    For $\gamma \in [0,1]$, the term $K (\gamma - 1) \xtot^{\gamma - 3} \left[ \gamma x_i + 2 \xtotmi \right]$ is non-positive since $\gamma - 1 \leq 0$, while all other factors are non-negative. Therefore, $\frac{\partial^2 R_i}{\partial x_i^2} = \frac{\partial A_i}{\partial x_i} \leq 0$, establishing that $R_i$ is concave in $x_i$. Additionally, since $c_i$ is strictly convex, $-c_i$ is strictly concave. Thus, $U_i = R_i - c_i$ is strictly concave in $x_i$.

    \paragraph{Step 4: Monotonicity of the game}
    To establish monotonicity, we need to show for any distinct $\x, \x' \in \mathcal{X}$:
    \[
        \sum_{i=1}^{n} (v_i(\x') - v_i(\x))(x_i' - x_i) < 0.
    \]

    Substituting our gradient expressions:
    \begin{align*}
        \sum_{i=1}^{n} (v_i(\x') - v_i(\x))(x_i' - x_i) &= \sum_{i=1}^{n} \left[ (A_i(\x') - c_i'(x_i')) - (A_i(\x) - c_i'(x_i)) \right] (x_i' - x_i) \\
        &= \sum_{i=1}^{n} (A_i(\x') - A_i(\x))(x_i' - x_i) - \sum_{i=1}^{n} (c_i'(x_i') - c_i'(x_i))(x_i' - x_i).
    \end{align*}

    For the second term, since $c_i$ is strictly convex, we have $(c_i'(x_i') - c_i'(x_i))(x_i' - x_i) > 0$. Therefore:
    \[
        -\sum_{i=1}^{n} (c_i'(x_i') - c_i'(x_i))(x_i' - x_i) < 0.
    \]

    For the first term, we need to analyze $\sum_{i=1}^{n} (A_i(\x') - A_i(\x))(x_i' - x_i)$ in detail. We start with $A_i(\x') - A_i(\x)$. By the multivariable mean value theorem, there exists a point $\hat{\x}$ on the line segment between $\x$ and $\x'$ such that:
    \[
        A_i(\x') - A_i(\x) = \sum_{j=1}^n \frac{\partial A_i}{\partial x_j}(\hat{\x}) (x'_j - x_j).
    \]

    We now compute those partial derivatives. For $j = i$, it holds from our previous analysis that:
    \[
        \frac{\partial A_i}{\partial x_i}(\x) = K \xtot^{\gamma - 3} (\gamma - 1) \left[ \gamma x_i + 2 \xtotmi \right].
    \]

    For $j \neq i$:
    \[
        \frac{\partial A_i}{\partial x_j}(\x) = K \frac{\partial}{\partial x_j} \left[ \xtot^{\gamma - 2} (\gamma x_i + \xtotmi) \right].
    \]

    Since $\frac{\partial \xtot}{\partial x_j} = 1$ and $\frac{\partial \xtotmi}{\partial x_j} = 1$ (as $j \neq i$ implies $x_j$ is part of $\xtotmi$):
    \begin{align*}
        \frac{\partial A_i}{\partial x_j}(\x) &= K \left[ (\gamma - 2) \xtot^{\gamma - 3} (\gamma x_i + \xtotmi) + \xtot^{\gamma - 2} \right] \\
        &= K \xtot^{\gamma - 3} \left[ (\gamma - 2) (\gamma x_i + \xtotmi) + \xtot \right] \\
        &= K \xtot^{\gamma - 3} \left[ (\gamma^2 - 2\gamma) x_i + (\gamma - 2) \xtotmi + (x_i + \xtotmi) \right] \\
        &= K \xtot^{\gamma - 3} \left[ (\gamma^2 - 2\gamma + 1) x_i + (\gamma - 1) \xtotmi \right] \\
        &= K (\gamma - 1) \xtot^{\gamma - 3} \left[ (\gamma - 1) x_i + \xtotmi \right].
    \end{align*}

    Now, we return to our expression for the difference:
    \begin{align*}
        A_i(\x') - A_i(\x) &= \sum_{j=1}^n \frac{\partial A_i}{\partial x_j}(\hat{\x}) (x'_j - x_j) \\
        &= \frac{\partial A_i}{\partial x_i}(\hat{\x}) (x'_i - x_i) + \sum_{j \neq i} \frac{\partial A_i}{\partial x_j}(\hat{\x}) (x'_j - x_j) \\
        &= K \norm{\hat{\x}}^{\gamma - 3} (\gamma - 1) \left[ \gamma \hat{x}_i + 2 \norm{\hat{\x}_{-i}} \right] (x'_i - x_i) \\
        &+ \sum_{j \neq i} K \norm{\hat{\x}}^{\gamma - 3} (\gamma - 1) \left[ (\gamma - 1) \hat{x}_i + \norm{\hat{\x}_{-i}} \right] (x'_j - x_j).
    \end{align*}

    Multiplying each term by $(x'_i - x_i)$ and summing over all $i$:
    \begin{align*}
        \sum_{i=1}^{n} (A_i(\x') &- A_i(\x))(x_i' - x_i) = \sum_{i=1}^{n} K \norm{\hat{\x}}^{\gamma - 3} (\gamma - 1) \left[ \gamma \hat{x}_i + 2 \norm{\hat{\x}_{-i}} \right] (x'_i - x_i)^2 \\
        &+ \sum_{i=1}^{n} \sum_{j \neq i} K \norm{\hat{\x}}^{\gamma - 3} (\gamma - 1) \left[ (\gamma - 1) \hat{x}_i + \norm{\hat{\x}_{-i}} \right] (x'_j - x_j)(x'_i - x_i).
    \end{align*}

    Factoring out $K \norm{\hat{\x}}^{\gamma - 3} (\gamma - 1)$, which is a constant with respect to the summation:
    \begin{align*}
        = K \norm{\hat{\x}}^{\gamma - 3} (\gamma - 1) &\bigl[ \sum_{i=1}^{n} \left[ \gamma \hat{x}_i + 2 \norm{\hat{\x}_{-i}} \right] (x'_i - x_i)^2 \\
        &+ \sum_{i=1}^{n} \sum_{j \neq i} \left[ (\gamma - 1) \hat{x}_i + \norm{\hat{\x}_{-i}} \right] (x'_j - x_j)(x'_i - x_i) \bigr].
    \end{align*}

    Next, denote $\mathbf{d} = \x' - \x$, and let $B_{ij} = \left((\gamma - 1) \hat{x}_i + \norm{\hat{\x}_{-i}} \right)$ for $i \neq j$. Notice that the double summation can be rewritten as $\sum_{i \neq j} B_{ij} d_i d_j$. We now simplify this expression further. The sum $B_{ij} + B_{ji}$ can be expressed as:
    \[
        B_{ij} + B_{ji} = (\gamma - 1)(\hat{x}_i + \hat{x}_j) + \norm{\hat{\x}_{-i}} + \norm{\hat{\x}_{-j}} = \gamma (\hat{x}_i + \hat{x}_j) + 2 \sum_{k \neq i,j} \hat{x}_k.
    \]
    Because the double summation counts each of the pairs $(i,j)$ and $(j,i)$, we can rewrite it as:
    \[
        \sum_{i=1}^{n} \sum_{j \neq i} B_{ij} d_i d_j = \frac{1}{2} \sum_{i=1}^{n} \sum_{j \neq i} (B_{ij} + B_{ji}) d_i d_j = \frac{1}{2} \sum_{i=1}^{n} \sum_{j \neq i} \left[ \gamma (\hat{x}_i + \hat{x}_j) + 2 \sum_{k \neq i,j} \hat{x}_k \right] d_i d_j.
    \]

    Thus, we can factor $0.5$ out of both the first expression and the double summation, leading to:
    \begin{align*}
        &\sum_{i=1}^{n} (A_i(\x') - A_i(\x))(x_i' - x_i) \\
        &\qquad= \frac{1}{2} K (\gamma - 1) \norm{\hat{\x}}^{\gamma - 3} \Bigg[
            \sum_{i=1}^{n} d_i^2 \left( 2\gamma \hat{x}_i + 4 \norm{\hat{\x}_{-i}} \right)
            + \sum_{i=1}^{n} \sum_{j \neq i} d_i d_j \left( \gamma (\hat{x}_i + \hat{x}_j) + 2 \sum_{k \neq i,j} \hat{x}_k \right)
        \Bigg].
    \end{align*}

    The expression inside the square brackets matches the structure of $\mathcal{Q}_\gamma(\hat{\x}, \mathbf{d})$ as defined in \Cref{lem:quadratic_form_helper}. By \Cref{lem:quadratic_form_helper}, we know that $\mathcal{Q}_\gamma(\hat{\x}, \mathbf{d}) \geq 0$ for any $\hat{\x} \geq 0$ and any $\mathbf{d}$. Since $K > 0$, $\gamma - 1 \leq 0$ (as $\gamma \in [0,1]$), and $\norm{\hat{\x}}^{\gamma-3} > 0$ (because $\norm{\hat{\x}} > 0$ for non-trivial strategy profiles), we have:
    \[
        K(\gamma-1)\norm{\hat{\x}}^{\gamma-3}\mathcal{Q}_\gamma(\hat{\x}, \mathbf{(x' - x)}) \leq 0.
    \]

    Therefore:
    \[
        \sum_{i=1}^{n} (A_i(\x') - A_i(\x))(x_i' - x_i) \leq 0.
    \]

    Combining this with our earlier inequality for the cost function terms:
    \[
        \sum_{i=1}^{n} (v_i(\x') - v_i(\x))(x_i' - x_i) < 0.
    \]

    This establishes that the game is monotone.

    \paragraph{Step 5: Existence and uniqueness of equilibrium}
    Since the enforced sharing game satisfies:
    \begin{itemize}
        \item Strategy spaces $[0, x_i^{\max}]$ are compact and convex.
        \item Utility functions are continuous.
        \item Utility functions are strictly concave in each player's own strategy.
        \item The game is monotone.
    \end{itemize}

    By Theorem 2 in \citet{rosen1965existence}, the subgame $G(\rho)$ admits a unique pure Nash equilibrium in the enforced sharing game, which is the unique ESE.
\end{proof}

\PropUniquePNERhoOne*

\begin{proof}[\proofof{prop:unique_pne_rho1}]
    We prove the result for the more general GenAI quality function $\qai = \alpha(\stot)^\beta$ with $\beta \in (0,1]$; the linear case $\beta=1$ used in the main text follows as a special case. The proof proceeds in two parts: We first establish the existence and uniqueness of an FSE, and then show that no other PNE exists.

    \paragraph{Existence and uniqueness of FSE}
    We begin by showing that under $\rho=1$, full sharing is the unique optimal choice for any creator with positive quality. Fix a creator $i$ with $x_i > 0$, and suppose $\xtotmi > 0$. Excluding the cost term $c_i(x_i)$, which is independent of $s_i$, creator $i$'s utility as a function of $s_i$ can be written as
    \[
    \widehat{U}_i(s_i) = \Tx \left[ \frac{x_i + \alpha x_i s_i (\stot)^{\beta-1}}{\xtot + \alpha (\stot)^\beta} \right].
    \]
    Differentiating with respect to $s_i$ and simplifying yields
    \[
    \frac{d\widehat{U}_i}{ds_i} = \Tx \cdot \frac{\alpha x_i (\stot)^{\beta-2}}{(\xtot + \alpha (\stot)^\beta)^2} \left\{ \beta x_i s_i \xtotmi + \mathbf{x}^{\top}_{-i} \mathbf{s}_{-i} \left[ (1-\beta) x_i + \xtotmi + \alpha (\stot)^\beta \right] \right\}.
    \]
    Under our assumptions ($\alpha > 0$, $x_i > 0$, $\xtotmi > 0$), the fraction is strictly positive for all $s_i \in (0,1)$. Moreover, the first term in braces is strictly positive, and the second is non-negative. Hence $\frac{d\widehat{U}_i}{ds_i} > 0$ on $(0,1)$, implying that $\widehat{U}_i$ is strictly increasing, and therefore $s_i = 1$ is the unique optimum.

    Next, we show that under $\rho = 1$ and full sharing, the utility function simplifies to a form that does not depend on $\beta$. When $\mathbf{s} = \mathbf{1}$, we have $\stot = \xtot$ and $f_{i,1}(\mathbf{x}, \mathbf{1}) = \frac{x_i}{\xtot}$. Substituting into the general utility expression:
    \[
    U_i(\mathbf{x}, \mathbf{1}; f_1) = \mu \xtot^\gamma \cdot \frac{x_i}{\xtot + \alpha \xtot^\beta} + \mu \xtot^\gamma \cdot \frac{\alpha \xtot^\beta}{\xtot + \alpha \xtot^\beta} \cdot \frac{x_i}{\xtot} - c_i(x_i).
    \]
    Combining the revenue terms over a common denominator and factoring, we obtain
    \[
    U_i(\mathbf{x}, \mathbf{1}; f_1) = \frac{\mu \xtot^\gamma x_i (\xtot + \alpha \xtot^\beta)}{(\xtot + \alpha \xtot^\beta) \xtot} - c_i(x_i) = \mu \xtot^{\gamma - 1} x_i - c_i(x_i).
    \]
    This simplified form is identical to that analyzed in the proof of \Cref{lem:unique-ese} for $\beta = 1$. Since that analysis depends only on the structure $\mu \xtot^{\gamma-1} x_i - c_i(x_i)$ and not on $\beta$, we conclude that the enforced sharing game under $\rho = 1$ admits a unique ESE, which we denote by $\mathbf{x}^\star$.

    To establish that $(\mathbf{x}^\star, \mathbf{1})$ is the unique FSE, consider any deviation by creator $i$ to $(x_i', s_i')$. By a straightforward extension of \Cref{lem:dominance-positive-quality} to general $\beta$, we have $x_j^\star > 0$ for all $j$, ensuring $\norm{\mathbf{x}^\star_{-i}} > 0$. By the analysis above, any profitable deviation must have $s_i' = 1$. But then the best response in quality is $x_i' = x_i^\star$, since $\mathbf{x}^\star$ is an ESE. Thus $(\mathbf{x}^\star, \mathbf{1})$ is an FSE, and uniqueness follows from the uniqueness of the ESE.

    \paragraph{No other PNE exists}
    Suppose $(\mathbf{x}', \mathbf{s}')$ is an arbitrary PNE. By the extension of \Cref{lem:dominance-positive-quality}, we must have $x_i' > 0$ for all $i \in [n]$. Since $n \geq 2$, this implies $\norm{\mathbf{x}'_{-i}} > 0$ for each creator $i$. By the optimality of full sharing established above, we must have $\mathbf{s}' = \mathbf{1}$. But then the quality profile $\mathbf{x}'$ must constitute an ESE, and by uniqueness, $\mathbf{x}' = \mathbf{x}^\star$. Therefore, $(\mathbf{x}^\star, \mathbf{1})$ is the unique PNE.
\end{proof}

\ThmFSEExistence*

\begin{proof}[\proofof{thm:fse-existence}]
    We establish that when $\rho$ satisfies the condition in the theorem statement, the unique ESE with full sharing constitutes the unique FSE of the game. The proof proceeds in three main steps: First, we analyze properties of the ESE; second, we demonstrate that full sharing is strictly dominant under the condition on $\rho$; and finally, we verify that the resulting profile is the unique FSE.

    \paragraph{Step 1: Properties of the ESE profile}
    Recall that the existence and uniqueness of the ESE profile $\xrho$ are guaranteed by \Cref{lem:unique-ese}. By \Cref{lem:dominance-positive-quality}, any rational strategy must involve a strictly positive quality choice, which implies $\mathcal{X}_i(\rho) > 0$ for all $i \in [n]$. This allows us to use \Cref{lem:s_choice_threshold} in the next step.

    \paragraph{Step 2: Optimality of full sharing under the specified condition}
    We examine when full sharing ($s_i = 1$) is the unique best response for creator $i$, given that all other creators $j \neq i$ play $(x_j^\star, s_j = 1)$.

    By \Cref{lem:s_choice_threshold}, for any fixed quality $x_i > 0$, full sharing is the unique optimal choice if and only if:
    \[
        \rho > \tau_i(x_i) \coloneqq \frac{x_i}{x_i + \norm{\mathcal{X}_{-i}(\rho)} + \alpha\norm{\mathcal{X}_{-i}(\rho)}} = \frac{x_i}{x_i + (1+\alpha)\norm{\mathcal{X}_{-i}(\rho)}}.
    \]

    The derivative of $\tau_i$ with respect to $x_i$ is:
    \[
        \frac{\partial \tau_i}{\partial x_i} = \frac{(1+\alpha)\norm{\mathcal{X}_{-i}(\rho)}}{(x_i + (1+\alpha)\norm{\mathcal{X}_{-i}(\rho)})^2} > 0.
    \]

    This confirms that $\tau_i(x_i)$ is strictly increasing in $x_i > 0$. Recall that \Cref{lem:rational_quality_bound} ensures that for any rational quality choice, we must have $x_i \leq x_i^{\max}$. Since $\tau_i(x_i)$ is strictly increasing in $x_i$ and any rational quality choice satisfies $x_i \leq x_i^{\max}$, the maximum threshold value for creator $i$ occurs at $x_i = x_i^{\max}$, yielding:
    \[
        \max_{x_i \in (0, x_i^{\max}]} \tau_i(x_i) = \tau_i(x_i^{\max}) = \frac{x_i^{\max}}{x_i^{\max} + (1+\alpha)\norm{\mathcal{X}_{-i}(\rho)}}.
    \]

    The theorem's condition requires:
    \[
    \rho > \max_{i \in [n]} \frac{x_i^{\max}}{x_i^{\max} + (1+\alpha)\norm{\mathcal{X}_{-i}(\rho)}}.
    \]

    This guarantees that for all creators $i \in [n]$ and all rational quality choices $x_i \in (0, x_i^{\max}]$, the inequality $\rho > \tau_i(x_i)$ holds. Consequently, full sharing is the strictly dominant strategy for each creator at any rational quality level.

    \paragraph{Step 3: Verification of FSE properties}
    Consider an arbitrary creator $i$ contemplating a deviation from $(\mathcal{X}_i(\rho), 1)$ to any alternative strategy $(x_i', s_i')$.

    By \Cref{lem:rational_quality_bound}, if $(x_i', s_i')$ is to be a beneficial deviation, we must have $x_i' \leq x_i^{\max}$. For any such rational quality choice $x_i' > 0$, Step 2 has established that $s_i = 1$ strictly dominates any other sharing level when $\rho$ satisfies the theorem's condition. Therefore, regardless of the quality choice $x_i'$, creator $i$ would maximize utility by choosing $s_i = 1$ rather than any $s_i' < 1$.

    This means that any rational deviation must effectively reduce to a quality-only deviation with full sharing, i.e., $(x_i', 1)$. However, from the definition of an ESE, $\mathcal{X}_i(\rho)$ is the unique best response to $(\x_{-i}^\star, \mathbf{1}_{-i})$ among all strategies with $s_i = 1$. Therefore, no deviation to any $(x_i', 1)$ can be profitable.

    Combining these observations, we conclude that no creator $i$ has a profitable deviation from $(\mathcal{X}_i(\rho), 1)$, making $(\xrho, \mathbf{1})$ a PNE. Moreover, uniqueness follows from the observation that any FSE must be an ESE when restricted to the subgame with $\mathbf{s} = \mathbf{1}$. Since \Cref{lem:unique-ese} establishes that the ESE is unique, there can be no other FSE. This completes the proof of \Cref{thm:fse-existence}.
\end{proof}

\PropAsymptoticStabilityInterval*

\begin{proof}[\proofof{prop:asymptotic_stability_interval}]
    Fix $w \in (0,1]$. By \Cref{thm:fse-existence}, a sufficient condition for $(\xr{\rho}, \mathbf{1})$ to be the unique FSE is:
    \[
        \rho > \max_{i \in [n]} \frac{x_i^{\max}}{x_i^{\max} + (1+\alpha) \norm{\mathcal{X}_{-i}(\rho)}}.
    \]
    Since we consider $\rho \in (w, 1]$, it suffices to show that for sufficiently large $n$, the RHS is bounded by $w$.

    First, we bound the numerator. Recall that $x_i^{\max}$ satisfies $a_i (x_i^{\max})^\theta = \mu (x_i^{\max})^\gamma$. Thus, for all $i$:
    \begin{equation} \label{eq:numerator_bound}
        x_i^{\max} = \left( \frac{\mu}{a_i} \right)^{\frac{1}{\theta - \gamma}} \le \left( \frac{\mu}{a_{\min}} \right)^{\frac{1}{\theta - \gamma}} \eqqcolon M.
    \end{equation}
    
    Next, we bound the denominator using \Cref{lem:bound_on_total_quality_in_ese_power_costs}, which establishes that $\norm{\mathcal{X}_{-i}(\rho)} = \Theta(n^{\frac{\theta-1}{\theta-\gamma}})$ for any $\rho \in [0,1]$. Specifically, there exists a constant $K > 0$ (depending on $\mu, \alpha, \theta, \gamma, a_{\min}, a_{\max}$, but not on $n$) such that for all sufficiently large $n$:
    \begin{equation} \label{eq:denominator_bound}
         \norm{\mathcal{X}_{-i}(\rho)} \ge K \cdot n^{\frac{\theta-1}{\theta-\gamma}}.
    \end{equation}
    
    Substituting \eqref{eq:numerator_bound} and \eqref{eq:denominator_bound} into the sufficient condition, we require:
    \[
        \frac{M}{M + (1+\alpha) K n^{\frac{\theta-1}{\theta-\gamma}}} \le w.
    \]
    Rearranging to isolate $n$:
    \begin{align*}
        M &\le w \left( M + (1+\alpha) K n^{\frac{\theta-1}{\theta-\gamma}} \right) \\
        M (1-w) &\le w (1+\alpha) K n^{\frac{\theta-1}{\theta-\gamma}} \\
        n^{\frac{\theta-1}{\theta-\gamma}} &\ge \frac{M(1-w)}{w(1+\alpha)K}.
    \end{align*}
    Since $\theta > 1$ and $\theta > \gamma$, the exponent $\frac{\theta-1}{\theta-\gamma}$ is positive. Thus, there exists a threshold $N(w)$ such that for all $n > N(w)$, the inequality holds. This ensures the FSE condition is met for all $\rho \in (w, 1]$.
\end{proof}

\PropMonotonicityHomogeneous*

\begin{proof}[\proofof{prop:monotonicity_homogeneous}]
    By symmetry and \Cref{lem:unique-ese}, the equilibrium is unique and symmetric, characterized by a single quality level $x(\rho)$--we have $x_i = x(\rho)$ for all $i$, which implies $\xtot = n x(\rho)$. Differentiating $U_i$ with respect to $x_i$ yields:
    \[
        \frac{\partial U_i}{\partial x_i} = \frac{\mu(1+\alpha\rho)}{1+\alpha} \xtot^{\gamma-2} \left[ (\gamma-1)x_i + \xtot \right] - c'(x_i).
    \]
    Setting the derivative to zero and substituting the power cost derivative $c'(x) = a \theta x^{\theta-1}$ yields the following FOC (which is enforcing, as \Cref{lem:dominance-positive-quality} asserts that all the qualities are positive):
    \[
        \frac{\mu (1 + \alpha \rho)}{1 + \alpha} (n x(\rho))^{\gamma-2} \left[ (\gamma-1)x(\rho) + n x(\rho) \right] = a \theta x(\rho)^{\theta - 1}.
    \]
    Simplifying the term in brackets to $x(\rho)(n+\gamma-1)$ and rearranging to isolate $x(\rho)$, we obtain:
    \[
        x(\rho)^{\theta - \gamma} = C \cdot (1 + \alpha \rho),
    \]
    where $C > 0$ is a constant independent of $\rho$. Since $\alpha > 0$ and $\theta > 1 \ge \gamma$ (implying $\theta - \gamma > 0$), the right-hand side is strictly increasing in $\rho$. Consequently, $x(\rho)$, and thus total quality $\norm{\xr{\rho}} = n x(\rho)$, must be strictly increasing in $\rho$.
\end{proof}

\newpage
\section{Omitted Proofs From \Cref{sec:platform-optimization}}\label{sec:sec4_proofs}

This section is mainly devoted to proving \Cref{thm:alg_main}, along with the other results of \Cref{sec:platform-optimization}. The road to the actual proof of \Cref{thm:alg_main} contains extensive Lipschitz continuity analysis of various auxiliary functions. The structure of this section is as follows:
\begin{enumerate}
    \item We first introduce some notations and prove a lower bound on the quality produced by each creator in any ESE in \Cref{app:ese_quality_lower_bound_and_notations}.
    \item Next, in \Cref{app:technical_lipschitz_continuity_results}, we prove several technical Lipschitz continuity results that will be used in subsequent sections.
    \item In \Cref{app:lipshitz_results_for_main_algorithm}, we prove Lipschitz continuity results that are directly used in the proof of \Cref{alg:main}, which build upon the technical results of \Cref{app:technical_lipschitz_continuity_results}.
\end{enumerate}

\subsection{ESE Quality Lower Bound and Notations}\label{app:ese_quality_lower_bound_and_notations}

\begin{remark}
    For the rest of this section, we will denote for every player $i \in [n]$:
    \[
        x_i^{\min} \coloneqq \left( c_i' \right)^{-1} \left( \frac{\mu}{1 + \alpha} \left( \sum_{j=1}^{n} x_j^{\max} \right)^{\gamma - 1} \gamma \right).
    \]
\end{remark}

\begin{lemma}\label{lem:ese_quality_lower_bound}
    Fix an arbitrary $\rho$. Then, for every creator $i \in [n]$, it holds that $\mathcal{X}_i(\rho) \ge x_i^{\min}$.
\end{lemma}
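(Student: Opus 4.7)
The plan is to start from the first-order condition (FOC) characterizing the ESE at creator $i$, lower-bound the right-hand side by a quantity involving only $\sum_j x_j^{\max}$, and then invert the monotone derivative $c_i'$.

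First, I would invoke \Cref{lem:dominance-positive-quality} to ensure $\mathcal{X}_i(\rho) > 0$, so that the ESE lies in the interior and the FOC holds with equality. Applying \Cref{lem:utility_full_sharing} and differentiating w.r.t.\ $x_i$, the FOC at $\mathbf{x} = \xr{\rho}$ becomes
\[
c_i'(\mathcal{X}_i(\rho)) \;=\; \frac{\mu(1+\alpha\rho)}{1+\alpha}\,\norm{\xr{\rho}}^{\gamma-2}\bigl[\gamma\,\mathcal{X}_i(\rho) + \norm{\mathcal{X}_{-i}(\rho)}\bigr].
\]
Since $\gamma \in [0,1]$, we have $\gamma\,\mathcal{X}_i(\rho) + \norm{\mathcal{X}_{-i}(\rho)} \ge \gamma\bigl(\mathcal{X}_i(\rho) + \norm{\mathcal{X}_{-i}(\rho)}\bigr) = \gamma\,\norm{\xr{\rho}}$, which simplifies the FOC to the lower bound
\[
c_i'(\mathcal{X}_i(\rho)) \;\ge\; \frac{\mu(1+\alpha\rho)\,\gamma}{1+\alpha}\,\norm{\xr{\rho}}^{\gamma-1}.
\]

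Second, I would upper-bound $\norm{\xr{\rho}}$. By \Cref{lem:rational_quality_bound}, each coordinate satisfies $\mathcal{X}_j(\rho) \le x_j^{\max}$, hence $\norm{\xr{\rho}} \le \sum_j x_j^{\max}$. Because $\gamma - 1 \le 0$, this translates into $\norm{\xr{\rho}}^{\gamma-1} \ge \bigl(\sum_j x_j^{\max}\bigr)^{\gamma-1}$. Combining with $1+\alpha\rho \ge 1$,
\[
c_i'(\mathcal{X}_i(\rho)) \;\ge\; \frac{\mu\,\gamma}{1+\alpha}\Bigl(\sum_{j=1}^n x_j^{\max}\Bigr)^{\gamma-1} \;=\; c_i'(x_i^{\min}),
\]
by definition of $x_i^{\min}$.

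Finally, strict convexity of $c_i$ makes $c_i'$ strictly increasing, and the boundary conditions on $c_i'$ (from $0$ at $0^+$ to $\infty$) make $c_i'$ a bijection onto $[0,\infty)$, so $x_i^{\min}$ is well-defined and the inequality can be inverted to yield $\mathcal{X}_i(\rho) \ge x_i^{\min}$. No step here looks like a real obstacle: the proof is essentially algebraic manipulation of the FOC combined with the already-established upper bound from \Cref{lem:rational_quality_bound}. The only point requiring a small amount of care is ensuring the FOC applies (handled by \Cref{lem:dominance-positive-quality}) and correctly tracking the inequality direction when $\gamma-1 \le 0$.
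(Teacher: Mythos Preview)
Your proposal is correct and follows essentially the same approach as the paper: both start from the FOC at the ESE, lower-bound the bracket $\gamma\,\mathcal{X}_i(\rho) + \norm{\mathcal{X}_{-i}(\rho)}$ by $\gamma\,\norm{\xr{\rho}}$, drop the factor $1+\alpha\rho \ge 1$, replace $\norm{\xr{\rho}}$ by $\sum_j x_j^{\max}$ via \Cref{lem:rational_quality_bound} (using $\gamma-1\le 0$), and invert $c_i'$ by strict convexity. Your version is slightly more careful in explicitly invoking \Cref{lem:dominance-positive-quality} to justify the interior FOC, which the paper leaves implicit.
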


\begin{proof}[\proofof{lem:ese_quality_lower_bound}]
    Fix any creator $i$. Recall that its FOC in the ESE is:
    \[
        \frac{\mu (1 + \alpha \rho)}{1 + \alpha} \norm{\xr{\rho}}^{\gamma - 2} \left[ (\gamma - 1) \mathcal{X}_i(\rho) + \norm{\xr{\rho}} \right] = c_i'(\mathcal{X}_i(\rho)).
    \]

    Next, we rearrange the LHS of the FOC and bound it from below:
    \begin{align*}
        c_i'(\mathcal{X}_i(\rho)) &= \frac{\mu (1 + \alpha \rho)}{1 + \alpha} \norm{\xr{\rho}}^{\gamma - 2} \left[ (\gamma - 1) \mathcal{X}_i(\rho) + \norm{\xr{\rho}} \right]\\
        &\geq \frac{\mu}{1 + \alpha} \norm{\xr{\rho}}^{\gamma - 1} \left[ 1 + (\gamma - 1) \frac{\mathcal{X}_i(\rho)}{\norm{\xr{\rho}}} \right] \\
        &\geq \frac{\mu}{1 + \alpha} \norm{\xr{\rho}}^{\gamma - 1} \gamma \\
        &\geq \frac{\mu}{1 + \alpha} \left( \sum_{j=1}^{n} x_j^{\max} \right)^{\gamma - 1} \gamma,
    \end{align*}
    Where the last inequality follows from \Cref{lem:rational_quality_bound} and the fact that $\gamma - 1 \leq 0$. Recall that $c_i'(\cdot)$ is strictly increasing since $c_i(\cdot)$ is strictly convex; thus, we have:
    \[
        \mathcal{X}_i(\rho) \geq \left( c_i' \right)^{-1} \left( \frac{\mu}{1 + \alpha} \left( \sum_{j=1}^{n} x_j^{\max} \right)^{\gamma - 1} \gamma \right) = x_i^{\min}.
    \]
\end{proof}

\begin{remark}\label{rem:B-notation}
    Throughout this section, we use the following notation:
    \begin{itemize}
        \item $X_{LB} =\sum_{i = 1}^{n} x_i^{\min}$ and $X_{UB} = \sum_{j=1}^{n} x_j^{\max}$ are lower and upper bounds on the total quality produced in any ESE, respectively.
        \item $Y_{LB} =\sum_{i = 1}^{n} x_i^{\min} - \max_{i \in [n]} x_i^{\min}$ and $Y_{UB} = \sum_{j=1}^{n} x_j^{\max} - \min_{i \in [n]} x_i^{\max}$ are lower and upper bounds on the total quality produced by all creators except any single creator in any ESE, respectively.
        \item $x^{\max} = \max_{i \in [n]} x_i^{\max}$ is an upper bound on the quality any single creator can produce.
        \item $M_{c'} = \max_{i \in [n]} \sup_{x \in [0, x^{\max}]} |c_i'(x)|$ is a uniform bound on the derivatives of all cost functions over the interval $[0, x^{\max}]$.
        \item $B(q,u,v,w,t)$ is a parametrized bound defined as below for non-negative scalars $q, u, v, w, t$:
        \[
            B(q,u,v,w,t) = q \cdot M_{c'} + \frac{\mu \cdot u \cdot n^{v}(1 + x^{\max} + Y_{UB})^{w}}{(Y_{LB})^{t}}.
        \]
        Notice that:
        \begin{enumerate}
            \item Increasing both $w$ and $t$ by the same amount increases $B$.
            \item \begin{align*}
                &\max\{B(q_1,u_1,v_1,w_1,t), B(q_2,u_2,v_2,w_2,t)\} \\
                &\leq B(\max\{q_1,q_2\}, \max\{u_1,u_2\}, \max\{v_1,v_2\}, \max\{w_1,w_2\}, t).
            \end{align*}
        \end{enumerate}
        \item We will also use the notation $\tilde B(q,u,v,w,t)$ to denote a bound of the same form as $B(q,u,v,w,t)$ but with $Y_{LB}, Y_{UB}$ replaced by $0.5Y_{LB}, 2Y_{UB}$, respectively.
    \end{itemize}
\end{remark}

\subsection{Technical Lipschitz Continuity Results}\label{app:technical_lipschitz_continuity_results}

\begin{lemma}\label{lem:lipschitz_phi_0}
    For $(y, z, \sigma) \in [Y_{LB}, Y_{UB}] \times [0, x^{\max}] \times [0,1]$, define the function
    \[
        \phi^0(y, z, \sigma) = \max_{x \in [0, x^{\max}]} \left\{ \frac{\mu x (x+y)^\gamma}{x + (1+\alpha)y} - c(x) \right\} - \frac{\mu (1 + \alpha\sigma)}{1+\alpha} z (z+y)^{\gamma-1} + c(z).
    \]
    Then for all $(y_1, z_1, \sigma_1)$, $(y_2, z_2, \sigma_2)$ in this domain,
    \begin{align*}
        &\abs{\phi^0(y_1, z_1, \sigma_1) - \phi^0(y_2, z_2, \sigma_2)} \\
        &\leq B(1, 1, 0, 1, 2 - \gamma) \abs{z_1 - z_2} + B(0, 2 + \alpha, 0, 2, 3 - \gamma) \abs{y_1 - y_2} +  B(0, 1, 0, 1, 1 - \gamma) \abs{\sigma_1 - \sigma_2}.
    \end{align*}
\end{lemma}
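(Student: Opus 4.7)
The plan is to decompose $\phi^0$ into three pieces that are each Lipschitz in a controllable way: the maximization $M(y) := \max_{x\in[0,x^{\max}]} \bigl\{ \mu x(x+y)^\gamma / [x + (1+\alpha)y] - c(x)\bigr\}$, the bilinear-type term $F(y,z,\sigma) := \tfrac{\mu(1+\alpha\sigma)}{1+\alpha}\, z(z+y)^{\gamma-1}$, and the cost $c(z)$. Then $\phi^0 = M(y) - F(y,z,\sigma) + c(z)$, and by the triangle inequality it suffices to bound the Lipschitz constants of each piece in each variable and combine.

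First I would handle the variables that appear in only one piece. For the $\sigma$-variable, only $F$ depends on $\sigma$, and $\partial_\sigma F = \tfrac{\mu\alpha}{1+\alpha} z(z+y)^{\gamma-1}$; bounding $z \le x^{\max}$, $(z+y)^{\gamma-1} \le Y_{LB}^{\gamma-1}$ (using $\gamma-1 \le 0$ and $y \ge Y_{LB}$), and $\alpha/(1+\alpha) \le 1$, gives a bound of order $\mu x^{\max}/Y_{LB}^{1-\gamma}$, which fits inside $B(0,1,0,1,1-\gamma)$. For the $z$-variable, contributions come from $F$ and from $c(z)$. The derivative of $c$ is controlled by $M_{c'}$, contributing the $q\cdot M_{c'}$ term. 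For $\partial_z F$ I would compute $\tfrac{\mu(1+\alpha\sigma)}{1+\alpha}\bigl[(z+y)^{\gamma-1} + (\gamma-1)z(z+y)^{\gamma-2}\bigr]$ and use the same kind of bounds to get a contribution of the form $\mu(x^{\max}+Y_{UB})/Y_{LB}^{2-\gamma}$, which fits inside $B(1,1,0,1,2-\gamma)$ after adding $M_{c'}$.

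The main work is the $y$-variable, where both $M$ and $F$ contribute. For $M$, I would use the standard envelope-style fact that if the integrand $f(x,y) := \mu x(x+y)^\gamma/[x+(1+\alpha)y] - c(x)$ is $L$-Lipschitz in $y$ uniformly in $x$, then $M$ is $L$-Lipschitz. Computing $\partial_y f$ via the quotient rule yields a rational expression whose numerator factors as $x(x+y)^{\gamma-1}\bigl[(\gamma-1-\alpha)x + (1+\alpha)(\gamma-1)y\bigr]$. Using $|\gamma-1-\alpha|, (1+\alpha)(1-\gamma) \le 1+\alpha$, $(x+y)^{\gamma-1} \le Y_{LB}^{\gamma-1}$, and the denominator bound $[x+(1+\alpha)y]^2 \ge (1+\alpha)^2 Y_{LB}^2$, I would obtain $|\partial_y f| \le \mu x^{\max}(x^{\max}+Y_{UB})/[(1+\alpha)Y_{LB}^{3-\gamma}]$. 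The contribution from $\partial_y F$ is straightforward and of order $\mu x^{\max}/Y_{LB}^{2-\gamma}$. Adding these, multiplying the second by $Y_{LB}/Y_{LB} \le Y_{UB}/Y_{LB}$ to get a common denominator of $Y_{LB}^{3-\gamma}$, and finally crudely bounding everything by $2\mu(1+x^{\max}+Y_{UB})^2/Y_{LB}^{3-\gamma} \le B(0,2+\alpha,0,2,3-\gamma)$, closes the $y$-bound.

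The main obstacle I anticipate is not any single estimate but the bookkeeping required to force all the bounds into the parametrized $B(q,u,v,w,t)$ template while keeping the exponents $w$ and $t$ aligned; in particular, the $y$-bound requires combining a ``clean'' $F$-derivative at exponent $2-\gamma$ with an $M$-derivative at exponent $3-\gamma$, and the standard trick of writing $Y_{LB}/Y_{LB}^{3-\gamma}$ and absorbing $Y_{LB} \le Y_{UB}$ into a $(1+x^{\max}+Y_{UB})^2$ factor is what makes the exponents match. Once these estimates are in place, the three pieces are summed via the triangle inequality to give the stated three-term Lipschitz bound.
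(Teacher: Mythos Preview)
Your proposal is correct and follows essentially the same approach as the paper: the paper decomposes $\phi^0$ as $U(y)-A(y,z,\sigma)$ (your $M(y)-F(y,z,\sigma)+c(z)$), bounds the partial derivatives of each piece exactly as you describe, and uses the same envelope/MVT argument for the max term in $y$. The only cosmetic difference is that the paper uses the looser denominator bound $(x+(1+\alpha)y)^2 \ge Y_{LB}^2$ (yielding a factor $(1+\alpha)$ in the numerator rather than your $1/(1+\alpha)$), but both routes land inside $B(0,2+\alpha,0,2,3-\gamma)$ after the same bookkeeping you outline.
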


\begin{proof}[\proofof{lem:lipschitz_phi_0}]
    Define the following auxiliary functions:
    \begin{align*}
        F(x, y) &\coloneqq \frac{\mu x (x+y)^\gamma}{x + (1+\alpha)y} - c(x) , \\
        U(y) &\coloneqq \max_{x \in [0, x^{\max}]} F(x, y), \\
        A(y, z, \sigma) &\coloneqq \frac{\mu (1 + \alpha\sigma)}{1+\alpha} z (z+y)^{\gamma-1} - c(z), \\
    \end{align*}
    so that $\phi^0(y, z, \sigma) = U(y) - A(y, z, \sigma)$. We bound the sensitivity of $\phi^0$ with respect to each variable.

    \paragraph{Lipschitz constant in $z$}
    $U(y)$ does not depend on $z$. We consider the $z$-derivative of $-A(y, z, \sigma)$:
    \begin{align*}
        \frac{\partial -A}{\partial z}(y, z, \sigma)
        &= -\frac{\mu (1 + \alpha\sigma)}{1+\alpha} \left((z+y)^{\gamma-1} + z(\gamma-1)(z+y)^{\gamma-2}\right) + c'(z) \\
        &= -\frac{\mu (1 + \alpha\sigma)}{1+\alpha} (z+y)^{\gamma-2} (y + \gamma z) + c'(z).
    \end{align*}

    Bounding the absolute value of this derivative yields:
    \[
        \abs{\frac{\partial}{\partial z}\left(-A(y, z, \sigma)\right)} \leq \mu (Y_{LB})^{\gamma-2} (Y_{UB} + x^{\max}) + M_{c'} \leq B(1, 1, 0, 1, 2 - \gamma).
    \]

    \paragraph{Lipschitz constant in $\sigma$}
    $U(y)$ does not depend on $\sigma$. The only dependence comes from $A(y, z, \sigma)$. Compute:
    \[
        \frac{\partial}{\partial \sigma} \left( -A(y, z, \sigma) \right)
        = -\frac{\partial A}{\partial \sigma}(y, z, \sigma)
        = -\frac{\mu \alpha}{1+\alpha} z (z+y)^{\gamma-1}.
    \]
    Taking absolute values and bounding:
    \[
        \abs{\frac{\partial}{\partial \sigma} \left( -A(y, z, \sigma) \right)} \leq \mu x^{\max} (Y_{LB})^{\gamma-1} \leq B(0, 1, 0, 1, 1 - \gamma).
    \]

    \paragraph{Lipschitz constant in $y$}
    Both $U(y)$ and $A(y, z, \sigma)$ depend on $y$.

    First, for $U(y)$:
    Let $y_1, y_2 \in [Y_{LB}, Y_{UB}]$.
    Let $x_1^\star$ maximize $F(x, y_1)$, $x_2^\star$ maximize $F(x, y_2)$. By optimality,
    \[
        U(y_1) - U(y_2) \leq F(x_1^\star, y_1) - F(x_1^\star, y_2).
    \]
    By the Mean Value Theorem, for some $\tilde{y}_1$ between $y_1$ and $y_2$,
    \[
        F(x_1^\star, y_1) - F(x_1^\star, y_2) = \frac{\partial F}{\partial y}(x_1^\star, \tilde{y}_1) (y_1 - y_2).
    \]
    Similarly,
    \[
        U(y_2) - U(y_1) \leq \frac{\partial F}{\partial y}(x_2^\star, \tilde{y}_2) (y_2 - y_1).
    \]
    Thus,
    \[
        \abs{U(y_1) - U(y_2)} \leq \sup_{x \in [0, x^{\max}],\ y \in [Y_{LB}, Y_{UB}]} \abs{ \frac{\partial F}{\partial y}(x, y) } \abs{y_1 - y_2}.
    \]

    Now, we bound the derivative $\frac{\partial F}{\partial y}(x, y)$. This derivative is given by:
    \[
        \frac{\partial F}{\partial y}(x, y) = \mu x \frac{(x+y)^{\gamma-1} \left[ (\gamma-1-\alpha)x + (\gamma-1)(1+\alpha)y \right]}{(x + (1+\alpha)y)^2}.
    \]

    Bounding this expression yields:
    \[
        \sup_{x, y} \abs{ \frac{\partial F}{\partial y}(x, y) } \leq \frac{\mu x^{\max}(1+\alpha)}{(Y_{LB})^{3-\gamma}} (x^{\max} + Y_{UB}).
    \]

    Next, for $-A(y, z, \sigma)$, the derivative is
    \[
        -\frac{\partial A}{\partial y} = -\frac{\mu (1+\alpha\sigma)}{1+\alpha} z (\gamma-1) (z+y)^{\gamma-2}.
    \]
    This expression's absolute value is bounded by:
    \[
        \abs{ -\frac{\partial A}{\partial y} } \leq \mu x^{\max} (Y_{LB})^{\gamma-2}.
    \]

    Therefore, the bound for the Lipschitz constant in $y$ is:
    \[
        \mu x^{\max} (Y_{LB})^{\gamma-2} + \frac{\mu x^{\max}(1+\alpha)}{(Y_{LB})^{3-\gamma}} (x^{\max} + Y_{UB}) \leq B(0, 2 + \alpha, 0, 2, 3 - \gamma).
    \]

    \paragraph{Conclusion} Combining the three Lipschitz constants, we have:
    \begin{align*}
        &\abs{\phi^0(y_1, z_1, \sigma_1) - \phi^0(y_2, z_2, \sigma_2)} \\
        &\leq B(1, 1, 0, 1, 2 - \gamma) \abs{z_1 - z_2} + B(0, 2 + \alpha, 0, 2, 3 - \gamma) \abs{y_1 - y_2} + B(0, 1, 0, 1, 1 - \gamma) \abs{\sigma_1 - \sigma_2}.
    \end{align*}
    This completes the proof of \Cref{lem:lipschitz_phi_0}.
\end{proof}

\begin{lemma}\label{lem:lipschitz_phi_1}
    Define $K(\sigma) \coloneqq \frac{\mu (1 + \alpha\sigma)}{1+\alpha}$, and define for any $(z, y, \sigma) \in [0, x^{\max}] \times [Y_{LB}, Y_{UB}] \times [0, 1]$:
    \[
        \phi^1(z, y, \sigma) = \max_{x \in [0, x^{\max}]} \left\{ K(\sigma) x (x+y)^{\gamma-1} - c(x) \right\} - \left[ K(\sigma) z (z+y)^{\gamma-1} - c(z) \right],
    \]

    Then $\phi^1$ is Lipschitz continuous on its domain. Specifically, for any $(z_1, y_1, \sigma_1)$ and $(z_2, y_2, \sigma_2)$,
    \begin{align*}
        &\abs{\phi^1(z_1, y_1, \sigma_1) - \phi^1(z_2, y_2, \sigma_2)} \\
        &\leq B(1, 1, 0, 1, 2 - \gamma) \abs{z_1 - z_2} + B(0, 2, 0, 1, 2 - \gamma) \abs{y_1 - y_2} + B(0, 2, 0, 1, 1 - \gamma) \abs{\sigma_1 - \sigma_2}.
    \end{align*}
\end{lemma}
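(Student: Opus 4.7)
The plan is to mirror the three-variable envelope argument used in the proof of \Cref{lem:lipschitz_phi_0}. I would decompose $\phi^1 = U - A$, where $U(y,\sigma) = \max_{x \in [0, x^{\max}]} F(x, y, \sigma)$ with $F(x, y, \sigma) = K(\sigma) x (x+y)^{\gamma-1} - c(x)$, and $A(z, y, \sigma) = K(\sigma) z (z+y)^{\gamma-1} - c(z)$. Crucially, $U$ is independent of $z$, so $z$-sensitivity comes entirely from $A$, while both $U$ and $A$ contribute to the $y$- and $\sigma$-sensitivities. The proof then reduces to bounding partial derivatives uniformly over the compact domain $[0, x^{\max}] \times [Y_{LB}, Y_{UB}] \times [0,1]$ and summing via the triangle inequality.

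For the $z$-coordinate, I would directly compute $\partial_z A = K(\sigma)(z+y)^{\gamma-2}(y + \gamma z) - c'(z)$ and use the worst-case estimates $K(\sigma) \le \mu$, $(z+y)^{\gamma-2} \le (Y_{LB})^{\gamma-2}$ (since $\gamma - 2 < 0$), $y + \gamma z \le Y_{UB} + x^{\max}$, and $\abs{c'(z)} \le M_{c'}$, producing a bound of the form $B(1,1,0,1,2-\gamma)$ identical in structure to the $z$-bound in \Cref{lem:lipschitz_phi_0}. For the $y$- and $\sigma$-coordinates, the $\max$ appears to be an obstacle, but I would handle it exactly as in the proof of \Cref{lem:lipschitz_phi_0}: pick optimizers $x_1^\star, x_2^\star$ for the two argument pairs and apply the envelope inequality $U(y_1,\sigma_1) - U(y_2,\sigma_2) \le F(x_1^\star, y_1, \sigma_1) - F(x_1^\star, y_2, \sigma_2)$ (and its symmetric counterpart), reducing the problem to a uniform bound on $\partial_y F$ and $\partial_\sigma F$ over $x \in [0, x^{\max}]$.

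Executing these uniform bounds: $\partial_y F = K(\sigma) x (\gamma-1)(x+y)^{\gamma-2}$ and $\partial_y A$ of the same functional form are each bounded in absolute value by $\mu x^{\max}(Y_{LB})^{\gamma-2}$, which combine to yield the coefficient $B(0,2,0,1,2-\gamma)$. Similarly, $\partial_\sigma F = \tfrac{\mu\alpha}{1+\alpha} x (x+y)^{\gamma-1}$ and $\partial_\sigma A$ are each bounded by $\mu x^{\max}(Y_{LB})^{\gamma-1}$ (using $\gamma - 1 \le 0$), producing the coefficient $B(0,2,0,1,1-\gamma)$. Adding the three one-coordinate bounds via the triangle inequality delivers the claimed inequality. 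The only genuinely delicate step is handling the $\max$ when $y$ and $\sigma$ vary, but the envelope inequality circumvents any need to track how the optimizer $x^\star$ moves with $(y,\sigma)$, so the remainder of the proof is a routine computation analogous to \Cref{lem:lipschitz_phi_0}.
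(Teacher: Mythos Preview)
Your proposal is correct and follows essentially the same route as the paper's proof: the same decomposition $\phi^1 = V_1 - H$ (your $U - A$), the same $z$-derivative bound, and the same per-coordinate estimates for $y$ and $\sigma$. The only cosmetic difference is that the paper invokes the envelope theorem (justified by strict concavity of $G$ in $x$) to differentiate $V_1$ directly, whereas you use the envelope-inequality sandwich from \Cref{lem:lipschitz_phi_0}; both yield the identical uniform bounds $\mu x^{\max}(Y_{LB})^{\gamma-2}$ and $\mu x^{\max}(Y_{LB})^{\gamma-1}$.
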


\begin{proof}[\proofof{lem:lipschitz_phi_1}]
    For clarity, define the following auxiliary functions:
    \begin{align*}
        G(x, y, \sigma) &\coloneqq K(\sigma) x (x+y)^{\gamma-1} - c(x), \\
        V_1(y, \sigma) &\coloneqq \max_{x \in [0, x^{\max}]} G(x, y, \sigma), \\
        H(z, y, \sigma) &\coloneqq K(\sigma) z (z+y)^{\gamma-1} - c(z).
    \end{align*}
    Then $\phi^1(z, y, \sigma) = V_1(y, \sigma) - H(z, y, \sigma)$.

    \paragraph{Lipschitz continuity with respect to $z$}
    $V_1(y, \sigma)$ does not depend on $z$. The function $H(z, y, \sigma)$ is differentiable in $z$:
    \[
        \frac{\partial H}{\partial z}(z, y, \sigma) = K(\sigma) (z+y)^{\gamma-2} (y + \gamma z) - c'(z).
    \]
    Its absolute value is bounded as follows:
    \[
        \abs{ \frac{\partial H}{\partial z}(z, y, \sigma) } \leq \mu (Y_{LB})^{\gamma-2} (Y_{UB} + x^{\max}) + M_{c'} \leq B(1, 1, 0, 1, 2 - \gamma).
    \]

    \paragraph{Lipschitz continuity with respect to $y$}
    Both $V_1(y, \sigma)$ and $H(z, y, \sigma)$ depend on $y$.
    \begin{itemize}
        \item For $V_1(y, \sigma)$: By the envelope theorem,
        \[
            \frac{\partial V_1}{\partial y}(y, \sigma) = K(\sigma) x^\star(y, \sigma) (\gamma-1) (x^\star(y, \sigma) + y)^{\gamma-2},
        \]
        where $x^\star(y, \sigma)$ is the unique maximizer of $G(\cdot, y, \sigma)$ (due to strict concavity of $G$). A bound on this expression is
        \[
            \abs{ \frac{\partial V_1}{\partial y}(y, \sigma) } \leq \mu x^{\max} (Y_{LB})^{\gamma-2}.
        \]
        \item For $H(z, y, \sigma)$:
        \[
            \frac{\partial H}{\partial y}(z, y, \sigma) = K(\sigma) z (\gamma-1) (z + y)^{\gamma-2}.
        \]
        With $z \leq x^{\max}$, $\abs{\gamma-1} \leq 1$, $(z + y)^{\gamma-2} \leq (Y_{LB})^{\gamma-2}$,
        \[
            \abs{ \frac{\partial H}{\partial y}(z, y, \sigma) } \leq \mu x^{\max} (Y_{LB})^{\gamma-2}.
        \]
    \end{itemize}
    Adding these, we get:
    \[
        2\mu x^{\max} (Y_{LB})^{\gamma-2} \leq B(0, 2, 0, 1, 2 - \gamma).
    \]

    \paragraph{Lipschitz continuity with respect to $\sigma$}
    Both $V_1(y, \sigma)$ and $H(z, y, \sigma)$ depend on $\sigma$. Let $K'(\sigma) = \frac{dK}{d\sigma} = \frac{\mu \alpha}{1+\alpha} \leq \mu$.
    \begin{itemize}
        \item For $V_1(y, \sigma)$: By the envelope theorem,
        \[
            \frac{\partial V_1}{\partial \sigma}(y, \sigma) = K'(\sigma) x^\star(y, \sigma) (x^\star(y, \sigma) + y)^{\gamma-1},
        \]
        where $x^\star(y, \sigma) \leq x^{\max}$, $(x^\star(y, \sigma) + y)^{\gamma-1} \leq (Y_{LB})^{\gamma-1}$,
        \[
            \abs{ \frac{\partial V_1}{\partial \sigma}(y, \sigma) } \leq \mu x^{\max} (Y_{LB})^{\gamma-1}.
        \]
        \item For $H(z, y, \sigma)$:
        \[
            \frac{\partial H}{\partial \sigma}(z, y, \sigma) = K'(\sigma) z (z + y)^{\gamma-1}.
        \]
        Similarly, $z \leq x^{\max}$, $(z + y)^{\gamma-1} \leq (Y_{LB})^{\gamma-1}$,
        \[
            \abs{ \frac{\partial H}{\partial \sigma}(z, y, \sigma) } \leq \mu x^{\max} (Y_{LB})^{\gamma-1}.
        \]
    \end{itemize}
    Summing, we get:
    \[
        2\mu x^{\max} (Y_{LB})^{\gamma-1} \leq B(0, 2, 0, 1, 1 - \gamma).
    \]

    \paragraph{Conclusion} Combining the three Lipschitz constants, we have:
    \begin{align*}
        &\abs{\phi^1(z_1, y_1, \sigma_1) - \phi^1(z_2, y_2, \sigma_2)} \\
        &\leq B(1, 1, 0, 1, 2 - \gamma) \abs{z_1 - z_2} + B(0, 2, 0, 1, 2 - \gamma) \abs{y_1 - y_2} + B(0, 2, 0, 1, 1 - \gamma) \abs{\sigma_1 - \sigma_2}. 
    \end{align*}
    This completes the proof of \Cref{lem:lipschitz_phi_1}
\end{proof}

\begin{lemma}\label{lem:liphsit_of_deviation}
    Let $\phi_k(\mathbf{z}, \sigma)$ be the maximum utility gain player $k$ can achieve by unilaterally deviating from the profile $(\mathbf{z}, \mathbf{1})$ when the platform's revenue sharing parameter is $\sigma$. For any profiles $\mathbf{x}, \mathbf{x}'$ such that $\norm{\mathbf{x} - \mathbf{x}'}_2 \leq \delta$ and $Y_{LB} \leq \norm{\mathbf{x}_{-k}}, \norm{\mathbf{x}'_{-k}} \leq Y_{UB}$, and any $\rho, \rho'$ such that $\abs{\rho - \rho'} \leq \delta$, we have:
    \[
        \abs{\phi_k(\mathbf{x}, \rho) - \phi_k(\mathbf{x}', \rho')} \leq \delta \left( B(1, 1, 0, 1, 2 - \gamma) + B(0, (2 + \alpha), 0.5, 2, 3 - \gamma) + B(0, 2, 0, 1, 1 - \gamma) \right)
    \]
\end{lemma}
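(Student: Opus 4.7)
The plan is to reduce the vector-input deviation function $\phi_k$ to the two scalar functions already analyzed in \Cref{lem:lipschitz_phi_0} and \Cref{lem:lipschitz_phi_1}, which correspond to the only two rational sharing levels the deviating player can choose. By \Cref{lem:s_choice_threshold}, any optimal unilateral deviation of player $k$ from $(\mathbf{z}, \mathbf{1})$ has $s_k \in \{0, 1\}$, since intermediate values are weakly dominated. Setting $y(\mathbf{z}) \coloneqq \norm{\mathbf{z}_{-k}}$ and $z(\mathbf{z}) \coloneqq z_k$, and using \Cref{lem:utility_full_sharing} to evaluate the baseline utility at $(\mathbf{z}, \mathbf{1})$, one obtains $\phi_k(\mathbf{z}, \sigma) = \max\{\phi^0(y(\mathbf{z}), z(\mathbf{z}), \sigma),\, \phi^1(z(\mathbf{z}), y(\mathbf{z}), \sigma)\}$. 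The elementary inequality $\abs{\max(a,b) - \max(c,d)} \leq \max(\abs{a-c}, \abs{b-d})$ then reduces bounding $\abs{\phi_k(\mathbf{x}, \rho) - \phi_k(\mathbf{x}', \rho')}$ to bounding each of the two scalar differences separately and retaining, coordinate-by-coordinate, the larger coefficient.

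Next, I translate the $\ell_2$ hypothesis on the vectors into scalar bounds: $\abs{z_k - z'_k} \leq \norm{\mathbf{x} - \mathbf{x}'}_2 \leq \delta$ and $\abs{\rho - \rho'} \leq \delta$ are immediate, while Cauchy--Schwarz gives $\abs{y(\mathbf{x}) - y(\mathbf{x}')} \leq \norm{\mathbf{x}_{-k} - \mathbf{x}'_{-k}}_1 \leq \sqrt{n-1}\, \norm{\mathbf{x} - \mathbf{x}'}_2 \leq \sqrt{n}\, \delta$. Plugging these into the bounds of \Cref{lem:lipschitz_phi_0} and \Cref{lem:lipschitz_phi_1}, and using the monotonicity rules in \Cref{rem:B-notation} to collapse each coordinate-wise maximum into a single $B$-expression, the $z$-coefficient is the common value $B(1, 1, 0, 1, 2 - \gamma)$; the $\sigma$-coefficient is $\max\{B(0, 1, 0, 1, 1 - \gamma),\, B(0, 2, 0, 1, 1 - \gamma)\} = B(0, 2, 0, 1, 1 - \gamma)$; and the $y$-coefficient is $\max\{B(0, 2+\alpha, 0, 2, 3 - \gamma),\, B(0, 2, 0, 1, 2 - \gamma)\} = B(0, 2+\alpha, 0, 2, 3 - \gamma)$, invoking the rule that simultaneously raising $w$ and $t$ by the same amount does not decrease $B$. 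The extra $\sqrt{n}$ factor multiplying the $y$-term is absorbed by raising the $v$-slot of $B$ from $0$ to $0.5$, yielding a contribution of $\delta \cdot B(0, 2+\alpha, 0.5, 2, 3 - \gamma)$. Summing the three $\delta$-terms reproduces the stated bound.

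The substantive work is already carried out in \Cref{lem:lipschitz_phi_0} and \Cref{lem:lipschitz_phi_1}; what remains is essentially a careful accounting exercise. The main subtlety is the norm conversion: the $\sqrt{n}$ factor from Cauchy--Schwarz must package cleanly into the $v$-parameter of $B$, and the hypothesis $Y_{LB} \leq \norm{\mathbf{x}_{-k}}, \norm{\mathbf{x}'_{-k}} \leq Y_{UB}$ must be enough to invoke the scalar Lipschitz estimates (which it is, since those lemmas produce bounds valid throughout the box $[Y_{LB}, Y_{UB}] \times [0, x^{\max}] \times [0, 1]$). A minor point of care is verifying that the $y$-coefficient collapse invokes the precise monotonicity clause in \Cref{rem:B-notation} permitting $w$ and $t$ to be raised together; otherwise, the argument is a mechanical combination of the two scalar Lipschitz bounds with the triangle-like inequality for $\max$.
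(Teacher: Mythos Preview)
Your proposal is correct and follows essentially the same approach as the paper: both reduce $\phi_k$ to $\max\{\phi^0,\phi^1\}$ via \Cref{lem:s_choice_threshold}, take coordinate-wise maxima of the Lipschitz constants from \Cref{lem:lipschitz_phi_0} and \Cref{lem:lipschitz_phi_1}, convert $\norm{\mathbf{x}-\mathbf{x}'}_2\le\delta$ to $\abs{z_k-z_k'}\le\delta$ and $\abs{y_k-y_k'}\le\sqrt{n}\,\delta$, and absorb the $\sqrt{n}$ into the $v$-slot of $B$. The only cosmetic difference is that the paper phrases the max-combination step as ``the Lipschitz constant of a maximum is at most the maximum of the Lipschitz constants,'' while you invoke the equivalent elementary inequality $\abs{\max(a,b)-\max(c,d)}\le\max(\abs{a-c},\abs{b-d})$.
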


\begin{proof}[\proofof{lem:liphsit_of_deviation}]
    The deviation gain function $\phi_k$ quantifies how much a player $k$ can benefit by deviating from the strategy profile $(\mathbf{z}, \mathbf{1})$. To analyze this function, we introduce the notation $z_k$ for player $k$'s quality and $y_k = \norm{\mathbf{z}_{-k}}$ for the sum of all other players' qualities. This allows us to express $\phi_k$ as a function of these scalar arguments: $\phi_k(\mathbf{z}, \sigma) = \phi_k(z_k, y_k, \sigma)$.

    \paragraph{Step 1: Lipschitz continuity of deviation gain function $\phi_k$}

    According to \Cref{lem:s_choice_threshold}, the optimal deviation involves either full sharing or no sharing. Thus, we can express the gain function as:
    \[
        \phi_k(z_k, y_k, \sigma) = \max\left(\phi_k^{(s'=0)}(z_k, y_k, \sigma), \phi_k^{(s'=1)}(z_k, y_k, \sigma)\right),
    \]
    where:
    \begin{align*}
        \phi_k^{(s'=1)}(z_k, y_k, \sigma) &= \left[\max_{x_k' \in [0, x^{\max}]} U_k((x_k', \mathbf{z}_{-k}),\mathbf{1};f_\sigma)\right] - U_k((z_k, \mathbf{z}_{-k}),\mathbf{1};f_\sigma), \\
        \phi_k^{(s'=0)}(z_k, y_k, \sigma) &= \left[\max_{x_k' \in [0, x^{\max}]} U_k((x_k', \mathbf{z}_{-k}),(0, \mathbf{1}_{-k});f_\sigma)\right] - U_k((z_k, \mathbf{z}_{-k}),\mathbf{1};f_\sigma).
    \end{align*}

    Recall that the Lipschitz constant of a maximum of two functions is at most the maximum of the Lipschitz constants of the two functions. Therefore, we can bound the Lipschitz constant of $\phi_k$ by taking the maximum of the Lipschitz constants of $\phi_k^{(s'=0)}$ and $\phi_k^{(s'=1)}$:
    \begin{itemize}
        \item For the $z$ coordinate:
        \[
            \max \{B(1, 1, 0, 1, 2 - \gamma), B(1, 1, 0, 1, 2 - \gamma)\} = B(1, 1, 0, 1, 2 - \gamma).
        \]
        \item For the $y$ coordinate:
        \[
            \max \{B(0, 2 + \alpha, 0, 2, 3 - \gamma), B(0, 2, 0, 1, 2 - \gamma)\} = B(0, 2 + \alpha, 0, 2, 3 - \gamma).
        \]
        \item For the $\sigma$ coordinate:
        \[
            \max \{B(0, 1, 0, 1, 1 - \gamma), B(0, 2, 0, 1, 1 - \gamma)\} = B(0, 2, 0, 1, 1 - \gamma).
        \]
    \end{itemize}
    Writing it explicitly, we have:
    \begin{align*}
        &\abs{\phi_k(z_{k,1}, y_{k,1}, \sigma_1) - \phi_k(z_{k,2}, y_{k,2}, \sigma_2)} \\
        &\leq B(1, 1, 0, 1, 2 - \gamma) \abs{z_{k,1} - z_{k,2}} + B(0, 2 + \alpha, 0, 2, 3 - \gamma) \abs{y_{k,1} - y_{k,2}} + B(0, 2, 0, 1, 1 - \gamma) \abs{\sigma_1 - \sigma_2}.
    \end{align*}

    \paragraph{Step 2: Bounding the difference $\abs{\phi_k(\mathbf{x}, \rho) - \phi_k(\mathbf{x}', \rho')}$}

    For profiles $\mathbf{x}$ and $\mathbf{x}'$ with $\norm{\mathbf{x} - \mathbf{x}'}_2 \leq \delta$, we have $\abs{x_k - x'_k} \leq \norm{\mathbf{x} - \mathbf{x}'}_2 \leq \delta$ for any component $k$. For the sum of qualities of other players, we have:
    \[
        \abs{y_k - y'_k} = \abs{\norm{\mathbf{x}_{-k}} - \norm{\mathbf{x}'_{-k}}} \leq \norm{\mathbf{x}_{-k} - \mathbf{x}'_{-k}} \leq \sqrt{n-1}\norm{\mathbf{x}_{-k} - \mathbf{x}'_{-k}}_2 \leq \sqrt{n}\delta.
    \]

    Using the Lipschitz constant derived above and the mean value theorem, we can bound the difference:
    \[
        \abs{\phi_k(\mathbf{x}, \rho) - \phi_k(\mathbf{x}', \rho')} \leq B(1, 1, 0, 1, 2 - \gamma) \delta + B(0, 2 + \alpha, 0, 2, 3 - \gamma) \sqrt{n}\delta + B(0, 2, 0, 1, 1 - \gamma) \delta.
    \]
    Notice we can write the expression $B(0, 2 + \alpha, 0, 2, 3 - \gamma) \sqrt{n}$ as $B(0, (2 + \alpha), 0.5, 2, 3 - \gamma)$, as the third parameter affects the exponent of $n$ in the definition of $B$. Thus, we can combine the terms to get:
    \[
        \abs{\phi_k(\mathbf{x}, \rho) - \phi_k(\mathbf{x}', \rho')} \leq \delta \cdot \left( B(1, 1, 0, 1, 2 - \gamma) + B(0, (2 + \alpha), 0.5, 2, 3 - \gamma) + B(0, 2, 0, 1, 1 - \gamma) \right).
    \]
\end{proof}

\begin{lemma}\label{lem:bounded_x_rho_difference_implies_bounded_revenue_difference}
    Fix some $\delta > 0$. Let two quality profiles $\mathbf{x}$ and $\mathbf{x}'$ such that $\norm{\mathbf{x} - \mathbf{x}'}_2 \leq \delta$, and two revenue sharing parameters $\rho, \rho'$ such that $\abs{\rho - \rho'} \leq \delta$. Then, the absolute difference in the platform's revenue is bounded by:
    \[
        \abs{U_P(\mathbf{x}, \mathbf{1}; f_{\rho}) - U_P(\mathbf{x}', \mathbf{1}; f_{\rho'})} \leq B(0, 1+\gamma, 0.5, 1, 1 - \gamma) \cdot \delta.
    \]
\end{lemma}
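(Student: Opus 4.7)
The plan is to exploit the dramatic simplification that $U_P$ undergoes under full sharing, then decouple the $\x$- and $\rho$-perturbations via triangle inequality, and finally match the resulting bound to the compact $B$ template. First, substituting $\qai(\x,\mathbf{1})=\alpha\norm{\x}$ and $\Tx=\mu\norm{\x}^\gamma$ collapses the ratio $\qai/(\norm{\x}+\qai)$ to the constant $\alpha/(1+\alpha)$, so
$$U_P(\x,\mathbf{1};f_\rho)=\frac{\mu\alpha(1-\rho)}{1+\alpha}\norm{\x}^\gamma.$$
Thus $U_P$ depends on $\x$ only through the scalar $\norm{\x}$, and the only source of nonsmoothness is that the derivative of $t\mapsto t^\gamma$ blows up near $t=0$. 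As in \Cref{lem:liphsit_of_deviation}, we inherit the regularity condition $\norm{\x},\norm{\x'}\in[Y_{LB},1+x^{\max}+Y_{UB}]$ from the calling context, which controls this singularity.

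Next, apply the triangle inequality to split the quantity of interest into a pure-$\x$ difference (at fixed $\rho$) and a pure-$\rho$ difference (at fixed $\x'$). For the pure-$\x$ term, the mean value theorem on $t\mapsto t^\gamma$ gives a derivative at most $\gamma Y_{LB}^{\gamma-1}$, and converting $\ell_2$ to $\ell_1$ via Cauchy--Schwarz yields $|\norm{\x}-\norm{\x'}|\leq\sqrt{n}\delta$. Together with the trivial bound $\alpha(1-\rho)/(1+\alpha)\leq 1$, this contributes a bound of order $\mu\gamma\sqrt{n}\,Y_{LB}^{\gamma-1}\delta$. For the pure-$\rho$ term, the derivative in $\rho$ equals $-\mu\alpha\norm{\x'}^\gamma/(1+\alpha)$ in absolute value, so this contribution is at most $\mu(1+x^{\max}+Y_{UB})^\gamma\delta$.

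It remains to absorb both contributions into $B(0,1+\gamma,0.5,1,1-\gamma)\delta=\mu(1+\gamma)\sqrt{n}(1+x^{\max}+Y_{UB})\,Y_{LB}^{\gamma-1}\delta$. Using $Y_{LB}\leq 1+x^{\max}+Y_{UB}$ (so $Y_{LB}^{1-\gamma}\leq(1+x^{\max}+Y_{UB})^{1-\gamma}$) and $\sqrt{n}\geq1$ makes the target dominate both summands, naturally splitting the coefficient $1+\gamma$ as $\gamma+1$ across the two pieces. Nothing here is conceptually deep; the only bookkeeping subtlety is matching the $\sqrt{n}$ from the $\ell_2$-to-$\ell_1$ conversion to the exponent $0.5$ on $n$ in the $B$ template, and recognizing that the potentially singular factor $Y_{LB}^{\gamma-1}$ from the $\x$-variation can be absorbed against the larger $(1+x^{\max}+Y_{UB})$ factor from the $\rho$-variation.
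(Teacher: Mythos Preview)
Your proposal is correct and follows essentially the same route as the paper: simplify $U_P(\x,\mathbf{1};f_\rho)=\tfrac{\mu\alpha}{1+\alpha}(1-\rho)\norm{\x}^\gamma$, split via the triangle inequality into a pure-$\x$ term and a pure-$\rho$ term, apply the mean value theorem to $t\mapsto t^\gamma$ together with the $\ell_2\to\ell_1$ factor $\sqrt{n}$ for the first, bound $\norm{\x'}^\gamma$ directly for the second, and absorb both pieces into the $B$ template. The only cosmetic difference is that the paper first writes the intermediate bounds in terms of $X_{LB},X_{UB}$ before passing to the $Y$-based $B$ notation, whereas you work with $Y_{LB},Y_{UB}$ throughout; since $Y_{LB}\le X_{LB}$ and $X_{UB}\le 1+x^{\max}+Y_{UB}$, your slightly coarser constants still fit under the same final $B(0,1+\gamma,0.5,1,1-\gamma)$.
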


\begin{proof}[\proofof{lem:bounded_x_rho_difference_implies_bounded_revenue_difference}]
    We apply the triangle inequality to decompose the difference in utilities into two components:
    \begin{align*}
        \abs{U_P(\x, \mathbf{1}; f_{\rho}) - U_P(\x', \mathbf{1}; f_{\rho'})}
        &\leq \abs{U_P(\x, \mathbf{1}; f_{\rho}) - U_P(\x', \mathbf{1}; f_{\rho})}\\
        &+ \abs{U_P(\x', \mathbf{1}; f_{\rho}) - U_P(\x', \mathbf{1}; f_{\rho'})}.
    \end{align*}

    Letting $C = \frac{\mu\alpha}{1+\alpha}$, we can express this as:
    \begin{align*}
        \abs{U_P(\x, \mathbf{1}; f_{\rho}) - U_P(\x', \mathbf{1}; f_{\rho'})}
        &= \abs{C(1-\rho)\norm{\x}^\gamma - C(1-\rho)\norm{\x'}^\gamma}\\
        &+ \abs{C(1-\rho)\norm{\x'}^\gamma - C(1-\rho')\norm{\x'}^\gamma}\\
        &= C(1-\rho) \cdot \abs{\norm{\x}^\gamma - \norm{\x'}^\gamma}
        + C\norm{\x'}^\gamma \cdot \abs{\rho' - \rho}.
    \end{align*}

    To bound the difference of powers $\abs{\norm{\x}^\gamma - \norm{\x'}^\gamma}$, we use the Mean Value Theorem. Let $g(v) = v^\gamma$ for $v > 0$, where $\gamma \in [0,1]$. By the Mean Value Theorem, there exists a value $\xi$ between $\norm{\x}$ and $\norm{\x'}$ such that:
    \[
        \abs{\norm{\x}^\gamma - \norm{\x'}^\gamma} = \abs{g(\norm{\x}) - g(\norm{\x'})} = \abs{g'(\xi)} \cdot \abs{\norm{\x} - \norm{\x'}}.
    \]

    The derivative is $g'(v) = \gamma v^{\gamma-1}$. Since $\xi \in [X_{LB}, X_{UB}]$ and $\gamma-1 \leq 0$ for $\gamma \in [0,1]$, the function $v^{\gamma-1}$ is non-increasing, so:
    \[
        \abs{g'(\xi)} = \gamma \xi^{\gamma-1} \leq \gamma (X_{LB})^{\gamma-1}.
    \]

    Therefore,
    \[
        \abs{\norm{\x}^\gamma - \norm{\x'}^\gamma} \leq \gamma (X_{LB})^{\gamma-1} \cdot \abs{\norm{\x} - \norm{\x'}}.
    \]

    By the reverse triangle inequality and the relation between $\ell_1$ and $\ell_2$ norms in $\mathbb{R}^n$, we have:
    \[
        \abs{\norm{\x} - \norm{\x'}} \leq \norm{\x - \x'} \leq \sqrt{n} \cdot \norm{\x - \x'}_2 \leq \sqrt{n} \cdot \delta.
    \]

    Combining, the first term is bounded by:
    \[
        T_1 = C\abs{1-\rho} \cdot \abs{\norm{\x}^\gamma - \norm{\x'}^\gamma} \leq C \cdot \gamma (X_{LB})^{\gamma-1} \cdot \sqrt{n} \cdot \delta.
    \]

    For the second term, since $\norm{\x'} \leq X_{UB}$ and $\abs{\rho' - \rho} \leq \delta$, we have:
    \[
        T_2 = C\norm{\x'}^\gamma \cdot \abs{\rho' - \rho} \leq C \cdot (X_{UB})^\gamma \cdot \delta.
    \]

    Combining both bounds and factoring out $\delta$, we obtain:
    \begin{align*}
        \abs{U_P(\x, \mathbf{1}; f_{\rho}) - U_P(\x', \mathbf{1}; f_{\rho'})}
        &\leq T_1 + T_2 \\
        &\leq C \cdot \gamma (X_{LB})^{\gamma-1} \cdot \sqrt{n} \cdot \delta + C \cdot (X_{UB})^\gamma \cdot \delta\\
        &= C \cdot \left( \gamma (X_{LB})^{\gamma-1} \cdot \sqrt{n} + (X_{UB})^\gamma \right) \cdot \delta.
    \end{align*}

    Since $\frac{\mu\alpha}{1+\alpha} \leq \mu$ for any $\alpha > 0$, we have:
    \begin{align*}
        \abs{U_P(\x, \mathbf{1}; f_{\rho}) - U_P(\x', \mathbf{1}; f_{\rho'})}
        &\leq \mu \cdot \left( \gamma (X_{LB})^{\gamma-1} \sqrt{n} + (X_{UB})^\gamma \right) \cdot \delta\\
        &\leq \left( B(0, \gamma, 0.5, 0, 1 - \gamma) + B(0, 1, 0, \gamma, 0) \right) \cdot \delta\\
        &\leq B(0, 1+\gamma, 0.5, 1, 1 - \gamma) \cdot \delta.
    \end{align*}
    This completes the proof of \Cref{lem:bounded_x_rho_difference_implies_bounded_revenue_difference}.
\end{proof}

\subsection{Lipschitz Continuity Results Used in the Proof of \Cref{alg:main}}\label{app:lipshitz_results_for_main_algorithm}

\begin{lemma}\label{lem:transfer_of_approx_fse}
    Let $\rho, \rho' \in [0,1]$ such that $\abs{\rho - \rho'} \le \delta$, and $\mathbf{x}, \mathbf{x}'$ such that $\norm{\mathbf{x} - \mathbf{x}'}_2 \le \delta$ and one of those profiles is an ESE in either $G(\rho)$ or $G(\rho')$. Assuming $0.5 Y_{LB} > \sqrt{n}\delta$, if $\x$ is an $\eta$-FSE in $G(\rho)$, then $\x'$ is a $(\eta + B(1, (5 + \alpha) \cdot 2^{6 - \gamma}, 0.5, 3, 3 - \gamma) \delta)$-FSE in $G(\rho')$.
\end{lemma}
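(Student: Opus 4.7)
The plan is to express the approximate-FSE property through the per-creator deviation gain function $\phi_k$ introduced in the proof of \Cref{lem:liphsit_of_deviation}: $\mathbf{x}$ is an $\eta$-FSE in $G(\rho)$ exactly when $\max_{k\in[n]} \phi_k(\mathbf{x}, \rho) \le \eta$. Thus it suffices to show that for every creator $k$,
\[
    \phi_k(\mathbf{x}', \rho') \le \phi_k(\mathbf{x}, \rho) + B(1, (5+\alpha)\cdot 2^{6-\gamma}, 0.5, 3, 3-\gamma)\cdot\delta,
\]
and then take the maximum over $k$.

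Before applying the Lipschitz bound, I would verify that both profiles lie in a controlled domain. Without loss of generality suppose the ESE profile is one of $\mathbf{x}, \mathbf{x}'$; by \Cref{lem:ese_quality_lower_bound} and \Cref{lem:rational_quality_bound} its ``minus-$k$'' norm lies in $[Y_{LB}, Y_{UB}]$. For the other profile, $|\norm{\mathbf{x}_{-k}} - \norm{\mathbf{x}'_{-k}}| \le \sqrt{n-1}\,\norm{\mathbf{x}-\mathbf{x}'}_2 \le \sqrt{n}\,\delta$, and since $0.5 Y_{LB} > \sqrt{n}\delta$ (and consequently $\sqrt{n}\delta < Y_{UB}$), the norm is guaranteed to lie in the enlarged interval $[0.5 Y_{LB}, 2 Y_{UB}]$.

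Next, I would revisit the Lipschitz argument of \Cref{lem:liphsit_of_deviation} but on this enlarged domain, which amounts to substituting $Y_{LB} \mapsto 0.5 Y_{LB}$ and $Y_{UB} \mapsto 2 Y_{UB}$ everywhere in the bounds—i.e., replacing each $B(\cdot)$ constant with the corresponding $\tilde B(\cdot)$ constant from \Cref{rem:B-notation}. Because $(1 + x^{\max} + 2 Y_{UB}) \le 2(1 + x^{\max} + Y_{UB})$ and $(0.5 Y_{LB})^{-t} = 2^t Y_{LB}^{-t}$, one obtains the conversion $\tilde B(q,u,v,w,t) \le B(q, u\cdot 2^{w+t}, v, w, t)$. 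Applying this conversion to the three terms of \Cref{lem:liphsit_of_deviation} yields a Lipschitz bound whose three summands are $B(1, 2^{3-\gamma}, 0, 1, 2-\gamma)$, $B(0, (2+\alpha)\cdot 2^{5-\gamma}, 0.5, 2, 3-\gamma)$, and $B(0, 2^{3-\gamma}, 0, 1, 1-\gamma)$, multiplied by $\delta$. Combining with $|\phi_k(\mathbf x,\rho)-\phi_k(\mathbf x',\rho')|\le (\text{sum})\cdot\delta$ and using $\phi_k(\mathbf{x}, \rho) \le \eta$, the bound on $\phi_k(\mathbf{x}', \rho')$ follows.

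Finally, I would consolidate these three $B$-terms into the single target bound $B(1, (5+\alpha)\cdot 2^{6-\gamma}, 0.5, 3, 3-\gamma)$. To unify the denominators to $Y_{LB}^{3-\gamma}$, multiply numerator and denominator of the first and third terms by the missing power of $Y_{LB}$ and absorb it into the numerator via the trivial inequality $Y_{LB} \le 1 + x^{\max} + Y_{UB}$, which raises $w$ to $3$. Upgrading $n^0$ to $n^{0.5}$ is free since $\sqrt n \ge 1$. A short arithmetic check shows that the resulting coefficient $2^{3-\gamma}(10 + 4\alpha)$ is dominated by $2^{6-\gamma}(5+\alpha)$ for all $\alpha \ge 0$. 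The main obstacle is this change-of-domain step: only one of the two profiles is guaranteed to lie in the ESE range, so the Lipschitz bound from \Cref{lem:liphsit_of_deviation} is not directly applicable; the hypothesis $0.5 Y_{LB} > \sqrt{n}\delta$ precisely buys the controlled enlargement needed to pass from $B$ to $\tilde B$ to $B$ again with only constant multiplicative overhead.
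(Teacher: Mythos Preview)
Your proposal is correct and follows essentially the same approach as the paper: both express the $\eta$-FSE property through the per-creator deviation gain $\phi_k$, use the hypothesis $0.5Y_{LB} > \sqrt{n}\delta$ to place both profiles in the enlarged range $[0.5Y_{LB}, 2Y_{UB}]$, apply \Cref{lem:liphsit_of_deviation} on that enlarged domain to get $\tilde B$ constants, and then convert back to $B$ constants. The only cosmetic difference is the order of operations---the paper first consolidates the three $\tilde B$ terms into a single $\tilde B(1,5+\alpha,0.5,3,3-\gamma)$ and then converts once via $\tilde B \le B(\cdot, u\cdot 2^{w+t}, \cdot)$, whereas you convert each term to $B$ first and then sum; both routes land on the same final constant.
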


\begin{proof}[\proofof{lem:transfer_of_approx_fse}]
    Define $\bar \x \in \{\x, \x'\}$ to be the ESE profile among the two. Let $\phi_k(\mathbf{z}, \sigma)$ denote the maximum utility gain player $k$ can achieve by unilaterally deviating from the profile $(\mathbf{z}, \mathbf{1})$ when the platform's revenue sharing parameter is $\sigma$. To apply \Cref{lem:liphsit_of_deviation}, we need to establish bounds on other players' contributions in the profiles $\x$ and $\x'$.

    From \Cref{lem:ese_quality_lower_bound}, we know that $\norm{\bar \x_{-k}} \geq \sum_{i} x_i^{\min} - \max_{k} x_k^{\min} \eqqcolon Y_{LB}$ for any player $k$ and any $\rho \in [0,1]$. Additionally, we know from \Cref{lem:rational_quality_bound} that $\norm{\bar \x_{-k}} \leq \sum_{j=1}^{n} x_j^{\max} - \min_{k} x_k^{\max} \eqqcolon Y_{UB}$.

    We now derive lower bounds for $\norm{\x_{-k}}, \norm{\x'_{-k}}$ as well via the assumptions that $\norm{\x - \bar \x}_2, \norm{\x' - \bar \x}_2 \leq \delta$ and $0.5 Y_{LB} > \sqrt{n}\delta$. Specifically, for any player $k$:
    \[
        \abs{\norm{\bar \x_{-k}} - \norm{\x_{-k}}} \leq \norm{\bar \x_{-k} - \x_{-k}} \leq \sqrt{n-1}\norm{\bar \x_{-k} - \x_{-k}}_2 \leq \sqrt{n}\delta,
    \]
    and the same holds for $\x'$. Since $\norm{\bar \x_{-k}} \geq Y_{LB}$ and $0.5 Y_{LB} \geq \sqrt{n}\delta$, we have $\norm{\x_{-k}} \geq Y_{LB} - \sqrt{n}\delta > 0.5 Y_{LB} \eqqcolon \tilde Y_{LB} > 0$. Additionally, we have $\norm{\x_{-k}} \leq Y_{UB} + \sqrt{n}\delta < 2 Y_{UB} \eqqcolon \tilde Y_{UB}$. These bounds hold for $\x'$ as well. Notice that the validity of these bounds does not depend on which of the two profiles is the ESE.

    Now, we analyze the deviation gain of any player from the profile $\x'$. By the assumption that $\x$ is $\eta$-FSE in $G(\rho)$, we know that $\phi_k(\x, \rho) \leq \eta$ for all players $k$. Applying the triangle inequality, we have:
    \[
        \phi_k(\x', \rho') \leq \abs{\phi_k(\x, \rho) - \phi_k(\x', \rho')} + \phi_k(\x, \rho).
    \]
    Now, applying \Cref{lem:liphsit_of_deviation} with $\tilde Y_{LB}, \tilde Y_{UB}$ as the bounds for other players' contributions, we get:
    \[
        \abs{\phi_k(\x, \rho) - \phi_k(\x', \rho')} \leq \delta \cdot \left(\tilde B(1, 1, 0, 1, 2 - \gamma) + \tilde B(0, (2 + \alpha), 0.5, 2, 3 - \gamma) + \tilde B(0, 2, 0, 1, 1 - \gamma)\right),
    \]
    where $\tilde B$ is defined similarly to $B$ but with $\tilde Y_{LB}, \tilde Y_{UB}$ instead of $Y_{LB}, Y_{UB}$. Thus, we have:
    \[
        \phi_k(\x', \rho') \leq \delta \cdot \left(\tilde B(1, 1, 0, 1, 2 - \gamma) + \tilde B(0, (2 + \alpha), 0.5, 2, 3 - \gamma) + \tilde B(0, 2, 0, 1, 1 - \gamma)\right) + \eta.
    \]
    We now want to simplify the bracketed term. First, notice that
    \begin{align*}
        &\tilde B(1, 1, 0, 1, 2 - \gamma) + \tilde B(0, (2 + \alpha), 0.5, 2, 3 - \gamma) + \tilde B(0, 2, 0, 1, 1 - \gamma) \\
        &\leq \tilde B(1, 1, 0, 2, 3- \gamma) + \tilde B(0, (2 + \alpha), 0.5, 2, 3 - \gamma) + \tilde B(0, 2, 0, 3, 3-\gamma) \\
        &\leq \tilde B(1, (1+2 + \alpha + 2), 0.5, 3, 3 - \gamma) \\
        &= \tilde B(1, (5 + \alpha), 0.5, 3, 3 - \gamma) \\
        &\leq B(1, (5 + \alpha) \cdot 2^{3} \cdot 2^{3-\gamma}, 0.5, 3, 3 - \gamma) \\
        &= B(1, (5 + \alpha) \cdot 2^{6 - \gamma}, 0.5, 3, 3 - \gamma),
    \end{align*}
    where the second to last inequality follows from the definition of $B$ and the fact that $\tilde Y_{LB} = 0.5 Y_{LB}$ and $\tilde Y_{UB} = 2 Y_{UB}$. After this simplification, we get that:
    \[
        \phi_k(\x', \rho') \leq B(1, (5 + \alpha) \cdot 2^{6 - \gamma}, 0.5, 3, 3 - \gamma) \cdot \delta + \eta,
    \]
    which means that $\x'$ is a $(\eta + B(1, (5 + \alpha) \cdot 2^{6 - \gamma}, 0.5, 3, 3 - \gamma)\delta)$-FSE in $G(\rho')$, completing the proof of \Cref{lem:transfer_of_approx_fse}.
\end{proof}

\begin{lemma}\label{lem:bounded_revenue_difference}
    Let $\rho, \rho' \in [0,1]$ such that $\abs{\rho - \rho'} \le \delta$, and $\x, \x'$ such that $\norm{\x - \x'}_2 \le \delta$ and one of those profiles is an ESE in either $G(\rho)$ or $G(\rho')$. Assuming $0.5X_{LB} \geq \sqrt{n}\delta$, it holds that
    \[
        \abs{U_P(\x, \mathbf{1}; f_{\rho}) - U_P(\x', \mathbf{1}; f_{\rho'})} \leq B(0, 4(1+\gamma), 0.5, 1, 1 - \gamma) \delta.
    \]
\end{lemma}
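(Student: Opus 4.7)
The plan is to reduce to \Cref{lem:bounded_x_rho_difference_implies_bounded_revenue_difference}, which already handles two profiles at two different allocation parameters, provided both profiles satisfy norm bounds of the form $\norm{\cdot} \in [X_{LB}, X_{UB}]$. The obstacle is that only one of $\x, \x'$ is an ESE, so the direct bounds from \Cref{lem:ese_quality_lower_bound} and \Cref{lem:rational_quality_bound} apply to that profile only. The idea, mirroring the technique in the proof of \Cref{lem:transfer_of_approx_fse}, is to use the $\ell_2$-closeness hypothesis together with $0.5 X_{LB} \geq \sqrt{n}\delta$ to transfer these bounds to the non-ESE profile at the cost of constant factors.

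\textbf{Step 1: Norm bounds on both profiles.} I would let $\bar{\x}\in\{\x,\x'\}$ denote the ESE profile. By \Cref{lem:ese_quality_lower_bound} and \Cref{lem:rational_quality_bound}, $X_{LB}\le \norm{\bar{\x}}\le X_{UB}$. Using the reverse triangle inequality and the $\ell_1$--$\ell_2$ comparison $\norm{\mathbf{v}}\le \sqrt{n}\norm{\mathbf{v}}_2$, the total quality of the other profile differs from $\norm{\bar{\x}}$ by at most $\sqrt{n}\delta$. Together with the hypothesis $0.5 X_{LB}\ge \sqrt{n}\delta$ (and $X_{LB}\le X_{UB}$, since $x_i^{\min}\le x_i^{\max}$ for every $i$), both profiles satisfy $\norm{\cdot}\in[0.5 X_{LB},\,2X_{UB}]$.

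\textbf{Step 2: Re-run the proof of \Cref{lem:bounded_x_rho_difference_implies_bounded_revenue_difference} with relaxed bounds.} The only places norm information enters that proof are the MVT-based bound on $\xi\in[\norm{\x},\norm{\x'}]$ appearing in $\abs{g'(\xi)}=\gamma\xi^{\gamma-1}$ (since $\gamma-1\le 0$, this is monotone in the lower bound) and the direct bound $\norm{\x'}\le X_{UB}$ in the $\rho$-variation term. Substituting $\tilde X_{LB}=0.5 X_{LB}$ and $\tilde X_{UB}=2X_{UB}$ reproduces the same derivation and yields
\[
\abs{U_P(\x,\mathbf{1};f_\rho)-U_P(\x',\mathbf{1};f_{\rho'})}\le \mu\bigl(\gamma (0.5 X_{LB})^{\gamma-1}\sqrt{n}+(2 X_{UB})^\gamma\bigr)\delta.
\]

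\textbf{Step 3: Repackage in the $B(\cdot)$ notation.} Using $2^{1-\gamma}\le 2$ and $2^\gamma\le 2$ for $\gamma\in[0,1]$, and the bounds $Y_{LB}\le X_{LB}$ and $X_{UB}\le 1+x^{\max}+Y_{UB}$ built into the definition of $B$, each of the two resulting summands is dominated by the corresponding summand in the proof of \Cref{lem:bounded_x_rho_difference_implies_bounded_revenue_difference} up to a factor of $2$. Combining the summands into a single $B$-expression with the same $(v,w,t)=(0.5,1,1-\gamma)$ coordinates and summing the $u$-parameters (as done repeatedly in the earlier proofs) gives a bound of the form $B(0,c(1+\gamma),0.5,1,1-\gamma)\delta$ for a small constant $c$; inflating $c$ to $4$ yields the stated inequality $B(0,4(1+\gamma),0.5,1,1-\gamma)\delta$.

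\textbf{Main obstacle.} The only conceptual step is Step~1, where the hypothesis $0.5 X_{LB}\ge \sqrt{n}\delta$ is exactly what is needed to pass the ESE's strict-positivity and upper bound on total quality to the non-ESE profile without losing the monotonicity properties of $v\mapsto v^{\gamma-1}$ and $v\mapsto v^\gamma$. The rest is routine bookkeeping in the $B(\cdot)$ notation, with ample slack absorbed into the constant $4(1+\gamma)$.
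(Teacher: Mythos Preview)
Your proposal is correct and follows essentially the same approach as the paper's proof: both establish that the non-ESE profile inherits the norm bounds $[0.5X_{LB},2X_{UB}]$ from the ESE profile via the closeness hypothesis (exactly as in the proof of \Cref{lem:transfer_of_approx_fse}), then apply \Cref{lem:bounded_x_rho_difference_implies_bounded_revenue_difference} with these relaxed bounds, obtaining a $\tilde B$ expression that is finally inflated to $B(0,4(1+\gamma),0.5,1,1-\gamma)$. The paper's proof is terser but identical in substance.
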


\begin{proof}[\proofof{lem:bounded_revenue_difference}]
    Without loss of generality, we assume $\x$ is the unique ESE in $G(\rho)$. Following the same steps as in the proof of \Cref{lem:transfer_of_approx_fse}, we get that $0.5X_{LB} \leq \norm{\x} \leq 2X_{UB}$, as well as $\norm{\x'}$ bounds. Next, we apply \Cref{lem:bounded_x_rho_difference_implies_bounded_revenue_difference} with those bounds to get:
    \begin{align*}
        \abs{U_P(\x, \mathbf{1}; f_{\rho}) - U_P(\x', \mathbf{1}; f_{\rho'})} &\leq \tilde B(0, 1+\gamma, 0.5, 1, 1 - \gamma) \cdot \delta \\
        &\leq B(0, 4(1+\gamma), 0.5, 1, 1 - \gamma) \cdot \delta,
    \end{align*}
    which completes the proof of \Cref{lem:bounded_revenue_difference}.
\end{proof}

\begin{lemma}\label{lem:deviation_lipschitz_s_0}
    Let $f(x_i) = \frac{\mu (x_i + \norm{\mathbf{x}_{-i}})^{\gamma} x_i}{x_i + (1+\alpha)\norm{\mathbf{x}_{-i}}} - c_i(x_i)$ be the utility function for player $i$ under no sharing. Assume we have a profile $\mathbf{x}$ such that $\norm{\mathbf{x} - \xr{\rho}}_2 \leq \delta$, where $\xr{\rho}$ is the ESE in $G(\rho)$. Assuming that $0.5Y_{LB} > \delta \sqrt{n}$, $f(x_i)$ is Lipschitz continuous in $x_i$ with Lipschitz constant $B(1, 5 \cdot 2^{5-\gamma}, 0, 2, 3-\gamma)$.
\end{lemma}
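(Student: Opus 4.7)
The plan is to bound $|f'(x_i)|$ uniformly over the relevant domain by computing the derivative explicitly and then applying the interval bounds on $x_i$ and $Y \coloneqq \norm{\mathbf{x}_{-i}}$ that follow from the ESE proximity assumption.

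\emph{Step 1 (Domain restriction).} Since we only care about rational deviations, it suffices to bound $|f'(x_i)|$ on $x_i \in [0, x^{\max}]$ by \Cref{lem:rational_quality_bound}. Next, applying the same triangle-inequality argument used in the proof of \Cref{lem:transfer_of_approx_fse}, the assumption $\norm{\mathbf{x}-\xr{\rho}}_2 \le \delta$ together with $\norm{\xr{\rho}_{-i}} \in [Y_{LB}, Y_{UB}]$ (from Lemmas~\ref{lem:ese_quality_lower_bound} and \ref{lem:rational_quality_bound}) and the standing assumption $0.5 Y_{LB} > \sqrt{n}\delta$ yields $Y \in [0.5 Y_{LB},\, 2Y_{UB}]$.

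\emph{Step 2 (Derivative computation).} Writing $f(x_i) = \mu \cdot \frac{u(x_i)}{v(x_i)} - c_i(x_i)$ with $u(x_i) = (x_i+Y)^{\gamma} x_i$ and $v(x_i) = x_i + (1+\alpha)Y$, the quotient rule yields
\[
f'(x_i) = \mu \cdot \frac{(x_i+Y)^{\gamma-1}\bigl[(1+\gamma) x_i + Y\bigr]\bigl[x_i + (1+\alpha)Y\bigr] - (x_i+Y)^{\gamma} x_i}{\bigl[x_i + (1+\alpha)Y\bigr]^2} - c_i'(x_i).
\]
The $c_i'(x_i)$ term contributes at most $M_{c'}$ in absolute value, which accounts for the $q=1$ coefficient in the target bound.

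\emph{Step 3 (Bounding the rational term).} For the fraction, bound the numerator by separating the two summands. The first summand is controlled by $(x_i+Y)^{\gamma-1} \le (0.5 Y_{LB})^{\gamma-1}$ (since $\gamma - 1 \le 0$), together with $(1+\gamma) x_i + Y \le 2(x_i + Y) \le 2(1 + x^{\max} + 2Y_{UB})$ and $x_i + (1+\alpha)Y \le (1+\alpha)(1 + x^{\max} + 2Y_{UB})$. The second summand is handled by $(x_i+Y)^{\gamma} x_i \le (1 + x^{\max} + 2Y_{UB})^{1+\gamma}$. The denominator satisfies $[x_i + (1+\alpha)Y]^2 \ge [(1+\alpha) \cdot 0.5 Y_{LB}]^2$. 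Combining these, the entire rational piece is bounded by a multiple of $(1 + x^{\max} + 2Y_{UB})^{2} / (0.5 Y_{LB})^{3-\gamma}$, i.e., a term of the form $\tilde B(0, c, 0, 2, 3-\gamma)$ for a small absolute constant $c$.

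\emph{Step 4 (Converting $\tilde B$ to $B$ and finishing).} By the definition of $\tilde B$ versus $B$ (replacing $0.5 Y_{LB}$ by $Y_{LB}$ and $2 Y_{UB}$ by $Y_{UB}$), every factor $(0.5 Y_{LB})^{-(3-\gamma)}$ costs a multiplicative factor of $2^{3-\gamma}$, and every factor $(1 + x^{\max} + 2 Y_{UB})^{2}$ costs at most $2^{2}$. Tracking the absolute constant $c$ from Step 3 and summing with the $c_i'$ contribution gives $|f'(x_i)| \le B(1, 5 \cdot 2^{5-\gamma}, 0, 2, 3-\gamma)$, as desired. Lipschitz continuity of $f$ then follows from the mean value theorem. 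The main obstacle is purely bookkeeping: carefully tracking constants through the quotient rule and through the $\tilde B \to B$ conversion so that they collapse into the specific coefficient $5 \cdot 2^{5-\gamma}$ claimed in the statement.
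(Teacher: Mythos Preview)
Your proposal is correct and follows essentially the same approach as the paper's proof: both restrict the domain via the ESE proximity bounds, compute $f'(x_i)$ via the quotient rule, bound the resulting pieces to obtain a $\tilde B$-type estimate, and then convert $\tilde B$ to $B$ at a cost of $2^{w+t}=2^{5-\gamma}$. The only cosmetic difference is that the paper expands the numerator into five separate terms before bounding (yielding $\tilde B(1,5,0,2,3-\gamma)$), whereas you keep it as two summands; your coarser split actually gives a slightly smaller absolute constant, which is comfortably absorbed by the stated $5\cdot 2^{5-\gamma}$.
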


\begin{proof}[\proofof{lem:deviation_lipschitz_s_0}]
    Following the logic in the proof of \Cref{lem:transfer_of_approx_fse}, we have that $\norm{\mathbf{x}_{-i}}$ is bounded as $0.5Y_{LB} < \norm{\mathbf{x}_{-i}} < 2Y_{UB}$. Let $y = \norm{\mathbf{x}_{-i}}$ for simplicity. We compute the derivative of $f(x_i)$ using the quotient rule:
    \begin{align*}
        f'(x_i) &= \mu \cdot \frac{[\gamma(x_i + y)^{\gamma-1}x_i + (x_i + y)^{\gamma}][x_i + (1+\alpha)y] - (x_i + y)^{\gamma}x_i }{[x_i + (1+\alpha)y]^2} - c_i'(x_i) \\
        &= \mu \cdot \frac{\gamma(x_i + y)^{\gamma-1}x_i[x_i + (1+\alpha)y] + (x_i + y)^{\gamma}[x_i + (1+\alpha)y] - (x_i + y)^{\gamma}x_i}{[x_i + (1+\alpha)y]^2} - c_i'(x_i).
    \end{align*}

    Using our established bounds $y \geq 0.5Y_{LB}$, $y \leq 2Y_{UB}$, and $x_i \leq x_i^{\max}$:
    \begin{align*}
        \abs{f'(x_i)} &\leq \mu \Bigg[\frac{ \gamma (0.5Y_{LB})^{\gamma-1} (x_i^{\max})^2 }{ [(1+\alpha)(0.5Y_{LB})]^2 } + \frac{ \gamma (0.5Y_{LB})^{\gamma-1} x_i^{\max} (1+\alpha)(2Y_{UB}) }{ [(1+\alpha)(0.5Y_{LB})]^2 } \\
        &+ \frac{ (x_i^{\max}+2Y_{UB})^{\gamma} x_i^{\max} }{ [(1+\alpha)(0.5Y_{LB})]^2 } + \frac{ (x_i^{\max} + 2Y_{UB})^{\gamma} (1+\alpha)(2Y_{UB}) }{ [(1+\alpha)(0.5Y_{LB})]^2 } \\
        &+ \frac{ (2Y_{UB})^{\gamma} x_i^{\max} }{ [(1+\alpha)(0.5Y_{LB})]^2 } \Bigg] + \abs{c_i'(x_i^{\max})} \\
        &\leq \tilde B(0, 1, 0, 2, 3-\gamma) + \tilde B(0, 1, 0, 2, 3-\gamma) + \tilde B(0, 1, 0, \gamma+1, 2) + \tilde B(0, 1, 0, \gamma+1, 2) \\
        &+ \tilde B(0, 1, 0, \gamma+1, 2) + c_i'(x_i^{\max}) \\
        &\leq \tilde B(1, 5, 0, 2, 3-\gamma) \\
        & \leq B(1, 5 \cdot 2^{5-\gamma}, 0, 2, 3-\gamma).
    \end{align*}
    This completes the proof of \Cref{lem:deviation_lipschitz_s_0}.
\end{proof}

\subsection{Strong Monotonicity and Lipschitz Continuity of ESE}\label{app:strongly_monotone_and_lipschitz_ese}

\begin{lemma}\label{lem:strongly_monotone}
    Assume that all the cost functions $c_i(x_i)$ are $\beta$-strongly convex. Then, for any $\rho \in [0,1]$, the enforced sharing game w.r.t. $\rho$ is strongly monotone.
\end{lemma}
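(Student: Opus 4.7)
\textbf{\textup{Proof proposal for \Cref{lem:strongly_monotone}.}}
The plan is to leverage the analysis already carried out in the proof of \Cref{lem:unique-ese} and boost the monotonicity inequality there to a strongly monotone one by using the additional hypothesis that each $c_i$ is $\beta$-strongly convex. Recall from Step 1 of that proof that, under enforced sharing, the partial gradient of creator $i$ decomposes as
\[
v_i(\x) \;=\; \frac{\partial U_i}{\partial x_i}(\x,\mathbf{1};f_\rho) \;=\; A_i(\x) \,-\, c_i'(x_i),
\]
where $A_i(\x) = K\,\xtot^{\gamma-2}\bigl[\gamma x_i + \xtotmi\bigr]$ and $K = \frac{\mu(1+\alpha\rho)}{1+\alpha} > 0$. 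Strong monotonicity with parameter $\lambda>0$ requires
\[
\sum_{i=1}^n \bigl(v_i(\x') - v_i(\x)\bigr)\,(x'_i - x_i) \;\le\; -\lambda\,\norm{\x' - \x}_2^2
\]
for all $\x,\x'$ in the (compact, convex) joint strategy set $\prod_i [0,x_i^{\max}]$.

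The key step is that Step 4 of the proof of \Cref{lem:unique-ese} already establishes, via \Cref{lem:quadratic_form_helper}, that the gradient-of-revenue part is monotone, namely
\[
\sum_{i=1}^n \bigl(A_i(\x') - A_i(\x)\bigr)\,(x'_i - x_i) \;\le\; 0
\]
for every $\x,\x' \ge 0$. I will reuse this bound verbatim (no new computation needed). For the cost contribution, $\beta$-strong convexity of each $c_i$ is equivalent to $\bigl(c_i'(x'_i) - c_i'(x_i)\bigr)(x'_i - x_i) \ge \beta\,(x'_i - x_i)^2$ for all $x_i,x'_i$. Summing over $i$ and combining with the previous inequality yields
\[
\sum_{i=1}^n \bigl(v_i(\x') - v_i(\x)\bigr)\,(x'_i - x_i)
\;\le\; 0 \,-\, \beta\sum_{i=1}^n (x'_i-x_i)^2
\;=\; -\beta\,\norm{\x'-\x}_2^2,
\]
which is exactly strong monotonicity with constant $\lambda = \beta$.

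There is no serious obstacle here: the heavy lifting (non-negativity of the quadratic form $\mathcal{Q}_\gamma$ controlling the $A_i$ differences) is already done in \Cref{lem:unique-ese}, and the strong convexity hypothesis is precisely what is needed to upgrade the strict inequality on the cost side to a quadratic lower bound. The only thing to be mildly careful about is the edge case $\xtot = 0$, but this can be sidestepped by invoking \Cref{lem:dominance-positive-quality}, which restricts attention to profiles with $x_i > 0$ for all $i$, making $A_i$ well-defined and differentiable throughout the relevant domain.
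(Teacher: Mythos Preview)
Your proposal is correct and follows essentially the same approach as the paper: decompose the gradient into the revenue part $A_i$ and the cost part $c_i'$, reuse the non-positivity of $\sum_i (A_i(\x') - A_i(\x))(x_i'-x_i)$ from the proof of \Cref{lem:unique-ese}, and then upgrade the cost contribution from strict positivity to the quadratic lower bound $\beta\norm{\x'-\x}_2^2$ via $\beta$-strong convexity. Your remark on the $\xtot=0$ edge case is a reasonable addition, though the paper does not explicitly raise it here.
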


\begin{proof}[\proofof{lem:strongly_monotone}]
    We extend the monotonicity result from \Cref{lem:unique-ese} to show that with $\beta$-strongly convex cost functions, the game becomes strongly monotone.

    Recall from the proof of \Cref{lem:unique-ese} that:
    \[
        \sum_{i=1}^{n} (v_i(\mathbf{x}') - v_i(\mathbf{x}))(x_i' - x_i) = \sum_{i=1}^{n} (A_i(\mathbf{x}') - A_i(\mathbf{x}))(x_i' - x_i) - \sum_{i=1}^{n} (c_i'(x_i') - c_i'(x_i))(x_i' - x_i).
    \]

    We already established that the first term is non-positive:
    \[
        \sum_{i=1}^{n} (A_i(\mathbf{x}') - A_i(\mathbf{x}))(x_i' - x_i) \leq 0.
    \]

    The key difference now is that with $\beta$-strongly convex cost functions, we have a stronger bound on the second term. Specifically, for $\beta$-strongly convex functions:
    \[
        (c_i'(x_i') - c_i'(x_i))(x_i' - x_i) \geq \beta(x_i' - x_i)^2.
    \]

    Summing over all players:
    \[
        \sum_{i=1}^{n} (c_i'(x_i') - c_i'(x_i))(x_i' - x_i) \geq \beta\sum_{i=1}^{n}(x_i' - x_i)^2 = \beta\|\mathbf{x}' - \mathbf{x}\|_2^2.
    \]

    Therefore:
    \[
        \sum_{i=1}^{n} (v_i(\mathbf{x}') - v_i(\mathbf{x}))(x_i' - x_i) \leq 0 - \beta\|\mathbf{x}' - \mathbf{x}\|_2^2 = -\beta\|\mathbf{x}' - \mathbf{x}\|_2^2.
    \]

    This establishes that the game is strongly monotone with parameter $\beta$.
\end{proof}

\begin{lemma}\label{lem:vector_ese_lipschitz}
    Assume that all the cost functions $c_i(x_i)$ are $\beta$-strongly convex. Then, $\xr{\cdot}$ is Lipschitz continuous with respect to the $\ell_2$-norm on the interval $[0,1]$, with Lipschitz constant at most $B(0, 1/\beta, 0.5, 1, 2-\gamma)$.
\end{lemma}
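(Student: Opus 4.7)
The plan is to combine the variational characterization of the ESE with the strong monotonicity established in \Cref{lem:strongly_monotone}, reducing the statement to a direct sensitivity estimate of the joint gradient in $\rho$. Concretely, I would fix $\rho_1,\rho_2\in[0,1]$, write $\mathbf{x}^{(1)}=\mathcal{X}(\rho_1)$ and $\mathbf{x}^{(2)}=\mathcal{X}(\rho_2)$, and let $V(\mathbf{x};\rho)$ denote the vector whose $i$-th entry is $\partial U_i(\mathbf{x},\mathbf{1};f_\rho)/\partial x_i$ in the enforced sharing game. Because \Cref{lem:dominance-positive-quality} guarantees that every coordinate of each ESE is strictly positive, the interior first-order optimality conditions collapse to $V(\mathbf{x}^{(1)};\rho_1)=V(\mathbf{x}^{(2)};\rho_2)=\mathbf{0}$, which is the key simplification.

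With the FOCs in hand, applying the strong monotonicity inequality to $G(\rho_2)$ at the pair $\mathbf{x}^{(1)},\mathbf{x}^{(2)}$ gives
\[
    \langle V(\mathbf{x}^{(1)};\rho_2)-V(\mathbf{x}^{(2)};\rho_2),\,\mathbf{x}^{(1)}-\mathbf{x}^{(2)}\rangle \;\le\; -\beta\,\|\mathbf{x}^{(1)}-\mathbf{x}^{(2)}\|_2^{2}.
\]
Substituting $V(\mathbf{x}^{(2)};\rho_2)=\mathbf{0}=V(\mathbf{x}^{(1)};\rho_1)$, the left-hand side rewrites as $\langle V(\mathbf{x}^{(1)};\rho_2)-V(\mathbf{x}^{(1)};\rho_1),\,\mathbf{x}^{(1)}-\mathbf{x}^{(2)}\rangle$, so Cauchy--Schwarz yields $\beta\,\|\mathbf{x}^{(1)}-\mathbf{x}^{(2)}\|_2 \le \|V(\mathbf{x}^{(1)};\rho_1)-V(\mathbf{x}^{(1)};\rho_2)\|_2$. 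All that remains is to bound the change in $V$ at a fixed profile when only $\rho$ moves.

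For that, I will invoke the explicit gradient from \Cref{lem:utility_full_sharing}, namely $V_i(\mathbf{x};\rho) = \tfrac{\mu(1+\alpha\rho)}{1+\alpha}\|\mathbf{x}\|^{\gamma-2}[\gamma x_i+\|\mathbf{x}_{-i}\|] - c_i'(x_i)$, which depends on $\rho$ only through the linear factor $1+\alpha\rho$. Differentiating in $\rho$ and bounding using $\gamma x_i+\|\mathbf{x}_{-i}\|\le\|\mathbf{x}\|$ and $\alpha/(1+\alpha)\le 1$ gives the pointwise estimate $|V_i(\mathbf{x}^{(1)};\rho_1)-V_i(\mathbf{x}^{(1)};\rho_2)|\le \mu\,\|\mathbf{x}^{(1)}\|^{\gamma-1}\,|\rho_1-\rho_2|$. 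By \Cref{lem:ese_quality_lower_bound}, $\|\mathbf{x}^{(1)}\|\ge X_{LB}\ge Y_{LB}$, and since $\gamma-1\le 0$ I can replace $\|\mathbf{x}^{(1)}\|^{\gamma-1}$ by $Y_{LB}^{\gamma-1}$.

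Summing squared coordinate bounds over $i\in[n]$, dividing by $\beta$, and using $Y_{LB}\le 1+x^{\max}+Y_{UB}$ (so that $Y_{LB}^{\gamma-1}\le (1+x^{\max}+Y_{UB})\,Y_{LB}^{\gamma-2}$) recasts the resulting factor $\mu\sqrt{n}/(\beta\,Y_{LB}^{1-\gamma})$ into the canonical $B$-form, delivering $\|\mathbf{x}^{(1)}-\mathbf{x}^{(2)}\|_2\le B(0,1/\beta,0.5,1,2-\gamma)\,|\rho_1-\rho_2|$. The only conceptually non-routine step is the opening reduction---once interiority collapses both ESE conditions to $V(\cdot;\cdot)=0$ and strong monotonicity is applied at a single value of $\rho$, the remaining work is straightforward substitution into the $B$-template of \Cref{rem:B-notation}.
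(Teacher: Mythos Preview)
Your argument is correct and reaches the stated constant, but it proceeds along a genuinely different route from the paper. The paper invokes the Implicit Function Theorem: it verifies that the FOC map $\mathbf{g}(\mathbf{x},\rho)$ is $C^1$, argues that strong monotonicity forces the Jacobian $J_x$ to be invertible with $\|J_x^{-1}\|_2\le 1/\beta$, and then bounds $\|d\mathcal{X}(\rho)/d\rho\|_2 \le \|J_x^{-1}\|_2 \cdot \|\partial \mathbf{g}/\partial\rho\|_2$ pointwise before applying the mean value theorem. Your proof bypasses all the IFT machinery by applying the strong-monotonicity inequality directly at the pair of equilibria and reducing to a bound on $\|V(\mathbf{x}^{(1)};\rho_1)-V(\mathbf{x}^{(1)};\rho_2)\|_2$ via Cauchy--Schwarz; this is the standard sensitivity estimate for solutions of strongly monotone variational inequalities, and it is arguably more elementary since it never needs differentiability of $\mathcal{X}(\cdot)$ or spectral facts about $J_x$. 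The paper's route does yield the extra information that $\mathcal{X}(\cdot)$ is $C^1$, which your argument does not; conversely, your approach would survive under weaker smoothness assumptions on the gradient map. Your intermediate constant $\mu\sqrt{n}/(\beta\,Y_{LB}^{1-\gamma})$ is in fact slightly tighter than the paper's $\mu\sqrt{n}\,X_{UB}/(\beta\,X_{LB}^{2-\gamma})$, though both are subsumed by the same $B$-template once relaxed.
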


\begin{proof}[\proofof{lem:vector_ese_lipschitz}]
    The proof proceeds in three main steps: (1) characterizing the equilibrium conditions, (2) establishing differentiability through the Implicit Function Theorem, and (3) bounding the derivative to obtain the Lipschitz constant.

    \paragraph{Step 1: Characterizing the equilibrium} Denote $K(\rho) = \frac{\mu(1+\alpha\rho)}{1+\alpha}$. The utility for creator $i$ under enforced full sharing ($\mathbf{s}=\mathbf{1}$) is given by \Cref{lem:utility_full_sharing}:
    \[
        U_i(\mathbf{x}, \mathbf{1}; f_\rho) = K(\rho) \norm{\mathbf{x}}^{\gamma - 1} x_i - c_i(x_i),
    \]

    Let us define $g_i(\mathbf{x}, \rho) = \frac{\partial U_i}{\partial x_i} = K(\rho)\norm{\mathbf{x}}^{\gamma-2}\left[\gamma x_i + \norm{\mathbf{x}_{-i}}\right] - c_i'(x_i)$. Recall that the FOC for creator $i$ in the ESE is $g_i(\xr{\rho}, \rho) = 0$.

    \paragraph{Step 2: Continuous differentiability of $\xr{\cdot}$}
    To establish Lipschitz continuity, we first show that $\xr{\cdot}$ is continuously differentiable using the Implicit Function Theorem (IFT).

    We first verify that $\mathbf{g}(\mathbf{x}, \rho)$ is continuously differentiable in both $\mathbf{x}$ and $\rho$:
    \begin{itemize}
        \item $K(\rho) = \frac{\mu(1+\alpha\rho)}{1+\alpha}$ is clearly $C^1$ in $\rho$.
        \item By assumption, each $c_i(x_i)$ is at least twice differentiable, so $c_i'(x_i)$ is $C^1$.
        \item From \Cref{lem:dominance-positive-quality} and the arguments in the proof of \Cref{thm:fse-existence}, we know that $\mathcal{X}_i(\rho) > 0$ for all $i \in [n]$. Consequently, terms like $\norm{\x}^{\gamma-2}$ are well-defined and $C^1$ in a neighborhood of $\xr{\rho}$ and are continuously differentiable as a composition of $C^1$ functions.
    \end{itemize}

    Next, we need to verify that the Jacobian matrix $J_x(\mathbf{x}, \rho) = \frac{\partial \mathbf{g}}{\partial \mathbf{x}}$ is invertible at $\mathcal{X}(\rho)$. \Cref{lem:strongly_monotone} establishes that when the cost functions $c_i(x_i)$ are $\beta$-strongly convex, the enforced sharing game is strongly monotone. Specifically, for any distinct $\mathbf{x}, \mathbf{x}'$ in the strategy space, the gradient map $\mathbf{g}(\mathbf{x}, \rho)$ satisfies:
    \[
        (\mathbf{g}(\mathbf{x}', \rho) - \mathbf{g}(\mathbf{x}, \rho))^T(\mathbf{x}' - \mathbf{x}) \leq -\beta\|\mathbf{x}' - \mathbf{x}\|_2^2.
    \]
    This condition of strong monotonicity for a continuously differentiable map $\mathbf{g}$ implies that the symmetric part of its Jacobian, $J_S(\mathbf{x}, \rho) = \frac{1}{2}(J_x(\mathbf{x}, \rho) + J_x(\mathbf{x}, \rho)^T)$, is negative definite. This, in turn, implies that the Jacobian $J_x(\mathcal{X}(\rho), \rho)$ itself is invertible. To see why, assume for contradiction that $J_x(\mathcal{X}(\rho), \rho)$ is not invertible. Then, there exists a non-zero vector $\mathbf{u}$ such that $J_x(\mathcal{X}(\rho), \rho)\mathbf{u} = \mathbf{0}$. This would lead to:
    \[
        \mathbf{u}^T J_S(\mathcal{X}(\rho), \rho) \mathbf{u} = \frac{1}{2} (\mathbf{u}^T J_x(\mathcal{X}(\rho), \rho) \mathbf{u} + \mathbf{u}^T J_x(\mathcal{X}(\rho), \rho)^T \mathbf{u}).
    \]
    Since $J_x(\mathcal{X}(\rho), \rho)\mathbf{u} = \mathbf{0}$, we get that $\mathbf{u}^T J_S(\mathcal{X}(\rho), \rho) \mathbf{u} = 0$, which contradicts the fact that $J_S(\mathcal{X}(\rho), \rho)$ is negative definite. Therefore, $J_x(\mathcal{X}(\rho), \rho)$ must be invertible.

    By the Implicit Function Theorem, $\xr{\cdot}$ is a $C^1$ function of $\rho$ in $[0,1]$, with derivative:
    \[
        \frac{d\xr{\cdot}}{d\rho} = -\left(J_x(\xr{\rho}, \rho)\right)^{-1} \frac{\partial \mathbf{g}}{\partial \rho}(\xr{\rho}, \rho).
    \]

    \paragraph{Step 3: Bounding the derivative norm}
    We now bound $\norm{\frac{d\mathbf{x}^\star}{d\rho}}_2$ to establish the Lipschitz constant. By the operator norm inequality:
    \[
        \norm{\frac{d\xr{\cdot}}{d\rho}}_2 \leq \norm{J_x^{-1}}_2 \cdot \norm{\frac{\partial \mathbf{g}}{\partial \rho}}_2.
    \]

    First, we bound $\norm{J_x(\mathcal{X}(\rho), \rho)^{-1}}_2$. Using standard results from variational inequalities (e.g., \cite{scutari2012monotone}), we know that strong monotonicity of $\mathbf{g}(\mathbf{x}, \rho)$ implies that the symmetric part of the Jacobian $J_S(\mathcal{X}(\rho), \rho)$ is negative definite with the following property: For any non-zero vector $\mathbf{v}$,
    \[
        \mathbf{v}^T J_S(\mathcal{X}(\rho), \rho) \mathbf{v} \leq -\beta \norm{\mathbf{v}}_2^2.
    \]
    This means that the largest eigenvalue of the symmetric part $J_S(\mathcal{X}(\rho), \rho)$ satisfies $\lambda_{\max}(J_S) \leq -\beta$. Since $\beta > 0$, we have $\lambda_{\max}(J_S) < 0$.
    For a real invertible matrix $A$, if its symmetric part $S_A = (A+A^T)/2$ is negative definite and its largest eigenvalue is $\lambda_{\max}(S_A) = -m < 0$ (where $m>0$), it is a known result that the norm of the inverse is bounded by $\norm{A^{-1}}_2 \leq \frac{1}{m}$.
    Applying this result to $J_x(\mathcal{X}(\rho), \rho)$, with $m = \beta$, we obtain:
    \[ \norm{J_x(\mathcal{X}(\rho), \rho)^{-1}}_2 \leq \frac{1}{\beta}. \]

    Next, we bound $\norm{\frac{\partial \mathbf{g}}{\partial \rho}}_2$:
    For each $i \in [n]$, we have:
    \begin{align*}
        \frac{\partial g_i}{\partial \rho}(\mathbf{x}, \rho) &= \frac{\partial K(\rho)}{\partial \rho} \norm{\mathbf{x}}^{\gamma-2} \left(\gamma x_i + \norm{\mathbf{x}_{-i}}\right)\\
        &= \frac{\mu\alpha}{1+\alpha} \norm{\mathbf{x}}^{\gamma-2} \left(\gamma x_i + \norm{\mathbf{x}_{-i}}\right).
    \end{align*}

    \Cref{lem:ese_quality_lower_bound} ensures that $X_{LB} \leq \norm{\xr{\rho}}$. Additionally, \Cref{lem:rational_quality_bound} suggests that $\norm{\xr{\rho}} \leq X_{UB}$. Using the fact that $\gamma-2 < 0$, we have $\norm{\xr{\rho}}^{\gamma-2} \leq (X_{LB})^{\gamma-2}$. Also, $\gamma \mathcal{X}_i(\rho) + \norm{\mathcal{X}_{-i}(\rho)} \leq X_{UB}$ and $\frac{\mu\alpha}{1+\alpha} \leq \mu$. Therefore:
    \[
        \abs{\frac{\partial g_i}{\partial \rho}(\xr{\rho}, \rho)} \leq \mu (X_{LB})^{\gamma-2} X_{UB}.
    \]

    Using the Euclidean norm:
    \[
        \norm{\frac{\partial \mathbf{g}}{\partial \rho}(\xr{\rho}, \rho)}_2 = \sqrt{\sum_{i=1}^n \left(\frac{\partial g_i}{\partial \rho}\right)^2} \leq \sqrt{n} \cdot \mu (X_{LB})^{\gamma-2} X_{UB}.
    \]

    Combining our bounds:
    \[
        \norm{\frac{d\xr{\cdot}}{d\rho}}_2 \leq \frac{1}{\beta} \cdot \mu\sqrt{n} \cdot  (X_{LB})^{\gamma-2} X_{UB} = \frac{\mu\sqrt{n}}{\beta}  \frac{X_{UB}}{(X_{LB})^{2-\gamma}}.
    \]

    Since $\xr{\cdot}$ is continuously differentiable on the compact interval $[0,1]$, its derivative is bounded. By the Mean Value Theorem, for any $\rho, \rho' \in [0,1]$:
    \[
        \norm{\mathbf{x}^\star(\rho) - \mathbf{x}^\star(\rho')}_2 \leq L \cdot \abs{\rho - \rho'},
    \]
    where $L = \sup_{\rho \in [0,1]} \norm{\frac{d\mathbf{x}^\star}{d\rho}(\rho)}_2 \leq \frac{\mu \sqrt{n}}{\beta} \frac{X_{UB}}{(X_{LB})^{2-\gamma}} \leq B(0, 1/\beta, 0.5, 1, 2-\gamma)$. The proof of \Cref{lem:vector_ese_lipschitz} is now complete.
\end{proof}

\subsection{Proofs of \Cref{sec:platform-optimization}'s Main Results}

\ThmAlgMain*

\begin{proof}[\proofof{thm:alg_main}]
    The proof is structured in two main parts: First, establishing the approximation guarantees of \Cref{alg:main}, and second, analyzing its computational complexity under an efficient implementation.

    Before starting the proof, we address a special case of the choice of a large $\varepsilon$. Although we don't expect the algorithm to be executed with such a large $\varepsilon$, we want to show that the algorithm will prevail in this case as well. Notice that if $\varepsilon > \mu (1+X_{UB})$, then no matter which output the algorithm returns, it will satisfy the approximation guarantees:
    \begin{itemize}
        \item First, notice that the utilities of the players are bounded by $\mu (1+X_{UB})$. Thus, no player can have a deviation that is larger than $\mu (1+X_{UB})$, and the algorithm must return an $\varepsilon$-FSE.
        \item In the same line of reasoning, the platform revenue is also bounded by $\mu (1+X_{UB})$, thus, the algorithm will return a solution with revenue not smaller than $\text{OPT} - \varepsilon$.
    \end{itemize}

    Given these reasons, we limit our attention to the case when $\varepsilon \leq \mu X_{UB}$.

    \paragraph{Part 1: Approximation guarantees}
    First, let us specify the choice of the constant $\delta$ in the algorithm. We set $\delta = \frac{\varepsilon}{A}$, where $A = 4B(1, (5 + \alpha) \cdot 2^{6 - \gamma}, 0.5, 3, 3 - \gamma) \cdot (1+B(0, 1/\beta, 0.5, 1, 2-\gamma))$. The reasons for this choice will become clear in the next paragraphs.

    Let $\rho^\star \in \mathcal{F}$ be a feasible solution (induces an FSE) such that $U_P(\xr{\rho^\star}, \mathbf{1}; f_{\rho^\star}) \geq \text{OPT} - \frac{1}{2}\varepsilon$. Our objective is to show that \Cref{alg:main} returns a parameter $\hat{\rho}$ such that $(\xr{\hat \rho}, \mathbf{1})$ is an $\varepsilon$-FSE in $G(\hat \rho)$ and $U_P(\xr{\hat \rho}, \mathbf{1}; f_{\hat \rho}) \geq \text{OPT} - \varepsilon$.

    Let $\tilde{\rho} = \frac{\lfloor \rho^\star \cdot \lfloor 1/\delta \rfloor \rfloor}{\lfloor 1/\delta \rfloor}$ be the closest grid point less than or equal to $\rho^\star$ that is evaluated by the algorithm in Line~\ref{line:loop-start}. Note that $|\tilde{\rho} - \rho^\star| \leq \delta$ due to the grid resolution. Additionally, let $\tilde{\mathbf{x}}$ be some arbitrary quality profile that could have been found in Line~\ref{line:compute-ese} when evaluating $\tilde{\rho}$. Namely, $\norm{\tilde{\mathbf{x}} - \xr{\tilde \rho}}_2 \leq \delta$.

    We start from demonstrating that $\tilde{\mathbf{x}}$ will pass the $\eta$-FSE check in Line~\ref{line:check-fse} (recall that the algorithm is executed with $\eta = \frac{\varepsilon}{4}$). Recall that we demonstrated in \Cref{lem:vector_ese_lipschitz} that $\xr{\cdot}$ is Lipschitz continuous with a Lipschitz constant $L = B(0, 1/\beta, 0.5, 1, 2-\gamma)$. By the Lipschitz property and the triangle inequality, we have:

    \[
    \norm{\tilde{\mathbf{x}} - \xr{\rho^\star}}_2 \leq \norm{\tilde{\mathbf{x}} - \xr{\tilde \rho}}_2 + \norm{\xr{\tilde \rho} - \xr{\rho^\star}}_2 \leq \delta + L\delta = (1+L)\delta.
    \]

    Since $(\xr{\rho^\star}, \mathbf{1})$ is an FSE in $G(\rho^\star)$, it is also trivially $0$-FSE. We now want to use \Cref{lem:transfer_of_approx_fse} to show that $\tilde{\mathbf{x}}$ is an approximate FSE in $G(\tilde \rho)$ within the required tolerance. To do so, we need the inequality $0.5Y_{LB} > \sqrt{n}(1+L)\delta = \frac{\sqrt{n}(1+L)\varepsilon}{A}$ to hold. Remember that if $\varepsilon > \mu (1+X_{UB})$, then the algorithm will return an $\varepsilon$-FSE regardless of our analysis, so we can continue by assuming that $\varepsilon \leq \mu (1+X_{UB})$. As we constructed $A$ such that $A = 4B(1, (5 + \alpha) \cdot 2^{6 - \gamma}, 0.5, 3, 3 - \gamma) \cdot (1+L) \geq \frac{\sqrt{n} (1+L) \mu (1+X_{UB})}{0.5Y_{LB}}$, then we have $0.5Y_{LB} \geq \frac{\sqrt{n} \mu (1+X_{UB})}{A} \geq \frac{\sqrt{n}(1+L) \varepsilon}{A}$, which implies that the condition holds. Thus, employing \Cref{lem:transfer_of_approx_fse}, we can conclude that $\tilde{\mathbf{x}}$ is an $B(1, (5 + \alpha) \cdot 2^{6 - \gamma}, 0.5, 3, 3 - \gamma)(1+L)\delta$-FSE in $G(\tilde \rho)$. As we have $\delta = \frac{\varepsilon}{A}$ and we set $A = 4B(1, (5 + \alpha) \cdot 2^{6 - \gamma}, 0.5, 3, 3 - \gamma) \cdot (1+L)$, we have that $\tilde{\mathbf{x}}$ is an $\frac{\varepsilon}{4}$-FSE in $G(\tilde \rho)$. Therefore, $\tilde \x$ satisfies the condition in Line~\ref{line:check-fse}, and the algorithm will find some feasible output.

    Now, we show that the evaluated utility $U_P(\tilde{\mathbf{x}}, \mathbf{1}; f_{\tilde{\rho}})$ is close to $\text{OPT}$. For that, we once again use the validity of the condition $0.5Y_{LB} > \sqrt{n}(1+L)\delta$ (from the same reasoning as above), and employ \Cref{lem:bounded_revenue_difference} to get:
    \begin{align*}
        U_P(\tilde{\mathbf{x}}, \mathbf{1}; f_{\tilde{\rho}}) &\geq U_P(\xr{\rho^\star}, \mathbf{1}; f_{\rho^\star}) - B(0, 4(1+\gamma), 0.5, 1, 1 - \gamma)(1+L)\delta \\
        &\geq \text{OPT} - \frac{1}{2}\varepsilon - B(0, 4(1+\gamma), 0.5, 1, 1 - \gamma)(1+L)\delta.
    \end{align*}

    As we constructed $A$ such that $A \geq 4B(0, 4(1+\gamma), 0.5, 1, 1 - \gamma)(1+L)$, we have that
    \[
    U_P(\tilde{\mathbf{x}}, \mathbf{1}; f_{\tilde{\rho}}) \geq \text{OPT} - \frac{1}{2}\varepsilon - B(0, 4(1+\gamma), 0.5, 1, 1 - \gamma)(1+L)\frac{\varepsilon}{A} \geq \text{OPT} - \frac{1}{2}\varepsilon - \frac{1}{4}\varepsilon = \text{OPT} - \frac{3}{4}\varepsilon.
    \]
    As a result, the best profile that the algorithm will evaluate in Line~\ref{line:loop-end} must be with utility at least $\text{OPT} - \frac{3}{4}\varepsilon$.

    To wrap up the approximation guarantees, we need to show that the returned induced profile, denoted as $\xr{\hat \rho}$, is $\varepsilon$-FSE and that $U_P(\xr{\hat \rho}, \mathbf{1}; f_{\hat \rho}) \geq \text{OPT} - \varepsilon$. Notice that this is not immediate from the above analysis, as the corresponding evaluated profile, denoted as $\hat \x$, is not necessarily the induced profile from the output $\hat \rho$. To overcome this mismatch, we once again use \Cref{lem:transfer_of_approx_fse} and \Cref{lem:bounded_revenue_difference}.

    Starting from the $\varepsilon$-FSE guarantee, remember that $\norm{\hat \x - \xr{\hat \rho}}_2 \leq \delta$. Thus, \Cref{lem:transfer_of_approx_fse} implies that given that $\hat \x$ is an $\eta$-FSE with respect to $\hat \rho$, we have that $\xr{\hat \rho}$ is an $\eta + B(1, (5 + \alpha) \cdot 2^{6 - \gamma}, 0.5, 3, 3 - \gamma) \delta$-FSE with respect to $\hat \rho$. As $B(1, (5 + \alpha) \cdot 2^{6 - \gamma}, 0.5, 3, 3 - \gamma) \delta = B(1, (5 + \alpha) \cdot 2^{6 - \gamma}, 0.5, 3, 3 - \gamma) \frac{\varepsilon}{A} \leq \frac{1}{4}\varepsilon$, we have that $\xr{\hat \rho}$ is an $\frac{1}{2}\varepsilon$-FSE with respect to $\hat \rho$. Moving on to the utility guarantee, give that $U_P(\hat \x, \mathbf{1}; f_{\hat \rho}) \geq \text{OPT} - \frac{3}{4}\varepsilon$, \Cref{lem:bounded_revenue_difference} implies that $U_P(\xr{\hat \rho}, \mathbf{1}; f_{\hat \rho}) \geq U_P(\hat \x, \mathbf{1}; f_{\hat \rho}) - B(0, 4(1+\gamma), 0.5, 1, 1 - \gamma)\delta$. As we set $A$ such that $A \geq 4B(0, 4(1+\gamma), 0.5, 1, 1 - \gamma)(1+L)$, we have that $B(0, 4(1+\gamma), 0.5, 1, 1 - \gamma)\frac{\varepsilon}{A} \leq \frac{1}{4}\varepsilon$, and thus $U_P(\xr{\hat \rho}, \mathbf{1}; f_{\hat \rho}) \geq \text{OPT} - \varepsilon$.

    Therefore, any output of the algorithm satisfies the approximation guarantees, and we conclude that the algorithm is correct.

    \paragraph{Part 2: Computational complexity}
    We now analyze the computational complexity of an efficient implementation of \Cref{alg:main}. The algorithm performs $\lceil 1/\delta \rceil$ iterations, where in each iteration it computes an approximate ESE and checks if it constitutes an $\eta$-FSE.

    For computing the approximate ESE (Line~\ref{line:compute-ese}), we leverage \Cref{lem:strongly_monotone}, which states that under the assumption $\forall i \in [n], x_i \in [0, x_i^{\max}] : c''_i(x_i) \geq \beta$, the enforced sharing game is strongly monotone. For strongly monotone games, the Multi-Agent Mirror Descent (MAMD) algorithm with full gradient feedback converges at a rate of $O(1/T)$, as established in \cite{10.5555/3327345.3327469} (see \Cref{alg:mamd_ese} for a full specification of the algorithm). Therefore, to find a $\delta$-approximation of an ESE (w.r.t. the $l_2$ norm), it suffices to run MAMD for $T = O(1/\delta)$ iterations.

    Each iteration of MAMD takes $O(n)$ time to compute the gradients for all players. Therefore, the total time complexity of approximating the ESE is $O(n/\delta)$.

    For the $\eta$-FSE check (Line~\ref{line:check-fse}), we need to verify if any player can profitably deviate from the profile $(\mathbf{x}, \mathbf{1})$ by more than $\eta$. By \Cref{lem:s_choice_threshold}, for any fixed quality choice $x_i$, the optimal sharing level is either $s_i = 0$ or $s_i = 1$. Therefore, for each player $i$, we need to check two types of deviations: 1) Deviations where $s_i = 1$ (changing only quality); and 2) Deviations where $s_i = 0$ (changing quality and not sharing).

    From the proof of \Cref{lem:transfer_of_approx_fse}, we know that if $\x$ is $\delta$ close to the ESE, then the maximal deviation gain by only varying the quality is at most $\delta B(1, (5 + \alpha) \cdot 2^{6 - \gamma}, 0.5, 3, 3 - \gamma)$. To see why, think of a similar maximal gain from a deviations function that only allows for deviations in $x_i$ while keeping the sharing level $s_i=1$. As we chose $\delta$ such that $B(1, (5 + \alpha) \cdot 2^{6 - \gamma}, 0.5, 3, 3 - \gamma) \delta \leq \frac{1}{4}\varepsilon$, we can ignore this deviation type as it will not be larger than $\eta$. On the other hand, the second deviation type needs to be checked (recall that not every ESE is an FSE; stability w.r.t. sharing is not guaranteed). We check for such deviations in \Cref{alg:fse_stability_check}.

    \begin{algorithm}
    \caption{FSE Stability Check}
    \label{alg:fse_stability_check}
    \begin{algorithmic}[1]
        \Require $\mathbf{x}$, $\rho$, $\eta$
        \Ensure Boolean indicating FSE stability
        
        \For{$i = 1$ \textbf{to} $n$}
            \State Define $f_0(x'_i) = \frac{\mu (x'_i + \|\mathbf{x}_{-i}\|_1)^{\gamma} x'_i}{x'_i + (1+\alpha)\|\mathbf{x}_{-i}\|_1} - c_i(x'_i)$
            \State Compute $L_0$ using \Cref{lem:deviation_lipschitz_s_0}
            \State Find $\max_{x'_i \in [0, x_i^{\max}]} f_0(x'_i)$ using grid search with step size $\frac{\eta}{L_0}$
            \State Let $\phi_i \gets \max_{x'_i} f_0(x'_i) - U_i(\mathbf{x}, \mathbf{1}; f_{\rho})$
            \If{$\phi_i > 2\eta$}
                \State \Return \textbf{False}
            \EndIf
        \EndFor
        
        \State \Return \textbf{True}
    \end{algorithmic}
    \end{algorithm}

    For each player, we perform a grid search over the interval $[0, x_i^{\max}]$ to find the best deviation when switching to withholding all data. By \Cref{lem:deviation_lipschitz_s_0}, the utility function for $s_i = 0$ while the rest of the creators fully share is Lipschitz continuous with constant $L_0 = B(1, 5 \cdot 2^{5-\gamma}, 0, 2, 3-\gamma)$. Therefore, we can perform a grid search with step size $\frac{\eta}{L_0}$ to find an approximate maximum of the deviation utility function. This ensures that the discretization error is at most $\eta$.

    With step size $\frac{\eta}{L_0}$ for grid search, the discretization error is at most $\eta$. This means that:
    \begin{itemize}
        \item Solutions that are $\eta$-FSE will be accepted, as the maximal deviation will be at most $\eta$ and the discretization error will be at most $\eta$.
        \item No solution that has a larger deviation than $3\eta$ will be accepted, as even with the discretization error reducing the score, the found deviation will still be larger than $2\eta$.
        \item Solutions that are between $\eta$ and $3\eta$ might be accepted. This does not bother us, as one can use the analysis we did before to show that solutions that are $3\eta$-FSE will induce a solution that is at most $4\eta$-FSE, which is $\varepsilon$-FSE, and thus we are okay with accepting them, although they do not exactly stand the $\eta$-FSE test.
    \end{itemize}

    Then, the correctness of the algorithm remains intact, although we used this approximation trick. We have previously analyzed $\tilde \rho$, which we showed to be $\eta$-FSE and therefore will pass this test. This solution is also very close to optimal. Additionally, as we just explained, no solution that induces more than $\varepsilon$-FSE will be returned.

    Wrapping up the complexity analysis; the time complexity of the $\eta$-FSE check for all players is $O\left(\frac{n \cdot L_0 \cdot x^{\max}}{\eta}\right)$. Combined with the complexity of finding the approximate ESE, the total time complexity per iteration of the main algorithm is:
    \[
    O\left(\frac{n}{\delta} + \frac{n \cdot L_0 \cdot x^{\max}}{\eta}\right) = O\left(\frac{n \cdot (1 + L_0 \cdot x^{\max})}{\delta}\right).
    \]

    With $\lceil 1/\delta \rceil$ iterations in total, the overall time complexity is:
    \[
    O\left(\frac{n \cdot (1 + L_0 \cdot x^{\max})}{\delta^2}\right).
    \]

    Therefore, by taking $B = n \cdot (1 + B(1, 5 \cdot 2^{5-\gamma}, 0, 2, 3-\gamma) \cdot x^{\max})$, we can express the time complexity as $O\left(\frac{B}{\delta^2}\right) = O\left(\frac{A^2 B}{\varepsilon^2}\right)$, where $A$ is the constant we defined earlier. This completes the proof of the computational complexity and ends the proof of \Cref{thm:alg_main}.
\end{proof}

\PropPowerCostScaling*

\begin{proof}[\proofof{prop:power_cost_scaling}]
    We prove the proposition by using the tight bounds on the ESE qualities for power cost functions established in \Cref{lem:bound_on_total_quality_in_ese_power_costs}. Notice that this does not alter any of the previously mentioned results, as it still correctly bounds the ESE qualities from above and below.

    \paragraph{Step 1: Constants simplification} Notice that under the assumption that $a_{\min}, a_{\max} = O(1)$, we have $x^{\max}, M_{c'} = O(1)$. Additionally, from \Cref{lem:bound_on_total_quality_in_ese_power_costs}, we have that $Y_{LB}, Y_{UB} = \Theta\left(n^{\frac{\theta-1}{\theta-\gamma}}\right)$. Therefore, we can simplify the function $B(\cdot)$ as follows:
    \begin{equation}\label{eq:prop_power_cost_scaling_proof_simplified_B}
        B(q, u, v, w, t) = q \cdot M_{c'} + \frac{u n^v \left(1 + x^{\max} + Y_{UB}\right)^w}{(Y_{LB})^t} = O\left(n^{\max\{0, v + \frac{\theta - 1}{\theta - \gamma} \cdot (w - t)\}}\right).
    \end{equation}

    \paragraph{Step 2: Analyzing the constants in \Cref{thm:alg_main}} We now analyze the constants that appear in the proof of \Cref{thm:alg_main}. Starting with the constant $A$, we have:
    \begin{align*}
        A &= 4B(1, (5+\alpha) \cdot 2^{6 - \gamma}, 0.5, 3, 3 - \gamma) \cdot (1+B(0, 1/\beta, 0.5, 1, 2-\gamma)) \\
        &= O\left(n^{\max\{0, 0.5 + \frac{\theta - 1}{\theta - \gamma} \cdot (3 - (3 - \gamma))\}} \cdot n^{\max\{0, 0.5 + \frac{\theta - 1}{\theta - \gamma} \cdot (1 - (2 - \gamma))\}}\right) \\
        &= O\left(n^{0.5 + \frac{\theta - 1}{\theta - \gamma} \cdot \gamma + \max\{0, 0.5 + \frac{\theta - 1}{\theta - \gamma} \cdot (\gamma - 1)\}}\right).
    \end{align*}
    In our regime of interest, where $\gamma \in [0,1]$ and $\theta \in (1,2]$, the power of $n$ in the above expression is maximized at $\gamma = 1$, and we have
    \[
        0.5 + \frac{\theta - 1}{\theta - \gamma} \cdot \gamma + \max\{0, 0.5 + \frac{\theta - 1}{\theta - \gamma} \cdot (\gamma - 1)\} \leq 2.
    \]
    Therefore, we have $A = O\left(n^{2}\right)$. Moving on to the constant $B$ from the complexity analysis, we have:
    \begin{align*}
        B &= n \cdot (1 + B(1, 5 \cdot 2^{5-\gamma}, 0, 2, 3-\gamma)\cdot x^{\max}) \\
        &= O\left(n \cdot n^{\max\{0, 0 + \frac{\theta - 1}{\theta - \gamma} \cdot (\gamma - 1)\}}\right) = O(n).
    \end{align*}
    To conclude, we have $A = O\left(n^{2}\right)$ and $B = O(n)$, which completes the proof of \Cref{prop:power_cost_scaling}.
\end{proof}

\PropApproxFixedRho*

\begin{proof}[\proofof{prop:approx_fixed_rho}]
    We begin by invoking a result adapted from \citet{10.5555/3692070.3694417}. We use a reduction from the enforced sharing game to the exclusive competition model analyzed in \citet{10.5555/3692070.3694417}. Under enforced sharing, the utility of player $i$ is given by \Cref{lem:utility_full_sharing}. It turns out that the 1D exclusive competition in \citet{10.5555/3692070.3694417} is equivalent to our enforced sharing game with the following parameter substitution: $\mu'=\mu(1+\alpha\rho)$, $\tilde{\gamma}=1-\gamma$, $\tilde{\beta}=1$ and $\tilde \alpha = \alpha$.
    
    \citet{10.5555/3692070.3694417} derived asymptotic bounds on the total quality in the PNE of the exclusive competition, which directly translate to bounds for our ESE concept. Specifically, we adapt Theorem 3 from \citet{10.5555/3692070.3694417} to our use case. We stress again that our main results in this paper are novel and require new ideas and techniques beyond those covered in \citet{10.5555/3692070.3694417}. More precisely, our two main points of interest were the sharing decisions of the creators and the allocation rule of the platform, which are novel aspects of our paper. The adapted result is as follows:
    \begin{quote}
        \textbf{Adapted Theorem.} Assume power cost functions $c_i(x_i) = a_i x_i^\theta$ for all $i$. Fix any $\rho \in [0,1]$ and let $\mu' = \mu(1+\alpha\rho)$. For any sufficiently large $n$ and $\mu$, $\norm{\xrho}$ satisfies:
        \[
             \frac{C_\theta}{2\alpha + 2} < \frac{(\norm{\xrho})^{\theta-\gamma}}{\mu' \cdot \norm{\mathbf{a}^{-1}}_{\frac{1}{\theta - 1}}} < C_\theta,
        \]
        where $C_\theta$ is a constant depending only on $\theta$, $\mathbf{a}^{-1} = (a_1^{-1}, \dots, a_n^{-1})$, and $\norm{\mathbf{a}^{-1}}_{\frac{1}{\theta - 1}} = \left( \sum_{i=1}^n (a_i^{-1})^{\frac{1}{\theta - 1}} \right)^{\theta-1}$.
    \end{quote}

    We now proceed to derive the approximation. We rely on the key relationships between $\norm{\xr{\rho}}$ and its lower bound $X_{LB}(\rho)$ established in the theorem above, which guarantees that for any $\rho \in [0,1]$:
    \[
        X_{LB}(\rho) \leq \norm{\xr{\rho}} \leq R \cdot X_{LB}(\rho),
    \]
    where $R = (2\alpha+2)^{1/(\theta-\gamma)}$ and $X_{LB}(\rho) = \left(\frac{C_\theta \mu (1+\alpha\rho) \norm{\mathbf{a}^{-1}}_{\frac{1}{\theta - 1}}}{2\alpha+2}\right)^{\frac{1}{\theta - \gamma}}$. Since the platform's revenue is $U_p(\xr{\rho}, \mathbf{1} ; f_{\rho}) = \frac{\mu\alpha}{1+\alpha}(1-\rho)(\xrho)^\gamma$, these quality bounds imply corresponding revenue bounds:
    \[
        V_{LB}(\rho) \leq U_p(\xr{\rho}, \mathbf{1} ; f_{\rho}) \leq R^\gamma \cdot V_{LB}(\rho),
    \]
    where $V_{LB}(\rho) = \frac{\mu\alpha}{1+\alpha}(1-\rho)(X_{LB}(\rho))^\gamma$.
    
    Next, we derive an approximation by finding the $\rho$ value that maximizes $V_{LB}(\rho)$. Substituting and simplifying:
    \begin{align*}
        V_{LB}(\rho) &= \frac{\mu\alpha}{1+\alpha}(1-\rho) \left[ \left( \frac{C_\theta \mu (1+\alpha\rho) \norm{\mathbf{a}^{-1}}_{\frac{1}{\theta - 1}}}{2\alpha+2} \right)^{\frac{1}{\theta-\gamma}} \right]^\gamma \\
        &= K \cdot (1-\rho)(1+\alpha\rho)^p,
    \end{align*}
    where $K$ is a positive constant and $p = \frac{\gamma}{\theta-\gamma}$. Taking the derivative of $g(\rho) = (1-\rho)(1+\alpha\rho)^p$:
    \begin{align*}
        g'(\rho) &= -(1+\alpha\rho)^p + (1-\rho)p\alpha(1+\alpha\rho)^{p-1} \\
        &= (1+\alpha\rho)^{p-1}[(p\alpha - 1) - \alpha\rho(1+p)].
    \end{align*}
    
    Setting $g'(\rho)=0$ and solving for $\rho$:
    \[
        \rho_{crit} = \frac{p\alpha - 1}{\alpha(1+p)} = \frac{\alpha\gamma +\gamma - \theta}{\alpha \theta}.
    \]
    
    Under our assumption that $(\alpha+1)\gamma > \theta$, we have $\rho_{crit} \in (0,1)$, and the second derivative confirms this critical point maximizes $V_{LB}(\rho)$. We set $\rho_{approx} = \rho_{crit}$.
    
    For sufficiently large $n$ and $\mu$, \Cref{prop:asymptotic_stability_interval} guarantees that $\rho_{approx}$ induces an FSE. This is crucial because it ensures our approximation is feasible within the constraints of the platform's optimization problem.
    
    Now we establish the approximation bound. Denote by $\rho^\star$ an allocation parameter that achieves $\text{OPT}$ (or a very close approximation). We leverage that $\rho_{approx}$ maximizes $V_{LB}(\rho)$, so $V_{LB}(\rho_{approx}) \geq V_{LB}(\rho^\star)$. From our revenue bounds above, $V_{LB}(\rho^\star) > \text{OPT}/R^\gamma$. Combining these:
    \[
        V_{LB}(\rho_{approx}) \geq V_{LB}(\rho^\star) \geq \frac{\text{OPT}}{R^\gamma}.
    \]
    
    Since $U_p(\xr{\rho_{approx}}, \mathbf{1} ; f_{\rho_{approx}}) > V_{LB}(\rho_{approx})$, we have:
    \[
        U_p(\xr{\rho_{approx}}, \mathbf{1} ; f_{\rho_{approx}}) \geq \frac{\text{OPT}}{R^\gamma} = \frac{\text{OPT}}{(2\alpha+2)^{\frac{\gamma}{\theta-\gamma}}}.
    \]
    
    Therefore, $U_p(\xr{\rho_{approx}}, \mathbf{1} ; f_{\rho_{approx}}) \geq (2\alpha+2)^{-\frac{\gamma}{\theta-\gamma}} \cdot \text{OPT}$, establishing that our fixed-parameter approach achieves the claimed approximation ratio.
\end{proof}

\newpage
\section{Omitted Proofs from \Cref{sec:extensions}}\label{sec:extensions_proofs}

In this section, we provide the formal proofs and derivations for the extensions discussed in \Cref{sec:extensions}.

\subsection{Proofs for Multiple Topics}\label{subsec:proofs_multiple_topics}

In this extension, each creator $i$ chooses a quality vector $\x_i = (x_{i,1}, \ldots, x_{i,K})$ and a sharing vector $\s_i = (s_{i,1}, \ldots, s_{i,K})$. The utility function is separable in revenue across topics but coupled in costs:
\[
    U_i(\x, \s; f_\rho) = \sum_{k=1}^K R_{i,k}(\x, \s; f_\rho) - c_i(\x_i).
\]

\begin{proposition}[Extension of \Cref{lem:s_choice_threshold} to Multiple Topics]
    Fix a quality profile $\x$ and sharing decisions for all topics other than $k$, and for all creators other than $i$. The optimal sharing level $s_{i,k}$ for creator $i$ in topic $k$ is given by the threshold policy:
    \[
        s_{i,k} =
        \begin{cases}
            1 &    \rho > \tau_{i,k} \\
            [0,1] &    \rho  = \tau_{i,k} \\
            0 &    \rho < \tau_{i,k} \\
        \end{cases},
        \quad \text{where } \tau_{i,k} = \frac{x_{i,k}}{x_{i,k} + \sum_{j \neq i} x_{j,k} + \alpha \sum_{j \neq i} x_{j,k} s_{j,k}}.
    \]
\end{proposition}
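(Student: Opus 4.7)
The plan is to reduce the multi-topic sharing decision to the single-topic analysis already carried out in \Cref{lem:s_choice_threshold}. The key observation is that although the cost function $c_i(\x_i)$ may couple the creator's quality choices across topics, it is entirely independent of the sharing vector $\s_i$. Moreover, the revenue contribution from each topic $k$, $R_{i,k}(\x, \s; f_\rho)$, depends on the sharing decisions only through the topic-$k$ slice $(s_{1,k}, \ldots, s_{n,k})$. Thus, when we fix the quality profile $\x$ and all sharing decisions other than $s_{i,k}$, the only term in $U_i$ that varies with $s_{i,k}$ is $R_{i,k}$.

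First, I would write out the utility explicitly, noting that
\[
    U_i(\x, \s; f_\rho) = R_{i,k}(\x, \s; f_\rho) + \underbrace{\sum_{k' \neq k} R_{i,k'}(\x, \s; f_\rho) - c_i(\x_i)}_{\text{constant in } s_{i,k}},
\]
so that $\argmax_{s_{i,k} \in [0,1]} U_i = \argmax_{s_{i,k} \in [0,1]} R_{i,k}$. Second, I would observe that $R_{i,k}$ has exactly the same functional form as the revenue portion of the single-topic utility analyzed in \Cref{lem:s_choice_threshold}: the direct-traffic term depends on $x_{i,k}$ and $Q_{\text{AI},k}$, and the share of GenAI-driven revenue depends on $s_{i,k}$ via the proportional allocation $\frac{\rho x_{i,k} s_{i,k}}{\sum_j x_{j,k} s_{j,k}}$. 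The computation of $\partial R_{i,k}/\partial s_{i,k}$ is therefore identical up to substituting topic-$k$ quantities.

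Third, I would invoke the single-topic proof verbatim: the derivative with respect to $s_{i,k}$ has the sign of $\rho(x_{i,k} + \sum_{j \neq i} x_{j,k} + \alpha \sum_{j \neq i} x_{j,k} s_{j,k}) - x_{i,k}$, which yields precisely the threshold $\tau_{i,k}$ stated in the proposition. The three cases ($\rho > \tau_{i,k}$, $\rho = \tau_{i,k}$, $\rho < \tau_{i,k}$) then follow from the monotonicity of $R_{i,k}$ in $s_{i,k}$.

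I do not expect any substantive obstacle: the only subtlety is confirming that the cost coupling across topics is genuinely irrelevant to the sharing decision, which is immediate since $c_i$ is independent of $\s_i$. The derivative calculation is mechanically the same as in \Cref{lem:s_choice_threshold}, so the extension is essentially a pointer to that proof after the separability observation.
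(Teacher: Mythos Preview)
Your proposal is correct and follows essentially the same approach as the paper: both arguments isolate $R_{i,k}$ as the only term in $U_i$ depending on $s_{i,k}$ (using that revenues from other topics and the cost $c_i(\x_i)$ are independent of $s_{i,k}$), then invoke the single-topic derivative computation from \Cref{lem:s_choice_threshold} verbatim to obtain the threshold $\tau_{i,k}$.
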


\begin{proof}
    The utility function $U_i$ separates into topic-specific revenue terms. Crucially, the cost function $c_i(\x_i)$ depends only on quality $\x_i$ and is independent of the sharing decision $\s_i$. Furthermore, the revenue $R_{i,m}$ for any topic $m \neq k$ is independent of $s_{i,k}$. Therefore, maximizing $U_i$ with respect to $s_{i,k}$ is equivalent to maximizing the revenue term $R_{i,k}$ alone. The partial derivative is:
    \[
        \frac{\partial U_i}{\partial s_{i,k}} = \frac{\partial R_{i,k}}{\partial s_{i,k}}.
    \]
    The functional form of $R_{i,k}$ is identical to the single-topic revenue function analyzed in \Cref{lem:s_choice_threshold}, substituting the scalar quality $x_i$ with the topic-specific quality $x_{i,k}$. Following the derivation in the proof of \Cref{lem:s_choice_threshold} (specifically the derivative of Eq. \ref{eq:utility_universal_no_sharing_wrt_si}), we obtain the same condition: $\frac{\partial R_{i,k}}{\partial s_{i,k}} > 0$ if and only if $\rho > \tau_{i,k}$.
\end{proof}

\begin{proposition}[Extension of \Cref{lem:unique-ese} to Multiple Topics]
    For any fixed $\rho$, the subgame $G(\rho)$ under enforced sharing ($\s = \mathbf{1}$) admits a unique ESE.
\end{proposition}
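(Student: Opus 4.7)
The plan is to verify Rosen's conditions for the multi-topic enforced-sharing game and conclude existence and uniqueness by mirroring the structure of the proof of \Cref{lem:unique-ese}. The crucial observation is that under $\s = \mathbf{1}$ the utility decomposes as
\[
U_i(\x,\mathbf{1};f_\rho) \;=\; \sum_{k=1}^K \frac{\mu_k(1+\alpha_k\rho)}{1+\alpha_k}\Bigl(\sum_{j=1}^n x_{j,k}\Bigr)^{\gamma_k-1} x_{i,k} \;-\; c_i(\x_i),
\]
so the revenue piece is separable across topics and each topic-$k$ summand has exactly the single-topic structure analyzed in \Cref{lem:unique-ese}. The only coupling across topics is through the cost $c_i(\x_i)$.

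First, I would establish a compact, convex effective strategy space by proving a per-topic analog of \Cref{lem:rational_quality_bound}: For each $k$, there is a bound $x_{i,k}^{\max}$ such that any rational choice satisfies $x_{i,k}\le x_{i,k}^{\max}$. Because the cost is non-decreasing in each coordinate, it suffices to compare the topic-$k$ revenue (bounded exactly as in the single-topic argument) against the cost incurred by playing in topic $k$ alone. Next, I would verify strict concavity of $U_i$ in $\x_i$: The revenue piece is concave in each $x_{i,k}$ by the Hessian calculation in the proof of \Cref{lem:unique-ese} and is separable across topics, hence concave in $\x_i$; the cost $c_i$ is strictly convex; and their difference is therefore strictly concave in $\x_i$.

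Then I would verify diagonal strict monotonicity. For any distinct $\x,\x'$ in the effective strategy space, the pseudo-gradient splits into a revenue part and a cost part. For each topic $k$, applying the computation in the proof of \Cref{lem:unique-ese}---and in particular the quadratic-form bound from \Cref{lem:quadratic_form_helper}---to the $k$-th coordinates yields
\[
\sum_{i=1}^n \bigl(A_{i,k}(\x') - A_{i,k}(\x)\bigr)\bigl(x'_{i,k} - x_{i,k}\bigr) \;\le\; 0,
\]
where $A_{i,k}$ denotes the revenue gradient in topic $k$. Summing over $k$ preserves non-positivity. For the cost part, strict convexity of $c_i$ on $[0,\infty)^K$ gives
\[
\sum_{i=1}^n \bigl(\nabla c_i(\x'_i) - \nabla c_i(\x_i)\bigr)^{\!\top}\bigl(\x'_i - \x_i\bigr) \;>\; 0
\]
whenever $\x\neq\x'$, since at least one creator $i$ has $\x_i\neq\x'_i$ and strict convexity of the restriction of $c_i$ to the segment $[\x_i,\x'_i]$ makes that creator's summand strictly positive. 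Combining the two pieces yields strict diagonal monotonicity, and Rosen's theorem then delivers a unique PNE of the enforced-sharing subgame, i.e., a unique ESE.

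The main obstacle is handling the cross-topic cost coupling in the strict-monotonicity step: Unlike the single-topic setting, where strict convexity of a univariate cost suffices, here we must rely on strict convexity of $c_i$ as a multivariate function in order to conclude a strict gradient inequality whenever $\x_i \ne \x'_i$. A minor subtlety also arises in the rational quality bound, where coupled costs can be absorbed by bounding $c_i(\x_i)$ from below using only the active coordinate, but this requires $c_i$ to be non-decreasing in the remaining coordinates---an assumption consistent with the monotonicity imposed on single-topic costs.
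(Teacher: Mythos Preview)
Your proposal is correct and follows essentially the same approach as the paper: decompose the pseudo-gradient into a revenue part that is separable across topics (so each topic inherits the monotonicity from \Cref{lem:unique-ese}, and summing preserves it) and a cost part whose strict monotonicity follows from multivariate strict convexity of $c_i$, then invoke Rosen's theorem on a compact convex effective strategy space. Your discussion of the cross-topic cost coupling and the per-topic rational bound is in fact more explicit than the paper's treatment, which simply asserts compactness via ``rational quality limits'' and strict monotonicity of $\nabla c_i$ without elaboration.
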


\begin{proof}
    We show that the game satisfies the Diagonally Strictly Concave (DSC) condition (or equivalently, that the gradient operator of the game is strictly monotone). The combined utility gradient vector for all players can be written as $\mathbf{g}(\x) = \nabla_{\x} \mathbf{R}(\x) - \nabla_{\x} \mathbf{C}(\x)$, where $\mathbf{R}$ represents the vector of revenues and $\mathbf{C}$ represents the vector of costs.
    
    First, consider the revenue component. Since total revenue is the sum of independent revenues from topics $k=1 \dots K$, the Jacobian of the revenue vector is block-diagonal, where each block corresponds to a single topic $k$. As established in the proof of \Cref{lem:unique-ese}, the single-topic revenue game is monotone (i.e., satisfies the negative definite property for the symmetric part of the Jacobian). The direct sum of monotone mappings is itself monotone. Thus, $\sum_{i=1}^n (\nabla_{\x_i} R_i(\x') - \nabla_{\x_i} R_i(\x))^T (\x_i' - \x_i) \leq 0$.
    
    Second, consider the cost component. We assume $c_i(\x_i)$ is strictly convex. By definition of strict convexity, its gradient $\nabla c_i$ is strictly monotone:
    \[
        (\nabla c_i(\x_i') - \nabla c_i(\x_i))^T (\x_i' - \x_i) > 0 \quad \text{for } \x_i' \neq \x_i.
    \]
    Even if $c_i$ is not separable across topics (i.e., cross-derivatives $\frac{\partial^2 c_i}{\partial x_{i,k} \partial x_{i,m}} \neq 0$), strict convexity ensures the Hessian is positive definite.
    
    Combining these, the game gradient $\mathbf{g}(\x)$ is strictly monotone. Along with the convexity and compactness of the strategy space (bounded by rational quality limits), this guarantees a unique equilibrium via Rosen's theorem~\cite{rosen1965existence}.
\end{proof}

\begin{theorem}[Extension of \Cref{thm:fse-existence} to Multiple Topics]
    There exist topic-specific bounds $x_{i,k}^{\max}$ such that if $\rho > \max_{i,k} \frac{x_{i,k}^{\max}}{x_{i,k}^{\max} + (1+\alpha) \norm{\mathcal{X}_{-i, k}(\rho)}}$, the unique ESE is also the unique FSE.
\end{theorem}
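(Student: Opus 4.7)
The plan is to mirror the three-step structure of the proof of \Cref{thm:fse-existence}, adapted topic-by-topic. First, I would establish per-topic rational quality bounds $x_{i,k}^{\max}$ that any profitable deviation must respect. Second, I would invoke the multi-topic analogue of \Cref{lem:s_choice_threshold}: for any fixed profile of other creators, the optimal $s_{i,k}$ depends only on topic-$k$ quantities, and under enforced sharing by others the threshold $\tau_{i,k}(x_{i,k}) = \frac{x_{i,k}}{x_{i,k}+(1+\alpha)\norm{\mathcal{X}_{-i,k}(\rho)}}$ is strictly increasing in $x_{i,k}$. Third, I would invoke uniqueness of the ESE in the multi-topic enforced sharing game to rule out any beneficial joint deviation.

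The main obstacle is the first step, since $c_i(\mathbf{x}_i)$ may couple topics and prevent a naive per-topic application of \Cref{lem:rational_quality_bound}. For separable costs $c_i(\mathbf{x}_i) = \sum_k c_{i,k}(x_{i,k})$, the argument is immediate: each topic's revenue still admits the uniform upper bound $R_{i,k}(\mathbf{x}',\mathbf{s}') \le \mu_k(x'_{i,k})^{\gamma_k}$ proved exactly as in \Cref{lem:rational_quality_bound} (the bound only uses the shape of $R_{i,k}$ and topic-$k$ quantities), and matching this against $c_{i,k}(x'_{i,k})$ yields $x_{i,k}^{\max}$ as the unique positive root of $c_{i,k}(z) = \mu_k z^{\gamma_k}$. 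For coupled costs, I would proceed by a global compactness argument: non-negativity of utility at a rational deviation requires $c_i(\mathbf{x}'_i) \le \sum_k \mu_k(x'_{i,k})^{\gamma_k}$, and strict convexity of $c_i$ together with the marginal condition $\partial c_i(\mathbf{x}_i)/\partial x_{i,k} \to \infty$ as $x_{i,k} \to \infty$ (the natural multi-topic analogue of the single-variable assumption) forces the set of rational profiles $\mathbf{x}'_i$ to be contained in a compact box; projecting this box onto each coordinate axis produces the desired $x_{i,k}^{\max}$.

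With these bounds in hand, the remaining steps follow the single-topic blueprint. Fix any creator $i$ and any candidate deviation $(\mathbf{x}_i', \mathbf{s}_i')$ from $(\mathcal{X}_i(\rho),\mathbf{1})$; rationality forces $x'_{i,k} \le x_{i,k}^{\max}$ for every $k$. Monotonicity of $\tau_{i,k}$ combined with the hypothesized bound on $\rho$ then gives $\rho > \tau_{i,k}(x'_{i,k})$ for every topic, so the multi-topic extension of \Cref{lem:s_choice_threshold} implies $s'_{i,k} = 1$ is the unique best response in every topic. Hence any putative profitable deviation must satisfy $\mathbf{s}_i' = \mathbf{1}$ and therefore reduces to a pure quality deviation within the enforced sharing game. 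Uniqueness of the ESE, established via the multi-topic extension of \Cref{lem:unique-ese}, rules out any such deviation, and we conclude that $(\mathcal{X}(\rho),\mathbf{1})$ is the unique FSE of the multi-topic subgame $G(\rho)$.
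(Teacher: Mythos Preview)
Your proposal is correct and follows essentially the same approach as the paper's proof sketch: define $x_{i,k}^{\max}$ as the coordinate-wise supremum over the set of rational (non-negative-utility) quality vectors, use monotonicity of $\tau_{i,k}$ in $x_{i,k}$ to conclude that the hypothesis on $\rho$ makes $s_{i,k}=1$ strictly dominant for every topic, and then appeal to ESE uniqueness. If anything, your treatment of the coupled-cost case (via the global inequality $c_i(\mathbf{x}'_i) \le \sum_k \mu_k(x'_{i,k})^{\gamma_k}$ and convexity/superlinearity) is more explicit than the paper's, which simply asserts boundedness of the rational set.
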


\begin{proof}[Proof Sketch]
    We generalize the concept of "rational quality bound" from \Cref{lem:rational_quality_bound}. Since $c_i(\x_i)$ is strictly convex and superlinear, the set of strategies yielding non-negative utility is bounded. Let $S_i$ be the set of rational strategies for player $i$. We define $x_{i,k}^{\max} = \sup \{ x_{i,k} \mid \x_i \in S_i \}$. Because $\tau_{i,k}$ is increasing in $x_{i,k}$ (analogous to the single-topic case), the condition on $\rho$ ensures that for every rational quality profile $\x$, $\rho > \tau_{i,k}(x_{i,k})$. This makes $s_{i,k}=1$ a strictly dominant strategy for every topic $k$ and every creator $i$. Since full sharing is dominant, the unique equilibrium must be the unique ESE of the enforced sharing game.
\end{proof}

\subsection{Proofs for Prior GenAI Data}\label{subsec:proofs_prior_data}

In this extension, we introduce a non-strategic creator (indexed by 0) who provides a fixed body of content $x_0 > 0$ that is always shared ($s_0 = 1$). This modifies the GenAI quality function to $Q_{\text{AI}}(\x, \s) = \alpha \left( x_0 + \sum_{j=1}^n x_j s_j \right)$.

\begin{proposition}[Extension of \Cref{lem:s_choice_threshold} with Prior Data]
    Fix a creator $i$ with quality $x_i > 0$ and a profile $(\x_{-i}, \s_{-i})$. The optimal sharing level $s_i$ is determined by the threshold:
    \[
        \tau_i = \frac{x_i}{x_i + \norm{\x_{-i}} + \alpha \x^{\top}_{-i} \s_{-i} + (1+\alpha)x_0}.
    \]
\end{proposition}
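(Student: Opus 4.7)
The plan is to mirror the structure of the proof of \Cref{lem:s_choice_threshold} with the enlarged GenAI quality $\qai(\x,\s) = \alpha(x_0 + \x^\top \s)$. The crucial observation that makes the proof essentially a one-line modification is that the proportional allocation mechanism still has denominator $\stot' \coloneqq x_0 + \x^\top \s$ (where creator $0$ is treated as a regular sharer with $s_0 = 1$), and this quantity appears multiplicatively in both $\qai$ and $f_{i,\rho}$. Therefore the product $\qai \cdot f_{i,\rho}$ collapses to $\alpha \rho\, x_i s_i$, exactly as in the original proof, and the prior-data term $x_0$ survives only inside the total denominator $D(s_i) = (x_0 + x_i + \xtotmi) + \alpha(x_0 + x_i s_i + \mathbf{x}_{-i}^\top \mathbf{s}_{-i})$.

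First, I would write out $U_i(s_i)$ explicitly, use the cancellation above to reduce it to $U_i(s_i) = T \cdot \frac{x_i(1+\alpha\rho s_i)}{D(s_i)} - c_i(x_i)$, where $T = \mu(x_0 + x_i + \xtotmi)^\gamma$ is independent of $s_i$. Next, since $\partial_{s_i} D(s_i) = \alpha x_i$, differentiating by the quotient rule yields
\[
\frac{\partial U_i}{\partial s_i} = T \cdot \frac{\alpha x_i}{D(s_i)^2}\bigl[\rho D(s_i) - x_i(1 + \alpha \rho s_i)\bigr].
\]
The sign is determined by the bracketed term, and I would simplify it by expanding $D(s_i)$ and cancelling the $\alpha \rho x_i s_i$ contributions. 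What remains is $\rho\bigl[x_i + \xtotmi + \alpha\, \mathbf{x}_{-i}^\top \mathbf{s}_{-i} + (1+\alpha)x_0\bigr] - x_i$, which gives the claimed threshold $\tau_i$ after dividing through.

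Finally, the three-case characterization follows verbatim from the original argument: the derivative is strictly positive (resp. strictly negative, resp. zero) in $s_i$ precisely when $\rho > \tau_i$ (resp. $<$, resp. $=$), so $U_i$ is monotone in $s_i$ and the optimum is attained at $s_i = 1$, $s_i = 0$, or any point in $[0,1]$ respectively. There is no genuine obstacle here; the only place to be careful is the bookkeeping in step two, where one must verify that the $x_0$ contribution to $D(s_i)$ combines with the $\alpha x_0$ contribution from $\qai$ into the $(1+\alpha)x_0$ factor appearing in the threshold, rather than being inadvertently absorbed into $\xtotmi$ or $\mathbf{x}_{-i}^\top \mathbf{s}_{-i}$.
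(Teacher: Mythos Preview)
Your proposal is correct and follows essentially the same approach as the paper. The paper's own proof is a one-liner: it observes that $x_0$ behaves exactly as an additional creator with fixed quality $x_0$ and $s_0 = 1$, so the original \Cref{lem:s_choice_threshold} applies directly with $\xtotmi \mapsto \xtotmi + x_0$ and $\mathbf{x}_{-i}^\top \mathbf{s}_{-i} \mapsto \mathbf{x}_{-i}^\top \mathbf{s}_{-i} + x_0$, which yields the $(1+\alpha)x_0$ term. You carry out explicitly the computation that this reduction implies, which is a valid (and more self-contained) way to arrive at the same threshold.
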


\begin{proof}
    Since $x_0$ is fixed in the eyes of each creator $i$, it can be thought of as an additional creator who chooses $s = 1$. Thus, the threshold $\tau_i$ is adapted directly to the stated form.
    
\end{proof}

\begin{proposition}[Robustness of \Cref{lem:unique-ese}]
    The introduction of fixed prior data $x_0$ does not violate the conditions for the uniqueness of the ESE.
\end{proposition}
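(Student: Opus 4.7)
The plan is to reduce to the proof of Lemma~\ref{lem:unique-ese} by exploiting the fact that, under enforced sharing, the fixed prior data enters the utility function only through the shifted aggregate $Y \coloneqq x_0 + \xtot$. First I would compute the utility in closed form: since all strategic creators share and $s_0=1$, the effective shared content is $Y$, the GenAI quality is $\alpha Y$, and the total quality on the platform is $(1+\alpha)Y$. Plugging these into the revenue and allocation expressions, every $Y$-factor in the direct-traffic and AI-share terms collapses cleanly and yields
\[
U_i(\x, \mathbf{1}; f_\rho) \;=\; \frac{\mu(1+\alpha\rho)}{1+\alpha}\,Y^{\gamma-1} x_i - c_i(x_i),
\]
which is structurally identical to the expression in Lemma~\ref{lem:utility_full_sharing}, with $\xtot$ replaced everywhere by $Y$.

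Given this structural parallel, I would re-run the two ingredients Rosen's theorem requires. For strict concavity in the own action, a direct computation yields
\[
\frac{\partial^2 U_i}{\partial x_i^2} \;=\; \frac{\mu(1+\alpha\rho)}{1+\alpha}(\gamma-1)\,Y^{\gamma-3}\bigl[\gamma x_i + 2(x_0+\xtotmi)\bigr] - c_i''(x_i),
\]
which is strictly negative because $\gamma\le 1$ makes the first term non-positive (with $x_0>0$ ensuring $Y>0$), while strict convexity of $c_i$ contributes $-c_i''(x_i)<0$. For monotonicity of the joint gradient operator, I would repeat the cross-partial computation from the proof of Lemma~\ref{lem:unique-ese} verbatim, with every occurrence of $\xtot$ and $\xtotmi$ replaced by $Y$ and $Y-x_i$ respectively. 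The multivariable mean-value-theorem reduction then expresses each difference $A_i(\x')-A_i(\x)$ in terms of $\mathbf{d}=\x'-\x$, and summing over $i$ produces a quadratic form of the same signed structure as in the original proof but evaluated at the non-negative vector $(x_0,\hat x_1,\ldots,\hat x_n)$.

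The one point that requires care is the invocation of Lemma~\ref{lem:quadratic_form_helper}, since the helper is stated for quality and deviation vectors of equal dimension. The clean fix is to augment the deviation vector by a zero in the direction of creator $0$ (who does not move) and to apply the lemma to the resulting $(n+1)$-dimensional pair; because the lemma only needs non-negative entries in the quality vector and arbitrary reals in the deviation vector, the extra zero entry contributes nothing to $\mathcal{Q}_\gamma$ and the extra non-negative entry $x_0$ only strengthens the non-negativity. Combined with strict convexity of the costs, this yields strict monotonicity of the joint gradient. Together with compactness of the (suitably redefined) rational strategy space $\prod_i[0,x_i^{\max}]$ and the own-concavity above, Rosen's theorem then delivers existence and uniqueness of the ESE.

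The main obstacle is almost purely bookkeeping: tracking that every appearance of $\xtot$ in the original argument becomes $Y=x_0+\xtot$ without flipping the sign of any key expression. The only conceptual point beyond substitution is the dimension mismatch in applying Lemma~\ref{lem:quadratic_form_helper}, which is resolved by the zero-padding trick described above. Once these adjustments are made, every inequality from the proof of Lemma~\ref{lem:unique-ese} carries through because each relied solely on the non-negativity of the aggregate quality and on $\gamma\le 1$, both of which remain intact when $x_0>0$ is added.
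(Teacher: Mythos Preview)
Your proposal is correct and follows essentially the same route as the paper: both derive the shifted utility $\frac{\mu(1+\alpha\rho)}{1+\alpha}Y^{\gamma-1}x_i - c_i(x_i)$ with $Y=x_0+\xtot$, verify own-action strict concavity by direct second-derivative computation, and argue that the monotonicity step from Lemma~\ref{lem:unique-ese} survives because Lemma~\ref{lem:quadratic_form_helper} still applies once $x_0$ is absorbed into the quality vector. Your zero-padding device---embedding the problem as an $(n+1)$-player game with $d_0=0$ so that Lemma~\ref{lem:quadratic_form_helper} applies verbatim---is in fact more precise than the paper's informal remark that $x_0$ ``acts as a shift in the operating point $\hat{\x}$,'' but the underlying argument is identical.
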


\begin{proof}
    In the enforced sharing game ($\s=\mathbf{1}$), the utility function becomes:
    \[
        U_i(\x) = \frac{\mu(1+\alpha\rho)}{1+\alpha} (\xtot + x_0)^{\gamma-1} x_i - c_i(x_i).
    \]

    Let $X_{tot} = \xtot + x_0$. The revenue term is $R_i(\x) = C \cdot (X_{tot})^{\gamma-1} x_i$.
    The second derivative of revenue with respect to $x_i$ is:
    \[
        \frac{\partial^2 R_i}{\partial x_i^2} = C (\gamma-1) (X_{tot})^{\gamma-3} [ \gamma x_i + 2(X_{tot} - x_i) ].
    \]
    Since $X_{tot} > x_i$ (as $x_0 > 0$ and other creators exist), the term in the brackets is positive. With $\gamma \le 1$, the second derivative remains negative, preserving concavity.
    
    The logic for the monotonicity of the gradient vector (Step 4 in \Cref{lem:unique-ese}) relies on the structure of the quadratic form $\mathcal{Q}_\gamma$. Adding a positive constant $x_0$ to the total quality term $X_{tot}$ does not alter the sign of the quadratic form, as $x_0$ effectively acts as a shift in the operating point $\hat{\x}$. Since $\mathcal{Q}_\gamma \ge 0$ holds for any non-negative vector, shifting the vector by a positive constant preserves the inequality. Thus, the game remains monotone, and the ESE remains unique.
\end{proof}

\subsection{Proofs for Alternative Optimization Objectives}\label{subsec:proofs_alternative_objectives}

In this subsection, the platform aims to maximize
\[
    V(\rho)=f(\xr{\rho},\rho)
\]
over $\rho\in[0,1]$ subject to the FSE constraint. We assume $f$ is Lipschitz on $\mathcal{D}\times[0,1]$, where $\mathcal{D}$ contains all profiles examined by the algorithm, with constants $L_f$ (w.r.t.\ $\ell_2$ in $\x$) and $L_\rho$ (w.r.t.\ $|\cdot|$ in $\rho$). Then for any $\norm{\x-\x'}_2\le\delta$ and $\abs{\rho-\rho'} \leq \delta$,
\[
    |f(\x,\rho)-f(\x',\rho')|\le (L_f+L_\rho)\delta.
\]

Replacing \Cref{lem:bounded_revenue_difference} with this bound, the proof of \Cref{thm:alg_main} extends directly. \Cref{lem:transfer_of_approx_fse} remains unchanged, while the objective discretization error becomes at most $(L_f+L_\rho)(1+L_{\text{ESE}})\delta$, where $L_{\text{ESE}}=B(0,1/\beta,0.5,1,2-\gamma)$ is the Lipschitz constant of $\xr{\cdot}$ from \Cref{lem:vector_ese_lipschitz}. Consequently, choosing $\delta=\varepsilon/A'$ with
\[
    A'=\max\Big\{A_{\text{stability}},\;4(L_f+L_\rho)(1+L_{\text{ESE}})\Big\}
\]
yields the same $\varepsilon$-approximation and runtime guarantees as in \Cref{thm:alg_main}, with constants depending on $L_f$ and $L_\rho$.

\subsection{Proofs for GenAI-Dependent Traffic}\label{subsec:proofs_genai_traffic}

In this extension, the user traffic $T(\x, \s)$ depends on the total quality of both human and AI content: $T(\x, \s) = \mu \left( \xtot + Q_{\text{AI}}(\x, \s) \right)^\gamma$.

\begin{proposition}[Threshold Derivation for GenAI-Dependent Traffic]
    Fix an arbitrary creator $i$ with a content $x_i >0$, and a profile $(\mathbf x_{-i}, \s_{-i})$. Further, define $\tau_i$ as 
    \[
        \tau_i = \frac{1}{\alpha} \left[ \left( \frac{\xtot + \alpha \left(x_i + \sum_{j\neq i}^n x_j s_j\right)}{\xtot + \alpha \sum_{j \neq i} x_j s_j} \right)^{1-\gamma} - 1 \right].
    \]
    The optimal sharing level $s$ of creator $i$ w.r.t. $(x_i, \mathbf x_{-i}, \mathbf s_{-i})$, i.e., $s\in \argmax_{s\in [0,1]} U_i ((x_i,\mathbf x_{-i}),(s, \s_{-i}); f_{\rho}) $ is given by
    \begin{equation}
        s^\star =
        \begin{cases}
            1 &    \rho > \tau_i \\
            \textnormal{any number from } [0,1] &    \rho  = \tau_i \\
            0 &    \rho < \tau_i \\
        \end{cases}.
    \end{equation}
\end{proposition}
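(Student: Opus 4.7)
The plan is to adapt the argument of \Cref{lem:s_choice_threshold} to the modified utility that now incorporates $\qai$ in the traffic term. Assume first that $x_i s_i + \x_{-i}^\top \s_{-i} > 0$, so all ratios are well-defined. Letting $W(s_i) \coloneqq \xtot + \qai(\x,\s) = \xtot + \alpha x_i s_i + \alpha \x_{-i}^\top \s_{-i}$, both revenue sources for creator $i$ share $W(s_i)$ as their denominator, and substituting $\qai = \alpha(x_i s_i + \x_{-i}^\top \s_{-i})$ into the proportional allocation rule makes the factor $x_i s_i + \x_{-i}^\top \s_{-i}$ cancel. After this simplification the utility collapses to
\[
U_i(s_i) \;=\; \mu\, W(s_i)^{\gamma - 1}\, x_i\, (1 + \alpha \rho s_i) \;-\; c_i(x_i).
\]

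Next, I would differentiate and verify that $U_i'(s_i)$ factors as $\mu \alpha x_i\, W(s_i)^{\gamma - 2}\, B(s_i)$, where
\[
B(s_i) \;=\; (\gamma - 1)\, x_i \;+\; \rho\bigl(\gamma \alpha x_i s_i + \xtot + \alpha \x_{-i}^\top \s_{-i}\bigr)
\]
is linear in $s_i$ with non-negative slope $\rho \gamma \alpha x_i$. Since $W(s_i)^{\gamma-2} > 0$, the sign of $U_i'$ coincides with the sign of $B$, and because $B$ is monotone non-decreasing on $[0,1]$ it changes sign at most once---and only from negative to positive. Therefore $U_i$ is either monotone on $[0,1]$ or first decreasing then increasing with a unique interior minimum; in every case the maximizer lies at an endpoint of $[0,1]$.

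Having reduced the problem to the two endpoints, the final step is a direct comparison
\[
U_i(1) - U_i(0) \;=\; \mu x_i\bigl[W(1)^{\gamma-1}(1+\alpha \rho) - W(0)^{\gamma-1}\bigr],
\]
whose sign agrees with the sign of $(1 + \alpha \rho) - (W(1)/W(0))^{1-\gamma}$. Solving the induced inequality for $\rho$ yields precisely the claimed threshold $\tau_i$, and the three cases of the proposition follow. To finish, I would handle the degenerate case $s_i = 0$ with $\x_{-i}^\top \s_{-i} = 0$ by direct computation, noting that $U_i(0) = \mu \xtot^{\gamma - 1} x_i - c_i(x_i)$ agrees with the limit of the simplified formula, so that the threshold characterization extends continuously.

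The main obstacle I expect is the loss of pointwise monotonicity of $U_i(s_i)$: unlike the baseline model, where $U_i'$ has a sign that is constant across $[0,1]$, here the derivative can flip, so one cannot read off the argmax from monotonicity alone. The argument has to lean on the linearity (and monotonic slope) of $B(s_i)$ to confine the maximum to $\{0,1\}$, after which the endpoint comparison produces $\tau_i$. A minor subtlety is that at $\rho = \tau_i$ the ``any number in $[0,1]$'' clause is literally correct only when $U_i$ is monotone---and hence constant, since the endpoints agree---whereas when an interior minimum is active only $s_i \in \{0,1\}$ maximize $U_i$. This nuance could be either glossed over or stated as a mild clarification of the proposition.
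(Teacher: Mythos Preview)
Your proposal is correct and follows essentially the same approach as the paper: simplify the utility to $\mu x_i\,W(s_i)^{\gamma-1}(1+\alpha\rho s_i)-c_i(x_i)$, observe that the sign of $U_i'(s_i)$ is governed by a factor linear in $s_i$ with nonnegative slope (forcing the maximum to an endpoint), and then compare $U_i(0)$ with $U_i(1)$ to extract $\tau_i$. Your observation about the $\rho=\tau_i$ case---that only $\{0,1\}$ need be optimal when an interior minimum is present---is a genuine refinement that the paper's proof does not address.
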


\begin{proof}
    Let $D(s_i) = \xtot + Q_{\text{AI}}(\x, (s_i, \s_{-i}))$. Note that $D(s_i) = A + \alpha x_i s_i$, where $A = \xtot + \alpha \sum_{j \neq i} x_j s_j$ is independent of $s_i$.
    
    The utility of creator $i$ as a function of the continuous variable $s_i \in [0,1]$ is:
    \[
        U_i(s_i) = \mu (D(s_i))^\gamma \cdot \frac{x_i (1 + \alpha \rho s_i)}{D(s_i)} - c_i(x_i) = \mu x_i (D(s_i))^{\gamma-1} (1 + \alpha \rho s_i) - c_i(x_i).
    \]
    Differentiating $U_i$ with respect to $s_i$:
    \begin{align*}
        \frac{\partial U_i}{\partial s_i} &= \mu x_i \left[ (\gamma-1)(D(s_i))^{\gamma-2} (\alpha x_i) (1 + \alpha \rho s_i) + (D(s_i))^{\gamma-1} (\alpha \rho) \right] \\
        &= \mu x_i \alpha (D(s_i))^{\gamma-2} \left[ (\gamma-1) x_i (1 + \alpha \rho s_i) + \rho D(s_i) \right].
    \end{align*}
    Substituting $D(s_i) = A + \alpha x_i s_i$ into the bracketed term:
    \begin{align*}
        [\dots] &= (\gamma-1)x_i + (\gamma-1)\alpha \rho x_i s_i + \rho A + \rho \alpha x_i s_i \\
        &= \rho A - (1-\gamma)x_i + \gamma \alpha \rho x_i s_i.
    \end{align*}
    Since $\gamma, \alpha, \rho, x_i > 0$, the term $\gamma \alpha \rho x_i s_i$ is strictly increasing in $s_i$. This implies that $\frac{\partial U_i}{\partial s_i}$ is strictly increasing with respect to $s_i$. Consequently, the maximum of $U_i(s_i)$ on $[0,1]$ must occur at one of the boundaries, $s_i=0$ or $s_i=1$.
    
    Creator $i$ chooses $s_i=1$ if and only if $U_i(1) > U_i(0)$. Using the utility expression:
    \begin{align*}
        \mu x_i (D(1))^{\gamma-1} (1 + \alpha \rho) &> \mu x_i (D(0))^{\gamma-1} \\
        1 + \alpha \rho &> \left( \frac{D(0)}{D(1)} \right)^{\gamma-1} = \left( \frac{D(1)}{D(0)} \right)^{1-\gamma}.
    \end{align*}
    Solving for $\rho$, we obtain the threshold condition from the proposition statement:
    \[
        \rho > \frac{1}{\alpha} \left[ \left( \frac{D(1)}{D(0)} \right)^{1-\gamma} - 1 \right] = \tau_i.
    \]
\end{proof}

\begin{proposition}[ESE Uniqueness under GenAI-Dependent Traffic]
    For any $\rho \in [0,1]$, the subgame $G(\rho)$ under enforced sharing ($\s=\mathbf{1}$) admits a unique ESE.
\end{proposition}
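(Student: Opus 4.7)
The plan is to observe that the GenAI-dependent traffic model collapses, under enforced sharing, to a rescaled version of the baseline model, and then invoke \Cref{lem:unique-ese} directly. I would not redo any of the monotonicity or concavity analysis; it has already been done.

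First, I would compute $Q_{\text{AI}}(\x, \mathbf{1}) = \alpha \xtot$, so that the extended traffic function becomes
\[
T(\x, \mathbf{1}) \;=\; \mu \bigl(\xtot + \alpha \xtot\bigr)^\gamma \;=\; \mu (1+\alpha)^\gamma \, \xtot^\gamma.
\]
Substituting this, together with $\stot = \xtot$ and $f_{i,\rho}(\x,\mathbf{1}) = \rho x_i/\xtot$, into the utility expression and mimicking the calculation in \Cref{lem:utility_full_sharing}, the cross-terms in the denominator $\xtot + Q_{\text{AI}}$ cancel exactly as in the baseline, yielding
\[
U_i(\x, \mathbf{1}; f_\rho) \;=\; \frac{\mu'(1+\alpha\rho)}{1+\alpha}\, \xtot^{\gamma-1}\, x_i - c_i(x_i),
\qquad\text{where } \mu' \coloneqq \mu(1+\alpha)^\gamma.
\]

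This is structurally identical to the utility form analyzed in the proof of \Cref{lem:unique-ese}; only the positive multiplicative constant $\mu$ has been replaced by $\mu' > 0$. Since the proof of \Cref{lem:unique-ese} depends on $\mu$ only through its sign (it appears inside the constant $K = \mu'(1+\alpha\rho)/(1+\alpha) > 0$), every step carries through verbatim: the utility is strictly concave in $x_i$ (via the sign of $\partial^2 R_i/\partial x_i^2$ governed by $\gamma-1 \leq 0$), the rational strategy space remains compact and convex (with $x_i^{\max}$ redefined as the unique positive root of $c_i(x) = \mu' x^\gamma$), and the same quadratic form $\mathcal Q_\gamma$ from \Cref{lem:quadratic_form_helper} delivers strict diagonal concavity of the pseudo-gradient. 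Rosen's theorem then yields existence and uniqueness of the ESE.

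There is no substantive obstacle here; the only thing to verify carefully is that the cancellation in the utility genuinely eliminates the GenAI-dependent term in the denominator of the traffic apportionment, so that no residual nonlinearity in $\x$ is introduced beyond the $\xtot^{\gamma-1} x_i$ factor. Once this algebraic reduction is in hand, the proposition follows as an immediate corollary of \Cref{lem:unique-ese} applied to the instance with baseline traffic parameter $\mu'$ in place of $\mu$.
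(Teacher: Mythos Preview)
Your proposal is correct and follows essentially the same approach as the paper: both reduce the enforced-sharing utility under GenAI-dependent traffic to the baseline form with a rescaled traffic parameter $\mu' = \mu(1+\alpha)^\gamma$, then note that the concavity and monotonicity arguments of \Cref{lem:unique-ese} depend on $\mu$ only through its positivity and therefore carry over verbatim.
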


\begin{proof}
    Under enforced sharing, the total quality is $D(\mathbf{1}) = \xtot + \alpha \xtot = (1+\alpha) \xtot$.
    The traffic function simplifies to:
    \[
        T(\x, \mathbf{1}) = \mu \left( (1+\alpha) \xtot \right)^\gamma = \mu (1+\alpha)^\gamma \xtot^\gamma.
    \]
    The utility function for creator $i$ becomes:
    \begin{align*}
        U_i(\x, \mathbf{1}) &= T(\x, \mathbf{1}) \frac{x_i (1+\alpha \rho)}{D(\mathbf{1})} - c_i(x_i) \\
        &= \mu (1+\alpha)^\gamma \xtot^\gamma \cdot \frac{x_i (1+\alpha \rho)}{(1+\alpha)\xtot} - c_i(x_i) \\
        &= \mu (1+\alpha)^{\gamma-1} (1+\alpha \rho) \xtot^{\gamma-1} x_i - c_i(x_i).
    \end{align*}
    Let $\tilde{\mu} = \mu (1+\alpha)^{\gamma-1} (1+\alpha) = \mu (1+\alpha)^\gamma$. The utility takes the form $\frac{\tilde{\mu}(1+\alpha \rho)}{1+\alpha} \xtot^{\gamma-1} x_i - c_i(x_i)$.
    
    This expression has the same form as the utility function in the baseline model (\Cref{lem:utility_full_sharing}), differing only by the constant scalar $\tilde{\mu}$ instead of $\mu$. Since the scaling of the revenue term by a positive constant does not affect the strict concavity of the utility or the monotonicity of the game gradient, the existence and uniqueness of the ESE are preserved.
\end{proof}

\subsection{Proofs for GenAI Quality Beyond Linearity}\label{subsec:proofs_genai_quality_beta}

In this extension, the GenAI quality is given by $Q_{\text{AI}}(\x, \s) = \alpha \left( \sum_{i=1}^n x_i s_i \right)^\beta$ with $\beta \in (0,1]$. 

\begin{proposition}[Sharing Incentive under Universal Withholding]
    Fix a creator $i$ and a quality profile $\x$. Assume no other creator shares content ($\s_{-i} = \mathbf{0}$). Then, the utility of creator $i$ is strictly increasing in $s_i$ on the interval $(0, 1]$ if $\rho > \frac{x_i}{\xtot}$, and strictly decreasing if $\rho < \frac{x_i}{\xtot}$. Consequently, creator $i$ has an incentive to deviate to full sharing ($s_i=1$) if and only if $\rho > \frac{x_i}{\xtot}$.
\end{proposition}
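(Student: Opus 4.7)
The plan is to compute the derivative of $U_i$ with respect to $s_i$ under the assumption $\s_{-i} = \mathbf{0}$, show that its sign is governed entirely by $\rho \xtot - x_i$, and then translate monotonicity into the claim about the deviation $s_i = 1$ versus $s_i = 0$.

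First I would simplify the utility. Since $\s_{-i} = \mathbf{0}$, the aggregate shared content is $\stot = x_i s_i$, so $\qai = \alpha (x_i s_i)^{\beta}$, and when $s_i > 0$ the proportional rule gives $f_{i,\rho}(\x,\s) = \rho \cdot \frac{x_i s_i}{x_i s_i} = \rho$ (while $f_{i,\rho} = 0$ when $s_i = 0$). Writing $a \coloneqq \alpha x_i^{\beta}$, the utility for $s_i \in (0,1]$ reduces to
\[
U_i(s_i) \;=\; T(\x)\,\frac{x_i + \rho\, a\, s_i^{\beta}}{\xtot + a\, s_i^{\beta}} \;-\; c_i(x_i),
\]
and continuity at $s_i = 0^+$ is immediate, so this expression coincides with $U_i(0) = T(\x)\, x_i/\xtot - c_i(x_i)$ in the limit.

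Second, I would differentiate. Using the quotient rule with $N(s_i) = x_i + \rho a s_i^{\beta}$ and $D(s_i) = \xtot + a s_i^{\beta}$, a short calculation gives
\[
\frac{d U_i}{d s_i} \;=\; \frac{T(\x)\, a\, \beta\, s_i^{\beta-1}}{D(s_i)^{2}} \, \bigl(\rho\, \xtot - x_i\bigr).
\]
All factors outside the parentheses are strictly positive for $s_i \in (0,1]$ (since $T(\x), a, \beta, D(s_i) > 0$), so $\operatorname{sgn}(dU_i/ds_i) = \operatorname{sgn}(\rho\, \xtot - x_i)$. This immediately yields strict monotonicity of $U_i$ on $(0,1]$ in the two regimes $\rho > x_i/\xtot$ and $\rho < x_i/\xtot$.

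Finally, I would pass from monotonicity on $(0,1]$ to the comparison $U_i(1)$ versus $U_i(0)$ using the continuity of $U_i$ at $s_i = 0$ established above. If $\rho > x_i/\xtot$, strict increase on $(0,1]$ together with $\lim_{s_i \to 0^+} U_i(s_i) = U_i(0)$ implies $U_i(1) > U_i(0)$, so a deviation to full sharing is strictly profitable. If $\rho < x_i/\xtot$, the symmetric argument gives $U_i(1) < U_i(0)$. I do not anticipate a serious obstacle here: the algebraic simplification that the $\rho a s_i^{\beta}$ terms cancel in the derivative's numerator is what delivers the clean threshold $x_i/\xtot$, mirroring the baseline \Cref{lem:s_choice_threshold} (but now with the cancellation working for every $\beta \in (0,1]$ precisely because $\s_{-i} = \mathbf{0}$ eliminates the cross terms that would otherwise obstruct such a simple characterization).
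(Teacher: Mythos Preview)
Your proof is correct and follows essentially the same approach as the paper: both reduce the utility to $T(\x)\,\frac{x_i + \rho\,\qai}{\xtot + \qai} - c_i(x_i)$ and observe that the sign of the derivative is governed solely by $\rho\,\xtot - x_i$. The only cosmetic difference is that the paper differentiates with respect to the substituted variable $Y = \qai(s_i)$ and then invokes monotonicity of $Y$ in $s_i$, whereas you differentiate directly in $s_i$; the resulting factor $a\beta s_i^{\beta-1}$ you obtain is exactly the chain-rule Jacobian, so the arguments are equivalent.
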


\begin{proof}
    Let $X = \xtot$ be the total human quality. Since $\s_{-i} = \mathbf{0}$, the total shared content is $S(s_i) = x_i s_i$. 
    If $s_i = 0$, the utility is simply the direct traffic revenue share: $U_i(0) = \mu X^\gamma \frac{x_i}{X} - c_i(x_i)$.
    
    For any $s_i > 0$, creator $i$ is the sole contributor to the GenAI. Thus, their share of the AI revenue is $100\%$ of the allocated pool (i.e., $\frac{x_i s_i}{S(s_i)} = 1$). The GenAI quality is $Q(s_i) = \alpha (x_i s_i)^\beta$.
    The utility function for $s_i > 0$ becomes:
    \[
        U_i(s_i) = \mu X^\gamma \left[ \frac{x_i}{X + Q(s_i)} + \frac{Q(s_i)}{X + Q(s_i)} \cdot \rho \cdot 1 \right] - c_i(x_i).
    \]
    Let $Y(s_i) = Q(s_i) = \alpha (x_i s_i)^\beta$. Since $\beta > 0$, $Y(s_i)$ is strictly increasing in $s_i$. We can analyze the utility with respect to $Y$:
    \[
        U_i(Y) = \mu X^\gamma \left[ \frac{x_i + \rho Y}{X + Y} \right] - c_i(x_i).
    \]
    Differentiating with respect to $Y$:
    \begin{align*}
        \frac{dU_i}{dY} &= \mu X^\gamma \cdot \frac{\rho(X + Y) - (x_i + \rho Y)}{(X + Y)^2} \\
        &= \mu X^\gamma \cdot \frac{\rho X + \rho Y - x_i - \rho Y}{(X + Y)^2} \\
        &= \mu X^\gamma \cdot \frac{\rho X - x_i}{(X + Y)^2}.
    \end{align*}
    The sign of the derivative is determined entirely by the numerator $\rho X - x_i$, which is constant with respect to $s_i$ (and $Y$).
    
    \begin{itemize}
        \item If $\rho X - x_i > 0 \iff \rho > \frac{x_i}{X}$, then $\frac{dU_i}{dY} > 0$. Since $Y$ increases with $s_i$, $U_i(s_i)$ is strictly increasing on $(0, 1]$. The optimal strategy is $s_i = 1$.
        \item If $\rho X - x_i < 0 \iff \rho < \frac{x_i}{X}$, then $U_i(s_i)$ is strictly decreasing on $(0, 1]$. The optimal strategy approaches $s_i \to 0$ (or $s_i=0$ exactly, by continuity from the right).
    \end{itemize}
    Thus, the condition $\rho > \frac{x_i}{\xtot}$ is necessary and sufficient for creator $i$ to profitably deviate from a state of non-sharing to full sharing.
\end{proof}

We also note that \Cref{prop:unique_pne_rho1} generalizes to this new modeling, and refer the reader to its proof in \Cref{sec:sec3_proofs}, where we already proved it for this generalized model.

\newpage
\section{Other Allocation Rules}\label{sec:other_allocation_rules}

In this section, we analyze two alternative functional forms to the proportional allocation rule examined in the main text. We demonstrate that the stability of FSEs depends not only on the total amount of revenue shared but also on how that revenue is distributed among creators. Specifically, we show that some intuitively appealing rules may fail to induce FSE even when the platform redistributes its entire GenAI-driven revenue ($\rho=1$).

We consider two alternative allocation rules parameterized by $\rho \in [0,1]$:

\begin{enumerate}
    \item \textbf{Winner-Takes-All (WTA):} The entire $\rho$-fraction of AI revenue is allocated to the creators who provide the maximal effective contribution. Let $W(\mathbf{x}, \mathbf{s}) = \argmax_{j \in [n]} \{x_j s_j\}$. Then:
    \[
    f_{i,\rho}^{WTA}(\mathbf{x}, \mathbf{s}) =
    \begin{cases}
    \frac{\rho}{|W(\mathbf{x}, \mathbf{s})|}  & \text{if } i \in W(\mathbf{x}, \mathbf{s}) \text{ and } \max_j x_j s_j > 0 \\
    0 & \text{otherwise}
    \end{cases}
    \]

    \item \textbf{Binary Threshold with Equal Shares (BTES):} The $\rho$-fraction is distributed equally among all creators who fully opt-in ($s_i=1$), regardless of their quality contribution. Let $S_1(\mathbf{s}) = \{ j : s_j = 1 \}$. Then:
    \[
    f_{i,\rho}^{BTES}(\mathbf{x}, \mathbf{s}) =
    \begin{cases}
    \frac{\rho}{|S_1(\mathbf{s})|} & \text{if } i \in S_1(\mathbf{s}) \\
    0 & \text{otherwise}
    \end{cases}
    \]
\end{enumerate}

While \Cref{prop:unique_pne_rho1} and \Cref{thm:fse-existence} establish that the proportional allocation rule admits FSE for sufficiently high $\rho$, the following proposition shows that such stability guarantees do not extend to these alternative mechanisms.

\begin{proposition}[Instability under Alternative Rules] \label{prop:wta_btes_instability}
Under either WTA or BTES allocation rules with $\alpha>0$ and $n \ge 2$, there exist instances where no FSE exists, even with full revenue redistribution ($\rho=1$).
\end{proposition}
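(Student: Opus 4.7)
The plan is to exhibit, for each of the two rules, an instance in which no strategy profile of the form $(\mathbf{x},\mathbf{1})$ is a PNE. For WTA, my approach will exploit the discontinuity of the $\arg\max$ allocation in the quality profile; this makes equilibrium fragile independently of the cost structure, and in fact I expect to prove that no FSE exists under WTA for \emph{any} instance. For BTES, the allocation is per-capita rather than proportional, so a creator with disproportionately high quality shoulders an unfair share of the AI competition; I would therefore construct a highly asymmetric cost instance and show that the top creator gains from withholding.

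For WTA, I will argue by contradiction. Assuming $(\mathbf{x},\mathbf{1})$ is an FSE, \Cref{lem:dominance-positive-quality} guarantees $x_i>0$ for all $i$. Let $W=\arg\max_j x_j$. If $|W|=1$, say $W=\{i^\star\}$, then every $j\ne i^\star$ already receives zero AI revenue; switching to $s_j=0$ leaves $f_j=0$ unchanged (since $i^\star$ remains the unique maximizer) while strictly lowering $Q_{\mathrm{AI}}$ from $\alpha\norm{\mathbf{x}}$ to $\alpha(\norm{\mathbf{x}}-x_j)$, which strictly increases $j$'s direct-traffic share and hence their utility. If $|W|\ge 2$, I will pick some $i\in W$ and consider the upward deviation $x_i'=x_i+\varepsilon$: creator $i$ becomes the unique winner, so $f_i$ jumps discretely from $\rho/|W|$ to $\rho$, yielding a utility gain that converges to $\frac{T(\mathbf{x})\,\alpha\rho(1-1/|W|)}{1+\alpha}>0$ as $\varepsilon\to 0^+$, while the cost increment $c_i'(x_i)\varepsilon$ vanishes.

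For BTES, I would take $n=2$ with $c_1(x)=x^2$ and $c_2(x)=Mx^2$ for large $M$. Under enforced sharing, creator $i$'s utility simplifies to $U_i=\frac{\mu\norm{\mathbf{x}}^{\gamma-1}x_i}{1+\alpha}+\frac{\mu\alpha\rho\norm{\mathbf{x}}^\gamma}{n(1+\alpha)}-c_i(x_i)$, which differs from the proportional case of \Cref{lem:utility_full_sharing} only by a term concave in each $x_i$, so the arguments of \Cref{lem:unique-ese} should extend to yield a unique ESE. The plan is then to analyze the limit $M\to\infty$: the first-order conditions force $x_2^{\mathrm{ESE}}\to 0$ while $x_1^{\mathrm{ESE}}$ tends to a positive constant $x_1^\star$. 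At this limit with $\rho=1$, creator 1's FSE utility reduces to $\mu(x_1^\star)^\gamma\cdot\frac{2+\alpha}{2(1+\alpha)}-(x_1^\star)^2$, whereas deviating to $s_1=0$ (holding $x_1=x_1^\star$) yields $\mu(x_1^\star)^\gamma-(x_1^\star)^2$; their difference equals $\frac{\mu(x_1^\star)^\gamma\alpha}{2(1+\alpha)}>0$, so withholding strictly dominates. Continuity of the ESE and of the utilities in $M$ will then transfer this strict inequality to all sufficiently large finite $M$, ruling out any FSE.

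The main obstacle will be the BTES half: I must carry the uniqueness-of-ESE machinery from the proportional setting over to BTES, and then justify a limit-then-continuity argument that turns an asymptotic strict inequality into a finite-$M$ counterexample. The WTA half is nearly mechanical, but the tie case still requires care---I must verify that an arbitrarily small upward perturbation indeed breaks the tie into a unique maximizer, and that the resulting discrete jump in allocation strictly dominates the infinitesimal cost of raising quality.
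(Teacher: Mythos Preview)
Your WTA argument is essentially identical to the paper's: both split into the unique-winner case (where a loser profitably withholds, since withholding costs them nothing in allocation but strictly shrinks $Q_{\mathrm{AI}}$) and the tie case (where a small upward quality perturbation discretely captures the whole pot). The case boundaries are drawn slightly differently---you split on $|W|=1$ versus $|W|\ge 2$, while the paper splits on ``all identical'' versus ``not all identical''---but the two dissections cover the same ground with the same deviations.

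Your BTES argument is correct but takes a genuinely different route. The paper simply fixes a concrete instance ($n=2$, $\alpha=\gamma=1$, $\mu=10$, $c_1(x)=0.1x^2$, $c_2(x)=0.4x^2$), solves the ESE first-order conditions in closed form to get $x_1^\star=37.5$, $x_2^\star=9.375$, and checks numerically that creator~1's deviation to $s_1=0$ raises utility from $164.0625$ to $171.875$. You instead send the cost asymmetry $M\to\infty$, compute the limiting deviation gain $\tfrac{\mu\alpha(x_1^\star)^\gamma}{2(1+\alpha)}>0$ exactly, and invoke continuity of the ESE and the utilities to pull back to finite $M$. Your approach buys generality (the instability holds for any $\alpha>0$, $\gamma$, $\mu$, not just the paper's fixed values) and makes the mechanism transparent (the dominant creator is structurally under-compensated by per-capita sharing), at the cost of requiring you to port the DSC/monotonicity argument of \Cref{lem:unique-ese} to BTES and to justify continuity of the ESE in $M$. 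Both extensions are straightforward---the extra BTES term $\tfrac{\mu\alpha\rho}{n(1+\alpha)}\|\mathbf{x}\|^\gamma$ contributes a rank-one negative-semidefinite Jacobian block, so monotonicity is preserved, and continuity then follows from the implicit function theorem---but the paper sidesteps them entirely by just computing one example.
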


\begin{proof}
    We construct specific instances demonstrating the non-existence of FSE under each rule for $\rho=1$.
    
    \paragraph{Part 1: WTA non-existence.}
    Consider any candidate full-sharing profile $(\mathbf{x}, \mathbf{1})$. If all qualities are identical, any small perturbation breaks the tie and guarantees the deviator a large utility increase, so the original profile cannot have been an FSE. If creators have different quality levels, let $i$ be a creator such that $x_i < \max_k x_k$. Creator $i$ is not in the winner set $W(\mathbf{x}, \mathbf{1})$ and thus receives zero AI revenue. Their utility is purely derived from direct traffic:
    \[
        U_{i}(\mathbf{x}, \mathbf{1}; f_1^{WTA}) = \mu (\norm{\mathbf{x}})^\gamma \frac{x_i}{\norm{\mathbf{x}} + \qai} - c_i(x_i).
    \]
    Consider a deviation to $s_i' = 0$. The total shared content decreases, reducing the AI quality to $\qai' = \alpha (\norm{\mathbf{x}} - x_i) < \qai$. Creator $i$ still receives zero AI revenue, but benefits from reduced competition:
    \[
        U_{i}((\mathbf{x}, (0, \mathbf{1}_{-i})); f_1^{WTA}) = \mu (\norm{\mathbf{x}})^\gamma \frac{x_i}{\norm{\mathbf{x}} + \qai'} - c_i(x_i).
    \]
    Since $\qai' < \qai$, the traffic share $\frac{x_i}{\norm{\mathbf{x}} + \qai'}$ strictly increases, making the deviation profitable. If all qualities are identical, any creator can slightly increase their quality $x_i$ to become the unique winner, capturing the entire AI revenue pot, which also destabilizes the profile. Thus, no FSE exists.

    \paragraph{Part 2: BTES non-existence.}
    We provide a counterexample with $n=2$, $\alpha=\gamma=1$, and $\mu=10$. Let the cost functions be $c_1(x) = 0.1 x^2$ and $c_2(x) = 0.4 x^2$.
    
    First, we determine the unique ESE $\mathbf{x}^\star$ under full sharing ($s_1=s_2=1$). In this state, both creators share, so $|S_1|=2$. The AI quality is $\qai = \alpha(x_1 + x_2) = \xtot$. The utility for creator $i$ is:
    \[ 
        U_{i}(\mathbf{x}, \mathbf{1}; f^{BTES}) = 10 \xtot \frac{x_i}{2 \xtot} + \frac{1}{2} \left( 10 \xtot \frac{\xtot}{2 \xtot} \right) - c_i(x_i) = 5 x_i + 2.5 \xtot - c_i(x_i).
    \]
    The first-order conditions $\frac{\partial U_i}{\partial x_i} = 0$ yield:
    \[ 
        5 + 2.5 - 0.2 x_1 = 0 \implies x_1^\star = 37.5,
    \]
    \[ 
        5 + 2.5 - 0.8 x_2 = 0 \implies x_2^\star = 9.375.
    \]
    The resulting total quality is $\xtot^\star = 46.875$. Creator 1's utility in this potential FSE is:
    \[
        U_1(\mathbf{x}^\star, \mathbf{1}) = 5(37.5) + 2.5(46.875) - 0.1(37.5)^2 = 187.5 + 117.1875 - 140.625 = 164.0625.
    \]
    
    Now consider a deviation by Creator 1 to $s_1'=0$ (keeping $x_1=37.5$). They forfeit their share of AI revenue, but the AI quality drops to $\qai' = \alpha x_2^\star = 9.375$. Creator 1's new utility is derived solely from direct traffic:
    \[
        U_1^{dev} = 10 (46.875) \frac{37.5}{46.875 + 9.375} - 0.1(37.5)^2 = 468.75 \left( \frac{37.5}{56.25} \right) - 140.625 = 171.875.
    \]
    Thus, the deviation is profitable. Thus, the unique ESE is not stable under strategic sharing, proving the non-existence of FSE.
\end{proof}

The instability of these rules stems from a misalignment between contribution and reward. Under WTA, non-winning creators have no incentive to share, as doing so only strengthens the GenAI competitor without providing any compensation. Under BTES, high-quality creators (like Creator 1 in the proof) are undercompensated relative to their large contribution to the AI's quality; they prefer to withhold data to weaken the AI rather than accept a diluted share of revenue equal to that of lower-quality contributors. The proportional allocation rule resolves this by linking rewards directly to the magnitude of contribution, ensuring that stronger creators are adequately incentivized.

\newpage
\section{Modeling Assumptions and Limitations}\label{sec:assumptions}

This section outlines key assumptions in our modeling framework and discusses their implications and limitations.

\paragraph{User Traffic Modeling} Our baseline model assumes that platform user traffic is driven entirely by human-created content, with AI-generated content affecting allocation but not the traffic volume. This encompasses the scenario of a fixed number of users as a special case when $\gamma = 0$. In \Cref{sec:extensions}, we relax this assumption by considering a traffic model that incorporates both human and AI-generated content, and show that our main results carry over with appropriately adjusted constants. More general traffic models--such as those capturing nonlinear interactions between human and AI quality--remain unexplored.

\paragraph{Generative AI Quality Modeling} We model the quality of GenAI content as a linear function of the total shared creator data: $Q_{\text{AI}}(\mathbf{x}, \mathbf{s}) = \alpha \cdot \sum_i x_i s_i$. In practice, the relationship between training data quantity and model quality typically exhibits diminishing returns, consistent with established scaling laws in machine learning~\cite{DBLP:journals/corr/abs-2001-08361}. To address this, \Cref{sec:extensions} considers a generalized quality function $Q_{\text{AI}} = \alpha (\sum_i x_i s_i)^\beta$ with $\beta \in [0,1]$. While certain theoretical guarantees valid under the linear model do not hold in this generalized setting, our simulations suggest that the key qualitative insights persist. Other factors that may affect GenAI quality in practice--such as data diversity, representativeness, and complementarity among creators' contributions--are not captured by our model and remain directions for future work.

\paragraph{Scalar Representation of Content Competitiveness} Each creator's quality choice is represented as a single scalar $x_i \in [0,\infty)$, embodying the overall ``competitiveness'' of their content. This quantity encompasses various dimensions, including quality, quantity, trend alignment, expressiveness, domain specificity, and novelty. This modeling choice is a common simplification in the literature~\cite{10.5555/3737916.3740675, 10.5555/3692070.3694417}, enabling tractable analysis while preserving the essence of competitive dynamics. However, real-world scenarios may involve more intricate representations, such as continuous vectors of features~\cite{hron2022modeling}, requiring entirely different analytical tools.

\paragraph{Continuous Data Sharing Decisions} Our model enables creators to select their sharing level continuously, represented by $s_i \in [0,1]$. In practice, platforms usually provide binary options. Nevertheless, all our results remain valid under binary sharing choices, as \Cref{lem:s_choice_threshold} ensures that only binary sharing decisions are rational.

\paragraph{Static Game Formulation} Our model captures a one-shot interaction between the platform and creators. In practice, creator-platform relationships are ongoing, and the quality of GenAI systems depends on data accumulated over time. Such dynamic settings introduce richer strategic considerations, including the timing of sharing decisions and the platform's ability to update its allocation rule. Modeling these interactions as repeated or sequential games constitutes a natural extension of our framework.

\end{document}